\keywords{Categorical models,
	Polynomial functors,
	Bicategories,
	Species of structures,
	Linear logic,
	Denotational semantics}
\begin{document}
		
\title[Stable species of structures]{Stabilized profunctors \\and stable
  species of structures}
\thanks{The first author was partially supported by the EPSRC grant
  EP/V002309/1.  The second author was supported by the PPS ANR project
  (ANR-19-CE48-0014), the EPSRC grant EP/V002309/1, and the PGRM Emergence
  project. The third author was supported by a Royal Society University
  Research Fellowship,  and a Marie Sk\l{}odowska-Curie grant co-funded by the
  Paris Region and the EU under Horizon 2020.}	
		
\author[M.~Fiore]{Marcelo Fiore\lmcsorcid{0000-0001-8558-3492}}[a]
\author[Z.~Galal]{Zeinab Galal\lmcsorcid{0009-0008-6402-3531}}[b]
\author[H.~Paquet]{Hugo Paquet\lmcsorcid{0000-0002-8192-0321}}[c]

\address{University of Cambridge}	
\email{marcelo.fiore@cl.cam.ac.uk}  

\address{Sorbonne University}	
\email{zeinab.galal@lip6.fr}  

\address{ University of Oxford}	
\email{hugo.paquet@cs.ox.ac.uk}  
		
\begin{abstract}
  We define a new bicategorical model of linear logic that refines the
  bicategory of groupoids, profunctors, and natural
  transformations. An object of this new model is a groupoid with
  additional structure, called a \emph{Boolean kit}, used to constrain 
  profunctors by stabilizing the groupoid action on their elements. 
  
  The theory of generalized species of structures, based on
  profunctors, is then refined to a new theory of \emph{stable
    species} of structures between groupoids with Boolean kits. While
  generalized species are in correspondence with analytic functors
  between presheaf categories, stable species are shown to correspond
  to \emph{stable functors} between full subcategories of presheaves
  determined by the kits. One motivation for stable functors is that,
  in the special case where kits enforce free actions, they correspond
  to finitary polynomial functors between categories of indexed sets,
  also known as normal functors. 

We show that the bicategory of groupoids with Boolean kits, stable species,
and natural transformations is cartesian closed.  This makes essential use of
the logical structure of Boolean kits and explains the well-known failure of
cartesian closure for the bicategory of finitary polynomial functors (between
categories of indexed sets) and cartesian natural transformations.  The
paper additionally develops the model of classical linear logic underlying the
cartesian closed structure and clarifies the connection to stable domain
theory.  
\end{abstract}

\maketitle

\section{Introduction}
\label{sec:intro}
\newcommand{\funto}{\to} 

Profunctors, sometimes called distributors or bimodules, are categorical
structures appearing in various areas of logic and computer science.  This
paper is concerned with profunctors between \emph{groupoids} and their
application to linear logic and structural combinatorics.  A 
main goal is to develop a formal connection with the theory of
\emph{polynomial functors}~\cite{PolynomialGpdKock,AbbottContainers}. 
Let us 
summarize the context and the main results of the paper.

\subsection*{Profunctors in logic and combinatorics}
A profunctor from $\gpd1$ to $\gpd2$, where $\gpd1$ and $\gpd2$ are
categories, is a functor 
$\gpd2^\op \times \gpd1 \to \Set$~\cite{Yoneda,BenabouBicat,BenabouDist,Lawvere}.  
It is common to define a bicategory whose objects are small categories,
morphisms are profunctors, and 2-cells are natural transformations; in this
paper we will call $\Prof$ the full sub-bicategory whose objects are
\emph{groupoids}. 

The bicategory $\Prof$ has well-known logical and combinatorial
features~\cite{TaylorGroupoidsLinearLogic,FioreCartesian2008} that we briefly
summarize now; more details are given in Section~\ref{subsec:profunctors}.
First, $\Prof$ is a model of linear logic~\cite{GIRARDLL}: the various logical
connectives are defined with sums, products, or opposites of groupoids, and
the exponential modality $\Sym \gpd1$ is constructed by considering symmetric
sequences of objects of $\gpd1$.  We can then explain the combinatorial nature
of $\Prof$ in this language: if $\One$ is the trivial one-object groupoid,
then profunctors in $\Prof(\Sym \One, \One)$ correspond to combinatorial
\emph{species of structures} in the sense introduced by
Joyal~\cite{JoyalSpecies}. The coKleisli bicategory $\Prof_\Sym$ is known as
the bicategory of 
\emph{generalized species of structures}~\cite{FioreCartesian2008}, since one
can consider combinatorial species between groupoids other than~$\One$.

The bicategory $\Prof_\Sym$ is cartesian closed~\cite{FioreCartesian2008} and
forms the basis for a number of models for typed and untyped
$\lambda$-calculus.  These models can be used, for instance, to track the
symmetries of resource usage~\cite{OlimpieriIntersection} or the combinatorics
of reduction paths~\cite{TsukadaRigid,TsukadaWeightedSpecies2018}.

\subsection*{Polynomial functors and analytic functors}
The starting point for this paper is a connection between Joyal's
combinatorial species and polynomial functors on sets, that we review briefly
(Section~\ref{sec:polynomialAnalytic} contains full details).

On one side, we consider \emph{finitary} polynomial functors $\Set \to \Set$,
which correspond to operators on sets of the form 
\begin{equation}
\label{eq:polyrep1}
X \longmapsto \sum_{n \in \N} A_n \times X^n
\end{equation}
where the coefficients $A_n$ are sets. On the other side, a combinatorial
species consists of a family of sets $F_n$, for $n\in \N$, equipped with a
left action of the symmetric group $\symgroup{n}$ on $n$ elements. This
determines a functor $\Set \to \Set$ of the form 
\begin{equation}
\label{eq:anarep}
X \longmapsto \sum_{n \in \N} F_n \timesquotient{\symgroup{n}} X^n
\end{equation}
where the operator
\raisebox{0.3em}{$\timesquotient{\symgroup{n}}$} performs a quotient of the
product under the action.  A functor of this form is known as an 
\emph{analytic functor}, and every analytic functor has a unique generating
species, up to isomorphism~\cite{JoyalAnalytic}. 

In special cases, when the actions on $F_n$ are 
\emph{free actions}~(\S\ref{eq:freeaction}), the quotient is equivalently a
set of the form $A_n \times X^n$, and so the analytic functor is also
polynomial.  Conversely, every finitary polynomial functor is analytic when
its coefficients are regarded as freely generated actions.

Thus, finitary polynomial functors determine a subcategory of 
$\Prof(\Sym \One, \One)$, and the language of species gives a basic
combinatorial perspective on polynomial functors. Our purpose here is to
generalize this connection, and to provide a new \emph{logical} perspective,
following the methodology offered by the logical structure of profunctors and
generalized species. 

\subsection*{Contributions of this paper}
We extend the correspondence between finitary polynomial functors and free
analytic functors to a wider setting: instead of functors over the category of
sets, we consider functors between full subcategories of presheaves over
groupoids.  Our first contribution is a logical device for constraining the
actions on the coefficients of analytic functors: in particular one may
require all actions to be free. We call this device a 
\emph{kit}~(Section~\ref{sec:kits}).

In Section \ref{sec:stprof}, we define a refinement of $\Prof$ called
$\SProf$, in which groupoids are equipped with kits in order to constrain the
profunctors between them.  Profunctors that respect the kit structure are
called \emph{stabilized}.  The bicategory $\SProf$ has the following key
properties: it is a model of linear logic, and the logical structure is
preserved by the forgetful functor 
$\SProf \to \Prof$~(Section~\ref{sec:logical-structure-sprof}).  The category
$\SProf(\Sym \One, \One)$  is the sub-category of $\Prof(\Sym \One, \One)$
that corresponds to finitary polynomial functors on sets. 

Thus, we can use the logical structure of $\SProf$ as a new language for
polynomial functors 
operating on groupoid actions. 
We recover well-known notions of polynomial functors operating on
sets~\cite{GirardNormal,AltenkirchIndexed} and we obtain a new notion of
polynomial functors of higher-order\footnote{Note that this `higher-order'
  structure refers to a bicategory whose \emph{morphisms} are polynomial
  functors. Polynomial functors over $\Set$ are also the \emph{objects} of a
  category which is known to be cartesian
  closed~\cite{DBLP:conf/cie/AltenkirchLS10}. 
} 
type. 


For example, the kits can be used to characterize, among the profunctors in
$\Prof(\Sym\One, \One)$, those corresponding to polynomial functors on sets.
We focus on this special case in Section~\ref{sec:kitsIntro}, and give the
general theory in Section~\ref{sec:kits}.  Kits are largely inspired by the
\emph{creeds} of Taylor~\cite{TaylorGroupoidsLinearLogic}, although we have
crucially taken a different approach based on double orthogonality.  As
explained in Section~\ref{sec:*autonomous}, this allows us to model the
involutive negation of classical linear logic. 

\subsection*{Intensional and extensional perspectives}
One important aspect of this paper is the dichotomy between \emph{intensional}
and \emph{extensional} presentations.  This is a recurrent theme in the
semantics of linear logic and in the theory of species, as illustrated by the
table of examples in Figure~\ref{fig:examples}.  Our model of stabilized
profunctors between groupoids with kits can also be presented in extensional
form.  We show that a kit $\kit1$ on a groupoid $\gpd1$ determines a full
subcategory $\StPSh(\gpd1,\kit1)$ of the presheaf category $\PSh(\gpd1)$, and
that a stabilized profunctor $P \in \SProf((\gpd1, \kit1), (\gpd2, \kit2))$ is
equivalently given by a \emph{linear} functor 
$\StPSh(\gpd1, \kit1) \to \StPSh(\gpd2, \kit2)$, where linear means that the
functor preserves filtered colimits, and has a left and a right adjoint on
every slice~(Definition~\ref{def:linearfunctor}). 
\renewcommand{\arraystretch}{2.6}
\begin{figure}
	\begin{tabular}{>{\itshape}ccc}
		{}  & \makecell{\textsc{Extensional} \quad\\ \textsc{presentation}} &
		\makecell{\textsc{Intensional}
			\quad
			\\ \textsc{presentation}} \\[1em]\toprule
		\makecell{Scott \cite{ScottDataTypes}}
		& \makecell{Continuous functions} & \makecell{Finitary graphs
			\\of continuous functions}\\ 
		\makecell{Girard \cite{GirardNormal}} & \makecell{Normal functors \\
			$\Set^I \to \Set$}
		& \makecell{Power series coefficients\\ $\mathcal{M}_\mathrm{fin}(I) \to
			\Set$} \\
		Gambino, Kock \cite{gambinokock2013}
		& \makecell{Polynomial functors\\ $\Set/I \to\Set/J$}
		& \makecell{Polynomial diagrams\\ $I \leftarrow E \to B \to J$} \\
		Joyal \cite{JoyalSpecies} & \makecell{Analytic functors \\
			$\Set\to \Set$ }& \makecell{Species of structures \\$\PP \to
			\Set$ } \\
		\makecell{Morita, Watts\\ Lawvere, B\'enabou}  & \makecell{Colimit-preserving functors \\$\PSh(\gpd1) \to \PSh(\gpd2)$}
		&\makecell{ Profunctors \\$\gpd2^\op \times \gpd1 \to \Set$ }\\
		Fiore et
		al. \cite{Fioreanalytic,FioreCartesian2008}
		& \makecell{Analytic functors\\ $\PSh(\gpd1) \to\PSh(\gpd2)$}
		& \makecell{Generalized species\\ $\gpd2^\op \times \Sym \gpd1 \to \Set$} \\[0.7em]
		\bottomrule
	\end{tabular}
	\label{fig:examples}
  \caption{Well-known categorical structures in extensional and intensional
    form. In the table, $\mathbb{A}$ and $\mathbb{B}$ are groupoids and
    $\mathcal M_{\mathrm{fin}}$ is the finite multiset construction.}
\end{figure}

From the combinatorial perspective, it is the coKleisli bicategory
$\SProf_\Sym$ for the exponential comonad which is of primary interest,
because of the connection to generalized species.  We show that the morphisms
in that bicategory can be presented extensionally as \emph{stable} functors
$\StPSh(\gpd1, \kit1) \to \StPSh(\gpd2, \kit2)$, where stability relaxes
linearity by not requiring a right adjoint on each slice and requiring the
preservation of epimorphisms~(Definition~\ref{def:stablefunctor}).  This
terminology comes from stable domain
theory~\cite{BerryStable,TaylorAlgebraicStableDomains,LamarcheThesis}, as we
discuss further in Section~\ref{sec:Stability}.

The conference version of this paper focused on the bicategory 
$\SEsp = \SProf_\Sym$, described directly, and on the correspondence with
stable functors~\cite{fiore2022combinatorial}.  This longer version provides
all details, including the underlying linear theory of stabilized profunctors. 

\subsection*{Structure of the paper}
Beyond the two background sections that follow, the paper is organized into
two main parts.  In the first part~(Sections~\ref{sec:kits-and-sprof}
to~\ref{sec:stable-species}), we regard profunctors as groupoids actions, and
introduce kits to control and stabilize these actions.  We study kits from a
logical perspective, and introduce an orthogonality relation for families of
subgroups of endomorphisms in a groupoid.  The induced Boolean algebra
structure gives rise to the notion of Boolean kit. We show that the bicategory
of Boolean kits and stabilized profunctors is a model of differential linear
logic. 

In the second part~(Sections~\ref{sec:stable-presheaves}
to~\ref{sec:Linearity}), we study the extensional aspects of stabilized
profunctors and stable species. We show that orthogonality for kits can be
translated to an orthogonality relation on presheaves.  We then prove that
stabilized profunctors and stable species can be characterized respectively as
linear and stable functors between subcategories of presheaves closed under
double-orthogonality.  This provides a 2-dimensional stable domain theory.

\section{Motivation: polynomial and analytic functors over sets}\label{sec:polynomialAnalytic}

At the heart of this paper is the relationship between two families of
functors from $\Set$ to $\Set$: finitary polynomial functors (also known as
\emph{normal} functors) and analytic functors. We start by giving the
relevant definitions.

\subsection*{Polynomial functors.} A 
function 
$
p : E \to B
$ 
between sets $E$ and $B$ determines an endofunctor on $\Set$ defined
as 
\[ 
X \longmapsto \sum_{b \in B} X^{E_b}
\]
where $E_b = p^{-1}\{b\}$ is the fibre of $p$ over $b \in B$, and $X^{E_b}$ is
the set of functions $E_b \to X$. A functor of this form is called a polynomial
functor. If we restrict to those $p$ such that
$E_b$ are finite sets for all $b \in B$, then the functor is naturally isomorphic to one of the
form 
\begin{equation}
\label{eq:polyrep2}
X \longmapsto \sum_{n \in \N} F_n \times X^n 
\end{equation}
where each set $F_n$ corresponds to the set of all $b \in B$ such that
$|E_b| = n$. The
analogy with traditional polynomials is manifest in this representation.
These polynomial functors are called \emph{finitary} and are determined by their action on finite input
sets. 

\subsection*{Analytic functors} 
We now turn to a strictly more general family of functors. An endofunctor on $\Set$ is an \emph{analytic functor} \cite{JoyalAnalytic} if
it is naturally isomorphic to one of the form 
\begin{equation}
\label{eq:analyticrep}
X \longmapsto \sum_{n \in \N} F(n) \timesquotient{\symgroup{n}} X^n , 
\end{equation}
where:
\begin{enumerate}
\item 
Each $F(n)$ is a set with a left action of the symmetric group $\symgroup{n}$
on $n$ elements. Concretely, this means that we have an assignment that to
every permutation $\sigma\in\symgroup{n}$ of the set 
$ [n] = \{ 1, \dots, n \} $ associates a permutation of the set $F(n)$
preserving identity and composition.  We write $\sigma \act p$ for the action
of $\sigma\in\symgroup{n}$ on an element $p \in F(n)$. 

\item 
The set $F(n) \timesquotient{\symgroup{n}} X^n$ is obtained by quotienting the
product $F(n) \times X^n$ under the equivalence relation $\sim$ containing the
pairs 
\[ 
(p, (x_{\sigma 1}, \dots, x_{\sigma n})) 
\enspace \sim \enspace 
(\sigma \act p, (x_1, \dots, x_n))
\]  
for all $\sigma \in \symgroup{n}$, $p \in F(n)$ and 
$(x_1, \dots, x_n) \in X^n$. 
\end{enumerate}

\noindent The notation $F(-)$ is justified, because the coefficients $F(n)$,
\emph{together with} the group actions, can be bundled into a functor 
$F : \PP \to \Set $ where $\PP$ is the category whose objects are
the natural numbers, and whose morphisms $m \to n$ are the bijections 
$[m] \to [n]$. 
This category is equivalent to the category of finite sets and bijections. The action of a permutation $\sigma \in \symgroup{n}$ on the
set $F(n)$ is then simply given by the functorial action 
$F(\sigma) : F(n) \to F(n)$.

The functor $F : \PP \to \Set$ is a \emph{species of structures} (or just a
\emph{species}, with its elements referred to as \emph{structures})
corresponding to the analytic functor \eqref{eq:analyticrep}.  Every analytic
functor has, up to isomorphism, a unique generating species, which may be
recovered using so-called \emph{weak generic elements}
(Section \ref{subsec:LRAgenericfact}).

For some more combinatorial intuition, 
each $F(n)$ is intended to model a type of structures (trees, partitions, $\dots$) with
$n$ labels (indexed $1$ to $n$) on which the symmetric group
acts by permutation. The induced analytic functor then constructs for each set $X$
the set of $X$-labelled $F$-structures. These are elements of $F(n)$
at any $n$ in which labels are elements of $X$.
This theory was developed by Joyal
\cite{JoyalSpecies,JoyalAnalytic} for application in
combinatorics. Joyal also noticed the connection to polynomial
functors that we explain next. 

\subsection*{Finitary polynomial functors are free analytic functors}
Finitary polynomial functors can be identified with  the sub-family of
\emph{free} analytic
functors.  The basic idea is as follows. Every set $A$ generates a
\emph{free action} of a group $G$, given by the product set $A \times G$ with
the action
\begin{equation}
\label{eq:freeaction}
\tau \act (a, \sigma) \   \eqdef \  (a, \tau \icomp \sigma)
\end{equation}
for every $\tau \in G$ and $(a, \sigma) \in A \times G$.
We extend this to polynomial functors.  Consider the finitart polynomial functor $X
\longmapsto \sum_{n \in \N} A_n \times X^n$.  Taking the free action generated
by $A_n$ of $\symgroup{n}$, for every $n \in \N$, we obtain a species 
$\PP \longto \Set$ given by 
\begin{align*}
n &\longmapsto A_n \times \symgroup{n} \\
(\tau : n \to n) &\longmapsto ((a, \sigma) \mapsto (a, \tau \icomp \sigma))
\end{align*}
which, via the construction \eqref{eq:analyticrep}, generates an analytic
functor.  For this species, when taking the quotient in
\eqref{eq:analyticrep}, we have 
\[ 
(A_n \times \symgroup{n}) \timesquotient{\symgroup{n}} X^n 
\ \cong \
A_n \times X^n
\] 
and therefore recover the polynomial functor.  Thus, every finitary
polynomial functor is, in particular, analytic.

\subsection*{Stabilizer subgroups and free species}

In this paper, ``finitary polynomial'' and
``analytic'' are viewed as the two extremes in a spectrum of classes of functors
$\Set \to \Set$. The intermediate classes are defined in terms of
the underlying species of structures, using a new notion we call a
kit. 

Recall that, if a group $G$ acts on a set $P$, the \emph{stabilizer} of an element $p \in P$, defined as $\Stab[G](p) \eqdef \{ \sigma \in G\mid \sigma \act p = p \}$, is a
subgroup of $G$.  The key observation is that free actions, as in
\eqref{eq:freeaction}, are exactly those for which every element has trivial
stabilizer.
Thus, if $F : \PP \to \Set$ is a species such that for every $n \in \N$, the
action of $\symgroup{n}$ on $F(n)$ is \emph{free} in the sense that every structure
in $F(n)$ has trivial stabilizer, then the associated analytic endofunctor on
$\Set$ is polynomial.
This characterizes the analytic functors that are polynomial in terms
of their generating species. 

\subsection*{Kits on $\PP$}\label{sec:kitsIntro}

One key idea of this paper is to use families of subgroups to specify the extent to which
a species may be free. Indeed, by specifying, for each $n \in \PP$, a set
$\mathcal{K}(n)$ of subgroups of $\PP(n,n)$ that are to be regarded as
\emph{permitted stabilizers}, we may restrict to species with structures
having only permitted stabilizers and thereby
identify a class of functors $\Set \to \Set$.  

As extreme special cases, one can take $\mathcal{K}(n)$ to contain all the
subgroups of $\PP(n,n)$ and recover the analytic functors; or, instead, take
$\mathcal{K}(n)$ to consist only of the trivial subgroup, forcing the species
to be free, and recover polynomial functors. Such 
families of subgroups $\mathcal K = \{\, \mathcal{K}(n) \,\}_{n \in
  \PP}$, subject to a compatibility condition, will be called kits
(Definition~\ref{def:kit}). 

There is no need that the choice of permitted stabilizers be uniform across
all $n \in \PP$ as in the two examples above.  But it is natural to require
that permitted stabilizers are closed under conjugation, since for every 
structure $p \in F(n)$ and permutation $\sigma \in \PP(n, n)$, the stabilizer
subgroups of $p$ and of $\sigma \act p$ are conjugate of each other:
$\Stab(\sigma \act p) = \sigma \,\icomp \Stab(p) \,\icomp
\inv\sigma$.

First we appropriately restrict species: 
\begin{defi}
\label{def:Sspecies}
For each $n \in \PP$, let $\mathcal{K}(n)$ be a family of subgroups of
$\PP(n, n)$ closed under conjugation, and write $\mathcal{K} = \{
\mathcal {K}(n) \}_{n \in \PP}$. A \textbf{$\mathcal{K}$-species} is a
functor $F : \PP \funto \Set$ such that every element of $F(n)$ has stabilizer in
$\mathcal{K}(n)$.  
\end{defi}

 This way, every family $\mathcal{K}$ determines a subclass of
 analytic endofunctors on $\Set$, namely, those whose generating species are
$\mathcal{K}$-species. We will call such a family a kit on the
groupoid $\PP$.

By varying the kit structure, one obtains different subclasses of
analytic functors on $\Set$. This paper is primarily
about the connection to polynomial functors, so our construction of \emph{stable}
species in terms of a linear exponential
comonad (Section~\ref{sec:stable-species}) corresponds to the kit
on $\PP$ that consists of only the trivial subgroups. On the other
hand, Joyal's species
are arbitrary functors and correspond to the maximal kit containing
all subgroups.


\section{Preliminaries on profunctors and generalized species}
\label{sec:preliminaries-prof-esp}

This paper extends the connection between polynomial functors and
analytic functors to a generalized setting based on profunctors.
In this section, we recall some important background: we describe the
construction of a bicategory of profunctors
(e.g.~\cite{Yoneda,BenabouBicat,BenabouDist,Lawvere}) and of the
induced bicategory of generalized species.

\subsection{Profunctors}
\label{subsec:profunctors} 

For groupoids $\gpd1$ and $\gpd2$, a profunctor $P : \gpd1 \profto
\gpd2$ is a functor $\gpd2^{\op}\times \gpd1 \to \Set$. This is a
family of sets indexed by pairs of objects $a \in \gpd1$ and $b \in \gpd2$,
and with an action of the two groupoids. We use the following
notation: if $(b,a)\in
\gpd2^{\op}\times \gpd1$, $p \in P(b,a)$, $\alpha \in \gpd1(a,a')$ and
$\beta \in \gpd2(b',b)$, we write 
\[\begin{aligned}
	\alpha \cdot p &:= P (\id[b], \alpha)(p)\in P(b,a') \\
	p \cdot \beta &:= P(\beta,\id[a])(p)\in P(b',a)
\end{aligned}\]
and by functoriality, the two actions commute. 

A profunctor $\gpd1 \profto \gpd2$ is equivalently a functor $P : \gpd1 \rightarrow
\PSh(\gpd2)$, and it is helpful to think of this as a Kleisli arrow for the
presheaf construction $\PSh$ (\emph{cf.}~\cite{RelativePseudomonads}).
Since $\gpd1$ is a dense subcategory of $\PSh(\gpd1)$,
we can define a functor $\PSh(\gpd1) \to \PSh(\gpd2)$ as the unique
colimit-preserving functor that extends $P$. Formally this is a left
Kan extension, as we use next.  

\subsubsection*{The bicategory of profunctors over groupoids}
Two profunctors $P: \gpd1 \profto \gpd2$ and $Q: \gpd2 \profto \gpd3$
compose to give $Q \circ P : \gpd1\profto \gpd3$, defined as the
composite functor $(\Lan_{\yon[\gpd2]}Q)P$ where $\Lan_{\yon[\gpd2]}Q$ denotes the left Kan extension of $Q$ along the Yoneda embedding $\yon[\gpd2]: \gpd2 \hookrightarrow \PSh{\gpd2}$:
\begin{center}
	\begin{tikzpicture}
		begin{tikzpicture}[thick]
		\node (A) at (0,0) {$\gpd1$};
		\node (B) at (2,1.5) {$\gpd2$};
		\node (PB) at (2,0) {$\PSh{\gpd2}$};
		\node (PC) at (4,1.5) {$\PSh{\gpd3}$};
		\node at (2.6,0.9) {$\Downarrow$};
		
		\draw [->] (A) -- node [above] {$P$} (PB);
		\draw [right hook->] (B) -- node [left] {$\yon[\gpd2]$} (PB);
		\draw [->] (B) -- node [above] {$Q$} (PC);
		\draw [->, dashed] (PB) -- node [below right] {$\Lan_{\yon[\gpd2]}Q$} (PC);
	\end{tikzpicture}
      \end{center}
There is a more concrete \emph{pointwise} formula for $Q \circ P$ in terms of a coend:
\[
(Q \circ P)(c,a) \quad \eqdef \quad \int^{b\in\gpd2} P(b,a) \times Q(c,b) \cong \left(\coprod\limits_{b\in \gpd2}P(b,a) \times Q(c,b)\right)/_{\sim}
\]
where $\sim$ is the least equivalence relation such that 
\[
(b, p \cdot \beta, q) \sim (b', p, \beta \cdot q)
\] for $p \in P(b',a)$, $q \in Q(c,b)$ and $\beta: b\rightarrow b' \in
\gpd2$.

The identity profunctor is the hom-functor $\gpd1^\op \times \gpd1
\to \Set$, defined as $(a, a') \mapsto \gpd1(a, a')$.
The composition of profunctors is only associative and unital up to
natural isomorphism, and thus we have the following \emph{bicategory}:
\begin{defi}
	We denote by $\Prof$ the bicategory given by:
	\begin{itemize}
		\item \textbf{objects:} small groupoids $\gpd1$, $\gpd2$;
		\item \textbf{$1$-cells:} profunctors $P : \gpd1 \profto \gpd2$;
		\item \textbf{$2$-cells:} natural transformations.
	\end{itemize}
\end{defi}

\subsubsection*{Profunctors in extensional form} As we mentioned, a
profunctor $P : \gpd1 \profto \gpd2$ induces a unique colimit-preserving
functor 
$\PSh(\gpd1) \to \PSh(\gpd2)$ between the presheaf
categories (Theorem~\ref{thm:BiequivalenceProfCocont}). In this representation, the composition of profunctors
reduces to a composition of functors, so we have a 2-category. 

\begin{defi}
	We denote by $\Cocont$ the 2-category given by:
	\begin{itemize}
		\item \textbf{objects:} small groupoids $\gpd1$, $\gpd2$;
		\item \textbf{$1$-cells:} cocontinuous functors $F : \PSh{\gpd1} \to \PSh{\gpd2}$;
		\item \textbf{$2$-cells:} natural transformations.
	\end{itemize}
\end{defi}

\begin{thm}\label{thm:BiequivalenceProfCocont}
	The bicategory $\Prof$ and the $2$-category $\Cocont$ are biequivalent.
\end{thm}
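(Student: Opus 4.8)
The plan is to exhibit a pseudofunctor $\Phi \colon \Prof \to \Cocont$ and show it is a biequivalence, i.e. that it is essentially surjective on objects and gives an equivalence of hom-categories $\Prof(\gpd1,\gpd2) \simeq \Cocont(\PSh\gpd1, \PSh\gpd2)$ for all $\gpd1, \gpd2$. The two bicategories have the same objects, so I take $\Phi$ to be the identity on objects; essential surjectivity is then immediate. The content is therefore entirely in the hom-level equivalence, and in checking that $\Phi$ respects composition and identities up to coherent isomorphism.

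\medskip
\noindent\textbf{The hom-level equivalence.}
On $1$-cells, $\Phi$ sends a profunctor $P \colon \gpd1 \profto \gpd2$ to its colimit-preserving extension $\Lan_{\yon[\gpd1]}(\bar P) \colon \PSh\gpd1 \to \PSh\gpd2$, where $\bar P \colon \gpd1 \to \PSh\gpd2$ is the functor corresponding to $P$ under the defining adjunction-style correspondence $\gpd2^\op \times \gpd1 \to \Set \ \leftrightarrow\ \gpd1 \to \PSh\gpd2$. As noted in the excerpt, this left Kan extension along the (dense) Yoneda embedding is the unique cocontinuous functor extending $\bar P$. On $2$-cells, a natural transformation $P \Rightarrow Q$ transports to a natural transformation between the extensions; functoriality of left Kan extension makes this assignment functorial on hom-categories. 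To see this functor is an equivalence, I would produce a quasi-inverse: it sends a cocontinuous $F \colon \PSh\gpd1 \to \PSh\gpd2$ to the profunctor $(b,a) \mapsto (F(\yon[\gpd1]\,a))(b)$, i.e. restrict $F$ along Yoneda and then uncurry. The two round-trips are identities up to natural isomorphism: one direction is the universal property of the Kan extension (a cocontinuous functor is determined by its restriction along a dense embedding, so $\Lan_{\yon}(F \circ \yon) \cong F$), and the other is the fully faithfulness of Yoneda together with the defining formula for $\Phi$ on $1$-cells.

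\medskip
\noindent\textbf{Compositionality.}
To upgrade $\Phi$ to a pseudofunctor I must supply invertible comparison $2$-cells $\Phi(Q) \circ \Phi(P) \cong \Phi(Q \circ P)$ and $\id_{\PSh\gpd1} \cong \Phi(\mathrm{id}_{\gpd1})$, then verify the pseudofunctor coherence axioms. The composition comparison follows because both functors are cocontinuous extensions of the same functor $\gpd1 \to \PSh\gpd3$: on representables, $\Phi(Q)\Phi(P)(\yon a) \cong \Phi(Q)(\bar P a) \cong (Q\circ P)\,\bar{}\,(a)$ by the pointwise coend formula for $Q \circ P$ together with the fact that $\Phi(Q) = \Lan_{\yon[\gpd2]}\bar Q$ preserves the colimit presenting $\bar P a$; uniqueness of cocontinuous extensions then gives the isomorphism on all of $\PSh\gpd1$. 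The identity comparison is the statement that $\Lan_{\yon}\yon \cong \id_{\PSh\gpd1}$, again by density.

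\medskip
\noindent\textbf{Main obstacle.}
The conceptual steps are standard, so the real work — and the step I expect to be most delicate — is the bookkeeping of coherence: checking that the comparison $2$-cells are natural in $P$ and $Q$ and satisfy the associativity and unit pentagon/triangle axioms for a pseudofunctor. This requires tracking the canonical isomorphisms witnessing preservation of the defining coends through the Kan-extension formulas, and it is precisely where the "only up to natural isomorphism" nature of profunctor composition interacts with the strict composition in $\Cocont$. Once the pseudofunctor is coherent and shown to be essentially surjective and a local equivalence, the standard characterization of biequivalences yields the result.
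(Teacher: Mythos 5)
Your proposal is correct and follows essentially the same route the paper has in mind: the paper states this as a classical result without a formal proof, but in Section~\ref{sec:Linearity} it describes exactly this biequivalence, given by $P \mapsto P\lanyon$ (left Kan extension along Yoneda) in one direction and restriction along the Yoneda embedding in the other, with the two key facts being that the Kan-extension $2$-cell is invertible since $\yon$ is fully faithful and that cocontinuity of functors between presheaf categories over small groupoids is equivalent to being a left adjoint. The coherence bookkeeping you flag as the main obstacle is real but standard, and is the part the paper likewise leaves implicit.
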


\subsection{Generalized species}
\label{subsec:generalized-species}

We now describe a bicategory of generalized species of structures indexed by arbitrary
groupoids. 

\subsubsection*{From species to generalized species}
For a species of structures $\B \to \Set$, consider the following two basic observations. First,
the groupoid $\B$ is the \emph{free symmetric strict monoidal completion} of
the terminal category $\One$. This completion is denoted $\Sym \One$
and described formally below. Second, $\Set$
is isomorphic to the category of presheaves $\PSh \One$ over $\One$. 
The  version, due to Fiore, Gambino, Hyland and
Winskel~\cite{FioreCartesian2008} extends the basic notion of a species
$\B \funto \Set$, corresponding to 
$\Sym \One \to \PSh \One$, to 
\begin{equation}
\label{eq:generalisedspecies}
\Sym \gpd1 
\longrightarrow 
\PSh \gpd2
\end{equation}
for $\gpd1$ and $\gpd2$ groupoids (in fact, they can be arbitrary small
categories, but we do not use this generality here).  Their main result is that
these assemble into a bicategory $\Esp$ of groupoids, generalized species, and
natural transformations, and that moreover this bicategory is cartesian closed. 

Since $\PSh(\gpd2) = [\gpd2^\op, \Set]$, a generalized species is
equivalently a profunctor $\gpd2^\op \times \Sym\gpd1 \to \Set$, and
the bicategory $\Esp$ can be obtained as a coKleisli bicategory for the
pseudo-comonad $\Sym$ over the bicategory of profunctors. This is the
approach we follow in this paper.

\subsubsection*{The symmetric strict monoidal completion}
 For a category $\gpd1$, define $\Sym \gpd1$ as the category whose objects are finite sequences $\seq{a_1, \dots, a_n}$ of objects of $\gpd1$ and a morphism $\alpha$ between two sequences $\seq{a_1, \dots, a_n}$ and $\seq{b_1, \dots b_n}$ consists of a pair $(\sigma, (\alpha_i)_{i \in \ints{n}})$ where $\sigma$ is a permutation of the set $\ints{n} = \{ 1, \dots, n\}$ and $(\alpha_i : a_i \rightarrow b_{\sigma(i)})_{i \in \ints{n}}$ is a sequence of morphisms in $\gpd1$. 
There are no morphisms in $\Sym \gpd1$ between sequences of different
lengths.

If $\gpd1$ is a groupoid, then $\Sym \gpd1$ is also a groupoid,
equipped with a symmetric monoidal structure given by the
concatenation of sequences $(u,v) \mapsto u \otimes v$. This
construction has a canonical $2$-monad structure on $\Cat$, that can
be lifted to a pseudo-comonad on $\Prof$, see
\emph{e.g.}~\cite{RelativePseudomonads}. It maps a profunctor $P : \gpd1 \profto \gpd2$ to the profunctor $\Sym P: \Sym \gpd1 \profto \Sym \gpd2$ given by:
\[
\Sym P(\seq{a_i}_{i \in \ints{n}}, \seq{b_j}_{j\in \ints{m}}) = \begin{cases} \coprod_{\varphi \in \symgroup{n}} \prod_{j \in [m]} P(b_j, a_{\varphi(j)}), & \text{if } n=m\\
	\varnothing, & \text{otherwise}
\end{cases}
\]
The counit and
comultiplication, denoted $\der$ and $\dig$ respectively (for
\emph{dereliction} and \emph{digging}, following the usual terminology of linear logic) are pseudo-natural transformations whose components are the profunctors 
 \[
 \begin{aligned}
 	\der[\gpd1] : \Sym \gpd1   &\profto   \gpd1\\
 	(a, u) &\mapsto \Sym\gpd1 (\seq{a}, u)\\
 	\dig[\gpd1]:  \Sym \gpd1 &\profto \Sym \Sym \gpd1\\
 	(\seq{u_1, \dots, u_n}, u)& \mapsto \Sym \gpd1 (u_1 \otimes \dots \otimes u_n, u).
\end{aligned}
\]

\subsubsection*{The bicategory of species}
Generalized species of structures are the
morphisms in the coKleisli bicategory $\Prof_\Sym$, which we call
$\Esp$. In other words, $\Esp(\gpd1, \gpd2)$ consists of profunctors of the form $\Sym \gpd1
\profto \gpd2$ and natural transformations between them; observe that the category $\Esp(\One, \One)$
is the category of combinatorial species in the sense of Joyal.

Just as a species $\B \to \Set$ induces an analytic functor $\Set \to \Set$,
a generalized species $F:\Sym\gpd1 \funto \PSh(\gpd2)$ induces an analytic
functor $\PSh(\gpd1) \funto \PSh(\gpd2)$ that transports a presheaf $X
\in \PSh(\gpd1)$ to the presheaf
\[
b \quad \mapsto\quad \coend^{\seq{a_i} \in \Sym \gpd1}  F(b, \seq{a_i}) \times
\PSh(\gpd1)( \coprod_{i=1}^n \yon(a_i), X) 
\]
for $b \in \gpd2$. This formula, and indeed the earlier one for Joyal species
\eqref{eq:analyticrep}, are obtained as left Kan extensions:
\begin{prop}
For groupoids $\gpd1, \gpd2$, and a generalized species 
$F: \Sym \gpd1 \profto  \gpd2$, the analytic functor described pointwise above
is a left Kan extension of $F$ along the functor 
$s
 : \Sym \gpd1 \funto \PSh(\gpd1) 
 : \seq{a_1, \dots, a_n} \longmapsto \coprod_{i=1}^n \yon(a_i)$, 
as on the left below: 
  \begin{center}
		\begin{tikzpicture}[
      scale=0.9]
		\node (A) at (0,1.25) {$\Sym\gpd1$};
		\node (B) at (3,1.25) {$\PSh(\gpd2)$};
		\node (C) at (1.5,0) {$\PSh(\gpd1)$};
		\node (D) at (1.5,0.75) {$\Downarrow$};
		
		\draw [->] (A) to node [above] {$F$} (B);
		\draw [<-, dotted] (B) to node [below right]  {$\Lan_{s} F$} (C);
		\draw [right hook->] (A) to node [below left] {$s$} (C);
		\end{tikzpicture}
		\qquad \qquad \qquad 
  	\begin{tikzpicture}[
      scale=0.9]
		\node (A) at (0,1.25) {$\B$};
		\node (B) at (3,1.25) {$\Set$};
		\node (C) at (1.5,0) {$\Set$};
		\node (D) at (1.5,0.75) {$\Downarrow$};
		
		\draw [->] (A) to node [above] {$F$} (B);
		\draw [<-, dotted] (B) to node [below right]  {$\Lan_{j} F$} (C);
		\draw [right hook->] (A) to node [below left] {$j$} (C);
		\end{tikzpicture}
	\end{center}
For $\gpd1 = \gpd2 = \One$ and $F$ viewed as a species $\B \funto \Set$, this
corresponds to the diagram on the right above, where $j$ is the inclusion
functor, and induces the formula \eqref{eq:analyticrep}.
\end{prop}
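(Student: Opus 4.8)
The plan is to show that the displayed pointwise formula computes the \emph{pointwise} left Kan extension of $F$ along $s$, by comparing it with the standard coend description and exploiting the cocompleteness of $\PSh(\gpd2)$. Since $\gpd1$ is small, the indexing category $\Sym\gpd1$ is small, and since $\PSh(\gpd2)$ is a presheaf category it is cocomplete; hence the pointwise left Kan extension of the functor $F : \Sym\gpd1 \to \PSh(\gpd2)$ along $s$ exists, and for $X \in \PSh(\gpd1)$ it is given by the coend
\[
(\Lan_s F)(X) \;\cong\; \int^{\seq{a_i} \in \Sym\gpd1} \PSh(\gpd1)\bigl(s\seq{a_i}, X\bigr) \cdot F\seq{a_i},
\]
where $\cdot$ denotes the copower (tensor by a set) in $\PSh(\gpd2)$. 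As a pointwise left Kan extension this object automatically enjoys the required universal property, so it suffices to identify this coend with the formula stated before the proposition.

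First I would unfold the copower. In $\PSh(\gpd2)$, the copower $S \cdot G$ of a presheaf $G$ by a set $S$ is computed objectwise as $(S \cdot G)(b) = S \times G(b)$, and colimits in $\PSh(\gpd2)$—coends in particular—are likewise computed objectwise. Evaluating the coend above at $b \in \gpd2$ therefore yields
\[
(\Lan_s F)(X)(b) \;\cong\; \int^{\seq{a_i} \in \Sym\gpd1} \PSh(\gpd1)\bigl(s\seq{a_i}, X\bigr) \times F(b, \seq{a_i}),
\]
using the profunctor notation $F(b, \seq{a_i}) = F\seq{a_i}(b)$. Substituting the definition $s\seq{a_1, \dots, a_n} = \coprod_{i=1}^n \yon(a_i)$ and commuting the two factors of the product recovers exactly the pointwise formula of the statement, settling the general case. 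The only point requiring care is the bookkeeping of variances: $F$ is covariant in $\Sym\gpd1$ while $\PSh(\gpd1)(s(-), X)$ is contravariant, so the integrand is a functor $(\Sym\gpd1)^\op \times \Sym\gpd1 \to \Set$ and the coend is well-formed.

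For the specialization to $\gpd1 = \gpd2 = \One$, I would observe that $\Sym\One \simeq \B$, that $\PSh(\One) \cong \Set$, and that under these identifications $s$ becomes the inclusion $j : \B \hookrightarrow \Set$ sending $n \mapsto \coprod_{i=1}^n \yon(\ast) \cong [n]$. Then $\PSh(\One)(s(n), X) = \Set([n], X) \cong X^n$, so the coend reduces to $\int^{n \in \B} F(n) \times X^n$. Since $\B$ is a groupoid in which every object $n$ has automorphism group $\symgroup{n}$ and there are no morphisms between objects of different cardinality, this coend decomposes as $\coprod_{n \in \N} (F(n) \times X^n)/{\sim}$, the quotient being by the diagonal $\symgroup{n}$-action; this is precisely $F(n) \timesquotient{\symgroup{n}} X^n$, recovering the formula \eqref{eq:analyticrep}.

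I do not anticipate a genuine obstacle: the argument is a direct application of the coend description of pointwise left Kan extensions into a cocomplete category. The most error-prone step is matching the coend against the given formula, where one must track the variances carefully and confirm that both the tensor by a set and the coend pass to the objectwise level in $\PSh(\gpd2)$.
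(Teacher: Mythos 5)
Your proof is correct. The paper states this proposition without proof (it is background material from the theory of generalized species), and your argument is exactly the standard one the authors implicitly rely on: the coend/copower formula for pointwise left Kan extensions into the cocomplete category $\PSh(\gpd2)$, evaluated objectwise, recovers the displayed formula, and the specialization over $\One$ correctly identifies $\Sym\One \simeq \B$, $s(n) \cong [n]$, and the coend over the groupoid $\B$ with the quotient $F(n) \timesquotient{\symgroup{n}} X^n$ of \eqref{eq:analyticrep}.
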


We note that, for a species $F$ in $\Esp(\gpd1, \gpd2)$ the
induced analytic functor is the same, under the equivalence
$\PSh(\gpd1) \simeq \Esp(\Zero,\gpd1)$ for $\Zero$ the empty groupoid, as the post-composition functor 
\[
F\circ - : \Esp(\Zero, \gpd1) \to \Esp(\Zero, \gpd2). 
\]

Analytic functors between presheaf categories over groupoids can be
characterized directly, thus giving an extensional presentation of
$\Esp$ as a strict 2-category \cite{Fioreanalytic}. We will not need
this characterization in the paper, so we omit it.

\section*{\Large Intensional Theory:\texorpdfstring{\\}{ }Stabilized Profunctors and Stable Species}

We proceed to develop our new model based on groupoids
with kits. We first study profunctors and species over groupoids
with kits; this is what we call the ``intensional theory''. This
provides a refinement of the bicategories $\Prof$ and $\Esp$. 

\section{Kits and stabilized profunctors}
\label{sec:kits-and-sprof}

\subsection{Groupoids and Kits}\label{sec:kits}
We first make some basic remarks about groupoids, and fix some notation. For every object $a$ in a groupoid $\gpd1$, the set of endomorphisms $\Endo[\gpd1](a) = \gpd1(a, a)$ forms a group under composition. The groups $\Endo[\gpd1](a)$ will play a central role in our development and we often simply write $\Endo(a)$. 

For a group $G$, we write $H \subgroup G$ to mean that $H$ is a subgroup of $G$; a canonical example is the \emph{trivial subgroup} $\{ \id[a] \} \subgroup \Endo[\gpd1](a)$ for an object $a$ in a groupoid $\gpd1$.  
Importantly, subgroups of endomorphisms in $\gpd1$ can be transferred across morphisms by conjugation: for a morphism $\alpha : a \to a'$ and $G \subgroup \Endo(a)$, the conjugate subgroup $\alpha\icomp G\icomp \inv\alpha = \{ \alpha\icomp g\icomp \inv\alpha \mid g \in G \}$ is a subgroup of $\Endo(a')$.

Our approach involves attaching to a groupoid $\gpd1$ some additional structure which we will call a \emph{kit}, comprising a family of subgroups of endomorphisms for each object. 

\begin{defi}
  \label{def:kit}
	A \defn{kit} on a groupoid $\gpd1$ is a family 
	$\kit1 = \setof{ \kit1(a) }_{a\in\gpd1}$ consisting of sets $\kit1(a)$ of
	subgroups of $\Endo[\gpd1](a)$ that is closed under conjugation; that is, such
	that, for all $\alpha: a'\to a$ in $\gpd1$, if $G\subgroup\Endo(a)$ is in
	$\kit1(a)$ then the conjugate subgroup 
	$\inv\alpha \icomp G \icomp \alpha\subgroup\Endo(a')$ is in $\kit1(a')$.  
\end{defi}
When $\kit1$ is a kit on $\gpd1$, we often refer to the pair $(\gpd1, \kit1)$ as a kit, which we also denote by $\kitstr{\gpd1}$ for convenience.
\begin{exa}
\label{ex:canonical_kits}
Every groupoid $\gpd1$ has the following canonical choices of kits: the trivial one $\Triv[\gpd1](a) = \left\{ \{ \id[a]\} \right\}$, and the maximal one $\Endo[\gpd1](a) = \{ G \mid G \subgroup \Endo[\gpd1](a) \}$. 
\end{exa}

As usual with linear logic, our model exhibits several aspects of
duality. Every groupoid $\gpd1$ has an \defn{opposite} or \defn{dual}
groupoid $\gpd1^\op$, to which it is isomorphic via the mapping
$\alpha \mapsto \inv\alpha$, and moreover $(\gpd1^\op)^\op=\gpd1$;
this holds in particular for any group $G$ seen as a one-object groupoid. Observe that for any object $a$ of a groupoid $\gpd1$ one has $\Endo[\gpd1^\op](a) = (\Endo[\gpd1^\op](a))^\op$.

To construct kits on dual groupoids we consider a notion of subgroup orthogonality. 
\begin{defi}
For a group $G$ and subgroups $H \subgroup G$ and $K \subgroup G^\op$, we say that $H$ and $K$ are \defn{orthogonal} if $H \cap K = \{ \id \}$, where $\id$ is the identity element in $G$ (and $G^\op$).
\end{defi} 

From this we get the following construction:
\begin{lem}
For a groupoid $\gpd1$ and a kit $\kit1$, $\gpd1^\op$ may be equipped with its \defn{orthogonal kit} $\kit1^\orth$ defined for $a \in \gpd1^\op$ by:
\[
\kit1^\orth(a) = \{ G \subgroup \Endo[\gpd1^\op](a) \mid \forall H \in \kit1(a), G \orth H \}.
\]
\end{lem}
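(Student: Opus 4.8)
The plan is to verify directly that the family $\kit1^\orth$ satisfies the two requirements of Definition~\ref{def:kit}, now viewed as data on the groupoid $\gpd1^\op$. The first requirement, that each $\kit1^\orth(a)$ is a set of subgroups of $\Endo[\gpd1^\op](a)$, is immediate from the defining formula, so all the content lies in establishing closure under conjugation in $\gpd1^\op$. Concretely, I must show that for every morphism $\alpha : a' \to a$ of $\gpd1^\op$ and every $G \in \kit1^\orth(a)$, the conjugate $\inv\alpha \icomp G \icomp \alpha$, with composition and inverse taken in $\gpd1^\op$, belongs to $\kit1^\orth(a')$.

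First I would translate everything back into $\gpd1$. A morphism $\alpha : a' \to a$ in $\gpd1^\op$ is the same datum as a morphism $\alpha : a \to a'$ in $\gpd1$, and because composition in $\gpd1^\op$ is the reverse of that in $\gpd1$, a short unwinding shows that the $\gpd1^\op$-conjugate $\inv\alpha \icomp G \icomp \alpha$ coincides, as a subset of the common underlying set of $\Endo[\gpd1^\op](a') = (\Endo[\gpd1](a'))^\op$, with the $\gpd1$-conjugate $G' := \alpha \icomp G \icomp \inv\alpha$. I will work with $G'$ from here on.

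The core step is then to show $G' \orth H'$ for every $H' \in \kit1(a')$. Given such an $H'$, I use the fact that $\kit1$ is a kit on $\gpd1$, hence closed under conjugation, to transport $H'$ back along $\alpha$: setting $H := \inv\alpha \icomp H' \icomp \alpha$ (conjugation in $\gpd1$), one gets $H \in \kit1(a)$. Since $G \in \kit1^\orth(a)$ by hypothesis, this yields $G \orth H$, i.e. $G \cap H = \{\id[a]\}$. Now conjugation by $\alpha$ is a bijection $\Endo[\gpd1](a) \to \Endo[\gpd1](a')$ sending $\id[a]$ to $\id[a']$ and preserving intersections; since $G' = \alpha \icomp G \icomp \inv\alpha$ and $H' = \alpha \icomp H \icomp \inv\alpha$, I conclude $G' \cap H' = \alpha \icomp (G \cap H) \icomp \inv\alpha = \{\id[a']\}$, that is $G' \orth H'$. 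As $H'$ ranges over all of $\kit1(a')$, this gives $G' \in \kit1^\orth(a')$, which is the desired closure condition.

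The hard part is not any single computation but keeping the dualities straight: one must reconcile the reversed composition of $\gpd1^\op$ with the conjugation-closure axiom of Definition~\ref{def:kit}, which is phrased for $\gpd1$, and one must correctly read the orthogonality relation $G' \orth H'$ — defined via a subgroup of a group and a subgroup of its opposite — as an intersection of subsets sharing a common identity element. Once it is noted that $\Endo[\gpd1^\op](a')$ and $\Endo[\gpd1](a')$ have the same underlying set and identity, and that conjugation is an identity- and intersection-preserving bijection, every step reduces to bookkeeping.
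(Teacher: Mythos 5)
Your proof is correct and follows essentially the same route as the paper, which condenses the whole argument into the single observation that $\inv\alpha\icomp G\icomp\alpha \orth H$ if and only if $G \orth \alpha\icomp H\icomp\inv\alpha$; your version simply unpacks this by transporting $H'$ back along $\alpha$ via the conjugation-closure of $\kit1$ and noting that conjugation is an identity- and intersection-preserving bijection. The extra care you take with the $\gpd1$ versus $\gpd1^\op$ bookkeeping is sound and fills in details the paper leaves implicit.
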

\begin{proof}
For the closure under conjugation observe that for a morphism $\alpha : a' \to a$ and subgroups $G \subgroup {\Endo}_{\gpd1}(a)$ and $H \subgroup {\Endo}_{\gpd1^\op}(a')$,
	$\inv\alpha\icomp G\icomp\alpha \orth H$ if and only if
	$G \orth \alpha\icomp H\icomp\inv\alpha$.
\end{proof}

As a basic example observe that, for a groupoid $\gpd1$, the canonical kits from Example~\ref{ex:canonical_kits} are orthogonal:
\[ 
\Triv[\gpd1]^\orth = \Endo[\gpd1^\op]
 \qquad 
\Endo[\gpd1]^\orth = \Triv[\gpd1^\op]. 
\]

We define \emph{Boolean kits} to be those kits which are closed under \emph{double orthogonality}.  
\begin{defi}
\label{def:Kits}
	A \defn{Boolean kit} $\kit1$ on a groupoid $\gpd1$ is a kit $(\gpd1, \kit1)$ such
	that $\kit1=\kit1^\dorth$. 
\end{defi}

Writing $\Kit(\gpd1)$ for the poset of kits on $\gpd1$ under (component-wise) inclusion, the orthogonality relation on kits induces a Galois connection 
\vspace{-0.1cm}
	\begin{center}
	\begin{tikzpicture}[thick, line join=round,yscale=0.5]
	\node (A) at (0,0) {$\Kit(\gpd1)^\op $};
	\node (B) at (3.5, 0) {$\Kit(\gpd1^\op)$};
	\draw [->] (A) to [bend left =30]  node [above] {$(-)^\orth$} (B);
	\draw [->] (B) to [bend left =30]  node [below] {$(-)^\orth$} (A);
	\node (C) at (1.75,0) {$\bot$};
	\end{tikzpicture}
\end{center}
\vspace{-0.2cm}
whose fixed points $\kit1 = \kit1^\dorth$ are Boolean kits. The requirement that structures be closed under double orthogonality is common in the model theory of linear logic \cite{GlueingHylandShalk} and is necessary when one seeks a model for the \emph{classical} system, in which negation is involutive. More concretely, this condition ensures that certain basic properties always hold for a kit, as we will see shortly. 
	\begin{lem}\label{lem:BasicOrthogonalityProperties}
		Let $\kit1, \kit1'$ be kits on a groupoid $\gpd1$. Then: 
		\begin{enumerate}
			\item \label{prop:BasicOrthogonalityPropertiesTwo}
			$\kit1 \subseteq \kit1^\dorth$, where $\subseteq$ is component-wise inclusion.
			\item
			If $\kit1 \subseteq\kit1'$ then $\kit1'^\orth\subseteq\kit1^\orth$.
		\end{enumerate}
	\end{lem}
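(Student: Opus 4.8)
The plan is to treat both statements as the two defining inequalities of the Galois connection $(-)^\orth \dashv (-)^\orth$ between $\Kit(\gpd1)^\op$ and $\Kit(\gpd1^\op)$, and to derive them directly from the definition of $(-)^\orth$ together with the symmetry of the orthogonality relation on subgroups. The one preliminary observation I would record first is that orthogonality is symmetric: for $H \subgroup G$ and $K \subgroup G^\op$, the condition $H \orth K$ is by definition $H \cap K = \{\id\}$, and since $G$ and $G^\op$ share the same underlying set and the same identity element, this condition is manifestly unchanged under swapping $H$ and $K$. Thus $H \orth K$ iff $K \orth H$, where on the right $K$ is read as a subgroup of $G^\op$ and $H$ as a subgroup of $(G^\op)^\op = G$. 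This symmetry, applied pointwise via $\Endo[\gpd1^\op](a) = (\Endo[\gpd1](a))^\op$ and $(\gpd1^\op)^\op = \gpd1$, is what turns the two copies of $(-)^\orth$ into genuine adjoints.

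For part (2), the antitone property, I would argue pointwise. Fix $a$ and suppose $G \in \kit1'^\orth(a)$, so that $G \subgroup \Endo[\gpd1^\op](a)$ and $G \orth H$ for every $H \in \kit1'(a)$. Since $\kit1 \subseteq \kit1'$ gives $\kit1(a) \subseteq \kit1'(a)$, the quantifier over $\kit1'(a)$ in particular covers every $H \in \kit1(a)$, whence $G \orth H$ for all $H \in \kit1(a)$; that is, $G \in \kit1^\orth(a)$. As $a$ was arbitrary this yields $\kit1'^\orth \subseteq \kit1^\orth$.

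For part (1), the extensivity of double orthogonality, I would again work pointwise. Note first that $\kit1^\dorth$ is a kit on $(\gpd1^\op)^\op = \gpd1$, so the inclusion $\kit1 \subseteq \kit1^\dorth$ compares two families of subgroups of the same endomorphism groups. Fix $a$ and $H \in \kit1(a)$; to see $H \in \kit1^\dorth(a)$ I must check that $H \orth G$ for every $G \in \kit1^\orth(a)$. But by the very definition of $\kit1^\orth$, any such $G$ satisfies $G \orth H'$ for all $H' \in \kit1(a)$, and taking $H' = H$ gives $G \orth H$; symmetry of orthogonality then yields $H \orth G$, as required.

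The content here is entirely formal, so there is no real obstacle beyond bookkeeping: the only point that needs care is tracking, at each step, whether a given subgroup lives in $\Endo[\gpd1](a)$ or in $\Endo[\gpd1^\op](a)$, and invoking the identifications $(\gpd1^\op)^\op = \gpd1$ and $\Endo[\gpd1^\op](a) = (\Endo[\gpd1](a))^\op$ so that the symmetric orthogonality condition is applied in the correct direction. Once symmetry is isolated as above, both inequalities are immediate, and I would not expect to need the closure-under-conjugation clause of the kit definition for this lemma at all.
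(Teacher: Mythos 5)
Your argument is correct and is exactly the standard Galois-connection reasoning (symmetry of $H\cap K=\{\id\}$ gives extensivity of $(-)^\dorth$, and enlarging the set quantified over gives antitonicity); the paper in fact states this lemma without proof, treating it as immediate. Your observation that closure under conjugation plays no role in the inclusions themselves (it is only needed, in the preceding lemma, to know that $\kit1^\orth$ is again a kit) is also accurate.
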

It follows that $\kit1^\orth = \kit1^{\dorth\perp}$; so, in particular, $\kit1^\orth$ is a Boolean kit. 
\begin{lem}\label{lem:kitDownClosed}
Let $(\gpd1, \kit1)$ be a Boolean kit, and let $a \in \gpd1$. Then, for each $a \in \gpd1$, the set $\kit1(a)$ is closed under subgroups (in particular, $\{\id[a]\} \in \kit1(a)$) and under unions of directed subsets.  
\end{lem}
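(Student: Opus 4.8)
The plan is to reduce every assertion to the defining identity $\kit1 = \kit1^\dorth$ of a Boolean kit, combined with one elementary observation: the orthogonality relation $G \orth H \iff G \cap H = \{\id\}$ is \emph{monotone under passage to subgroups}, in the sense that if $G \orth H$ and $G' \subgroup G$, then $G' \orth H$, simply because $G' \cap H \subseteq G \cap H = \{\id\}$. This monotonicity is the engine behind both closure properties, and it makes the whole argument essentially formal.

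I would first dispatch the trivial subgroup. Since $\{\id[a]\} \cap H = \{\id[a]\}$ for every subgroup $H \subgroup \Endo[\gpd1^\op](a)$ (as $H$ contains the identity), the subgroup $\{\id[a]\}$ is orthogonal to everything in $\kit1^\orth(a)$; hence it lies in $\kit1^\dorth(a)$ by definition, and therefore in $\kit1(a)$ because $\kit1 = \kit1^\dorth$. (Note this needs no nonemptiness hypothesis on $\kit1(a)$.) Closure under subgroups then follows by the same pattern: given $G \in \kit1(a)$ and $G' \subgroup G$, and any $H \in \kit1^\orth(a)$, we have $G \orth H$ since $G \in \kit1(a) = \kit1^\dorth(a)$, whence $G' \orth H$ by the monotonicity above. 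Thus $G' \in \kit1^\dorth(a) = \kit1(a)$.

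For directed unions, let $\{G_i\}_{i}$ be a directed family in $\kit1(a)$ and set $G = \bigcup_i G_i$. The argument splits into two parts. First, $G$ must be verified to be a subgroup of $\Endo[\gpd1](a)$: closure under the group operation is where directedness is genuinely used, since for $x \in G_i$ and $y \in G_j$ one picks a common upper bound $G_k \supseteq G_i, G_j$ so that $x, y \in G_k$ and hence $xy^{-1} \in G_k \subseteq G$. Second, with $G$ now a legitimate subgroup, I would check $G \in \kit1^\dorth(a)$: for any $H \in \kit1^\orth(a)$, any element $x \in G \cap H$ lies in some $G_i$, so $x \in G_i \cap H = \{\id[a]\}$ because $G_i \orth H$; therefore $G \orth H$, and $G \in \kit1^\dorth(a) = \kit1(a)$.

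There is no real obstacle here. The only point requiring any care is the subgroup-closure of the union: this is precisely where one needs the family to be \emph{directed} rather than arbitrary, as the orthogonality computation itself works set-theoretically for any union. Everything else is a direct consequence of the Boolean condition $\kit1 = \kit1^\dorth$ and the monotonicity of orthogonality under shrinking subgroups.
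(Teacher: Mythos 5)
Your proof is correct and follows essentially the same route as the paper's: both closure properties are reduced to the identity $\kit1 = \kit1^\dorth$ via the inclusion $G' \cap H \subseteq G \cap H = \{\id\}$ for subgroups and the identity $\bigl(\bigcup_i G_i\bigr) \cap H = \bigcup_i (G_i \cap H)$ for directed unions. Your additional remark that directedness is needed precisely to make the union a subgroup is a point the paper asserts without elaboration, but the argument is the same.
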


\begin{proof}
	Let $H' \leq H \in \kit G$ and $G \in \kit G^\orth$, then $H' \cap G \subseteq H \cap G = \{ \id \}$ wich implies that $H' \in \kit G^\dorth = \kit G$.
	Let $\{ G_i \}_{i\in I}$ be a directed family in $\kit G$, then $ \bigcup_{i\in I} G_i$ is a subgroup of $G$ and for $H \in \kit G^\orth$, $\left( \bigcup_{i\in I} G_i  \right) \cap H = \bigcup_{i\in I} (G_i \cap H) = \{ \id \}$ so that $\bigcup_{i\in I} G_i$ is in $\kit G^\dorth = \kit G$ as desired.
\end{proof}

We consider as an example the possible Boolean kits on the group $C_6$, the cyclic group of order 6, seen as a groupoid. For an element $g$ of a group $G$, we denote by $\cyclic g$ the \defn{cyclic subgroup} generated by $g$; that is, $\cyclic g = \{ g^n \mid n \in \Z \}.$ Now, writing $\gamma$ for any generator of $C_6$, it can be verified that the only possible Boolean kits are the maximal and trivial ones, as well as the subsets $\{ \cyclic{\gamma^2 }, \{\id \}  \}$ and $\{ \cyclic{\gamma^3 }, \{\id \}  \}.$
These form a complete Boolean algebra, and in fact this is the case for the set of Boolean kits on any groupoid. 

It is sometimes useful to unfold the condition $\kit{A} = \kit{A}^\dorth$ by looking at the family of subsets $\bigcup \kit1(a) \subseteq \Endo(a)$, for $a \in \gpd1$. 
This brings us close to Taylor's \emph{creeds} \cite{TaylorGroupoidsLinearLogic}, which are subsets of endomorphisms
meeting certain explicit closure conditions. In practice, to show that a kit $\kit{A}$ is a Boolean kit, we show that for all $a\in \gpd{A}$, for all $\alpha \in \bigcup \kit{A}(a)^\dorth$, $\alpha$ is in $\bigcup \kit{A}(a)$ which corresponds to the \emph{saturation} property defined below:
\begin{defi}\label{def:saturatedKit}
	A kit $\kit{A}$ on a groupoid $\gpd{A}$ is \emph{saturated} if it satisfies the following condition: for $a \in \gpd{A}$ and $\alpha\in \Endo(a)$, if the formula $\Phi_{ \bigcup{\kit{A}}}(\alpha)$ given by
	\[ 
	\forall n\in\N, \alpha^n=\id \vee (\exists m\in\N , \alpha^{nm}\neq \id \wedge \alpha^{nm}\in \bigcup \kit{A}(a))
	\]
	holds, then $\alpha\in \bigcup \kit{A}(a)$.
\end{defi}

Saturation is however not enough to prove $\kit{A}= \kit{A}^\dorth$, it only shows that $\bigcup \kit{A}(a) = \bigcup \kit{A}^\dorth(a)$ for all $a \in \gpd{A}$. We need further that $\kit{A}(a) =\{ H \leq \gpd{A}(a,a)\mid H  \subseteq  \bigcup\kit{A}(a) \}$ for all $a\in \gpd{A}$.
\begin{lem}\label{lem:characterisationDoubleOrth}
	For a groupoid $\gpd{A}$ and a kit $\kit{A}$ on $\gpd{A}$, $\kit{A} = \kit{A}^\dorth$ if and only if 
	\begin{enumerate}
		\item for all $a \in \gpd{A}$, $\kit{A}(a) = \{ H \leq \gpd{A}(a,a)\mid H  \subseteq  \bigcup\kit{A}(a) \}$ 
		\item $\kit{A}$ is saturated. 
	\end{enumerate} 
\end{lem}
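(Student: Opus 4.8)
The plan is to split the identity $\kit{A}=\kit{A}^\dorth$ into two independent pieces: a condition on the \emph{sets} $\bigcup\kit{A}(a)\subseteq\Endo(a)$, which will be exactly saturation, and a downward-closure condition on the subgroups lying inside those sets, which will be condition~(1). The tool that drives everything is an explicit description of $\bigcup\kit{A}^\dorth(a)$ in terms of $\bigcup\kit{A}(a)$, obtained by unfolding orthogonality twice. Throughout I work inside the common underlying set of $\Endo(a)$ and $\Endo[\gpd{A}^\op](a)$, on which powers $\alpha^n$ and cyclic subgroups $\cyclic\alpha$ are unambiguous.

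First I would establish a single-step formula. Orthogonal complements are automatically closed under subgroups (if $K\orth H$ for all $H\in\kit{A}(a)$ and $K'\subgroup K$, then $K'\orth H$ as well), so $g\in\bigcup\kit{A}^\orth(a)$ holds iff the cyclic subgroup $\cyclic g$ belongs to $\kit{A}^\orth(a)$, i.e. iff $\cyclic g\cap H=\{\id\}$ for every $H\in\kit{A}(a)$, i.e. iff $\cyclic g\cap\bigcup\kit{A}(a)=\{\id\}$. Spelling out membership in $\cyclic g$, this reads: for every $k$, $g^k\in\bigcup\kit{A}(a)$ implies $g^k=\id$. This elementary equivalence is the one computation I would carry out in full.

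The main obstacle is then to apply this formula a second time, with $\kit{A}$ replaced by $\kit{A}^\orth$, to an element $\alpha$ and its powers $\alpha^n$. Tracking the nested quantifiers and taking the contrapositive of the inner clause yields
\[
\alpha\in\textstyle\bigcup\kit{A}^\dorth(a)\ \iff\ \Phi_{\bigcup\kit{A}}(\alpha),
\]
that is, for all $n$ either $\alpha^n=\id$ or there is $m$ with $\alpha^{nm}\neq\id$ and $\alpha^{nm}\in\bigcup\kit{A}(a)$. The delicate points are matching the disjunction and negations to the precise shape of $\Phi$, and checking that the degenerate exponents $n=0$, $m=0$ are harmless. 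Since $\kit{A}\subseteq\kit{A}^\dorth$ always holds by Lemma~\ref{lem:BasicOrthogonalityProperties}, this formula already shows that $\bigcup\kit{A}(a)=\bigcup\kit{A}^\dorth(a)$ for all $a$ is equivalent to saturation, accounting for condition~(2).

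Finally I would assemble the two directions. For $(\Rightarrow)$, assume $\kit{A}=\kit{A}^\dorth$. Condition~(2) follows since $\Phi_{\bigcup\kit{A}}(\alpha)$ now characterizes membership in $\bigcup\kit{A}(a)$ itself; condition~(1) holds because the inclusion $\subseteq$ is immediate, while any subgroup $H\subseteq\bigcup\kit{A}(a)$ is orthogonal to every $K\in\kit{A}^\orth(a)$ --- a common element of $H$ and $K$ lies in some member of $\kit{A}(a)$ and is therefore trivial --- giving $H\in\kit{A}^\dorth(a)=\kit{A}(a)$. For $(\Leftarrow)$, assume (1) and (2); since $\kit{A}\subseteq\kit{A}^\dorth$ always, it suffices to show $\kit{A}^\dorth\subseteq\kit{A}$. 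Given $H\in\kit{A}^\dorth(a)$, each $g\in H$ lies in $\bigcup\kit{A}^\dorth(a)$, hence satisfies $\Phi_{\bigcup\kit{A}}(g)$, hence lies in $\bigcup\kit{A}(a)$ by saturation; thus $H\subseteq\bigcup\kit{A}(a)$, and condition~(1) gives $H\in\kit{A}(a)$, completing the proof.
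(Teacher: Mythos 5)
Your proof is correct and follows essentially the same route as the paper's: condition (1) in the forward direction is obtained by checking that a subgroup contained in $\bigcup\kit{A}(a)$ is orthogonal to every member of $\kit{A}^\orth(a)$, condition (2) by unfolding double orthogonality into the formula $\Phi$, and the converse by combining saturation (which gives $\bigcup\kit{A}(a)=\bigcup\kit{A}^\dorth(a)$) with condition (1). The only difference is one of detail: you carry out explicitly the two-step unfolding that the paper dismisses as ``an immediate unfolding of the double orthogonality'', and you avoid the paper's appeal to Lemma~\ref{lem:kitDownClosed} by arguing directly that a common element of $H$ and $K$ lies in some member of $\kit{A}(a)$.
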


\begin{proof}
	Assume that $\kit{A} = \kit{A}^\dorth$ and let $H \leq  \gpd{A}(a,a)$ be such that $H  \subseteq  \bigcup\kit{A}(a)$ i.e. for all $\alpha \in H$, $\cyclic \alpha$ is in $\kit{A}(a)$ since $\kit{A}(a)$ is downclosed by Lemma \ref{lem:kitDownClosed}. Let $K$ be in $\kit{A}^\orth(a)$, for any $\alpha \in H \cap K$, we have $\cyclic \alpha \cap K = \{ \id \}$ which implies that $H$ is in $\kit{A}^\dorth(a) = \kit{A}(a)$ so that $\kit{A}(a) = \{ H \leq   \gpd{A}(a,a) \mid H  \subseteq  \bigcup\kit{A}(a) \}$ as desired. The saturation property is an immediate unfolding of the double orthogonality.
	For the other direction, if $\kit{A}$ is saturated, we have $\bigcup \kit{A}(a) = \bigcup \kit{A}^\dorth(a)$ for all $a \in \gpd{A}$ which implies that $\kit{A} = \kit{A}^\dorth$ by $(1).$ 
\end{proof}

For a kit $(\gpd1,\kit1)$ we will use the notation $(\gpd1,\kit1)^\orth=(\gpd1^\op,\kit1^\orth)$, and continue to use underline symbols $\kitstr{\gpd1}, \kitstr{\gpd2},$ for pairs $(\gpd1,\kit1), (\gpd2, \kit2)$.

\subsection{Stabilized Profunctors}\label{sec:stprof}
For groupoids equipped with kits, we will consider certain constrained profunctors which we call \emph{stabilized}. 

\begin{defi} 
\label{def:sprofunctor}
For kits $(\gpd1,\kit1)$ and $(\gpd2,\kit2)$, a \defn{stabilized profunctor} $P: (\gpd1,\kit1)\profto (\gpd2,\kit2)$ is a profunctor
	$P:\gpd1\profto\gpd2$ such that, for all $a\in\gpd1$, $b\in\gpd2$, 
	$p\in P(b,a)$, $\alpha\in\Endo(a)$, $\beta\in\Endo(b)$, if 
	$\alpha\act p\act\beta=p$ then 
	\[\textstyle
	\alpha \in \bigunion \kit1(a) \implies \beta \in \bigunion \kit2(b)
	\;\text{and}\;
	\beta \in \bigunion \kit2^\orth(b) \implies \alpha \in \bigunion \kit1^\orth(a).
	\]
\end{defi} 

It is instructive to consider some examples. 
\begin{exa}
\begin{enumerate}
\item Any profunctor $P : \gpd1 \profto \gpd2$ is a stabilized profunctor $P : (\gpd1, \Triv[\gpd1]) \profto (\gpd2, \Endo[\gpd2])$.
\item A stabilized profunctor $(\gpd1,\Endo[\gpd1])\profto(\gpd2,\Endo[\gpd2])$ is a profunctor $P : \gpd1 \profto \gpd2$ that acts freely on $\gpd1$. In other words, for all $a\in\gpd1$, $b\in\gpd2$, $p\in P(b,a)$, $\alpha\in\Endo(a)$, if $\alpha\act p=p$ then $\alpha=\id[a]$.
\item For any kit $\kitstr{\gpd1} = (\gpd1,\kit1)$, the  identity $\yon : \gpd1^\op \times \gpd1 \to \Set : (a, a') \mapsto \gpd1(a, a')$ is a stabilized profunctor $\kitstr{\gpd1} \profto \kitstr{\gpd1}$. This is because, 
for $a, a' \in \gpd1$, $\alpha \in \Endo(a)$, $\alpha' \in \Endo(a')$, and $\gamma \in \gpd1(a, a')$, $\alpha' \act \gamma \act \alpha = \alpha' \comp \gamma \comp \alpha = \gamma$ implies that $\alpha$ and $\alpha'$ are conjugate; and both $\kit1$ and $\kit1^\orth$ are closed under conjugation.   
\end{enumerate}
\end{exa}

Stabilized profunctors are closed under composition: 
\begin{lem}
	For kits $(\gpd1, \kit1)$, $(\gpd2, \kit2)$ and $(\gpd3, \kit3)$, if $P : (\gpd1, \kit1) \profto (\gpd2, \kit2)$ and $Q : (\gpd2, \kit2) \profto (\gpd3, \kit3)$ are stabilized profunctors, then their composite $Q \circ P$ is an stabilized profunctor $(\gpd1, \kit1) \profto (\gpd3, \kit3)$.
\end{lem}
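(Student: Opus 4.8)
The plan is to reduce the statement to the two given stabilization hypotheses by exploiting the explicit coend description of $Q \circ P$. First I would record the standard characterization of the equivalence relation $\sim$ defining $(Q\circ P)(c,a)$: since $\gpd2$ is a groupoid, two tuples $(b,x,s)$ and $(b',x',s')$, with $x\in P(b,a)$, $s\in Q(c,b)$ and correspondingly for the primed data, are identified if and only if there is a \emph{single} morphism $\beta\in\gpd2(b,b')$ with $x' = x\act\inv\beta$ and $s' = \beta\act s$. This relation contains the given generators and is itself an equivalence relation, and conversely each of its instances is a generator, so it coincides with $\sim$; the groupoid structure is precisely what lets one collapse an arbitrary zig-zag into one $\beta$.

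Next I would take $r\in(Q\circ P)(c,a)$ together with a representative $(b,p,q)$, so that $r = [b,p,q]$ with $p\in P(b,a)$ and $q\in Q(c,b)$. For $\alpha\in\Endo(a)$ and $\gamma\in\Endo(c)$ one has $\alpha\act r\act\gamma = [b,\,\alpha\act p,\,q\act\gamma]$, so the hypothesis $\alpha\act r\act\gamma = r$ reads $(b,\alpha\act p, q\act\gamma)\sim(b,p,q)$. Both tuples sit over the same index $b$, so the characterization above yields an endomorphism $\beta\in\Endo(b)$ with
\[
\alpha\act p\act\inv\beta = p \qquad\text{and}\qquad \beta\act q\act\gamma = q.
\]
These are exactly fixpoint equations of the shape required by Definition~\ref{def:sprofunctor}, with $\beta$ serving as the intermediate symmetry at $b$ that glues the $P$-side and the $Q$-side.

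It then remains to chain the two stabilization conditions through $b$. I would first note that each set $\bigunion \kit2(b)$ and $\bigunion \kit2^\orth(b)$ is a union of subgroups and hence closed under inverses, so $\beta\in\bigunion\kit2(b)\iff\inv\beta\in\bigunion\kit2(b)$, and similarly for $\kit2^\orth$. For the forward clause, assume $\alpha\in\bigunion\kit1(a)$: stabilization of $P$ applied to $\alpha\act p\act\inv\beta = p$ gives $\inv\beta\in\bigunion\kit2(b)$, hence $\beta\in\bigunion\kit2(b)$, and stabilization of $Q$ applied to $\beta\act q\act\gamma = q$ then gives $\gamma\in\bigunion\kit3(c)$. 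For the backward clause, assume $\gamma\in\bigunion\kit3^\orth(c)$: stabilization of $Q$ gives $\beta\in\bigunion\kit2^\orth(b)$, hence $\inv\beta\in\bigunion\kit2^\orth(b)$, and stabilization of $P$ then gives $\alpha\in\bigunion\kit1^\orth(a)$. This verifies both clauses of Definition~\ref{def:sprofunctor} for $Q\circ P$.

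The only delicate point is the first step: getting the coend characterization right and, in particular, tracking the variances carefully so that the extracted $\beta\in\Endo(b)$ appears as $\inv\beta$ on the $P$-side and as $\beta$ on the $Q$-side. Once that bookkeeping is settled, the rest is the routine chaining above, relying only on the inverse-closure of the relevant subsets of $\Endo(b)$.
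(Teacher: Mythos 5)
Your proof is correct and follows essentially the same route as the paper's: represent the element of the coend as $p \coendtensor[b] q$, extract a single mediating endomorphism $\beta \in \Endo(b)$ from the identification $(\alpha \cdot p) \coendtensor[b] (q \cdot \gamma) = p \coendtensor[b] q$, and chain the two stabilization conditions through $b$, using that $\bigcup\kit2(b)$ and $\bigcup\kit2^\orth(b)$ are closed under inverses (the paper phrases this as closure under powers within a subgroup). The only cosmetic difference is that your $\beta$ is the paper's $\beta^{-1}$, and you spell out slightly more explicitly why the zig-zag in the coend relation collapses to a single morphism over a groupoid.
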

\begin{proof}
Let $a \in \gpd1$, $c \in \gpd3$, $\alpha\in \gpd1(a,a)$, $\gamma \in \gpd3(c,c)$ and $t \in (Q \circ P)(c,a)$ be such that $\alpha \cdot t \cdot \gamma= t$. The element $t$ is of the form $p \coendtensor[b] q \in \int^b P(b,a) \times Q(c,b)$ for some $b \in \gpd2$, $p \in P(b, a)$ and $q \in Q(c,b)$ so $\alpha \cdot t \cdot \gamma= t$ is equivalent to $(\alpha \cdot p) \coendtensor[b] (q \cdot \gamma) = p \coendtensor[b] q$ which implies that there exists $\beta \in \gpd2(b,b)$ such that $\alpha \cdot p \cdot \beta = p$ and $\beta^{-1} \cdot q \cdot \gamma = q$. If $\alpha \in  \bigcup \kit1(a)$ then since $P$ is a stabilized profunctor and $\alpha\cdot p \cdot \beta = p$, we have $\beta \in \bigcup \kit2(b)$ which implies that $\beta^{-1} \in \bigcup \kit2(b)$ by closure under powers ($\beta$ is an element of a group $G \in \kit2(b)$). Likewise, since $Q$ is a stabilized profunctor and $\beta^{-1} \cdot q \cdot \gamma = q$, we have $\gamma \in \bigcup \kit3(c)$. The other implication $\gamma\in \bigcup \kit3^\perp(c) \Rightarrow \alpha\in \bigcup \kit1^\perp(a)$ is shown similarly.
\end{proof}

Boolean kits, stabilized profunctors, and natural transformations assemble into a bicategory which we denote $\SProf$. 

\begin{defi}
	The \emph{dual} of a profunctor $P:\gpd1\profto\gpd2$ is the profunctor 
	$P^\dual: \gpd2^\op\profto\gpd1^\op$ given by:
	\[
	(a,b) \mapsto P(b,a).
	\] 
	This induces a self-duality $(-)^\orth : \Prof^\op \to \Prof$.
\end{defi}
The following observation is immediate from the definition: if $P$ is a stabilized profunctor $\kitstr{\gpd1}\profto \kitstr{\gpd2}$ then $P^\dual$ is a stabilized profunctor $\kitstr{\gpd2}^\perp\profto \kitstr{\gpd1}^\perp$, and indeed the dualisation operation on kits $\kitstr{\gpd1} \mapsto  \kitstr{\gpd1}^\perp$ induces a self-duality on $\SProf$. 

The additional structure may be forgotten through a collapse:
\begin{prop}
	There is a forgetful pseudo-functor $\SProf \to \Prof$ which is strict, faithful and locally full and faithful; that is, for kits $(\gpd1, \kit1)$ and $(\gpd2, \kit2)$, the induced functor $\SProf((\gpd1, \kit1), (\gpd2, \kit2)) \to \Prof(\gpd1, \gpd2)$ is injective on objects and fully faithful.
\end{prop}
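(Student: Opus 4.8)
The plan is to establish the three properties of the forgetful pseudo-functor $U : \SProf \to \Prof$ in turn, relying on the fact that a stabilized profunctor is simply a profunctor satisfying an extra \emph{property} (Definition~\ref{def:sprofunctor}), not a profunctor equipped with extra \emph{data}. First I would define $U$ on objects by $(\gpd1, \kit1) \mapsto \gpd1$, on $1$-cells by sending a stabilized profunctor $P : \kitstr{\gpd1} \profto \kitstr{\gpd2}$ to the underlying profunctor $P : \gpd1 \profto \gpd2$, and on $2$-cells by the identity assignment on natural transformations. Since composition of stabilized profunctors is computed exactly as composition in $\Prof$ (the preceding lemma shows the composite profunctor is again stabilized, with the same underlying coend), and identities are the hom-profunctors in both settings, the structural coherence $2$-cells of $U$ can be taken to be the identities inherited from $\Prof$. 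This makes $U$ \emph{strict}: it preserves composition and identities on the nose rather than up to non-trivial isomorphism.

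Next I would verify strictness and faithfulness at the level of objects and $1$-cells. Strictness follows from the previous paragraph. For the local statements, fix Boolean kits $(\gpd1, \kit1)$ and $(\gpd2, \kit2)$ and consider the induced functor on hom-categories $U_{\gpd1, \gpd2} : \SProf((\gpd1, \kit1), (\gpd2, \kit2)) \to \Prof(\gpd1, \gpd2)$. Injectivity on objects is immediate: the objects of $\SProf((\gpd1,\kit1),(\gpd2,\kit2))$ are precisely those profunctors $\gpd1 \profto \gpd2$ satisfying the stabilization condition, so two stabilized profunctors with equal underlying profunctor are literally equal. Full faithfulness on $2$-cells is equally direct: a $2$-cell in $\SProf$ between stabilized profunctors $P, Q$ is \emph{defined} to be a natural transformation $P \Rightarrow Q$ of the underlying profunctors, with no additional constraint, so $U_{\gpd1,\gpd2}$ is the identity on hom-sets and hence bijective. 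This is where the design of $\SProf$ pays off: because stabilization is a property closed under composition and the $2$-cells are unrestricted, the local functors are full embeddings.

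Finally I would address faithfulness of $U$ as a pseudo-functor in the global sense. Since $U$ is injective on objects (distinct groupoids, a fortiori distinct Boolean kits, have distinct images) and, as just shown, injective on $1$-cells and $2$-cells, the required faithfulness conditions hold. I would note here the one subtlety worth flagging: distinct Boolean kits $\kit1 \neq \kit1'$ on the \emph{same} groupoid $\gpd1$ are identified by $U$ on objects, so $U$ is not injective on objects of $\SProf$ in the naive sense. The statement to be proved says only that the \emph{pseudo-functor} is strict, faithful, and locally fully faithful, and that the hom-functors are injective on objects and fully faithful; none of these requires injectivity of $U$ on the objects of $\SProf$, so there is no difficulty, merely a point to state carefully.

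I do not expect a genuine obstacle in this proof: the entire content is that stabilization is a property preserved by the profunctor composition already established, and that $2$-cells carry no extra structure. The only step demanding any care is the bookkeeping of the coherence data making $U$ a strict pseudo-functor, i.e.\ checking that the unitors and associators of $\SProf$ are sent to those of $\Prof$; but since $\SProf$ inherits its associators and unitors directly from $\Prof$ (both being governed by the same coend computation, now merely restricted to stabilized profunctors), these coherence $2$-cells match identically and strictness is automatic.
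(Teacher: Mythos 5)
Your proposal is correct and is exactly the argument the paper has in mind (the paper states this proposition without proof, treating it as immediate from the fact that stabilization is a property of profunctors rather than extra structure, that it is closed under composition by the preceding lemma, and that the $2$-cells of $\SProf$ are unrestricted natural transformations). Your remark that the forgetful pseudo-functor identifies distinct kits on the same groupoid, and that this does not conflict with any of the claimed properties, is a worthwhile clarification.
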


\section{Logical structure of stabilized profunctors}
\label{sec:logical-structure-sprof}

We show that $\SProf$ is a model of classical linear logic, and that
this structure extends and refines that of $\Prof$. In this
section we focus on multiplicative and additive structure. The
exponential modality will be discussed in the next section.

\subsection{Biproduct structure.}
\label{sec:biproducts}
 In the standard $1$-dimensional models of stability (coherence spaces, probabilistic coherence spaces, etc.), the linear categories do not have biproducts which implies in particular that they are not models of differential linear logic. In our setting, the biproduct structure in $\Prof$ can be transported to $\SProf$.
For a family of groupoids $(\gpd1_i)_{i\in I}$, we denote by $\with_{i \in I} \gpd1_i$ their coproduct in $\Cat$, whose objects are given by the disjoint union $\bigcup_{i\in I} \{i\} \times \gpd1_i$. In $\Prof$, this is a \emph{biproduct}, meaning that $\with_{i \in I} \gpd1_i$ is a product \emph{and} a coproduct for the family $(\gpd1_i)_{i\in I}$.

For a finite family of kits $\{ \kitstr{\gpd1}_i = (\gpd1_i, \kit1_i) \}_{i \in I}$, the catgorical coproduct $\with_{i \in I} \gpd1_i$ has a canonical kit structure $(\with_{i \in I} \gpd1_i, \coprod_{i \in I} \kit1_i)$ defined as $(\coprod_{i \in I} \kit1_i)(j, a) = \kit1_j(a)$ for all $j \in I, a \in \gpd1$. Since $(\coprod_{i \in I} \kit1_i^\orth)^\orth = \coprod_{i \in I} \kit1_i$, it is in fact a Boolean kit which implies that the product kit $\with_{i \in I} \kitstr{\gpd1}_i:= (\with_{i \in I} \gpd1_i, \coprod_{i \in I} \kit1_i)$ is equal to the coproduct kit $\oplus_{i \in I} \kitstr{\gpd1}_i:=(\with_{i \in I} \kitstr{\gpd1}_i^\orth)^\orth $.

The projections $\pi_j : \with_{i \in I} {\gpd1}_i \profto {\gpd1}_j$ in $\Prof$ determine stabilized profunctors $\with_{i \in I}\kitstr{{\gpd1}}_i \profto \kitstr{{\gpd1}}_j$, and one can show the existence of an adjoint equivalence
\[
\prod_{i \in I} \SProf(\kitstr{\gpd2}, \kitstr{\gpd1}_{i\in I})  \simeq \SProf(\kitstr{\gpd2}, \with_{i \in I} \kitstr{\gpd1}_i)
\]
making $\with_{i \in I} \kitstr{\gpd1}_i$ a bicategorical product. From the self-dual structure of $\SProf$, we obtain that the inclusions $\colimin_j : \kitstr{\gpd1}_j \profto  \with_{i \in I} \kitstr{\gpd1}_i$, obtained as the dual of the projection $\pi_j : \with_{i \in I} \kitstr{\gpd1}^\perp_i \profto \kitstr{\gpd1}^\perp_j$, are also stabilized profunctors and yield the desired adjoint equivalence in $\SProf$ for coproducts.
 We thus have finite biproducts, with the zero object defined by the kit $(\Zero, \varnothing)$ on the empty groupoid $\Zero$.

\subsection{$*$-Autonomous structure.}\label{sec:*autonomous}

The bicategory $\Prof$ can be equipped with a symmetric monoidal structure defined by the cartesian product of categories $\gpd1 \tensor \gpd2 = \gpd1 \times \gpd2$, with monoidal unit given by the terminal category $\One$. Additionally, the duality pseudo-functor $\gpd1 \mapsto \gpd1^\op$ makes $\Prof$ a compact closed bicategory \cite{StayCompactBicat}. From the point of view of linear logic, $\Prof$ is a degenerate model: $(\gpd1^\perp \tensor \gpd2^\perp)^\perp = \gpd1 \tensor \gpd2$, hence the multiplicative connectives $\parr$ and $\tensor$ are identified. The addition of kits to groupoids eliminates this degeneracy: $\SProf$ is not compact closed but merely $*$-autonomous, as we will see. 

For kits $(\gpd1, \kit1)$ and $(\gpd2, \kit2)$, we construct a tensor product as 
\[
	(\gpd1,\kit1) \tensor (\gpd2,\kit2) :=( \gpd1\times\gpd2 , (\kit1\times\kit2)^\dorth )
\]
where for $a \in \gpd1$ and $b\in \gpd2$, $(\kit1\times \kit2)(a,b):= \kit1(a)\times\kit2(b)$. 
The closure under double orthogonality is necessary because $\kit1\times \kit2$ is not a Boolean kit in general. The monoidal unit is the kit $(\One,\Triv[\One])$. To show that this induces a symmetric monoidal structure, it suffices to show that the symmetric monoidal structure of $\Prof$ lifts to $\SProf$, i.e. that associator, braiding, and left and right unitors are pseudo-natural transformations with components in $\SProf$, and that the tensor product is a pseudo-functor $\SProf \times \SProf \to \SProf$. All coherence axioms will then follow from the (faithful) forgetful pseudo-functor $\SProf \to \Prof$. 

This way, we avoid much of the work in defining a bicategorical monoidal structure. The proof is not automatic, however: our use of the closure operator $(-)^\dorth$ in the definition of $\tensor$ makes certain properties, e.g. associativity, difficult to verify. The following lemma is a key tool in this process.

\begin{lem}\label{lem:technicalAssociativity}
	For Boolean kits $\kitstr{\gpd1}= (\gpd1, \kit1)$ and $\kitstr{\gpd2} = (\gpd2, \kit2)$, define the kit $\kitstr{\gpd1} \multimap\kitstr{\gpd2}$ as $(\gpd1^{\op} \times \gpd2, \kit1 \multimap \kit2)$ where $(\kit1 \multimap \kit2)(a,b)$ is the set of all subgroups $H \leq \Endo(a,b)$ such that:
	\[
	\forall (\alpha,\beta) \in H, \alpha \in \bigcup \kit1(a) \implies \beta \in \bigcup \kit2(b)
	\text{  and  }
	\beta \in \bigcup \kit2^\perp(b) \implies \alpha \in \bigcup \kit1^\perp(a).
	\]
	The kit $\kitstr{\gpd1} \multimap\kitstr{\gpd2}$ is Boolean.
\end{lem}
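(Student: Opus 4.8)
The plan is to prove the statement via the characterization of Boolean kits in Lemma~\ref{lem:characterisationDoubleOrth}, i.e.\ to verify that $\kit1\multimap\kit2$ satisfies both the ``downward-generation'' condition~(1) and the saturation condition~(2) at every object $(a,b)$ of $\gpd1^\op\times\gpd2$. Fix such an $(a,b)$ and abbreviate $S_1=\bigcup\kit1(a)$, $S_2=\bigcup\kit2(b)$, $S_1^\perp=\bigcup\kit1^\perp(a)$, $S_2^\perp=\bigcup\kit2^\perp(b)$; write $\mathrm{(I1)}$ and $\mathrm{(I2)}$ for the two implications in the definition of $(\kit1\multimap\kit2)(a,b)$. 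As unions of subgroups, these four subsets are closed under powers and inverses and contain $\id$ (Lemma~\ref{lem:kitDownClosed}). The first useful observation, obtained by orthogonalizing a cyclic subgroup against itself, is that $S_1\cap S_1^\perp=\{\id\}$ and $S_2\cap S_2^\perp=\{\id\}$: if $\alpha\in S_1\cap S_1^\perp$ then, since both $\kit1$ and $\kit1^\perp$ are Boolean and hence downward closed, $\cyclic\alpha\in\kit1(a)$ and $\cyclic\alpha\in\kit1^\perp(a)$, forcing $\cyclic\alpha\orth\cyclic\alpha$, i.e.\ $\alpha=\id$.

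Condition~(1) is immediate from the element-wise shape of the definition. A subgroup $H\leq\Endo(a,b)$ lies in $(\kit1\multimap\kit2)(a,b)$ exactly when $\mathrm{(I1)}$ and $\mathrm{(I2)}$ hold for every element of $H$; and if $H\subseteq\bigcup(\kit1\multimap\kit2)(a,b)$ then each element of $H$ lies in some member of the kit and so satisfies these implications, whence $H$ is in the kit. In particular $(\kit1\multimap\kit2)(a,b)$ is downward closed under subgroups, so $(\alpha,\beta)\in\bigcup(\kit1\multimap\kit2)(a,b)$ if and only if $\cyclic{(\alpha,\beta)}$ is in the kit, i.e.\ if and only if $\mathrm{(I1)}$ and $\mathrm{(I2)}$ hold for all powers $(\alpha^k,\beta^k)$.

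The substance is saturation. Write $T=\bigcup(\kit1\multimap\kit2)(a,b)$ and assume $\Phi_T(\alpha,\beta)$; by the inverse-invariance above it suffices to show that $(\alpha^k,\beta^k)$ satisfies $\mathrm{(I1)}$ and $\mathrm{(I2)}$ for all $k\in\N$. For $\mathrm{(I1)}$, suppose $\alpha^k\in S_1$; I will deduce $\beta^k\in S_2$ by verifying $\Phi_{S_2}(\beta^k)$ and invoking saturation of the Boolean kit $\kit2$. Fix $n$ with $\beta^{kn}\neq\id$; instantiating $\Phi_T(\alpha,\beta)$ at the exponent $kn$ yields $m$ with $(\alpha^{knm},\beta^{knm})\neq\id$ and $(\alpha^{knm},\beta^{knm})\in T$. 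The membership gives $\mathrm{(I1)}$ for this element, and $\alpha^k\in S_1$ forces $\alpha^{knm}\in S_1$, so $\beta^{knm}\in S_2$. The crucial point is that this witness is \emph{nontrivial in the second coordinate}: if $\beta^{knm}=\id$, then $\mathrm{(I2)}$ applied to $(\alpha^{knm},\id)\in T$ (using $\id\in S_2^\perp$) would give $\alpha^{knm}\in S_1^\perp$, hence $\alpha^{knm}\in S_1\cap S_1^\perp=\{\id\}$, contradicting $(\alpha^{knm},\beta^{knm})\neq\id$. Thus $\beta^{knm}\neq\id$, which witnesses the disjunct of $\Phi_{S_2}(\beta^k)$; saturation of $\kit2$ then gives $\beta^k\in S_2$. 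The second implication is proved symmetrically: assuming $\beta^k\in S_2^\perp$ one verifies $\Phi_{S_1^\perp}(\alpha^k)$ and appeals to saturation of the Boolean kit $\kit1^\perp$, the nontriviality of the witnessing power now being guaranteed by $S_2\cap S_2^\perp=\{\id\}$.

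I expect the crossing step — upgrading a witness that is merely nontrivial \emph{as a pair} to one that is nontrivial \emph{in the coordinate that matters}, via $S_i\cap S_i^\perp=\{\id\}$ — to be the only delicate point; everything else is routine bookkeeping with powers and the closure properties of $S_1,S_2,S_1^\perp,S_2^\perp$. Once both implications hold for all powers, $\cyclic{(\alpha,\beta)}$ lies in the kit, so $(\alpha,\beta)\in T$, establishing saturation. With conditions~(1) and~(2) verified at every $(a,b)$, Lemma~\ref{lem:characterisationDoubleOrth} yields $\kit1\multimap\kit2=(\kit1\multimap\kit2)^\dorth$, i.e.\ $\kitstr{\gpd1}\multimap\kitstr{\gpd2}$ is Boolean.
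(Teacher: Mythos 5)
Your proof is correct, but it follows a genuinely different route from the paper's. The paper's proof first establishes the structural fact that for $K \in \kit1(a)$ and $G \in \kit2^\orth(b)$ the product subgroup $K \times G$ lies in $(\kit1 \multimap \kit2)^\orth(a,b)$, and then tests an arbitrary $H \in (\kit1 \multimap \kit2)^\dorth(a,b)$ against the witnesses $\cyclic{\alpha^n} \times \cyclic{\beta^n}$ to conclude that the defining implications hold for every $(\alpha,\beta) \in H$; the appeal to $\kit2^\dorth = \kit2$ and $\kit1^{\perp\dorth} = \kit1^\perp$ is then packaged as orthogonality of cyclic subgroups. You never construct elements of $(\kit1 \multimap \kit2)^\orth$ at all: you verify the saturation formula $\Phi$ directly, pushing the closure argument into saturation of the Boolean kits $\kit2$ and $\kit1^\perp$, and you isolate the fact $\bigcup\kit1(a) \cap \bigcup\kit1^\perp(a) = \{\id\}$ to rule out witnesses that are nontrivial as pairs but trivial in the relevant coordinate. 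That ``crossing step'' is exactly where the paper's proof instead uses the $K \times G$ orthogonality (the element-wise argument ``$\alpha \in K$ forces $\beta \in \bigcup\kit2(b)$, hence $\beta = \id$ since $\beta \in G$'' is the same observation in disguise), so the two proofs have the same logical core. The paper's packaging has the practical advantage that the intermediate fact about $K \times G$ is reused later (Corollary~\ref{cor:TensorLinearArrow} and Lemma~\ref{lem:AssocTensorStep}); your version is more self-contained but would need that fact re-derived when it is next required. One small point worth making explicit in a final write-up is closure of $\kit1 \multimap \kit2$ under conjugation (needed for it to be a kit at all); neither your argument nor the paper's spells this out, and it is routine, but it is logically prior to applying Lemma~\ref{lem:characterisationDoubleOrth}.
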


\begin{proof}
	To prove that $\kitstr{\gpd1} \multimap\kitstr{\gpd2}$ forms a Boolean kit, we first show that for objects $a \in \gpd1, b\in \gpd2$ and subgroups $K \in \kit1(a)$ and $G \in \kit2^\orth(b)$, we have $K \times G \in (\kit1 \multimap \kit2)^\orth (a,b)$. Let $H$ be in $(\kit1 \multimap \kit2)(a,b)$ and $(\alpha,\beta) \in (K \times G) \cap H$. Since $\alpha \in K \in \kit1(a)$, we have $\beta\in \kit2(b)$ by definition of $\kit1\multimap \kit2$ which implies that $\beta=\id$ since $\beta\in G \in \kit2^\orth(b)$. Likewise, since $\beta \in G \in \kit2^\orth(b)$, we have $\alpha\in \kit1^\orth(a)$ which implies that $\alpha=\id$ as $\alpha\in K \in \kit1(a)$.
	
	Now, let $H$ be in $(\kit1 \multimap \kit2)^\dorth(a,b)$, we show that $H$ is in $(\kit1 \multimap \kit2)(a,b)$. Let $(\alpha,\beta)$ be in $H$ and assume that $\alpha\in \bigcup \kit1(a)$ and that there exists $n\in \N$ such that $\beta^n \in \bigcup \kit2^\perp(b)$. Hence, we have $\seq{\alpha^n} \times \seq{\beta^n} \in (\kit1(a) \times \kit2^\orth(b))$ which implies that $\seq{\alpha^n} \times \seq{\beta^n} \in (\kit1 \multimap \kit2)^\orth (a,b)$ so that $\alpha^n = \id $ and $\beta^n =\id$ as desired. We can show similarly that if $\beta \in \bigcup \kit2^\perp(b)$ then $\alpha \in \bigcup \kit1^\perp(a)$.
	
	For a subgroup $H \leq \Endo(a,b)$ such that $H \subseteq \bigcup (\kit1 \multimap \kit2)(a,b)$, it is immediate that $H$ is in $(\kit1 \multimap \kit2)(a,b)$. Hence, by Lemma \ref{lem:characterisationDoubleOrth}, we conclude that $\kit1 \multimap \kit2$ is a kit.
\end{proof}

As an immediate corollary, we obtain:

\begin{cor}\label{cor:IsoKitsLinear}
	For Boolean kits $\kitstr{\gpd1}$ and $\kitstr{\gpd2}$, the categories $\SProf(\kitstr{\gpd1}, \kitstr{\gpd2})$ and $\SProf(\kitstr{\One}, \kitstr{\gpd1} \multimap \kitstr{\gpd2})$ are isomorphic.
\end{cor}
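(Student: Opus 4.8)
The plan is to realize the claimed isomorphism as the restriction to stabilized profunctors of the closed structure of the compact closed bicategory $\Prof$ (Section~\ref{sec:*autonomous}). Since $\Prof$ is compact closed with tensor $\gpd1 \tensor \gpd2 = \gpd1\times\gpd2$, unit $\One$ and duals $\gpd1^\op$, the internal hom of $\gpd1$ and $\gpd2$ is $\gpd1^\op\times\gpd2$ and there is a canonical isomorphism of hom-categories
\[
\Prof(\gpd1,\gpd2)\;\cong\;\Prof(\One,\gpd1^\op\times\gpd2).
\]
Writing $\ast$ for the unique object of $\One$, both sides are, after the evident reassociation of products, the functor category $[\gpd2^\op\times\gpd1,\Set]$: the isomorphism sends $P$ to the profunctor $\hat P$ with $\hat P((a,b),\ast)=P(b,a)$, which is the identity on the underlying functorial data and hence a bijection on $1$- and $2$-cells. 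Because the inclusions $\SProf(\kitstr\gpd1,\kitstr\gpd2)\hookrightarrow\Prof(\gpd1,\gpd2)$ are full (the forgetful pseudo-functor is locally fully faithful), it suffices to prove that under this bijection $P$ is stabilized $\kitstr\gpd1\profto\kitstr\gpd2$ if and only if $\hat P$ is stabilized $\kitstr\One\profto\kitstr\gpd1\multimap\kitstr\gpd2$.

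First I would unfold Definition~\ref{def:sprofunctor} for $\hat P$, whose source kit is the trivial kit $\kitstr\One=(\One,\Triv[\One])$. As $\Endo[\One](\ast)=\{\id\}$, the only source endomorphism to consider is $\gamma=\id$, and moreover $\bigcup\Triv[\One](\ast)=\{\id\}=\bigcup\Triv[\One]^\orth(\ast)$. Hence the second implication of Definition~\ref{def:sprofunctor} is vacuous (its consequent always holds), while the first, whose antecedent $\gamma\in\bigcup\Triv[\One](\ast)$ is always satisfied, collapses to the single demand that for every $(a,b)$ and every $\hat p\in\hat P((a,b),\ast)=P(b,a)$ the stabilizer subgroup
\[
\Stab(\hat p)\;=\;\{\,(\alpha,\beta)\in\Endo[\gpd1^\op\times\gpd2](a,b)\mid\hat p\act(\alpha,\beta)=\hat p\,\}\;\subseteq\;\textstyle\bigcup(\kit1\multimap\kit2)(a,b).
\]
This is a genuine subgroup of $\Endo[\gpd1^\op\times\gpd2](a,b)$, being the stabilizer of $\hat p$ under the right action of that group. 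Since $\kit1\multimap\kit2$ is a Boolean kit by Lemma~\ref{lem:technicalAssociativity}, Lemma~\ref{lem:characterisationDoubleOrth}(1) applies and the displayed containment is equivalent to $\Stab(\hat p)\in(\kit1\multimap\kit2)(a,b)$.

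It then remains to unfold $\Stab(\hat p)\in(\kit1\multimap\kit2)(a,b)$ via the defining clauses of $\kit1\multimap\kit2$ from Lemma~\ref{lem:technicalAssociativity} and to translate back to $P$. The right action of $\gpd1^\op\times\gpd2$ on $\hat P$ is given by $\hat p\act(\alpha,\beta)=\alpha\act p\act\beta$, where $\alpha\in\Endo(a)$ now acts on the left of $P$ and $\beta\in\Endo(b)$ on the right; the appearance of the \emph{left} $\gpd1$-action here is exactly what the dual $\gpd1^\op$ in the internal hom encodes. Under this translation $\Stab(\hat p)$ ranges over precisely the pairs with $\alpha\act p\act\beta=p$, and the two clauses of $(\kit1\multimap\kit2)(a,b)$ become verbatim the two implications $\alpha\in\bigcup\kit1(a)\Rightarrow\beta\in\bigcup\kit2(b)$ and $\beta\in\bigcup\kit2^\orth(b)\Rightarrow\alpha\in\bigcup\kit1^\orth(a)$ of Definition~\ref{def:sprofunctor} for $P$. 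This yields the desired equivalence, so the isomorphism of hom-categories restricts to the full subcategories of stabilized profunctors.

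The only genuinely delicate point is the bookkeeping of variances: one must check that an endomorphism $\alpha$ of $a$ in $\gpd1^\op$ has the same underlying morphism, and hence the same kit-membership, as the $\gpd1$-endomorphism acting on the left of $P$, and that the combined left--right action organizes $\Stab(\hat p)$ as a subgroup of $\Endo[\gpd1^\op\times\gpd2](a,b)$ (rather than of $\Endo[\gpd1\times\gpd2](a,b)$), which is what legitimizes the appeal to Lemma~\ref{lem:characterisationDoubleOrth}.
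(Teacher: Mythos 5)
Your proof is correct and is essentially the argument the paper has in mind: the paper states this result as an immediate corollary of Lemma~\ref{lem:technicalAssociativity} with no written proof, the point being exactly that the compact-closed hom-isomorphism $\Prof(\gpd1,\gpd2)\cong\Prof(\One,\gpd1^\op\times\gpd2)$ restricts along the locally fully faithful forgetful functor, since $\kit1\multimap\kit2$ is defined precisely so that $\Stab(\hat p)\in(\kit1\multimap\kit2)(a,b)$ unfolds to the two implications of Definition~\ref{def:sprofunctor}. Your additional care in noting that the condition is tested on the whole subgroup $\Stab(\hat p)$ (so that Lemma~\ref{lem:characterisationDoubleOrth}(1) and Booleanness of $\kit1\multimap\kit2$ legitimately convert containment in $\bigcup(\kit1\multimap\kit2)(a,b)$ into membership in the kit) is exactly the detail that makes the ``immediate'' claim rigorous.
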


\begin{lem}
	If $P_1: \kitstr{\gpd1}_1 \profto \kitstr{\gpd2}_1$ and $P_2: \kitstr{\gpd1}_2 \profto \kitstr{\gpd2}_2$ are stabilized profunctors then so is $P_1 \otimes P_2 : (\kitstr{\gpd1}_1 \otimes \kitstr{\gpd1}_2) \profto (\kitstr{\gpd2}_1 \otimes \kitstr{\gpd2}_2)$ defined as
	\[
	(P_1 \otimes P_2)((b_1,b_2), (a_1,a_2)) := P_1(b_1,a_1) \times P_2(b_2,a_2).
	\]
\end{lem}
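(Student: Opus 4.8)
The plan is to verify the two stabilization implications of Definition~\ref{def:sprofunctor} for $P_1\tensor P_2$, whose source and target kits are $(\kit1_1\times\kit1_2)^\dorth$ on $\gpd1_1\times\gpd1_2$ and $(\kit2_1\times\kit2_2)^\dorth$ on $\gpd2_1\times\gpd2_2$. Fix $a_i\in\gpd1_i$, $b_i\in\gpd2_i$, an element $(p_1,p_2)\in P_1(b_1,a_1)\times P_2(b_2,a_2)$, and endomorphisms $(\alpha_1,\alpha_2)$ and $(\beta_1,\beta_2)$. Since the action of $P_1\tensor P_2$ is componentwise, the equation $(\alpha_1,\alpha_2)\act(p_1,p_2)\act(\beta_1,\beta_2)=(p_1,p_2)$ is equivalent to $\alpha_i\act p_i\act\beta_i=p_i$ for $i=1,2$, and this persists under simultaneous powers, giving $\alpha_i^k\act p_i\act\beta_i^k=p_i$ for all $k$. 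Writing $W=\bigcup(\kit1_1\times\kit1_2)^\orth(a_1,a_2)$ and $W'=\bigcup(\kit2_1\times\kit2_2)^\orth(b_1,b_2)$, and using $((\kit1_1\times\kit1_2)^\dorth)^\orth=(\kit1_1\times\kit1_2)^\orth$ (and similarly for $\kit2_1\times\kit2_2$), the two conditions to establish are $(\alpha_1,\alpha_2)\in\bigcup(\kit1_1\times\kit1_2)^\dorth\Rightarrow(\beta_1,\beta_2)\in\bigcup(\kit2_1\times\kit2_2)^\dorth$ and $(\beta_1,\beta_2)\in W'\Rightarrow(\alpha_1,\alpha_2)\in W$.

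The main obstacle is that the double-orthogonal closure used to define $\tensor$ genuinely enlarges the underlying set of endomorphisms: in general $\bigcup(\kit1_1\times\kit1_2)^\dorth(a_1,a_2)$ is \emph{strictly} larger than $\bigcup\kit1_1(a_1)\times\bigcup\kit1_2(a_2)$ (already for a proper nontrivial Boolean kit on $C_6$ together with the maximal kit on $C_9$), so one cannot reduce the hypothesis $(\alpha_1,\alpha_2)\in\bigcup(\kit1_1\times\kit1_2)^\dorth$ to componentwise membership $\alpha_i\in\bigcup\kit1_i(a_i)$ and apply the stabilization of each $P_i$ separately. The remedy is to argue entirely through cyclic subgroups and all simultaneous powers. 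I would use two facts, both consequences of Lemmas~\ref{lem:kitDownClosed} and~\ref{lem:characterisationDoubleOrth}: (i) for a Boolean kit $\mathcal J$, an element $z$ lies in $\bigcup\mathcal J$ iff $\cyclic z\cap\bigcup\mathcal J^\orth=\{\id\}$, while $(z_1,z_2)$ lies in $\bigcup(\mathcal J_1\times\mathcal J_2)^\orth$ iff for every $m$ the conditions $z_1^m\in\bigcup\mathcal J_1$ and $z_2^m\in\bigcup\mathcal J_2$ force $z_1^m=z_2^m=\id$; and (ii) $\bigcup\mathcal J\cap\bigcup\mathcal J^\orth=\{\id\}$ for any kit $\mathcal J$, together with $\{\id\}\in\mathcal J(c)\cap\mathcal J^\orth(c)$.

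The crux is a single sub-claim, proved directly from the stabilization of $P_1$ and $P_2$: for every $k$, if $(\beta_1^k,\beta_2^k)\in W'$ then $(\alpha_1^k,\alpha_2^k)\in W$. To prove it I would verify the characterization in (i) for $W$: assuming $\alpha_i^{km}\in\bigcup\kit1_i(a_i)$ for both $i$, the relation $\alpha_i^{km}\act p_i\act\beta_i^{km}=p_i$ and the first clause of stabilization of $P_i$ give $\beta_i^{km}\in\bigcup\kit2_i(b_i)$; then $(\beta_1^k,\beta_2^k)\in W'$ forces $\beta_i^{km}=\id$; hence $\alpha_i^{km}\act p_i=p_i$, and the second clause of stabilization (discharged with $\id\in\bigcup\kit2_i^\orth(b_i)$) gives $\alpha_i^{km}\in\bigcup\kit1_i^\orth(a_i)$, so that $\alpha_i^{km}\in\bigcup\kit1_i(a_i)\cap\bigcup\kit1_i^\orth(a_i)=\{\id\}$ by (ii). The second implication is exactly this sub-claim at $k=1$. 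For the first implication, the hypothesis $(\alpha_1,\alpha_2)\in\bigcup(\kit1_1\times\kit1_2)^\dorth$ says $\cyclic{(\alpha_1,\alpha_2)}\cap W=\{\id\}$, so the sub-claim forces $(\alpha_1^k,\alpha_2^k)=\id$ whenever $(\beta_1^k,\beta_2^k)\in W'$; then $\alpha_i^k=\id$ and the first clause of stabilization (with $\id\in\bigcup\kit1_i(a_i)$) give $\beta_i^k\in\bigcup\kit2_i(b_i)$, whence $(\beta_1^k,\beta_2^k)\in\bigcup(\kit2_1\times\kit2_2)\cap W'=\{\id\}$ by (ii). As $k$ is arbitrary this gives $\cyclic{(\beta_1,\beta_2)}\cap W'=\{\id\}$, i.e.\ $(\beta_1,\beta_2)\in\bigcup(\kit2_1\times\kit2_2)^\dorth$, completing the proof.
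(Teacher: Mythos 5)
Your proof is correct and follows the same underlying strategy as the paper's: reduce to cyclic subgroups and simultaneous powers, use the double-orthogonality characterizations coming from Lemmas~\ref{lem:kitDownClosed} and~\ref{lem:characterisationDoubleOrth}, and feed the equations $\alpha_i^k\act p_i\act\beta_i^k=p_i$ into both stabilization clauses of $P_1$ and $P_2$. The organization differs in two ways. First, where the paper verifies the saturation formula of Definition~\ref{def:saturatedKit} for $(\beta_1,\beta_2)$ by a case split on whether $(\alpha_1,\alpha_2)^n=\id$, you verify the equivalent criterion $\cyclic{(\beta_1,\beta_2)}\cap W'=\{\id\}$ and factor both implications through the single sub-claim $(\beta_1^k,\beta_2^k)\in W'\Rightarrow(\alpha_1^k,\alpha_2^k)\in W$, which makes the two directions visibly dual. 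Second, and more substantively, for the reverse implication the paper invokes Corollary~\ref{cor:TensorLinearArrow} (a consequence of Lemma~\ref{lem:kitsTensorLinear}, which in the paper's ordering is only established after this lemma), whereas you derive the needed orthogonality facts directly from the two stabilization clauses, discharging the second clause with $\id\in\bigcup{\kit2}_i^\orth(b_i)$; your argument is therefore self-contained and avoids that forward reference. One immaterial quibble: the parenthetical $C_6$/$C_9$ example of strict enlargement under $(-)^\dorth$ does not actually work (every element of $C_9$ has odd order, which blocks the enlargement mechanism), though the phenomenon you are guarding against is real --- for instance the trivial kit on $C_2$ tensored with the maximal kit on $C_4$ exhibits it --- and your proof rightly does not depend on whether it occurs.
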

\begin{proof}
	Assume that there is $(a_1,a_2) \in\gpd1_1\times \gpd1_2$, $(b_1, b_2) \in \gpd2_1\times \gpd2_2$, $(\alpha_1,\alpha_2) \in \Endo(a_1,a_2)$, $(\beta_1, \beta_2) \in \Endo(b_1,b_2)$ and $(p_1, p_2) \in P_1(b_1,a_1) \times P_2(b_2,a_2)$ such that 
	\[
	(\alpha_1, \alpha_2) \cdot (p_1, p_2) \cdot (\beta_1, \beta_2) = (p_1, p_2)
	\]
	i.e. $\alpha_1 \cdot p_1 \cdot \beta_1 = p_1$ and $\alpha_2 \cdot p_2 \cdot \beta_2 = p_2$.
	\begin{itemize}
		\item We first show that if $(\alpha_1, \alpha_2) \in \bigcup({\kit1}_1 \otimes {\kit1}_2)(a_1,a_2)$ then $(\beta_1, \beta_2) \in \bigcup({\kit2}_1 \otimes {\kit2}_2)(b_1,b_2)$. 
		
		Recall from Definition \ref{def:saturatedKit} that $(\alpha_1, \alpha_2) \in \bigcup({\kit1}_1 \otimes {\kit1}_2)(a_1,a_2) = \bigcup({\kit1}_1 \times {\kit1}_2)^\dorth(a_1,a_2)$ is equivalent to the formula $\Phi_{\bigcup {\kit1}_1 \otimes {\kit1}_2}(\alpha_1,\alpha_2)$ given by:
		\begin{gather*}
			\forall n \in \N, (\alpha_1, \alpha_2)^n = (\id, \id)  \:\: \vee \\
			\left( \exists m \in \N,  (\id,\id)  \neq (\alpha_1, \alpha_2)^{nm} \in \bigcup ({\kit1}_1 \times {\kit1}_2)(a_1,a_2)
			\right).
		\end{gather*}
		Let $n\in \N$ be such that $\beta_1^n \neq \id$ or $\beta_2^n \neq \id$. If $(\alpha_1, \alpha_2)^n = (\id, \id)$, then $(\alpha_1^n, \alpha_2^n) \in \bigcup({\kit1}_1 \times {\kit1}_2)(a_1,a_2)$ which implies that $(\beta_1^n, \beta_2^n) \in \bigcup({\kit2}_1 \times {\kit2}_2)(b_1,b_2)$ since $P_1$ and $P_2$ are stabilized profunctors. Taking $m=1$, the formula $\Phi_{\bigcup {\kit2}_1 \otimes {\kit2}_2}(\beta_1,\beta_2)$ holds.
		
		Assume now that there exists $m \in \N$ such that $(\alpha_1^{nm}, \alpha_2^{nm}) \neq (\id,\id)$ and $(\alpha_1^{nm}, \alpha_2^{nm}) \in \bigcup({\kit1}_1 \times {\kit1}_2)(a_1,a_2)$. It implies that $ (\beta_1^{nm}, \beta_2^{nm})$ is in $\bigcup({\kit2}_1 \times {\kit2}_2)(b_1,b_2)$ since $P_1$ and $P_2$ are stabilized profunctors. Assume that $\beta_1^{nm} = \id$ and $\beta_2^{nm} = \id$, then $(\beta_1^{nm}, \beta_2^{nm}) \in \bigcup({\kit2}_1^\orth \times {\kit2}_2^\orth)(b_1,b_2)$ which implies that $(\alpha_1^{nm}, \alpha_2^{nm}) \in \bigcup({\kit1}_1^\orth \times {\kit1}_2^\orth)(a_1,a_2)$. Hence, $(\alpha_1^{nm}, \alpha_2^{nm}) = (\id, \id)$ which gives us the desired contradiction.
		
		\item For the other direction, assume that $(\beta_1, \beta_2) \in \bigcup({\kit2}_1 \otimes {\kit2}_2)^\orth(b_1,b_2)= \bigcup({\kit2}_1 \times {\kit2}_2)^\orth(b_1,b_2)$, we want to show that  $(\alpha_1, \alpha_2) \in \bigcup({\kit1}_1 \times {\kit1}_2)^\orth(a_1,a_2)$. Let $H_1 \in {\kit1}_1(a_1)$, $H_2 \in {\kit1}_2(a_2)$ and $n \in \N$ such that $(\alpha_1^n, \alpha_2^n) \in H_1 \times H_2$. We need to show that $(\alpha_1^n, \alpha_2^n) = (\id, \id)$. Since $P_1$ is a stabilized profunctor, $\beta_1^n \in \bigcup {\kit2}_1(b_1)$, and therefore, by Corollary~\ref{cor:TensorLinearArrow}, we must have $\beta_2^n \in \bigcup {\kit2}_2(b_2)^\orth$. Since $P_2$ is a stabilized profunctor, we must have $\alpha_2^n \in {\kit1}_2(a_2)^\orth$, and therefore $\alpha_2^n = \id$. Likewise, we can show that $\alpha_1^n = \id$ so that $(\alpha_1^n, \alpha_2^n) =(\id, \id)$ as desired. \qedhere
	\end{itemize}
\end{proof}

From this, one derives that $\tensor$ defines a pseudo-functor of two arguments. Continuing with the monoidal structure, we turn to the unitors. For a kit $\kitstr{\gpd1} = (\gpd1, \kit1)$, the canonical isomorphisms $\One \times \gpd1 \cong \gpd1$ and $\gpd1 \times \One \cong \gpd1$ in $\Cat$ induce adjoint equivalences $\kitstr{\One}  \tensor \kitstr{\gpd1}  \profto \kitstr{\gpd1} $ and $\kitstr{\gpd1}  \tensor \kitstr{\One} \profto \kitstr{\gpd1} $ in $\SProf$ that are pseudo-natural in $\kitstr{\gpd1}$. Next, for kits $\kitstr{\gpd1}$ and $\kitstr{\gpd2}$, the component of the braiding $\sigma_{\gpd1, \gpd2} : \gpd1 \times \gpd2 \profto \gpd2 \times \gpd1$ in $\Prof$ is a stabilized profunctor $\kitstr{\gpd1} \tensor \kitstr{\gpd2} \profto \kitstr{\gpd2} \tensor \kitstr{\gpd1}$ and it induces a pseudo-natural family of adjoint equivalences in $\SProf$. 

\begin{lem}\label{lem:kitsTensorLinear}
	For Boolean kits $\kitstr{\gpd1}=(\gpd1, \kit1)$ and $\kitstr{\gpd2}=(\gpd2, \kit2)$, the Boolean kits $\kitstr{\gpd1} \multimap \kitstr{\gpd2}$ and $(\kitstr{\gpd1}\otimes \kitstr{\gpd2}^\orth)^\orth$ are isomorphic.
\end{lem}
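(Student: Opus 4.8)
\subsection*{Proof plan}

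The plan is to prove the stronger statement that the two kits are \emph{equal}: both live on the groupoid $\gpd1^\op \times \gpd2$ (indeed $(\gpd1 \times \gpd2^\op)^\op = \gpd1^\op \times \gpd2$), so the asserted isomorphism may be taken to be the identity, and both are already known to be Boolean. First I would simplify the orthogonal-tensor side. By definition the tensor $\kitstr{\gpd1}\otimes\kitstr{\gpd2}^\orth$ carries the kit $(\kit1\times\kit2^\orth)^\dorth$ on $\gpd1\times\gpd2^\op$, so its orthogonal carries $(\kit1\times\kit2^\orth)^{\dorth\orth}$; by the identity $(-)^{\dorth\orth}=(-)^\orth$ noted after Lemma~\ref{lem:BasicOrthogonalityProperties} this equals $(\kit1\times\kit2^\orth)^\orth$. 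Thus it suffices to establish the pointwise equality $(\kit1\times\kit2^\orth)^\orth(a,b) = (\kit1\multimap\kit2)(a,b)$ for all $a \in \gpd1$, $b \in \gpd2$, recalling that the members of $(\kit1\times\kit2^\orth)(a,b)$ are exactly the products $K\times G$ with $K\in\kit1(a)$ and $G\in\kit2^\orth(b)$.

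For the inclusion $\kit1\multimap\kit2 \subseteq (\kit1\times\kit2^\orth)^\orth$, I would take $H \in (\kit1\multimap\kit2)(a,b)$, fix $K\in\kit1(a)$ and $G\in\kit2^\orth(b)$, and check that $H\cap(K\times G)=\{(\id,\id)\}$. For $(\alpha,\beta)$ in this intersection, $\alpha\in K\subseteq\bigcup\kit1(a)$ forces $\beta\in\bigcup\kit2(b)$ by the defining condition of $\multimap$; since also $\beta\in G\in\kit2^\orth(b)$ and $G$ meets every member of $\kit2(b)$ trivially, we get $\beta=\id$. Dually, $\beta\in G\subseteq\bigcup\kit2^\orth(b)$ forces $\alpha\in\bigcup\kit1^\orth(a)$, and combined with $\alpha\in K\in\kit1(a)$ this gives $\alpha=\id$. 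This direction is a routine unfolding of orthogonality and uses each of the two implications defining $\multimap$ once.

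The reverse inclusion $(\kit1\times\kit2^\orth)^\orth\subseteq\kit1\multimap\kit2$ is the main obstacle, and is exactly where saturation (Lemma~\ref{lem:characterisationDoubleOrth}) is essential. Given $H\in(\kit1\times\kit2^\orth)^\orth(a,b)$ and $(\alpha,\beta)\in H$, I must verify the two implications. For the first, assume $\alpha\in\bigcup\kit1(a)$ but, for contradiction, $\beta\notin\bigcup\kit2(b)$. The contrapositive of saturation of the Boolean kit $\kit2$ (Definition~\ref{def:saturatedKit}) yields an $n_0$ with $\delta:=\beta^{n_0}\neq\id$ and every power of $\delta$ outside $\{\id\}$ avoiding $\bigcup\kit2(b)$; hence $\cyclic\delta\in\kit2^\orth(b)$. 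Meanwhile $\cyclic\alpha\in\kit1(a)$ by closure under subgroups (Lemma~\ref{lem:kitDownClosed}), so $(\alpha^{n_0},\beta^{n_0})\in H\cap(\cyclic\alpha\times\cyclic\delta)$, which must equal $(\id,\id)$ by the orthogonality hypothesis on $H$, contradicting $\delta\neq\id$. The second implication is symmetric: assuming $\beta\in\bigcup\kit2^\orth(b)$ but $\alpha\notin\bigcup\kit1^\orth(a)$, saturation of the Boolean kit $\kit1^\orth$ produces $\gamma:=\alpha^{n_0}\neq\id$ with $\cyclic\gamma\in(\kit1^\orth)^\orth(a)=\kit1(a)$ (using that $\kit1$ is Boolean), while $\cyclic\beta\in\kit2^\orth(b)$ by down-closure; orthogonality of $H$ against $\cyclic\gamma\times\cyclic\beta$ again forces $\gamma=\id$, a contradiction. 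The delicate point throughout is that membership of a single element in a union of subgroups does not itself supply a cyclic \emph{orthogonal} witness; only passing to a suitable power, as licensed by saturation, produces a cyclic subgroup lying entirely within the relevant orthogonal kit.
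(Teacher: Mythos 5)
Your proof is correct and takes essentially the same route as the paper's: both establish the pointwise equality of the two kits, proving $\kit1\multimap\kit2\subseteq(\kit1\times\kit2^\orth)^\orth$ by the direct orthogonality computation on elements of $H\cap(K\times G)$, and the converse by passing to powers of $(\alpha,\beta)$. Your invocation of the contrapositive of saturation to extract the cyclic witness $\cyclic{\beta^{n_0}}\in\kit2^\orth(b)$ is just a rephrasing of the paper's direct argument that every power of $\beta$ lying in $\bigcup\kit2^\orth(b)$ must equal the identity, whence $\cyclic\beta\in\kit2^\dorth(b)=\kit2(b)$.
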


\begin{proof}
	Let $a \in \gpd1$, $b \in \gpd2$ and $(\alpha,\beta)\in \bigcup (\kit1 \otimes \kit2^\orth)^\orth (a,b)= \bigcup (\kit1(a) \times \kit2(b)^\orth)^\orth$. Assume that $\alpha \in \bigcup \kit1(a)$, if there exists $n \in \N$ such that $\beta^n \in \bigcup \kit2^\orth(b)$, then $(\alpha,\beta)^n \in \bigcup (\kit1(a) \times \kit2(b)^\orth)$ which implies that $(\alpha,\beta)^n = (\id, \id)$ so that $\beta$ is in $\bigcup \kit2(b)$ as desired. Dually, assume that $\beta\in \bigcup \kit2^\orth(b)$, if there exists $n \in \N$ such that $\alpha^n \in \bigcup \kit1^\orth(a)$ then $(\alpha,\beta)^n=(\id,\id)$ which implies that $\alpha\in \bigcup \kit1^\dorth(a) =\bigcup \kit1(a)$ as desired.
	
	For the other inclusion, let $H\in (\kit1 \multimap \kit2)(a,b)$, $K \in \kit1(a)$, $G \in \kit2^\orth(b)$ and $(\alpha,\beta) \in H \cap (K \times G)$. Since $\alpha$ is in $H \in \kit1(a)$ and $(\alpha,\beta) \in \bigcup (\kit1 \multimap \kit2)(a,b)$, we have $\beta\in \bigcup \kit2(b)$ which implies that $\beta=\id$ as $\beta$ is in $G \in \kit2^\orth(b)$. We obtain similarly that $\alpha=\id$ which implies that $G$ is in $(\kit1(a) \times \kit2(b)^\orth)^\orth$.
\end{proof} 

\begin{cor}\label{cor:TensorLinearArrow}
	For Boolean kits $(\gpd1, \kit1)$ and $(\gpd2, \kit2)$, objects $a \in \gpd1$ and $b \in \gpd2$, if $(\alpha, \beta) \in \bigcup (\kit1(a) \times \kit2(b))^\orth$, then $\alpha \in \bigcup \kit1(a)$ implies $\beta \in \bigcup\kit2(b)^\orth$ and  $\beta\in \bigcup \kit2(b)$ implies $\alpha \in \bigcup\kit1(a)^\orth$.
\end{cor}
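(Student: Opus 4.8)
The plan is to obtain this as a direct consequence of Lemma~\ref{lem:kitsTensorLinear}, instantiated at $\kitstr{\gpd2}^\orth$ in place of $\kitstr{\gpd2}$. Since $\kitstr{\gpd2}^\orth$ is a Boolean kit (by the remark following Lemma~\ref{lem:BasicOrthogonalityProperties}), the lemma applies and gives
\[
\kit1 \multimap \kit2^\orth \;\cong\; \bigl( \kit1 \otimes (\kit2^\orth)^\orth \bigr)^\orth .
\]
As $\kitstr{\gpd2}$ is Boolean we have $(\kit2^\orth)^\orth = \kit2^\dorth = \kit2$, so the right-hand side is $(\kit1 \otimes \kit2)^\orth$; and because $X^{\dorth\orth} = X^\orth$, its component at $(a,b)$ is exactly $(\kit1(a)\times\kit2(b))^\orth$. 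In particular the two unions $\bigcup (\kit1(a)\times\kit2(b))^\orth$ and $\bigcup (\kit1\multimap\kit2^\orth)(a,b)$ coincide.

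It then remains only to unfold the definition of $\multimap$ from Lemma~\ref{lem:technicalAssociativity}. By downward closure of Boolean kits (Lemma~\ref{lem:kitDownClosed}), any element $(\alpha,\beta) \in \bigcup (\kit1\multimap\kit2^\orth)(a,b)$ generates a cyclic subgroup lying in $(\kit1\multimap\kit2^\orth)(a,b)$, so $(\alpha,\beta)$ itself satisfies the two defining implications, namely $\alpha\in\bigcup\kit1(a) \Rightarrow \beta\in\bigcup\kit2^\orth(b)$ and $\beta\in\bigcup(\kit2^\orth)^\orth(b) \Rightarrow \alpha\in\bigcup\kit1^\orth(a)$. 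Substituting $(\kit2^\orth)^\orth = \kit2$ turns these into precisely the two implications asserted by the corollary, completing the argument.

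I would not expect any real obstacle here, since the statement is essentially Lemma~\ref{lem:kitsTensorLinear} rewritten; the only point demanding care is the bookkeeping of the dualities --- in particular, using Booleanness of $\kit2$ to collapse $(\kit2^\orth)^\orth$ to $\kit2$, and checking that the orientation of the two implications is preserved under the substitution $\kit2 \mapsto \kit2^\orth$. Should a self-contained route be preferred, the same conclusion follows directly from the definition of orthogonality: given $(\alpha,\beta) \in G$ with $G \in (\kit1(a)\times\kit2(b))^\orth$ and $\alpha\in\bigcup\kit1(a)$, for each $K \in \kit2(b)$ the subgroup $\cyclic\alpha \times K$ meets $G$ trivially, so any power with $\beta^n \in K$ forces $(\alpha^n,\beta^n)=(\id,\id)$; hence $\cyclic\beta \in \kit2^\orth(b)$ and $\beta\in\bigcup\kit2^\orth(b)$, the second implication being symmetric.
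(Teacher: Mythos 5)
Your proposal is correct and matches the paper's intent: the paper states this corollary without proof immediately after Lemma~\ref{lem:kitsTensorLinear}, and the intended derivation is precisely your instantiation of that lemma at $\kitstr{\gpd2}^\orth$, using Booleanness to collapse $(\kit2^\orth)^\orth$ to $\kit2$ and the identity $X^{\dorth\orth}=X^\orth$ to identify $(\kit1\otimes\kit2)^\orth(a,b)$ with $(\kit1(a)\times\kit2(b))^\orth$. Your self-contained fallback argument via $\cyclic\alpha\times K$ meeting $G$ trivially is also correct and is essentially the computation hidden inside the lemma's proof.
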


\begin{lem}\label{lem:AssocTensorStep}
	For Boolean kits $\kitstr{\gpd1}$, $\kitstr{\gpd2}$ and $\kitstr{\gpd3}$, the Boolean kits $(\kitstr{\gpd1} \otimes \kitstr{\gpd2}) \multimap \kitstr{\gpd3}$ and $\kitstr{\gpd1} \multimap (\kitstr{\gpd2} \multimap \kitstr{\gpd3})$ are isomorphic. 
\end{lem}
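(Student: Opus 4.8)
The plan is to prove the isomorphism by showing that the two Boolean kits are in fact equal as kits on the groupoid $\gpd1^\op \times \gpd2^\op \times \gpd3$. Both $(\kitstr{\gpd1}\otimes\kitstr{\gpd2})\multimap\kitstr{\gpd3}$ and $\kitstr{\gpd1}\multimap(\kitstr{\gpd2}\multimap\kitstr{\gpd3})$ are obtained by applying $\otimes$ and $\multimap$ to Boolean kits, so by Lemma~\ref{lem:technicalAssociativity} they are themselves Boolean; and their carriers $(\gpd1\times\gpd2)^\op\times\gpd3$ and $\gpd1^\op\times(\gpd2^\op\times\gpd3)$ are both canonically the groupoid $\gpd1^\op\times\gpd2^\op\times\gpd3$. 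I would take the (lift of the) canonical associativity isomorphism in $\Cat$ as the candidate $1$-cell; it is an adjoint equivalence in $\Prof$, so once it is stabilized in both directions the proof is complete, and this is precisely the assertion that the two kits agree on every object $(a,b,c)$. By Lemma~\ref{lem:characterisationDoubleOrth}(1) a Boolean kit is determined by its pointwise unions, so it is equivalent, and slightly more convenient, to compare the two families directly as sets of subgroups of the shared group $\Endo(a,b,c)$.

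Next I would unfold the two defining conditions using the description of $\multimap$ from Lemma~\ref{lem:technicalAssociativity}. Fix $a,b,c$ and write $U_1=\bigcup\kit1(a)$, $U_2=\bigcup\kit2(b)$, $U_3=\bigcup\kit3(c)$, with $U_1^\orth,U_2^\orth,U_3^\orth$ the corresponding orthogonal unions. A subgroup $H\leq\Endo(a,b,c)$ lies in $(\kitstr{\gpd1}\otimes\kitstr{\gpd2})\multimap\kitstr{\gpd3}$ iff, for each $(\alpha,\beta,\gamma)\in H$, membership $(\alpha,\beta)\in\bigcup(\kit1\otimes\kit2)(a,b)$ forces $\gamma\in U_3$, and dually $\gamma\in U_3^\orth$ forces $(\alpha,\beta)\in\bigcup(\kit1\otimes\kit2)^\orth(a,b)$; here I use that $(\kit1\otimes\kit2)^\orth=(\kit1\times\kit2)^\orth$. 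It lies in $\kitstr{\gpd1}\multimap(\kitstr{\gpd2}\multimap\kitstr{\gpd3})$ iff, for each such element, $\alpha\in U_1$ forces $(\beta,\gamma)\in\bigcup(\kit2\multimap\kit3)(b,c)$, and $(\beta,\gamma)\in\bigcup(\kit2\multimap\kit3)^\orth(b,c)$ forces $\alpha\in U_1^\orth$; by Lemma~\ref{lem:kitsTensorLinear} the orthogonal $(\kit2\multimap\kit3)^\orth$ is the tensor $\kit2\otimes\kit3^\orth$. Applying Corollary~\ref{cor:TensorLinearArrow} to each occurrence of a tensor-orthogonal membership then replaces these conditions by the corresponding pairs of component implications between $U_1,U_2,U_3$ and their orthogonals.

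The crux is to check that the two unfoldings name the same family of implications: concretely, that both reduce to the single symmetric condition requiring, for every $(\alpha,\beta,\gamma)\in H$, that $\alpha\in U_1$ and $\beta\in U_2$ force $\gamma\in U_3$, that $\alpha\in U_1$ and $\gamma\in U_3^\orth$ force $\beta\in U_2^\orth$, and that $\beta\in U_2$ and $\gamma\in U_3^\orth$ force $\alpha\in U_1^\orth$. The main obstacle is that the tensor union $\bigcup(\kit1\otimes\kit2)$ and the internal-hom union $\bigcup(\kit2\multimap\kit3)$ are defined through double-orthogonality and so carry an implicit quantifier over all powers (the formula $\Phi$ of Definition~\ref{def:saturatedKit}): membership of a pair is genuinely weaker than membership of each coordinate in the respective component union. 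The passage between ``a nontrivial power of an element lies in a union'' and ``the element itself lies in the union'' is legitimate precisely because the component kits are Boolean, hence saturated (Lemma~\ref{lem:characterisationDoubleOrth}(2)) and down-closed (Lemma~\ref{lem:kitDownClosed}); performing this saturation bookkeeping uniformly over the powers of $(\alpha,\beta,\gamma)$ is where the real work lies, and it is exactly what prevents the clean symmetric condition from failing on elements whose powers, but not themselves, are permitted. Once both defining conditions are shown equivalent to the symmetric condition above, the two families of subgroups coincide, the two Boolean kits are equal, and the associativity equivalence is stabilized in both directions, yielding the desired isomorphism.
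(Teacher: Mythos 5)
Your high-level plan (compare the two Boolean kits pointwise as families of subgroups of the common endomorphism group and reduce both defining conditions to one symmetric condition) is plausible, but as written the proof has a genuine gap: the entire content of the lemma sits in the step you label ``where the real work lies'' and then do not carry out. The reduction of the two unfolded hom conditions to your symmetric condition is not a routine renaming, because the hypothesis $(\alpha,\beta)\in\bigcup(\kit1\otimes\kit2)(a,b)=\bigcup(\kit1\times\kit2)^\dorth(a,b)$ is genuinely weaker than $\alpha\in\bigcup\kit1(a)\wedge\beta\in\bigcup\kit2(b)$, even for Boolean components. For instance, take $\kit1=\Triv$, $\alpha$ of order $2$ (so $\alpha\notin\bigcup\kit1(a)$) and $\beta$ of order $4$ with $\beta\in\bigcup\kit2(b)$: the formula $\Phi_{\bigcup\kit1\times\kit2}(\alpha,\beta)$ holds (every power of $(\alpha,\beta)$ has the nontrivial power $(\id,\beta^2)$ lying in $\bigcup\kit1(a)\times\bigcup\kit2(b)$), so $(\alpha,\beta)$ lies in the tensor union although $\alpha$ does not lie in $\bigcup\kit1(a)$. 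Consequently the first clause of the left-hand hom condition is strictly stronger, element by element, than the first clause of your symmetric condition, and establishing that the two nonetheless carve out the same subgroups $H$ requires an argument quantifying over all powers of $(\alpha,\beta,\gamma)$, using saturation of $\kit3$ and of the tensor, and handling the case where the witnessing power of $(\alpha,\beta)$ annihilates $\gamma$. Corollary~\ref{cor:TensorLinearArrow}, which you invoke to ``replace'' the tensor-orthogonal memberships, only gives one direction of one of the needed replacements. Asserting that both sides are equivalent to a condition $C$, where each equivalence is as hard as the statement being proved, does not constitute a proof.

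For comparison, the paper sidesteps the element-wise unfolding entirely. Using Lemma~\ref{lem:kitsTensorLinear} it writes both kits pointwise as orthogonals, $\bigl((\kit1(a)\times\kit2(b))^\dorth\times\kit3^\orth(c)\bigr)^\orth$ and $\bigl(\kit1(a)\times(\kit2(b)\times\kit3^\orth(c))^\dorth\bigr)^\orth$, and shows that the inner double-orthogonal closures may be dropped under the outer $(-)^\orth$: one inclusion is formal from $X\subseteq X^\dorth$ and antitonicity of $(-)^\orth$, and the reverse inclusion is a two-line argument applying Corollary~\ref{cor:TensorLinearArrow} twice. Both sides then collapse to $\bigl((\kit1(a)\times\kit2(b))\times\kit3^\orth(c)\bigr)^\orth\cong\bigl(\kit1(a)\times(\kit2(b)\times\kit3^\orth(c))\bigr)^\orth$, which agree up to reassociating the cartesian product. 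If you wish to rescue your route, the cleanest fix is to target this manifestly association-invariant form $\bigl(\kit1(a)\times\kit2(b)\times\kit3^\orth(c)\bigr)^\orth$ as your intermediate condition instead of the symmetric component-wise implications, since membership in it is a plain orthogonality statement and the saturation arguments needed to compare it with the hom conditions are exactly those already isolated in Lemma~\ref{lem:kitsTensorLinear}.
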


\begin{proof}
	We show that for objects $a \in \gpd1$, $b\in \gpd2$ and $c \in \gpd3$, we have:
	\[\begin{aligned}
		\left( (\kit1(a) \times \kit2(b))^\dorth \times \kit3^\orth(c)\right)^\orth &\cong \left((\kit1(a) \times \kit2(b)) \times \kit3^\orth(c) \right)^\orth \quad \text{and}\\
		\left(\kit1(a) \times (\kit2(b) \times \kit3^\orth(c))^\dorth\right)^\orth &\cong \left(\kit1(a) \times (\kit2(b) \times \kit3^\orth(c))\right)^\orth .
	\end{aligned}
	\]
	Since $\kit1(a) \times \kit2(b) \subseteq (\kit1(a) \times \kit2(b))^\dorth$, we have $(\kit1(a) \times \kit2(b)) \times \kit3^\orth(c) \subseteq (\kit1(a) \times \kit2(b))^\dorth \times \kit3^\orth(c)$ which implies that 
	\[\left( (\kit1(a) \times \kit2(b))^\dorth \times \kit3^\orth(c)\right)^\orth \subseteq \left((\kit1(a) \times \kit2(b)) \times \kit3^\orth(c) \right)^\orth.\]
	
	For the reverse inclusion, let $G$ be in $\left((\kit1(a) \times \kit2(b)) \times \kit3^\orth(c) \right)^\orth$ and $H$ in $(\kit1(a) \times \kit2(b))^\dorth \times \kit3^\orth(c)$ i.e. $H = H_1 \times H_2$ with $H_1 \in (\kit1(a) \times \kit2(b))^\dorth $ and $H_2 \in \kit3^\orth(c)$. To show that $G \perp H$, assume that there exists $((\alpha_1, \alpha_2), \alpha_3) \in G \cap H$. Then, $\alpha_3 \in H_2 \in \kit3^\orth(c)$ so by Corollary \ref{cor:TensorLinearArrow}, we have $(\alpha_1,\alpha_2) \in (\kit1(a) \times \kit2(b))^\orth$. Since $(\alpha_1,\alpha_2) \in H_1 \in (\kit1(a) \times \kit2(b))^\dorth$, we must have $\alpha_1 = \id$ and $\alpha_2 = \id$. Hence, $(\alpha_1, \alpha_2) \in  \kit1(a) \times \kit2(b)$ so by Corollary \ref{cor:TensorLinearArrow} again, we have $\alpha_3 \in \kit3^\dorth(c)$ which implies that $\alpha_3 = \id$ as desired.
	
	The isomorphism $\left(\kit1(a) \times (\kit2(b) \times \kit3^\orth(c))^\dorth\right)^\orth \cong \left(\kit1(a) \times (\kit2(b) \times \kit3^\orth(c))\right)^\orth$ is derived using a similar argument.
\end{proof}

\begin{lem}
	For Boolean kits $(\gpd1, \kit1), (\gpd2, \kit2)$ and $(\gpd3, \kit3)$, the component of the associator $\alpha_{\gpd1, \gpd2, \gpd3}$ in $\Prof$ is a stabilized profunctor
	\[
	((\gpd1, \kit1) \tensor (\gpd2, \kit2))\tensor (\gpd3, \kit3) \profto (\gpd1, \kit1) \tensor ((\gpd2, \kit2)\tensor (\gpd3, \kit3))
	\] 
	inducing a family of adjoint equivalences pseudo-natural in $(\gpd1, \kit1), (\gpd2, \kit2),$ and $(\gpd3, \kit3)$. 
\end{lem}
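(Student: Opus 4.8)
The plan is to reuse the transfer principle already laid out for the monoidal structure. Since the forgetful pseudo-functor $\SProf \to \Prof$ is strict, faithful and locally fully faithful, and since the associator $\alpha_{\gpd1,\gpd2,\gpd3}$ in $\Prof$ is already part of a pseudo-natural adjoint equivalence, it suffices to show that the component $\alpha_{\gpd1,\gpd2,\gpd3}$ \emph{and} its pseudo-inverse are \emph{stabilized} profunctors for the two kits in the statement. The unit, counit and naturality $2$-cells are then $2$-cells of $\Prof$ between (composites of) stabilized profunctors, so by local full faithfulness they lift uniquely to $2$-cells of $\SProf$, and the triangle identities together with all coherence axioms hold there by faithfulness. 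In this way the whole statement collapses to a single condition on kits.

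First I would unwind what stabilization asks of $\alpha_{\gpd1,\gpd2,\gpd3}$. This profunctor is the graph of the reassociation isomorphism $f : (\gpd1\times\gpd2)\times\gpd3 \xrightarrow{\cong} \gpd1\times(\gpd2\times\gpd3)$ in $\Cat$, and on endomorphisms $f$ is the canonical identification of both $\Endo(((a,b),c))$ and $\Endo((a,(b,c)))$ with $\Endo(a)\times\Endo(b)\times\Endo(c)$. Write $K_L = ((\kit1\times\kit2)^\dorth\times\kit3)^\dorth$ and $K_R = (\kit1\times(\kit2\times\kit3)^\dorth)^\dorth$ for the source and target kits. For a profunctor coming from a groupoid isomorphism, an element $p$ with $\bar\alpha\act p\act\bar\beta = p$ forces $\bar\beta$ to be conjugate to $f(\bar\alpha)^{-1}$; since kits and their orthogonals are closed under conjugation and under inverses, and $f(\bar\alpha)=\bar\alpha$ under the identification above, both halves of Definition~\ref{def:sprofunctor} reduce to the same inclusion $\bigcup K_L \subseteq \bigcup K_R$. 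Running the identical analysis on $f^{-1}$ gives the reverse inclusion, so the entire statement reduces to the equality of Boolean kits $K_L = K_R$ along $f$; and since Boolean kits are determined by their underlying subsets (Lemma~\ref{lem:characterisationDoubleOrth}(1)), it suffices to check $\bigcup K_L = \bigcup K_R$.

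To prove $K_L = K_R$ I would avoid a head-on computation with the closure $(-)^\dorth$ and instead transport the already-established associativity of $\multimap$ (Lemma~\ref{lem:AssocTensorStep}) through Lemma~\ref{lem:kitsTensorLinear}. Rewriting each $\multimap$ via $\kitstr{\gpd1}\multimap\kitstr{\gpd2}\cong(\kitstr{\gpd1}\otimes\kitstr{\gpd2}^\orth)^\orth$, and using that both $\otimes$-kits and orthogonal kits are Boolean (so $(-)^\dorth$ acts as the identity on them), Lemma~\ref{lem:AssocTensorStep} becomes
\[
\big((\kitstr{\gpd1}\otimes\kitstr{\gpd2})\otimes\kitstr{\gpd3}^\orth\big)^\orth \;\cong\; \big(\kitstr{\gpd1}\otimes(\kitstr{\gpd2}\otimes\kitstr{\gpd3}^\orth)\big)^\orth .
\]
Applying $(-)^\orth$ to both sides and cancelling $(-)^\dorth$ on the Boolean kits yields $(\kitstr{\gpd1}\otimes\kitstr{\gpd2})\otimes\kitstr{\gpd3}^\orth \cong \kitstr{\gpd1}\otimes(\kitstr{\gpd2}\otimes\kitstr{\gpd3}^\orth)$. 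Finally, as $(-)^\orth$ is an involution on Boolean kits, $\kitstr{\gpd3}^\orth$ ranges over all Boolean kits, so substituting an arbitrary Boolean kit $\kitstr{\gpd3}$ for $\kitstr{\gpd3}^\orth$ gives exactly $K_L = K_R$, all isomorphisms being along the reassociation $f$.

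The main obstacle is precisely this equality $K_L = K_R$, because the double-orthogonality closures in the definition of $\otimes$ make a direct comparison of $((\kit1\times\kit2)^\dorth\times\kit3)^\dorth$ with $(\kit1\times(\kit2\times\kit3)^\dorth)^\dorth$ awkward. The device that dissolves the difficulty is the detour through $\multimap$: Lemma~\ref{lem:AssocTensorStep} already absorbs the delicate orthogonality bookkeeping (via Corollary~\ref{cor:TensorLinearArrow}), and Lemma~\ref{lem:kitsTensorLinear} converts its conclusion back into the language of $\otimes$. Everything else — the pseudo-inverse, the adjoint-equivalence data, and pseudo-naturality in all three arguments — is then inherited from $\Prof$ for free.
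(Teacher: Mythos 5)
Your proposal is correct and follows essentially the same route as the paper: both reduce the statement to the isomorphism of Boolean kits $(\kitstr{\gpd1}\otimes\kitstr{\gpd2})\otimes\kitstr{\gpd3} \cong \kitstr{\gpd1}\otimes(\kitstr{\gpd2}\otimes\kitstr{\gpd3})$ and obtain it by combining Lemma~\ref{lem:kitsTensorLinear} with Lemma~\ref{lem:AssocTensorStep}. The only differences are presentational: you spell out the reduction (why the associator, being the graph of a groupoid isomorphism, is stabilized exactly when the underlying kit subsets agree) and you apply $(-)^\orth$ and substitute $\kitstr{\gpd3}^\orth \mapsto \kitstr{\gpd3}$ rather than chaining the isomorphisms directly, but the key lemmas and the structure of the argument are the same.
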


\begin{proof}
	It suffices to show that the Boolean kits $(\kitstr{\gpd1} \otimes \kitstr{\gpd2}) \otimes \kitstr{\gpd3}$ and $\kitstr{\gpd1} \otimes (\kitstr{\gpd2} \otimes \kitstr{\gpd3})$ are isomorphic. By Lemmas \ref{lem:kitsTensorLinear} and \ref{lem:AssocTensorStep}, we have:
	\begin{align*}
		\qquad(\kitstr{\gpd1} \otimes \kitstr{\gpd2}) \otimes \kitstr{\gpd3} & \cong \left((\kitstr{\gpd1} \otimes \kitstr{\gpd2}) \multimap \kitstr{\gpd3}^\orth \right)^\orth \cong \left( \kitstr{\gpd1} \multimap (\kitstr{\gpd2} \multimap \kitstr{\gpd3}^\orth) \right)^\orth\\
		& \cong \left( \kitstr{\gpd1} \multimap (\kitstr{\gpd2} \otimes \kitstr{\gpd3})^\orth \right)^\orth \cong \kitstr{\gpd1} \otimes (\kitstr{\gpd2} \otimes \kitstr{\gpd3}).\qquad\qquad\qquad\quad \qedhere
	\end{align*}
\end{proof}

\begin{prop}
The bicategory $\SProf$ is symmetric monoidal.  
\end{prop}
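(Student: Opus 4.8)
The plan is to \emph{lift} the symmetric monoidal structure already present on $\Prof$ along the forgetful pseudo-functor $\SProf \to \Prof$, exploiting the fact (established earlier) that this pseudo-functor is strict, faithful, and locally fully faithful. The guiding principle is that the coherence data of a symmetric monoidal bicategory consists of certain invertible 2-cells subject to equational axioms (pentagon, triangle, hexagons), and all of these axioms \emph{already hold in $\Prof$}. Since the forgetful functor is locally faithful, any equation between 2-cells that holds after applying it must already hold in $\SProf$. Thus, once the relevant 1-cells and 2-cells are known to live in $\SProf$ and to map to the corresponding structural data of $\Prof$, the axioms are inherited for free, and we avoid having to verify them directly.

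Concretely, first I would record that $\tensor$ is a pseudo-functor $\SProf \times \SProf \to \SProf$. On objects it is the formula $(\gpd1,\kit1)\tensor(\gpd2,\kit2) = (\gpd1\times\gpd2,(\kit1\times\kit2)^\dorth)$, and on 1-cells the preceding lemma shows that the tensor $P_1 \otimes P_2$ of stabilized profunctors is again stabilized; its action on 2-cells and its functoriality constraints are those of $\Prof$, so pseudo-functoriality is immediate once the underlying data is known to be stabilized. Next I would assemble the structural transformations: the left and right unitors, the braiding, and the associator all have underlying components given by the profunctors in $\Prof$ induced from the canonical isomorphisms $\One\times\gpd1\cong\gpd1$, $\gpd1\times\gpd2\cong\gpd2\times\gpd1$, and $(\gpd1\times\gpd2)\times\gpd3\cong\gpd1\times(\gpd2\times\gpd3)$ in $\Cat$. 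For each, the preceding lemmas verify that the component is a stabilized profunctor between the appropriate tensor kits and that it induces a pseudo-natural family of adjoint equivalences in $\SProf$; local full-faithfulness of the forgetful functor is what guarantees that an equivalence downstairs with stabilized underlying data lifts to an adjoint equivalence upstairs, and that pseudo-naturality is reflected.

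The hard part is the associator, and it is essentially the only nontrivial ingredient. The difficulty is that the double-orthogonality closure $(-)^\dorth$ in the definition of $\tensor$ does not commute with the product of kits, so $(\kit1\times\kit2)\times\kit3$ and $\kit1\times(\kit2\times\kit3)$ do not agree on the nose, and one cannot simply compute both bracketings and compare. The strategy — carried out in the preceding lemma — is instead to route through the linear-implication kit $\multimap$: using Lemma~\ref{lem:kitsTensorLinear} to re-express each tensor via $\multimap$, the associativity/curry isomorphism of Lemma~\ref{lem:AssocTensorStep}, and Corollary~\ref{cor:TensorLinearArrow} to handle the interaction of an element with an orthogonal subgroup, one establishes the kit isomorphism $(\kitstr{\gpd1}\otimes\kitstr{\gpd2})\otimes\kitstr{\gpd3}\cong\kitstr{\gpd1}\otimes(\kitstr{\gpd2}\otimes\kitstr{\gpd3})$ as Boolean kits. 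This is where Lemma~\ref{lem:technicalAssociativity} does its real work, by ensuring $\multimap$ produces a Boolean kit so that the chain of isomorphisms makes sense.

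With all the data in place, the proof concludes by invoking faithfulness: the pentagon and triangle coherence diagrams commute in $\Prof$, their constituent 2-cells are the images of the corresponding 2-cells in $\SProf$, and local faithfulness of $\SProf \to \Prof$ forces the diagrams to commute in $\SProf$ as well. Hence $\SProf$ is symmetric monoidal. In short, essentially all the genuine content has been pushed into the associativity lemmas, and the proof of this proposition amounts to assembling those results and appealing to the faithful forgetful functor to discharge the coherence axioms.
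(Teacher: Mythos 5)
Your proposal matches the paper's strategy exactly: lift the symmetric monoidal structure of $\Prof$ along the strict, faithful, locally fully faithful forgetful pseudo-functor, verify that the tensor of stabilized profunctors is stabilized and that the unitors, braiding, and associator have stabilized components, and discharge the coherence axioms by faithfulness — with the associator handled, as in the paper, by re-expressing the tensor kits via $\multimap$ using Lemmas~\ref{lem:kitsTensorLinear} and~\ref{lem:AssocTensorStep} together with Corollary~\ref{cor:TensorLinearArrow}. This is the paper's own argument, correctly assembled.
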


Finally, $\SProf$ is \defn{$*$-autonomous} \cite{barr2006autonomous}: 
\begin{prop}
	For Boolean kits $\kitstr{\gpd1}, \kitstr{\gpd2},$ and $\kitstr{\gpd3}$ there is a pseudo-natural adjoint equivalence 
	\[
	\SProf(\kitstr{\gpd1} \tensor \kitstr{\gpd2},\kitstr{\gpd3}^\perp) \simeq \SProf(\kitstr{\gpd1}, (\kitstr{\gpd2} \tensor \kitstr{\gpd3})^\perp).
	\]
\end{prop}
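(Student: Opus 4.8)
The plan is to reduce the claimed equivalence to the chain of kit isomorphisms already established, so as to avoid manipulating the double-orthogonality closure in the definition of $\tensor$ directly. First I would apply Corollary~\ref{cor:IsoKitsLinear} to both sides, rewriting
\[
\SProf(\kitstr{\gpd1} \tensor \kitstr{\gpd2}, \kitstr{\gpd3}^\orth) \;\simeq\; \SProf\bigl(\kitstr{\One},\, (\kitstr{\gpd1} \tensor \kitstr{\gpd2}) \multimap \kitstr{\gpd3}^\orth\bigr)
\]
and
\[
\SProf(\kitstr{\gpd1}, (\kitstr{\gpd2} \tensor \kitstr{\gpd3})^\orth) \;\simeq\; \SProf\bigl(\kitstr{\One},\, \kitstr{\gpd1} \multimap (\kitstr{\gpd2} \tensor \kitstr{\gpd3})^\orth\bigr).
\]
It then suffices to exhibit an isomorphism of Boolean kits between the two internal-hom kits on the right, since an isomorphism of kits is realised by an invertible $1$-cell in $\SProf$, and composition with it yields an adjoint equivalence of the hom-categories $\SProf(\kitstr{\One},-)$.

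For this I would use Lemma~\ref{lem:kitsTensorLinear}. It gives $(\kitstr{\gpd1} \tensor \kitstr{\gpd2}) \multimap \kitstr{\gpd3}^\orth \cong \bigl((\kitstr{\gpd1} \tensor \kitstr{\gpd2}) \otimes (\kitstr{\gpd3}^\orth)^\orth\bigr)^\orth$; since $\kitstr{\gpd3}$ is Boolean, $(\kitstr{\gpd3}^\orth)^\orth = \kitstr{\gpd3}$, so this reduces to $\bigl((\kitstr{\gpd1} \tensor \kitstr{\gpd2}) \otimes \kitstr{\gpd3}\bigr)^\orth$. Applying the same lemma to the other side gives $\kitstr{\gpd1} \multimap (\kitstr{\gpd2} \tensor \kitstr{\gpd3})^\orth \cong \bigl(\kitstr{\gpd1} \otimes (\kitstr{\gpd2} \tensor \kitstr{\gpd3})\bigr)^\orth$. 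Finally, invoking associativity of $\tensor$ (the preceding lemma), $(\kitstr{\gpd1} \tensor \kitstr{\gpd2}) \otimes \kitstr{\gpd3} \cong \kitstr{\gpd1} \otimes (\kitstr{\gpd2} \otimes \kitstr{\gpd3})$, and applying $(-)^\orth$, which is functorial and hence preserves isomorphisms, identifies the two kits.

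Pseudo-naturality and the adjoint-equivalence strength would follow because each ingredient—Corollary~\ref{cor:IsoKitsLinear}, Lemma~\ref{lem:kitsTensorLinear}, and the associator—is pseudo-natural in its kit arguments, and composites of pseudo-natural adjoint equivalences are again such. In particular, the compatibility with the symmetric monoidal coherence data is inherited from the structure already lifted from $\Prof$ through the faithful forgetful pseudo-functor $\SProf \to \Prof$.

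I expect the main obstacle to be the bookkeeping of the underlying groupoid identifications. A priori the two internal-hom kits live over $(\gpd1 \times \gpd2)^\op \times \gpd3^\op$ and $\gpd1^\op \times (\gpd2 \times \gpd3)^\op$, which agree only up to the canonical symmetry and associativity isomorphisms of $\Cat$; one must verify that the chain of kit isomorphisms above is carried by precisely these canonical comparisons. In particular, the cancellation $(\kitstr{\gpd3}^\orth)^\orth = \kitstr{\gpd3}$ must be justified at the level of kits, using that $\kit3$ is Boolean, rather than merely at the level of underlying groupoids. Once this compatibility is confirmed, the $*$-autonomy equivalence is a formal consequence of the linear-implication and associativity lemmas.
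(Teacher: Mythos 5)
Your proposal is correct and follows essentially the same route as the paper: both reduce, via Corollary~\ref{cor:IsoKitsLinear}, to an isomorphism of the Boolean kits $(\kitstr{\gpd1} \otimes \kitstr{\gpd2}) \multimap \kitstr{\gpd3}^\orth$ and $\kitstr{\gpd1} \multimap (\kitstr{\gpd2} \otimes \kitstr{\gpd3})^\orth$, obtained from Lemma~\ref{lem:kitsTensorLinear} together with the associativity data of Lemma~\ref{lem:AssocTensorStep}. Your detour through the tensor associator rather than the currying isomorphism is an equivalent repackaging, since that associator lemma is itself derived from those same two lemmas.
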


\begin{proof}
By Corollary \ref{cor:IsoKitsLinear}, it suffices to show that $(\kitstr{\gpd1} \otimes \kitstr{\gpd2}) \multimap \kitstr{\gpd3}^\orth \cong\kitstr{\gpd2}\multimap (\kitstr{\gpd2}\otimes \kitstr{\gpd3})^\orth$ which we obtain from Lemmas \ref{lem:kitsTensorLinear} and \ref{lem:AssocTensorStep}.
\end{proof}
It follows that the symmetric monoidal bicategory $\SProf$ has closed structure given by $\kitstr{\gpd1} \multimap \kitstr{\gpd2} = (\kitstr{\gpd1} \tensor \kitstr{\gpd2}^\perp)^\perp $, and that the object $\bot = \One ^\perp \cong \One$ is dualising: the canonical stabilized profunctor $\kitstr{\gpd1} \profto (\kitstr{\gpd1} \multimap \bot) \multimap \bot $ is part of an equivalence representing the 
involutive negation of classical linear logic. 

Moreover, the operation $\kitstr{\gpd1} \parr
\kitstr{\gpd2}:=(\kitstr{\gpd1}^\perp \tensor
\kitstr{\gpd2}^\perp)^\perp$ gives rise to a second monoidal structure
on $\SProf$ distinct from the tensor $\otimes$; in the compact closed
$\Prof$ these degenerate to the same structure. Finally,  we have the
additional property that $\kit1 \parr \kit2 = (\kit1^\orth \times
\kit2^\orth)^\orth$ is a (component-wise) subset of $\kit1 \otimes
\kit2 = (\kit1\times \kit2)^\dorth$, making $\SProf$ a model for the
\emph{mix rule}~\cite{CockettMix}.

\section{Stable Species}
\label{sec:stable-species}
As described in Section~\ref{subsec:generalized-species}, the bicategory $\Esp$ of generalized species of structures
arises as a coKleisli construction over the bicategory $\Prof$, for
the pseudo-comonad $\Sym$, where $\Sym \gpd1$ is the symmetric strict
monoidal completion of $\gpd1$. 

\subsection{Exponential structure.}
We proceed to extend the $\Sym$ construction to $\SProf$. We show that
this determines a linear exponential pseudo-comonad, whose coKleisli
bicategory is that of \emph{stable species of structures}. With an eye to linear logic, we call this extension $\oc$, so that for every $(\gpd1, \kit1) \in \SProf$ one has  
\[
\oc (\gpd1, \kit1) := (\Sym \gpd1, \oc \kit1)
\]
and we will define the Boolean kit $\oc \kit1$ shortly. 

Consider an object $u = \seq{a_1, \dots, a_n} \in \Sym\gpd1$ and an
element $\alpha=(\sigma, (\alpha_i : a_i \to a_{\sigma(i)})_{i \in
  \ints{n}})$ of $\Endo[\Sym \gpd1](u)$. Unless the permutation
$\sigma$ is the identity, the morphisms $\alpha_i$ are not
endomorphisms. Thus, in order to define $\oc \kit1(u)$ from the sets
$\kit1(a_i)$, we generate an endomorphism of $a_i$ for every $i \in
\ints{n}$, using the following steps. 

\begin{defi}
	For $\sigma \in \symgroup{n}$ and $i \in \ints{n}$, let $o(\sigma, i)$ be  the smallest strictly positive integer such that $\sigma^{o(\sigma, i)}(i) = i$. Equivalently, $o(\sigma,i)$ is the length of the cycle containing $i$ in the disjoint cycle decomposition of the permutation $\sigma$. If there is no ambiguity on the permutation, we just write $o(i)$ for $o(\sigma, i)$.
\end{defi}	

\begin{lem}\label{lem:OrderConjugatePermutations}
	Let $n$ be in $\N$ and $\sigma, \tau, \phi$ be permutations in $\symgroup{n}$. If $\sigma = \varphi^{-1} \tau \varphi$, then for all $i \in \ints{n}$, $o(\sigma, i) = o(\tau, \varphi(i))$.
\end{lem}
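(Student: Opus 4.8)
The plan is to exploit the fact that conjugation commutes with taking powers, so that the cycle of $i$ under $\sigma$ is transported to the cycle of $\varphi(i)$ under $\tau$. First I would record the elementary identity $\sigma^m = \varphi^{-1}\tau^m\varphi$ for every $m \in \N$, which follows by an easy induction from the hypothesis $\sigma = \varphi^{-1}\tau\varphi$, the intermediate factors $\varphi\varphi^{-1}$ telescoping to the identity.

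Next I would use this to establish, for a fixed $i \in \ints{n}$, the equivalence
\[
\sigma^m(i) = i \quad\Longleftrightarrow\quad \tau^m(\varphi(i)) = \varphi(i)
\]
for all $m \in \N$. Indeed, $\sigma^m(i) = \varphi^{-1}\tau^m\varphi(i)$, so $\sigma^m(i) = i$ holds if and only if $\tau^m\varphi(i) = \varphi(i)$, and since $\varphi$ is a bijection this last equation is exactly $\tau^m(\varphi(i)) = \varphi(i)$.

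Finally I would conclude: the equivalence shows that the set of strictly positive integers $m$ with $\sigma^m(i) = i$ coincides with the set of strictly positive integers $m$ with $\tau^m(\varphi(i)) = \varphi(i)$. Taking the least element of each set—which exists and is by definition $o(\sigma, i)$ and $o(\tau, \varphi(i))$ respectively—yields $o(\sigma, i) = o(\tau, \varphi(i))$, as required.

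I do not expect any genuine obstacle: this is the standard observation that conjugate permutations share the same cycle type, with the cycle of $i$ matched to that of $\varphi(i)$. The only point requiring mild care is to work throughout with the ``smallest positive exponent'' characterization of $o$ rather than the cycle-decomposition description, so that the argument remains purely computational. Alternatively, one could phrase the same reasoning by observing that $\varphi$ restricts to a bijection from the $\sigma$-cycle containing $i$ onto the $\tau$-cycle containing $\varphi(i)$, whence the two cycles have equal length.
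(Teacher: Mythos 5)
Your proposal is correct and follows essentially the same route as the paper: both rest on the identity $\sigma^m = \varphi^{-1}\tau^m\varphi$ and the resulting matching of periods of $i$ under $\sigma$ with periods of $\varphi(i)$ under $\tau$. The paper merely phrases this as "the specific exponent $o(\tau,\varphi(i))$ fixes $i$ under $\sigma$, and no smaller one can by minimality," whereas you state the full equivalence of the two sets of periods before taking minima; the content is identical.
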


\begin{proof}
	We have $\sigma^{o(\tau, \varphi(i))}(i) = \varphi^{-1} \tau^{o(\tau, \varphi(i))} \varphi(i)= \varphi^{-1}\varphi(i)=i$. Assume that there exists $0< j < o(\tau, \varphi(i))$ such that $\sigma^j(i) =i$, then $\tau^j\varphi(i) = \varphi \sigma^j(i) = \varphi(i)$ which contradicts the minimality of $o(\tau, \varphi(i))$.
\end{proof}

\begin{defi}
	For a sequence $u = \seq{a_1, \dots, a_n}$ in $\Sym \gpd1$ and a morphism $\alpha \in \Sym\gpd1(u,u)$, we define for each $i \in \ints{n}$, an endomorphism $\bangloop{\alpha }{i} : a_i \to a_i$ as the composite 
	\begin{center}
		\begin{tikzpicture}[line join=round]
			\node (A) at (0,0) {$a_i$} ;
			\node (B) at (2.5, 0) {$a_{\sigma(i)}$};
			\node (C) at (5.5, 0) {$a_{\sigma^2(i)}\dots$} ;
			\node (D) at (8.5, 0) {$a_i$} ;
			
			\draw [->] (A) to   node [above] {$\alpha _i$} (B);
			\draw [->] (B) to   node [above] {$\alpha _{\sigma(i)}$} (C);
			\draw [->] (C) to   node [above] {$\alpha _{\sigma^{o(i) -1}(i)}$} (D);
		\end{tikzpicture}
	\end{center}
\end{defi}

\begin{defi}\label{def:expBooleanKit}
	For a Boolean kit $(\gpd1, \kit1)$,  $\oc (\gpd1, \kit1)$ is defined as $(\Sym \gpd1,(\kit1^\Sym)^\dorth)$, where for an object $u = \seq{a_1, \dots, a_n} \in \Sym \gpd1$, $\kit1^\Sym(u)$ is given by:
	\[
	\{ H \subgroup \Endo(u) \mid \Forall{\alpha \in H}  \forall i \in \ints{n}, \bangloop{ \alpha}{i} \in \kit1(a_i)  \}
	\]
\end{defi}

 Since the kit $\kit1^\Sym$ is not Boolean in general, the closure under double orthogonality ensures that $\oc \kit1 = (\kit1^\Sym)^\dorth$ is indeed a Boolean kit
We show that the operation $(\gpd1, \kit1) \mapsto \oc (\gpd1, \kit1) $ extends to a pseudo-comonad on $\SProf$. The proof is a technical but straightforward verification involving the characterisation described in Lemma~\ref{lem:characterisationDoubleOrth}.

\begin{prop}
	\label{prop:bang_preserves_qprofs}
	For Boolean kits $(\gpd1, \kit1)$ and $(\gpd2, \kit2)$ and $P:(\gpd1, \kit1) \profto (\gpd2, \kit2)$ a stabilized profunctor, the profunctor $\Sym P : \Sym \gpd1 \profto \Sym \gpd2$ is a stabilized profunctor $\oc(\gpd1, \kit1) \profto \oc(\gpd2, \kit2)$. 
\end{prop}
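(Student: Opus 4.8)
The plan is to reduce the single stabilizer equation for $\Sym P$ to a family of stabilizer equations for $P$, one per cycle, and then to feed these into the stabilized condition for $P$ together with the orthogonality characterisation of Lemma~\ref{lem:characterisationDoubleOrth}. Fix $u = \seq{a_1,\dots,a_n}\in\Sym\gpd1$ and $v=\seq{b_1,\dots,b_m}\in\Sym\gpd2$; since the relevant hom-set is empty unless $n=m$, assume $n=m$ and write a typical element as $t=(\varphi,(p_j)_j)$ with $\varphi\in\symgroup n$ and $p_j\in P(b_j,a_{\varphi(j)})$, together with $\xi=(\sigma,(\xi_i)_i)\in\Endo(u)$ and $\zeta=(\tau,(\zeta_j)_j)\in\Endo(v)$. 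The combinatorial heart, and the step I expect to be the main obstacle, is the following: computing the left and right $\Sym$-actions on $t$ explicitly, the equation $\xi\act t\act\zeta=t$ forces $\sigma=\varphi\tau\inv\varphi$ and, unwinding the induced recurrence once around the cycle of $\tau$ through each index $j$, yields for every $j$ an equation $\bangloop\xi{\varphi(j)}\act p_j\act\bangloop\zeta j=p_j$ in $P$. Here Lemma~\ref{lem:OrderConjugatePermutations} guarantees that the cycle of $\sigma$ through $\varphi(j)$ and the cycle of $\tau$ through $j$ have equal length, so that the two loop endomorphisms close up in step; the handedness of $\bangloop\zeta j$ is immaterial, since the kits involved are closed under inverses. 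I would also record that iterating $\xi\act t\act\zeta=t$ gives $\xi^k\act t\act\zeta^k=t$ for all $k$, so the per-cycle equation is available for every power.

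Next I would assemble the descriptions of the exponential kits. By Definition~\ref{def:expBooleanKit}, the downward closure of $\kit1$ (Lemma~\ref{lem:kitDownClosed}), and the fact that $\bangloop{\xi^l}i$ is a power of $\bangloop\xi i$, membership $\xi\in L:=\bigcup\kit1^\Sym(u)$ is equivalent to $\bangloop\xi i\in\bigcup\kit1(a_i)$ for all $i$; write $L':=\bigcup\kit2^\Sym(v)$ for the analogue over $v$. Unfolding double orthogonality as in Lemma~\ref{lem:characterisationDoubleOrth} identifies $\bigcup\oc\kit1(u)=\bigcup(\kit1^\Sym)^\dorth(u)$ with $\{\xi\mid\Phi_L(\xi)\}$, and, using the identity $\kit1^\orth=\kit1^{\dorth\perp}$ noted after Lemma~\ref{lem:BasicOrthogonalityProperties}, identifies $\bigcup(\oc\kit1)^\orth(u)=\bigcup(\kit1^\Sym)^\orth(u)$ with $\{\xi\mid\cyclic\xi\cap L=\{\id\}\}$; similarly over $v$. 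Finally I would note the elementary fact $\bigcup\kit1(a)\cap\bigcup\kit1^\orth(a)=\{\id\}$, since any common element $g$ gives $\cyclic g\orth\cyclic g$. I then isolate one recurring \emph{triviality sub-argument}: if $\xi^k\in L$, $\xi^k\neq\id$ and $\zeta^k=\id$, then $\xi^k\act t=t$ forces $\sigma^k=\id$, so the per-cycle equations read $(\xi^k)_{\varphi(j)}\act p_j=p_j$; the second clause of stability for $P$ (with $\id\in\bigcup\kit2^\orth$) gives $(\xi^k)_{\varphi(j)}\in\bigcup\kit1^\orth$, which with $\xi^k\in L$ and $\bigcup\kit1\cap\bigcup\kit1^\orth=\{\id\}$ forces $\xi^k=\id$, a contradiction. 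Hence $\zeta^k\neq\id$ whenever $\xi^k\in L\setminus\{\id\}$.

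For the first clause of Definition~\ref{def:sprofunctor}, assume $\xi\in\bigcup\oc\kit1(u)$, i.e.\ $\Phi_L(\xi)$, and prove $\Phi_{L'}(\zeta)$. Fix $n$ with $\zeta^n\neq\id$ and instantiate $\Phi_L(\xi)$ at $n$. If $\xi^n=\id$, then $\sigma^n=\id$ and hence $\tau^n=\id$, so the per-cycle equations for $(\xi^n,\zeta^n)$ reduce to $p_j\act\bangloop{\zeta^n}j=p_j$; the first clause of stability for $P$ (with $\id\in\bigcup\kit1$) gives $\bangloop{\zeta^n}j\in\bigcup\kit2(b_j)$ for all $j$, so $\zeta^n\in L'$, and $\zeta^n\neq\id$ by assumption. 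Otherwise there is $m$ with $\xi^{nm}\neq\id$ in $L$; the per-cycle equation for $(\xi^{nm},\zeta^{nm})$ and the first clause of stability give $\zeta^{nm}\in L'$, while the triviality sub-argument gives $\zeta^{nm}\neq\id$. Either way the second disjunct of $\Phi_{L'}(\zeta)$ holds at $n$, so $\Phi_{L'}(\zeta)$ holds and $\zeta\in\bigcup\oc\kit2(v)$.

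For the second clause I argue contrapositively. Suppose $\xi\notin\bigcup(\oc\kit1)^\orth(u)$; by the description above $\cyclic\xi\cap L\neq\{\id\}$, so there is $k$ with $\xi^k\neq\id$ and $\xi^k\in L$. The per-cycle equation for $(\xi^k,\zeta^k)$ and the first clause of stability for $P$ give $\bangloop{\zeta^k}j\in\bigcup\kit2(b_j)$ for all $j$, i.e.\ $\zeta^k\in L'$, and the triviality sub-argument gives $\zeta^k\neq\id$. Thus $\cyclic\zeta\cap L'\neq\{\id\}$, that is $\zeta\notin\bigcup(\oc\kit2)^\orth(v)$, which is the desired contrapositive. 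The two clauses together are exactly the condition of Definition~\ref{def:sprofunctor} for $\Sym P\colon\oc(\gpd1,\kit1)\profto\oc(\gpd2,\kit2)$.
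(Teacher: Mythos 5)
Your proposal is correct and follows essentially the same route as the paper's proof: the same reduction of $\xi\act t\act\zeta=t$ to the per-cycle equations $\bangloop\xi{\varphi(j)}\act p_j\act\bangloop\zeta j=p_j$ via Lemma~\ref{lem:OrderConjugatePermutations}, the same unfolding of $\oc\kit1$ through the saturation formula of Lemma~\ref{lem:characterisationDoubleOrth}, and the same contradiction argument to rule out $\zeta^{k}=\id$ when $\xi^{k}\in\bigcup\kit1^\Sym(u)\setminus\{\id\}$. The only differences are organizational: you factor out that contradiction as a reusable sub-argument and prove the second clause contrapositively, whereas the paper repeats the step inline and argues the second clause directly.
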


\begin{proof}
	Recall that for objects $u = \seq{a_1, \dots, a_n} \in \Sym \gpd1$ and $v = \seq{b_1, \dots, b_n} \in \Sym \gpd2$, the profunctor $\Sym P: \Sym \gpd1 \profto \Sym \gpd2$ is given by:
	\[
	\Sym P(v,u) = \begin{cases} \coprod_{\varphi \in \symgroup{n}} \prod_{j \in [m]} P(b_j, a_{\varphi(j)}), & \text{if } n=m\\
		\varnothing, & \text{otherwise}
		\end{cases}
	\]
	with the following functorial action: for an element $p = (\varphi, \seq{p_j}_{j \in [m]}) \in \Sym P(v,u)$ and morphisms $\alpha = (\sigma, \seq{\alpha_i}_{i \in [n]}) : u \to u'$ and $\beta = (\tau, \seq{\beta_j}_{j \in [m]}) : v'\to v$, 
	\[
	\alpha \act p \act \beta = (\sigma \comp \varphi \comp \tau, \seq{\alpha_{\varphi(\tau(j))}\act p_{\tau(j)} \act \beta_j }_{j \in [m]}). 
	\]
	Assume that there exists morphisms $\alpha = (\sigma, (\alpha_i)_{i \in \ints{n}}) \in \Endo(u)$, $\beta= (\tau, (\beta_i)_{i \in \ints{n}}) \in \Endo(v)$, and $p = (\varphi, (p_i)_{i \in \ints{n}})$ be in $\Sym P(v,u)$ such that $\alpha \cdot p \cdot \beta = p$, i.e. 
	\[
	\sigma \circ \varphi \circ \tau = \varphi 
	\] 
	and for all $i \in \ints{n}$, 
	\[
	\alpha_{\varphi(\tau(i))} \cdot p_{\tau(i)} \cdot \beta_i = p_{i}.
	\]
	We first show that it implies that for all $i \in \ints{n}$, $\bangloop{\alpha}{\varphi(i)} \cdot p_i \cdot \bangloop{\beta}{i} = p_i$.
	By Lemma \ref{lem:OrderConjugatePermutations}, we have:
	\[\begin{aligned}
		\bangloop{\alpha}{\varphi(i)} \cdot p_i \cdot \bangloop{g}{i}  &=\alpha_{\sigma^{o(\sigma, \varphi) -1}(\varphi(i))} \dots \alpha_{\sigma \varphi(i)}  \alpha_{\varphi(i)} \cdot p_i \cdot \beta_{\tau^{o(\tau,i)-1}(i)}  \dots  \beta_{\tau(i)} \cdot \beta_i\\
		&= \alpha_{\sigma^{o(\tau, i) -1}(\varphi(i))} \dots  \alpha_{\sigma \varphi(i)}  \alpha_{\varphi(i)} \cdot p_i \cdot \beta_{\tau^{o(\tau,i)-1}(i)}  \dots  \beta_{\tau(i)} \cdot \beta_i
	\end{aligned}\]
	Note that since $\tau^{o(\tau, i)}(i)= i$, $\tau^{o(\tau, i)-1}(i)= \tau^{-1}(i)$ we have $\alpha_{\varphi(i)} \cdot p_i \cdot\beta_{\tau^{o(\tau,i)-1}(i)} = \alpha_{\varphi \tau(\tau^{-1}(i))} \cdot p_{\tau(\tau^{-1}(i))} \cdot \beta_{\tau^{-1}(i)} = p_{\tau^{-1}(i)}$. Repeating this process, we obtain the desired result.
	We now show that the following two implications hold:
	\[
	\alpha \in \bigcup \oc\kit1(u) \Rightarrow \beta\in \bigcup \oc\kit2(v) \quad \text{and} \quad \beta \in \bigcup \oc\kit2^\orth(v) \Rightarrow \alpha\in \bigcup \oc\kit1^\orth(u).
	\]
	
	\begin{itemize}
		\item Assume that $\alpha\in \bigcup \oc\kit1(u)$. By Lemma~\ref{lem:characterisationDoubleOrth}, it is equivalent to the following formula:
		\[
		\forall m \in \N, \alpha^m = \id \vee \left(\exists k \in \N, \alpha^{mk} \neq \id \wedge(\forall i \in \ints{n},  \bangloop{\alpha^{mk}}{a_i} \in \kit1(a_i)) \right).
		\] 
		Let $m \in \N$ be such that $\beta^m \neq \id$, if $\alpha^m = \id$, then $\sigma^m =\id$ and for all $i \in \ints{n}$, $\alpha_i^m =\id \in \kit1(a_i)$. It implies that $\tau^m = \id$ since $\sigma^m \circ \varphi \circ \tau^m = \varphi$ and for all $i \in \ints{n}$, $\bangloop{\beta^m}{i}=\beta^m_i \in \kit2(b_i)$ since $\alpha_i \cdot p_i \cdot \beta_i= p_i$ and $P$ is a stabilized profunctor. Hence $\beta \in \oc \kit2(v)$ by Lemma \ref{lem:BasicOrthogonalityProperties}$.1$.s
		If $\alpha^m \neq \id$, then there exists $k \in \N$ such that $\alpha^{mk} \neq \id$ and for all $i \in \ints{n}$,  $\bangloop{\alpha^{mk}}{i} \in \kit1(a_i)$. Since $P$ is a stabilized profunctor and $\bangloop{\alpha^{mk}}{i}\cdot p_i \cdot \bangloop{\beta^{mk}}{i} =p_i$, we have $\bangloop{\beta^{mk}}{i} \in \kit2(b_i)$ for all $i \in \ints{n}$. It remains to show that $\beta^{mk} \neq \id$. If $\beta^{mk} = \id$, then $\tau^{mk} = \id$ and $\bangloop{\beta^{mk}}{i}= \beta^{mk}_i = \id \in \kit2^\orth(b_i)$ for all $i$. It implies that $\sigma^{mk} =\id$ and $\bangloop{\alpha^{mk}}{i} = \alpha^{mk}_i \in \kit1^\orth(a_i)$ for all $i \in \ints{n}$ as $P$ is a stabilized profunctor. Hence, we must have $\alpha^{mk}_i = \id$ for all $i$ which implies that $\alpha^{mk} =\id$ since $\alpha^{mk}_i \in \kit1(a_i)$ by assumption. 
		\item Assume that $\beta \in \bigcup \oc\kit2^\orth(v)$ i.e. for all $m \in \N$, if for all $i \in \ints{n}$, $\bangloop{\beta^m}{i} \in \kit2(b_i)$ then $\beta^m = \id$. Assume that there exists $m \in \N$ such that for all $i \in \ints{n}$, $\bangloop{\alpha^m}{i} \in \kit1(a_i)$. Since $\bangloop{\alpha^m}{\varphi(i)} \cdot p_i \cdot \bangloop{\beta^m}{i} = p_i$ for all $i \in \ints{n}$ and $P$ is a stabilized profunctor, we have $\bangloop{\beta^m}{i} \in \kit2(b_i)$ for all $i \in \ints{n}$ which implies that $\beta^m = \id$. Hence, $\tau^m = \id$ and for all $i \in \ints{n}$, $\bangloop{\beta^m}{i} = \beta^m_i = \id \in \kit2^\orth(b_i)$ which entails that $\sigma^m =\id$ and $\bangloop{\alpha^m}{i} = \alpha^m_i  \in \kit1^\orth(a_i)$ since $P$ is a stabilized profunctor. We conclude that $\alpha^m_i = \id$ for all $i\in \ints{n}$ so that $\alpha^m = \id$ as desired. \qedhere
	\end{itemize}
\end{proof}

As with the symmetric monoidal structure, 
the many coherence axioms required for a  pseudo-comonad are immediately verified given the situation in $\Prof$. All that is needed is that components of the counit and comultiplication, given above, are stabilized profunctors. 
\begin{lem}\label{lem:derelectionStable}
	For a Boolean kit $\kitstr{\gpd1}=(\gpd1,\kit1)$, the profunctor $\der[\gpd1] : \Sym \gpd1\profto \gpd1$ is a stabilized profunctor $\kitstr{\oc \gpd1} \profto \kitstr{\gpd1}$.
\end{lem}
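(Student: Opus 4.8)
The plan is to reduce the statement to the already-established fact that the identity profunctor is stabilized (the third item of the example following Definition~\ref{def:sprofunctor}). First I observe that $\der[\gpd1](a,u) = \Sym\gpd1(\seq{a}, u)$ is empty unless $u$ is a singleton sequence, since $\Sym\gpd1$ has no morphisms between sequences of different lengths. Hence the stabilization condition of Definition~\ref{def:sprofunctor} need only be checked at pairs $(a, \seq{a'})$, and there $\der[\gpd1](a,\seq{a'}) = \Sym\gpd1(\seq{a},\seq{a'}) \cong \gpd1(a,a')$, recovering exactly the identity profunctor $\yon$ of $\gpd1$ together with its two actions. Writing $p$ as the isomorphism $p_1 : a \to a'$ in $\gpd1$, and noting $\alpha = (\id,\alpha_1)$ with $\alpha_1 \in \Endo[\gpd1](a')$, the hypothesis $\alpha\act p\act\beta = p$ unfolds to $\alpha_1 \comp p_1 \comp \beta = p_1$.

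Next I identify the relevant kits on singletons. Since $\ints{1}$ carries only the identity permutation, $\Endo[\Sym\gpd1](\seq{a'}) \cong \Endo[\gpd1](a')$ via $\alpha \mapsto \alpha_1$, and under this isomorphism $\bangloop{\alpha}{1} = \alpha_1$. By Definition~\ref{def:expBooleanKit}, $\kit1^\Sym(\seq{a'})$ then corresponds to $\{\, H \subgroup \Endo[\gpd1](a') \mid H \subseteq \bigcup\kit1(a') \,\}$, which by Lemma~\ref{lem:characterisationDoubleOrth}(1) is precisely $\kit1(a')$, as $\kit1$ is Boolean. Since the orthogonality operation on kits is computed component-wise (each $\kit1^\orth(u)$ depends only on $\kit1(u)$), the double-orthogonal closure is also component-wise, and $\kit1(a')$ is already its own double orthogonal; hence $\oc\kit1(\seq{a'}) = (\kit1^\Sym)^\dorth(\seq{a'})$ corresponds to $\kit1(a')$, and likewise $\oc\kit1^\orth(\seq{a'})$ corresponds to $\kit1^\orth(a')$, under the same isomorphism.

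With these identifications, the stabilization condition for $\der[\gpd1]$ at $(a,\seq{a'})$ becomes literally the stabilization condition for the identity profunctor $\yon : \kitstr{\gpd1}\profto\kitstr{\gpd1}$ at $(a,a')$, with $\alpha_1 \in \bigcup\kit1(a')$ in place of $\alpha \in \bigcup\oc\kit1(\seq{a'})$ and $\alpha_1 \in \bigcup\kit1^\orth(a')$ in place of $\alpha \in \bigcup\oc\kit1^\orth(\seq{a'})$. This is exactly the earlier example: from $\alpha_1 \comp p_1 \comp \beta = p_1$ with $p_1$ invertible one gets $\alpha_1 = p_1 \comp \inv\beta \comp \inv{p_1}$, so $\alpha_1$ and $\inv\beta$ are conjugate, and both $\kit1$ and $\kit1^\orth$ are closed under conjugation, while $\bigcup\kit1(a)$ and $\bigcup\kit1^\orth(a)$ are closed under inverses (their elements lie in subgroups, by Lemma~\ref{lem:kitDownClosed}). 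Both required implications follow, so $\der[\gpd1]$ is stabilized.

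I expect the only genuine work to be the second paragraph: showing that passing to the closure $(\kit1^\Sym)^\dorth$ adds nothing on singleton sequences. This is where the Boolean hypothesis on $\kit1$ enters, through Lemma~\ref{lem:characterisationDoubleOrth} and the component-wise character of orthogonality; once this identification is in place, the result follows from the identity-profunctor case with no further calculation.
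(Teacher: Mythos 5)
Your proof is correct and follows essentially the same route as the paper's: the paper's entire argument is the one-line observation that $\oc\kit1(\seq a)\cong\kit1(a)$ on singleton sequences, from which the claim "follows directly" since dereliction is then just the identity profunctor on singletons. You have merely made explicit the two points the paper leaves implicit — that the $(-)^{\dorth}$ closure adds nothing on singletons because orthogonality is component-wise and $\kit1$ is Boolean, and that the resulting conjugacy argument is the identity-profunctor example — so there is nothing to object to.
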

\begin{proof}
	This follows directly from observing that for a sequence $\seq{a}$ of length one in $\Sym \gpd1$, 
	\[
	\oc \kit1(\seq{a}) =\{ (\id, \seq{\alpha}) \suchthat \alpha \in \kit1(a)\} \cong \kit1(a). \qedhere
	\]
\end{proof}

The next lemma is another technical verification based on the characterisation of Lemma~\ref{lem:characterisationDoubleOrth}. 
\begin{lem}
	For a Boolean kit $(\gpd1,\kit1)$ and a sequence $\seq{u_1, \dots, u_n} \in \Sym \Sym \gpd1$, there is a mapping 
	\[
	\overline{(-)}: \Endo(\seq{u_1, \dots, u_n}) \to \Endo (u_1 \otimes \dots \otimes u_n)
	\]
	such that for $\beta \in \Endo(\seq{u_1, \dots, u_n})$, we have:
	\[\begin{aligned}
		\beta \in \bigcup \oc \oc \kit1(\seq{u_1, \dots, u_n})\quad &\Leftrightarrow \quad \bar{\beta}\in \bigcup \oc \kit1(u_1 \otimes  \dots \otimes u_n) \qquad\text{ and}\\
		\beta \in \bigcup (\oc \oc \kit1)^\orth(\seq{u_1, \dots, u_n})\quad &\Leftrightarrow \quad \bar{\beta}\in \bigcup (\oc \kit1)^\orth(u_1 \otimes  \dots \otimes u_n) 
	\end{aligned}
	\]
\end{lem}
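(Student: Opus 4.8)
The plan is to build $\overline{(-)}$ by \emph{flattening} a nested permutation-morphism, then to reduce both biconditionals to a single combinatorial identity relating loop-composition at the two levels, and finally to conclude through the saturation characterization of Boolean kits in Lemma~\ref{lem:characterisationDoubleOrth}. First I would define the map explicitly. Writing $\beta = (\sigma, (\beta_i)_i)$ with each block-morphism $\beta_i = (\rho_i, (\beta_{i,j})_j) : u_i \to u_{\sigma(i)}$ in $\Sym\gpd1$, where $u_i = \seq{a_{i,1}, \dots, a_{i,k_i}}$, I index the positions of the concatenation $u_1 \otimes \dots \otimes u_n$ by pairs $(i,j)$ and set $\bar\beta = (\bar\sigma, (\bar\beta_{(i,j)}))$ with $\bar\sigma(i,j) = (\sigma(i), \rho_i(j))$ and $\gpd1$-component $\beta_{i,j}$ at $(i,j)$. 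Since $\sigma$ and each $\rho_i$ are bijections, so is $\bar\sigma$, and $\bar\beta$ is a genuine endomorphism of the concatenation. Two properties, used throughout, are immediate: $\overline{(-)}$ is a group homomorphism, so $\overline{\beta^m} = \bar\beta^{\,m}$; and it is injective, since $\bar\beta = \id$ forces $\bar\sigma = \id$, hence $\sigma = \id$, all $\rho_i = \id$, and all $\beta_{i,j} = \id$.

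The technical core is the \emph{loop identity}
\[
\bangloop{\bar\beta}{(i,j)} = \bangloop{\bangloop{\beta}{i}}{j}
\]
as endomorphisms of $a_{i,j}$ in $\gpd1$. To prove it I would analyse the $\bar\sigma$-cycle through $(i,j)$: it runs around the $\sigma$-cycle of the block $i$, and within block $i$ it visits the internal positions $j, P_i(j), P_i^2(j), \dots$, where $P_i = \rho_{\sigma^{o-1}(i)} \circ \dots \circ \rho_i$ (with $o = o(\sigma,i)$) is the internal permutation of the block-loop $\bangloop{\beta}{i} \in \Endo(u_i)$; hence this cycle has length $o(\sigma,i)\cdot o(P_i,j)$. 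Composing the $\gpd1$-components of $\bar\beta$ once around it reassembles, block by block, exactly the factors defining $\bangloop{\bangloop{\beta}{i}}{j}$, so the two morphisms are literally equal. Here I would also record the auxiliary fact that for any $w = \seq{c_\ell}$ and $\gamma = (\pi,(\gamma_\ell)) \in \Endo(w)$ one has $\bangloop{\gamma^m}{\ell} = (\bangloop{\gamma}{\ell})^{m/\gcd(o(\pi,\ell),m)}$, i.e.\ the loop of a power is a power of the loop; together with Lemma~\ref{lem:OrderConjugatePermutations} this lets me trade the ``for all powers $m$'' quantifiers against single-loop conditions when unfolding the kits.

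With these tools the first biconditional follows by unfolding both sides via Lemma~\ref{lem:characterisationDoubleOrth}. Membership $\beta \in \bigcup\oc\oc\kit1(\seq{u_i})$ is governed by the saturation formula applied to the generating union $\bigcup(\oc\kit1)^\Sym$, which, after the power-reduction, amounts to a condition on the double loops $\bangloop{\bangloop{\beta^m}{i}}{j}$ lying in $\bigcup\kit1$; membership $\bar\beta \in \bigcup\oc\kit1(u_1\otimes\dots\otimes u_n)$ is the corresponding condition on the single loops $\bangloop{\bar\beta^{\,m}}{(i,j)}$. The loop identity, applied to $\beta^m$ and using $\overline{\beta^m}=\bar\beta^{\,m}$, identifies these conditions termwise, giving the equivalence. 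The second biconditional then follows formally: for any Boolean kit $K$ one has $\gamma \in \bigcup K^\orth(a)$ iff every power $\gamma^m$ lying in $\bigcup K(a)$ equals $\id$ (a consequence of down-closure, Lemma~\ref{lem:kitDownClosed}), so the orthogonal statement reduces to the first biconditional applied to each $\beta^m$ together with the injectivity of $\overline{(-)}$.

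I expect the main obstacle to be the loop identity and the bookkeeping it demands: one must match the $\bar\sigma$-cycle against the nested cycle structure of $\sigma$ and the composite internal permutation $P_i$, and verify that the composite $\gpd1$-morphisms coincide and not merely that the cycle lengths multiply correctly. The secondary difficulty is that $\oc\kit1$ and $\oc\oc\kit1$ are defined by the nontrivial closure $(-)^\dorth$, so the two \emph{saturated} membership conditions must be shown to agree rather than the raw generating families $\kit1^\Sym$ and $(\oc\kit1)^\Sym$; this is precisely where the auxiliary power-reduction fact and Lemma~\ref{lem:characterisationDoubleOrth} do the work.
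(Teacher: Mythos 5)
Your proposal is correct and follows essentially the same route as the paper's proof: you define the same flattening map (written with pair indices rather than the paper's offset arithmetic), isolate the same key identity $\bangloop{\bar\beta}{(i,j)} = \bangloop{\bangloop{\beta}{i}}{j}$ together with $\overline{\beta^m}=\bar\beta^{\,m}$, and conclude via the saturation characterization of Lemma~\ref{lem:characterisationDoubleOrth}. Your explicit power-reduction fact and injectivity remark only make more precise the steps the paper leaves implicit.
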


\begin{proof}
	Let $\beta = (\tau,\seq{\beta_i}_{i \in \ints{n}} )$ be in $\Endo(\seq{u_1, \dots, u_n})$ and let $n_i$ be the length of the sequence $u_i= \seq{a_1^i, \dots, a_{n_i}^i}$ for $1 \leq i \leq n$ and define $m$ to be $\sum_{i\in \ints{n}} n_i$.
	For each $i$, $\beta_i$ consists of a permutation $\phi_i : \ints{n_i} \to \ints{n_i}$ and an $n_i$-tuple of morphisms $\seq{\delta_j^i : a_j^i \to a_{\phi_i(j)}^{\tau(i)}}_{j \in \ints{n_i}}$ where $u_i =\seq{a^i_1, \dots, a^i_{n_i}}$.
	Define $\bar{\beta} =(\bar{\tau}, \seq{\bar{\beta}_j}_{j \in \ints{m}}) \in \Endo(u_1 \otimes\dots\otimes  u_n)$ 
	as follows: the permutation $\bar{\tau}: \ints{m} \xrightarrow{\sim}\ints{m}$ is defined by:
	\[
	\bar{\tau} : k \mapsto \phi_{l+1}(k - \sum\limits_{i=1}^l n_i)+ \sum\limits_{i=1}^{\tau(l+1)-1}n_i \quad \text{if} \quad \sum\limits_{i=1}^l n_i < k \leq \sum\limits_{i=1}^{l+1} n_i
	\]
	and for $1\leq k \leq m$, we define $\bar{\beta}_k: a_k \rightarrow a_{\bar{\tau}(k)}$ as $\bar{\beta}_k := \delta_j^i$ where $i:=l$ and $j:=k - \sum\limits_{i=1}^l n_i$ if $\sum\limits_{i=1}^l n_i < k \leq \sum\limits_{i=1}^{l+1} n_i$.
	The idea is that if $\sum\limits_{i=1}^l n_i < k \leq \sum\limits_{i=1}^{l+1} n_i$, then $k$ is in the subsequence $u_{l+1}$. 
	
	Note that for $a^i_j$ in $u_1 \otimes \dots \otimes u_n$, we have $\bangloop{\bar{\beta}}{a^i_j} = \bangloop{\bangloop{\beta}{u_i}}{a^i_j}$, therefore $\beta$ is in $\bigcup\kit1^{\Sym \Sym}(\seq{u_1, \dots, u_n})$ if and only if $\bar{\beta}$ is in $\bigcup\kit1^\Sym(u_1 \otimes \dots \otimes u_n)$. Since we further have $\overline{\beta^n} = (\bar{\beta})^n$, we have  $\beta \in \bigcup (\oc\oc\kit1)^\dorth(\seq{u_1, \dots, u_n})$ if an only if $\bar{\beta}\in \bigcup (\oc\kit1)^\orth(u_1 \otimes  \dots \otimes u_n)$. It also implies that the following formulae are also equivalent:
	\[\begin{aligned}
		\forall n\in\N, \beta^n=\id \vee (\exists m\in\N , \beta^{nm}\neq \id &\wedge \beta^{nm}\in \bigcup \kit1^{\Sym\Sym}(\seq{u_1, \dots, u_n}))\\
		\forall n\in\N, \bar{\beta}^n=\id \vee (\exists m\in\N , \bar{\beta}^{nm}\neq \id &\wedge \bar{\beta}^{nm}\in \bigcup \kit1^\Sym(u_1 \otimes \dots \otimes u_n)
	\end{aligned}
	\]
	which implies that $\beta \in \bigcup \oc \oc \kit1(\seq{u_1, \dots, u_n})$ if an only if $\bar{\beta}\in \bigcup \oc \kit1(u_1 \otimes  \dots \otimes u_n)$ by Definition \ref{def:saturatedKit}.
\end{proof}

\begin{cor}\label{lem:diggingStable}
	For a Boolean kit $\kitstr{\gpd1}=(\gpd1,\kit1)$, the profunctor $\dig[\gpd1] : \Sym \gpd1\profto \Sym \Sym \gpd1$ is a stabilized profunctor $\kitstr{\oc \gpd1} \profto \kitstr{\oc \oc\gpd1}$.
\end{cor}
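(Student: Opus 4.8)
The plan is to recognize $\dig[\gpd1]$ as the companion (graph) profunctor of the monoidal multiplication functor $\mu : \Sym\Sym\gpd1 \to \Sym\gpd1$ sending $\seq{u_1, \dots, u_n}$ to the concatenation $u_1 \otimes \dots \otimes u_n$, and then to reduce the stabilization condition to the conjugation-closure argument already used for the identity profunctor. Concretely, an element $p \in \dig[\gpd1](\seq{u_1, \dots, u_n}, u)$ is a morphism $p : u_1 \otimes \dots \otimes u_n \to u$ in $\Sym\gpd1$, and the two actions are postcomposition $\alpha \cdot p = \alpha \circ p$ for $\alpha \in \Endo(u)$ and precomposition $p \cdot \beta = p \circ \bar\beta$ for $\beta \in \Endo(\seq{u_1, \dots, u_n})$, where $\bar\beta = \mu(\beta)$ is exactly the morphism produced by the map $\overline{(-)}$ of the preceding lemma. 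With this identification, the stabilization hypothesis $\alpha \cdot p \cdot \beta = p$ of Definition~\ref{def:sprofunctor} unfolds to the single equation $\alpha \circ p \circ \bar\beta = p$ in the groupoid $\Sym\gpd1$.

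First I would rearrange this equation: since $\Sym\gpd1$ is a groupoid and $p$ is invertible, $\alpha \circ p \circ \bar\beta = p$ gives $\alpha = p \circ \bar\beta^{-1} \circ p^{-1}$, so the cyclic subgroups $\cyclic\alpha$ and $\cyclic{\bar\beta}$ are conjugate across the morphism $p : u_1 \otimes \dots \otimes u_n \to u$. Both $\oc\kit1$ and $(\oc\kit1)^\orth$ are Boolean kits, hence closed under conjugation (Definition~\ref{def:kit}) and under subgroups (Lemma~\ref{lem:kitDownClosed}); combining these two closure properties, membership $\alpha \in \bigcup \oc\kit1(u)$ is equivalent to $\cyclic\alpha \in \oc\kit1(u)$, which by conjugation along $p$ is equivalent to $\cyclic{\bar\beta} \in \oc\kit1(u_1 \otimes \dots \otimes u_n)$, i.e. to $\bar\beta \in \bigcup \oc\kit1(u_1 \otimes \dots \otimes u_n)$. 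The same reasoning with $(\oc\kit1)^\orth$ in place of $\oc\kit1$ gives $\alpha \in \bigcup (\oc\kit1)^\orth(u) \iff \bar\beta \in \bigcup (\oc\kit1)^\orth(u_1 \otimes \dots \otimes u_n)$. This is precisely the conjugation argument used to show the identity profunctor is stabilized, now transported across $p$.

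It then remains to pass from $\bar\beta$ to $\beta$, which is exactly what the preceding lemma supplies: its two biconditionals identify $\bar\beta \in \bigcup \oc\kit1(u_1 \otimes \dots \otimes u_n)$ with $\beta \in \bigcup \oc\oc\kit1(\seq{u_1, \dots, u_n})$, and $\bar\beta \in \bigcup (\oc\kit1)^\orth(u_1 \otimes \dots \otimes u_n)$ with $\beta \in \bigcup (\oc\oc\kit1)^\orth(\seq{u_1, \dots, u_n})$. Chaining the two sets of equivalences yields $\alpha \in \bigcup \oc\kit1(u) \iff \beta \in \bigcup \oc\oc\kit1(\seq{u_1, \dots, u_n})$ and $\alpha \in \bigcup (\oc\kit1)^\orth(u) \iff \beta \in \bigcup (\oc\oc\kit1)^\orth(\seq{u_1, \dots, u_n})$, which is even stronger than the two implications demanded by Definition~\ref{def:sprofunctor}. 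The genuinely technical content, namely relating $\bangloop{\bar\beta}{a^i_j}$ to the iterated loops $\bangloop{\bangloop{\beta}{u_i}}{a^i_j}$ and checking compatibility with powers through the saturation criterion of Lemma~\ref{lem:characterisationDoubleOrth}, has already been discharged in that lemma; the only residual obstacle here is to verify that the $\Sym\Sym\gpd1$-action on $\dig[\gpd1]$ is indeed precomposition by $\bar\beta = \mu(\beta)$, after which the corollary is a direct combination of the lemma with the conjugation-closure argument familiar from the identity profunctor.
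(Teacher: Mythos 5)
Your proposal is correct and follows essentially the same route as the paper's proof: both identify the action of $\Endo(\seq{u_1,\dots,u_n})$ on an element $\gamma$ of $\dig[\gpd1]$ as precomposition by $\bar\beta$, use conjugation-closure of $\oc\kit1$ and $(\oc\kit1)^\orth$ along the isomorphism $\gamma$ to relate $\alpha$ and $\bar\beta$, and invoke the preceding lemma's two biconditionals to pass from $\bar\beta$ to $\beta$. Your version merely spells out the conjugation step (via cyclic subgroups and closure under subgroups) a little more explicitly than the paper does.
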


\begin{proof}
	
	Let $\gamma : u_1 \otimes\dots\otimes  u_n \to v$ be in $\dig[\gpd{A}](\seq{u_1, \dots, u_n},v)$, it is of the form $(\rho, \seq{\gamma_k}_{k \in \ints{m}})$ where $\rho : \ints{m} \xrightarrow{\sim}\ints{m} $ and
	$\gamma_k : (u_1 \otimes\dots\otimes  u_n)_k \to v_{\rho(k)}$ for $1\leq k \leq m$. Assume that there exists $\alpha = (\sigma, \seq{\alpha_k}_{k \in \ints{m}}) \in \Endo(v)$ and $\beta= (\tau,\seq{\beta_i}_{i \in \ints{m}} ) \in \Endo(\seq{u_1, \dots, u_n})$ such that $\alpha \cdot\gamma \cdot \beta= \gamma$. The morphism $ \bar{\beta} : u_1 \otimes \dots \otimes u_n \to u_1 \otimes \dots \otimes u_n$ defined in Lemma \ref{lem:diggingStable} verifies $\gamma \cdot \beta=\gamma \circ \bar{\beta}$ which implies that $\alpha\cdot \gamma \cdot \beta= \gamma$ is equivalent to $\alpha\circ \gamma\circ \bar{\beta} = \gamma$ i.e.
	\[
	\sigma \rho \bar{\tau} = \rho \qquad \text{and} \qquad \alpha_{\rho\bar{\tau}(k)}\circ \gamma_{\bar{\tau}(k)} \circ \bar{\beta}_k = \gamma_k \quad \text{for all } 1\leq k \leq m.
	\]
	\begin{itemize}
		\item Assume that $\alpha$ is in $\bigcup \oc\kit1(v)$, then $\bar{\beta}$ is in $\bigcup \oc\kit1(u_1 \otimes \dots \otimes u_n)$ by closure under conjugation, which implies that $\beta$ is in $\bigcup \oc \oc\kit1(\seq{u_1, \dots, u_n})$ by Lemma \ref{lem:diggingStable}.
		\item Assume that $\beta$ is in $\bigcup (\oc \oc\kit1)^\orth(\seq{u_1, \dots, u_n})$, then $\bar{\beta}$ is in $\bigcup (\oc \kit1)^\orth(u_1 \otimes \dots \otimes  u_n)$ by Lemma \ref{lem:diggingStable} which implies that $\alpha$ is in $\bigcup (\oc\kit1)^\orth(v)$ by closure under conjugation.\qedhere
	\end{itemize}
\end{proof}

\begin{prop}
The pseudo-functor $\oc : \SProf \to \SProf$ extends to a pseudo-comonad.  
\end{prop}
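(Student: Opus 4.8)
The plan is to transport the entire pseudo-comonad structure of $\Sym$ on $\Prof$ along the forgetful pseudo-functor $U : \SProf \to \Prof$. Recall that $U$ is strict, faithful, and locally full and faithful, and that by construction $U$ intertwines $\oc$ with $\Sym$, in the sense that $U\oc = \Sym U$ on objects and $1$-cells: on objects $U\oc(\gpd1,\kit1) = \Sym\gpd1 = \Sym U(\gpd1,\kit1)$, and on $1$-cells the underlying profunctor of $\oc P$ is precisely $\Sym$ of the underlying profunctor of $P$, by Proposition~\ref{prop:bang_preserves_qprofs}. Since $\Sym$ is already a pseudo-comonad on $\Prof$, it suffices to check that each piece of structural data restricts to $\SProf$, and that the coherence axioms, being equations of $2$-cells, continue to hold there. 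This mirrors exactly the strategy used above for the $*$-autonomous structure.

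First I would observe that $\oc$ is a pseudo-functor on $\SProf$: its action on $1$-cells is well defined by Proposition~\ref{prop:bang_preserves_qprofs}, and its functoriality constraints --- the invertible $2$-cells comparing $\oc Q \circ \oc P$ with $\oc(Q \circ P)$, and $\oc$ of an identity with an identity --- are those of $\Sym$ in $\Prof$, hence $2$-cells between stabilized profunctors. Because $U$ is locally full and faithful, these $2$-cells already lie in $\SProf$. Next, the counit and comultiplication restrict to $\SProf$: by Lemma~\ref{lem:derelectionStable} each component $\der[\gpd1]$ is a stabilized profunctor $\kitstr{\oc \gpd1} \profto \kitstr{\gpd1}$, and by Corollary~\ref{lem:diggingStable} each component $\dig[\gpd1]$ is a stabilized profunctor $\kitstr{\oc \gpd1} \profto \kitstr{\oc \oc\gpd1}$. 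Their pseudo-naturality is witnessed by the same invertible $2$-cells as in $\Prof$, which again lift to $\SProf$ by local full faithfulness.

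Finally, a pseudo-comonad is completed by the counit and coassociativity modifications together with a finite list of coherence axioms relating them. The modifications are built from the $2$-cells just discussed and so lie in $\SProf$; each coherence axiom is an equation between pasting composites of these $2$-cells, and is sent by $U$ to the corresponding axiom in $\Prof$, which holds because $\Sym$ is a pseudo-comonad there. Since $U$ is locally faithful, each such equation already holds in $\SProf$. This yields the pseudo-comonad structure on $\oc$.

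I do not expect a genuine obstacle in this proposition itself: the substantive content was isolated in the preceding lemmas, which verify that $\oc P$, $\der[\gpd1]$, and $\dig[\gpd1]$ are stabilized. The only point to state carefully is the role of the forgetful pseudo-functor --- local full faithfulness of $U$ is precisely what guarantees that the remaining two-dimensional data and the coherence of the $\Prof$-pseudo-comonad descend uniquely to $\SProf$, while local faithfulness transports the coherence \emph{equations} back. Hence no residual computation beyond those lemmas is required.
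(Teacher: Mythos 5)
Your proposal is correct and follows essentially the same route as the paper: the paper explicitly states that the coherence axioms for the pseudo-comonad are inherited from the situation in $\Prof$ via the (strict, locally fully faithful) forgetful pseudo-functor, and that the only substantive checks are that $\Sym P$, $\der[\gpd1]$, and $\dig[\gpd1]$ are stabilized, which is the content of Proposition~\ref{prop:bang_preserves_qprofs}, Lemma~\ref{lem:derelectionStable}, and Corollary~\ref{lem:diggingStable}. Your explicit articulation of why local full faithfulness lets the two-dimensional data descend and local faithfulness transports the equations is a slightly more careful spelling-out of what the paper leaves implicit, but it is the same argument.
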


As usual for orthogonality models of linear logic, the construction
$\wn (\gpd1, \kit1) := (\Sym \gpd1, (\oc \kit1^\perp)^\perp )$ induces
a pseudo-monad structure on $\SProf$ that interprets the ``why not"
modality of linear logic. Note that, unlike in $\Prof$, the
connectives $\oc$ and $\wn$ are distinguished. 

The coKleisli bicategory $\SProf_\oc$ is denoted $\SEsp$, and its
morphisms are called \defn{stable species of structures}. Besides the
connection to stabilized profunctors,
this choice of terminology is justified by their extensional
characterisation as stable functors, which we provide in the second
part of this paper.

\subsection{Cartesian closed structure.}
In this section, we show that the cartesian closed structure of $\Esp = \Prof_{\Sym}$, first outlined in \cite{FioreCartesian2008}, extends to  $\SEsp$.

The coKleisli bicategory $\SProf_{\oc}$ has binary products inherited from the linear category $\SProf$, i.e. given by the $\with$ construction of Sec.~\ref{sec:biproducts}, and terminal object $(\Zero, \varnothing)$. From here the path to cartesian closure is relatively standard, though not immediate. The structure is derived from the monoidal closed structure in $\SProf$ via a fundamental property of $\oc$ often referred to as the \emph{Seely equivalence}:
\[
	\oc (\kitstr{\gpd1} \with \kitstr{\gpd2} ) \simeq \oc  \kitstr{\gpd1}\otimes\oc \kitstr{\gpd2}.
\]
This property is derived from an  adjoint equivalence 
\[
\begin{tikzcd}
\Sym (\gpd1 \with \gpd2) \arrow[bend left=13]{r}{S_{\gpd1, \gpd2}} \arrow[phantom]{r}[description]{\simeq} & \Sym \gpd1 \tensor \Sym \gpd2 
\arrow[bend left=13]{l}{T_{\gpd1, \gpd2}}
\end{tikzcd}
\]
 in the bicategory $\Prof$, which  extends to $\SProf$.

\begin{lemC}[\cite{FioreCartesian2008}]\label{lem:SeelyEquivSymSpecies}
	For categories $\gpd1$ and $\gpd2$, consider the functor $ \gpd1 \with \gpd2 \to \Sym \gpd1 \otimes \Sym \gpd2$ mapping $(1, a)$ to $(\seq{a}, \seq{})$ and $(2,b)$ to $(\seq{}, \seq{b})$, by the universal property of the completion $\Sym$ it induces a functor $\Sym (\gpd1 \with \gpd2) \to \Sym \gpd1 \otimes \Sym \gpd2$ that we denote by $S_{\gpd1, \gpd2}$. We use the same notation as in~\cite{FioreCartesian2008} and write $(w.1, w.2)$ for the image of $w \in \Sym(\gpd1 \with \gpd2)$ by $S_{\gpd1, \gpd2}$.
	Let $T_{\gpd1, \gpd2} : \Sym \gpd1 \otimes \Sym \gpd2 \to \Sym (\gpd1 \with \gpd2)$ be the functor defined by
	\[
	(u,v) \mapsto (\Sym \colimin_1 u) \otimes (\Sym \colimin_2 v).
	\]
	The functors $T_{\gpd1, \gpd2}$ and $S_{\gpd1, \gpd2}$ form an equivalence of categories $\Sym (\gpd1 \with \gpd2) \simeq \Sym \gpd1 \otimes \Sym \gpd2$. We also have $\Sym \Zero \simeq \One$.
\end{lemC}

\begin{lem}\label{lem:inclusionLiftTensorPar}
	For Boolean kits $\kitstr{\gpd1}=(\gpd1,\kit1)$ and $\kitstr{\gpd1}=(\gpd1,\kit1)$, the mapping $(\alpha,\beta) \mapsto \alpha \otimes \beta$ induces an inclusion 
	\[
	\kit1^\Sym(u) \times \kit2^\Sym (v) \subseteq (\kit1 \with \kit2)^\Sym(u \otimes v)
	\]
\end{lem}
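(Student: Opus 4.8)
The plan is to unfold the two $\Sym$-kit definitions and reduce the statement to a single observation about loop endomorphisms under block concatenation. Writing $u = \seq{a_1, \dots, a_n}$ and $v = \seq{b_1, \dots, b_m}$, the object $u \otimes v := (\Sym\colimin_1 u) \otimes (\Sym\colimin_2 v)$ of Lemma~\ref{lem:SeelyEquivSymSpecies} is the sequence $\seq{(1,a_1), \dots, (1,a_n), (2,b_1), \dots, (2,b_m)}$ in $\Sym(\gpd1 \with \gpd2)$, where $\colimin_1, \colimin_2$ are the coproduct inclusions. For $\alpha = (\sigma, (\alpha_i)_i) \in \Endo(u)$ and $\beta = (\tau, (\beta_j)_j) \in \Endo(v)$, the endomorphism $\alpha \otimes \beta := (\Sym\colimin_1\alpha) \otimes (\Sym\colimin_2\beta)$ is block-diagonal: its underlying permutation $\sigma \oplus \tau$ acts as $\sigma$ on $\ints{n}$ and as a shifted copy of $\tau$ on $\{n+1, \dots, n+m\}$, with component morphisms $\colimin_1\alpha_i$ and $\colimin_2\beta_j$. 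By definition of the coproduct kit $\kit1 \with \kit2$ (Section~\ref{sec:biproducts}), one has $(\kit1 \with \kit2)((1,a)) = \kit1(a)$ and $(\kit1 \with \kit2)((2,b)) = \kit2(b)$, matching the identifications $\Endo[\gpd1\with\gpd2]((1,a)) \cong \Endo[\gpd1](a)$ and $\Endo[\gpd1\with\gpd2]((2,b)) \cong \Endo[\gpd2](b)$ coming from the fully faithful inclusions.

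The key observation is that $\sigma \oplus \tau$ never mixes the two blocks, so the cycle through any index stays within its block: $o(\sigma\oplus\tau, i) = o(\sigma, i)$ for $i \in \ints{n}$ and $o(\sigma\oplus\tau, n+j) = o(\tau, j)$ for $j \in \ints{m}$. Since each inclusion is a functor, hence preserves composition, the loop composites factor through the inclusions,
\[
\bangloop{\alpha\otimes\beta}{i} = \colimin_1(\bangloop{\alpha}{i}) \quad (i \leq n), \qquad \bangloop{\alpha\otimes\beta}{n+j} = \colimin_2(\bangloop{\beta}{j}) \quad (j \leq m),
\]
as the composite along a block-$1$ cycle is just $\colimin_1$ applied to the loop $\bangloop{\alpha}{i}$ computed inside $u$, and symmetrically for block $2$.

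To finish, fix $H_1 \in \kit1^\Sym(u)$ and $H_2 \in \kit2^\Sym(v)$, and take any element $\alpha \otimes \beta$ of the image subgroup of $H_1 \times H_2$ under the (homomorphic) tensor mapping. For $i \leq n$ we have $\bangloop{\alpha}{i} \in \kit1(a_i)$ because $\alpha \in H_1 \in \kit1^\Sym(u)$, hence $\bangloop{\alpha\otimes\beta}{i} = \colimin_1(\bangloop{\alpha}{i}) \in (\kit1 \with \kit2)((1,a_i))$; symmetrically $\bangloop{\alpha\otimes\beta}{n+j} \in (\kit1\with\kit2)((2,b_j))$ for $j \leq m$. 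Thus every loop of $\alpha \otimes \beta$ lands in the appropriate component of $\kit1\with\kit2$, so by Definition~\ref{def:expBooleanKit} the image subgroup lies in $(\kit1\with\kit2)^\Sym(u\otimes v)$, which is exactly the claimed inclusion. I expect no real obstacle here: the entire content is the loop-preservation identity of the second paragraph, which reduces to the routine facts that block-diagonal permutations preserve cycle structure and that the coproduct inclusions preserve composition; the only care needed is tracking the index shift between the two blocks.
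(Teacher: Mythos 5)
Your proof is correct and rests on exactly the same observation as the paper's (one-line) proof, namely that $\bangloop{\alpha\otimes\beta}{(1,a_i)} = (1,\bangloop{\alpha}{a_i})$ and $\bangloop{\alpha\otimes\beta}{(2,b_j)} = (2,\bangloop{\beta}{b_j})$; you simply spell out the block-diagonal permutation and cycle-length bookkeeping that the paper leaves implicit. No issues.
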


\begin{proof}
	The inclusion follows from the following observation: for $a\in \gpd1$, $\bangloop{\alpha\otimes \beta}{(1,a)} = (1, \bangloop{\alpha}{a})$ and for $b\in \gpd2$, $\bangloop{\alpha \otimes \beta}{(2,b)} = (2, \bangloop{\beta}{b})$.
\end{proof}

\begin{lem}
For Boolean kits $\kitstr{\gpd1}=(\gpd1,\kit1)$ and $\kitstr{\gpd2}=(\gpd2,\kit2)$, the Seely profunctors $S_{\gpd1, \gpd2}$ and $T_{\gpd1, \gpd2}$ are stabilized profunctors, so that there is an adjoint equivalence $\oc (\kitstr{\gpd1} \with \kitstr{\gpd2} ) \simeq \oc  \kitstr{\gpd1}\otimes\oc \kitstr{\gpd2}$ in $\SProf$. We also have $\oc \kitstr{\Zero} \simeq \kitstr{\One}$.
\end{lem}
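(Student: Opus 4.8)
The plan is to reduce the claim to showing that the two functor-induced profunctors $S_{\gpd1,\gpd2}$ and $T_{\gpd1,\gpd2}$ satisfy the stabilization condition of Definition~\ref{def:sprofunctor}. Once that is done, the adjoint equivalence in $\Prof$ provided by Lemma~\ref{lem:SeelyEquivSymSpecies} lifts to $\SProf$ with no extra work: its unit and counit are isomorphisms in $\Prof$ between stabilized profunctors, and since the forgetful pseudo-functor $\SProf\to\Prof$ is locally fully faithful, these $2$-cells automatically live in $\SProf$. Because $S_{\gpd1,\gpd2}$ is a fully faithful equivalence of groupoids, for each $w\in\Sym(\gpd1\with\gpd2)$ it induces a group isomorphism $\Endo(w)\cong\Endo(S_{\gpd1,\gpd2}\,w)$. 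Unwinding Definition~\ref{def:sprofunctor} for the representable profunctor of a functor, I would show it suffices to verify that this isomorphism carries the Boolean kit $\oc(\kit1\with\kit2)$ bijectively onto $\oc\kit1\otimes\oc\kit2$, that is, that the creeds $\bigcup\oc(\kit1\with\kit2)(w)$ and $\bigcup(\oc\kit1\otimes\oc\kit2)(S_{\gpd1,\gpd2}\,w)$ correspond. An isomorphism of creeds delivers both implications of Definition~\ref{def:sprofunctor} simultaneously, for $S_{\gpd1,\gpd2}$ and (reading the same bijection backwards) for $T_{\gpd1,\gpd2}$, and at once yields the asserted equivalence of Boolean kits.

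The first step is the type-decomposition of endomorphisms. Since $\gpd1\with\gpd2$ has no morphisms between its two components, the permutation underlying any $\beta\in\Endo(w)$ preserves the partition of the positions of $w$ into $\gpd1$-entries and $\gpd2$-entries; hence $\beta$ splits as a pair which is precisely $S_{\gpd1,\gpd2}\,\beta=(\beta^{(1)},\beta^{(2)})\in\Endo(w.1)\times\Endo(w.2)$. Moreover the loop construction transports along this splitting: at a $\gpd1$-position $i$ one has $\bangloop{\beta}{i}=\bangloop{\beta^{(1)}}{i'}$ tagged into the first component, and symmetrically for $\gpd2$-positions. Combining Lemma~\ref{lem:inclusionLiftTensorPar} (for one direction) with this loop transport, which also yields the reverse implication by restricting the loop condition to each block, and using the downward closure of Boolean kits from Lemma~\ref{lem:kitDownClosed} to pass between an element and the cyclic subgroup it generates, I would obtain the \emph{pre-closure} matching
\[
\beta\in\textstyle\bigcup(\kit1\with\kit2)^\Sym(w)\iff \beta^{(1)}\in\bigcup\kit1^\Sym(w.1)\ \text{and}\ \beta^{(2)}\in\bigcup\kit2^\Sym(w.2).
\]

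It remains to push this matching through the double-orthogonality closures, and this is where I expect the main obstacle to lie: on the left one closes the single combined kit $(\kit1\with\kit2)^\Sym$, whereas on the right one first closes each factor to form $\oc\kit1\times\oc\kit2$ and only then closes the product. I would reconcile the two nesting orders using the saturation characterization of Lemma~\ref{lem:characterisationDoubleOrth}: for a Boolean kit the creed is exactly the set of endomorphisms satisfying the predicate $\Phi$ of Definition~\ref{def:saturatedKit} relative to the pre-closure creed. Since $\beta^n=\bigl((\beta^{(1)})^n,(\beta^{(2)})^n\bigr)$ and since both equality to the identity and membership in the pre-closure creed decompose over the type-partition, the predicate $\Phi_{\bigcup(\kit1\with\kit2)^\Sym(w)}$ at $w$ translates into the corresponding predicate for the product creed $\bigcup\oc\kit1(w.1)\times\bigcup\oc\kit2(w.2)$ at $(w.1,w.2)$. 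The one delicate point is to check that saturating the product of the two already-saturated factor creeds yields the same set as saturating the combined creed; this follows because the existential witness $m$ in $\Phi$ may be chosen componentwise and recombined. This gives the required creed matching.

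Finally, the group isomorphism $\Endo(w)\cong\Endo(S_{\gpd1,\gpd2}\,w)$ preserves intersections and the identity, so it also matches the orthogonal creeds; hence the second implication of Definition~\ref{def:sprofunctor} is automatic, and $S_{\gpd1,\gpd2}$ together with its quasi-inverse $T_{\gpd1,\gpd2}$ are stabilized, producing the adjoint equivalence $\oc(\kitstr{\gpd1}\with\kitstr{\gpd2})\simeq\oc\kitstr{\gpd1}\otimes\oc\kitstr{\gpd2}$ in $\SProf$. The degenerate case is immediate: $\Sym\Zero$ is the one-object groupoid on the empty sequence, whose only endomorphism is the identity, so by Definition~\ref{def:expBooleanKit} the kit $\oc\kitstr{\Zero}$ can only be the trivial kit $(\One,\Triv[\One])=\kitstr{\One}$.
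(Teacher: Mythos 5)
Your overall strategy matches the paper's: both proofs reduce the lemma to showing that, under the groupoid equivalence $w \mapsto (w.1,w.2)$, the Boolean kits $\oc(\kit1\with\kit2)(w)$ and $(\oc\kit1\otimes\oc\kit2)(w.1,w.2)$ coincide, starting from the loop decomposition of Lemma~\ref{lem:inclusionLiftTensorPar} and then reconciling the two different nestings of the $(-)^\dorth$ closure via the saturation predicate of Definition~\ref{def:saturatedKit}. The paper packages this as two separate inclusions (one in each direction, each followed by an application of $(-)^\dorth$) rather than a single creed bijection, but the content is the same. Your handling of the degenerate case $\oc\kitstr{\Zero}\simeq\kitstr{\One}$, and your observation that a kit isomorphism transported along a groupoid equivalence makes both $S_{\gpd1,\gpd2}$ and $T_{\gpd1,\gpd2}$ stabilized at once, are both fine.

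The one place where your argument is not quite right is exactly the point you flag as delicate. You propose to verify the predicate $\Phi$ relative to the pre-closure creed $\bigcup\kit1^\Sym(w.1)\times\bigcup\kit2^\Sym(w.2)$ by choosing the existential witness ``componentwise and recombining''. If the recombination is the product $m_1 m_2$ of the two componentwise witnesses, the membership conditions survive (each $\kit{}^\Sym$ is closed under powers), but the non-identity requirement $(\alpha,\beta)^{n m_1 m_2}\neq\id$ can fail: it may happen that $\alpha^{n m_1}\neq\id$ yet $\alpha^{n m_1 m_2}=\id$, and symmetrically for $\beta$, so the recombined exponent kills both components. The correct argument --- the one the paper gives --- is sequential rather than symmetric: assuming without loss of generality that $\alpha^{n}\neq\id$, first choose $m$ with $\alpha^{nm}\neq\id$ and $\alpha^{nm}\in\bigcup\kit1^\Sym(w.1)$; if $\beta^{nm}=\id$ one is done, and otherwise saturation of the second factor yields $l$ with $\beta^{nml}\neq\id$ and $\beta^{nml}\in\bigcup\kit2^\Sym(w.2)$, while $\alpha^{nml}=(\alpha^{nm})^{l}$ remains in $\bigcup\kit1^\Sym(w.1)$ by closure under powers, and the non-identity of $\beta^{nml}$ guarantees the combined witness is not the identity. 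With this repair your proof goes through.
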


\begin{proof}
	We show that for $(\alpha,\beta) \in (\Endo(u) \times \Endo(v))$, the mapping $(\alpha,\beta) \mapsto \alpha \otimes \beta$ induces an inclusion 
	\[
	\oc \kit1(u) \times \oc \kit2 (v) \hookrightarrow \oc (\kit1 \with \kit2)(u \otimes v).
	\]
	Assume that $(\alpha,\beta)$ is in $\oc \kit1(u) \times \oc \kit2 (v)$ and that there exists $n \in \N$ such that $(\alpha\otimes \beta)^n \neq \id$. It implies that $(\alpha,\beta)^n \neq (\id, \id)$, we suppose without loss of generality that $\alpha^n \neq \id$. Since $\alpha$ is in $\oc \kit1(u)$, there exists $m \in \N$ such that $\alpha^{nm} \neq \id$ and $\alpha^{nm} \in \kit1^\Sym(u)$. If $\beta^{nm} = \id$ then $\beta^{nm} \in \kit2^\Sym (v)$ so that $(\alpha,\beta)^{nm} \in \kit1^\Sym(u) \times \kit2^\Sym (v)$ which implies that $(\alpha\otimes \beta)^{nm} \in (\kit1 \with \kit2)^\Sym(u \otimes v)$ by Lemma \ref{lem:inclusionLiftTensorPar} and $\alpha^{nm} \neq \id$ entails that $(\alpha\otimes \beta)^{nm} \neq \id$.
	
	If $\beta^{nm} \neq \id$ then since $\beta^{nm} \in \oc \kit2 (v)$, there exists $l \in \N$ such that $\beta^{nml}\neq \id$ and $\beta^{nml} \in \kit2^\Sym (v)$. Since $\kit1^\Sym(u)$ is closed under powers, $\alpha^{nml}$ is in $\kit1^\Sym(u)$ as well. Hence, by Lemma \ref{lem:inclusionLiftTensorPar}, $(\alpha\otimes \beta)^{nml} \in (\kit1 \with \kit2)^\Sym(u \otimes v)$ and $\beta^{nml} \neq \id$ implies $ (\alpha\otimes \beta)^{nml} \neq \id$. Hence, $(\alpha\otimes \beta)$ is in $\oc (\kit1 \with \kit2)(u \otimes v)$ as desired. 
	
	We now obtain the desired inclusion.
	\[
	(\oc \kit1 \otimes \oc \kit2)(u,v) = (\oc \kit1(u) \times \oc \kit2 (v))^\dorth \hookrightarrow \oc (\kit1 \with \kit2)^\dorth(u \otimes v) = \oc (\kit1 \with \kit2)(u \otimes v).
	\]

	We show that for $\gamma \in \Endo(w)$, the mapping $\gamma \mapsto (\gamma.1, \gamma.2)$ induces an inclusion 
	\[
	\oc (\kit1 \with \kit2)(w) \hookrightarrow (\oc \kit1 \otimes \oc \kit2)(w.1,w.2).
	\]
	To do so, we show that $(\kit1 \with \kit2)^\Sym (w)\hookrightarrow \oc \kit1(w.1) \times \oc \kit2(w.2)$ and the desired inclusion will follow by applying $(-)^\dorth$ on both sides. Assume that $\gamma$ is in $(\kit1 \with \kit2)^\Sym (w)$, then since for elements $a$ in $w.1$ and $b$ in $w.2$
	\[
	\bangloop{\gamma}{(1,a)} = (1, \bangloop{\gamma.1}{a}) \quad \text{and} \quad \bangloop{\gamma}{(2,b)} = (2, \bangloop{\gamma.2}{b})
	\]
	we have $(\gamma.1,\gamma.2) \in  (\kit1^\Sym (w.1) \times  \kit2^\Sym (w.2)) \subseteq  (\oc \kit1(w.1) \times \oc \kit2 (w.2))$.
\end{proof}

For Boolean kits $(\gpd1,\kit1)$ and $(\gpd2,\kit2)$, the internal hom $\kitstr{\gpd1} \multimap \kitstr{\gpd2} = (\kitstr{\gpd1} \otimes \kitstr{\gpd2}^\orth)^\orth$ in $\SProf$  comes equipped with a linear evaluation morphism $\ev_{\gpd1, \gpd2} : \kitstr{\gpd1} \otimes (\kitstr{\gpd1} \multimap \kitstr{\gpd2}) \profto \kitstr{\gpd2}$. We define the function space in $\SProf_{\oc}$ as 
\[(\gpd1,\kit1) \Rightarrow (\gpd2,\kit2) = \oc(\gpd1,\kit1) \multimap(\gpd2,\kit2)\enspace \] and the cartesian evaluation $\Ev_{\gpd1, \gpd2} : (\Sym(\gpd1 \Rightarrow \gpd2) \with \gpd1) \profto \gpd2$ as the morphism 
\[
\ev_{\Sym\gpd1, \gpd2} \circ (\der[\gpd1 \Rightarrow \gpd2] \otimes \id) \circ S_{\gpd1 \Rightarrow \gpd2, \gpd1}. 
\]
As a composite of stabilized profunctors, $\Ev_{\gpd1, \gpd2}$ is in $\SProf_{\oc}((\kitstr{\gpd1} \Rightarrow \kitstr{\gpd2}) \with \kitstr{\gpd1}, \kitstr{\gpd2})$. For a stable species $P$ in $\SProf_{\oc}(\kitstr{\gpd1} \with \kitstr{\gpd2}, \kitstr{\gpd3})$, its \emph{currying} $\Lambda(P) \in \SProf_{\oc}(\kitstr{\gpd1},  \kitstr{\gpd2} \Rightarrow \kitstr{\gpd3})$ is given by $\lambda (P \circ T_{\gpd1, \gpd2})$, where 
\[
	\lambda : \SProf(\oc \kitstr{\gpd1} \otimes \oc \kitstr{\gpd2}, \kitstr{\gpd3}) \to \SProf(\oc \kitstr{\gpd1}, \oc \kitstr{\gpd2} \multimap \kitstr{\gpd3})
\]
is provided by the monoidal closed structure on $\SProf$.

\begin{thm}
	The bicategory $\SEsp$ is cartesian closed.
\end{thm}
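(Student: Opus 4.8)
The plan is to exhibit, for all Boolean kits $\kitstr{\gpd1}, \kitstr{\gpd2}, \kitstr{\gpd3}$, a pseudo-natural adjoint equivalence of hom-categories
\[
\SEsp(\kitstr{\gpd1} \with \kitstr{\gpd2}, \kitstr{\gpd3}) \;\simeq\; \SEsp(\kitstr{\gpd1}, \kitstr{\gpd2} \Rightarrow \kitstr{\gpd3}),
\]
witnessed by the evaluation $\Ev$ and currying $\Lambda$ defined above. The finite products of $\SEsp = \SProf_\oc$ are inherited from the biproducts $\with$ of $\SProf$, with terminal object $(\Zero,\varnothing)$, as already observed; so only the exponential adjunction remains. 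Following the method used for the monoidal and $*$-autonomous structure, I would build this adjunction as a composite of equivalences that are already available, and then transport all the coherence data along the forgetful pseudo-functor.

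First I would assemble the chain of equivalences. Unfolding the coKleisli hom-categories and applying the Seely equivalence $\oc(\kitstr{\gpd1} \with \kitstr{\gpd2}) \simeq \oc\kitstr{\gpd1} \otimes \oc\kitstr{\gpd2}$ together with the monoidal closed structure $\lambda$ of $\SProf$ yields
\[
\SEsp(\kitstr{\gpd1} \with \kitstr{\gpd2}, \kitstr{\gpd3}) = \SProf(\oc(\kitstr{\gpd1} \with \kitstr{\gpd2}), \kitstr{\gpd3}) \simeq \SProf(\oc\kitstr{\gpd1} \otimes \oc\kitstr{\gpd2}, \kitstr{\gpd3}) \simeq \SProf(\oc\kitstr{\gpd1}, \oc\kitstr{\gpd2} \multimap \kitstr{\gpd3}) = \SEsp(\kitstr{\gpd1}, \kitstr{\gpd2} \Rightarrow \kitstr{\gpd3}).
\]
Each step is an adjoint equivalence already at hand: precomposition with the stabilized profunctors $T_{\gpd1,\gpd2}$ and $S_{\gpd1,\gpd2}$ for the Seely step, and the closed structure $\lambda$ for the last. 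The composite sends a stable species $P$ to $\lambda(P \circ T_{\gpd1,\gpd2}) = \Lambda(P)$, with pseudo-inverse built from $\Ev$. In particular $\Lambda(P)$ is stabilized, being obtained from stabilized profunctors by $\lambda$ and precomposition with the stabilized $T_{\gpd1,\gpd2}$, and $\Ev$ is stabilized as a composite of stabilized profunctors.

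Second, I would discharge pseudo-naturality and the triangle coherences by reflection along the collapse $\SProf \to \Prof$. This pseudo-functor sends $\oc$ to $\Sym$, sends $\der$ and $\dig$ to their $\Prof$-counterparts, and sends $S_{\gpd1,\gpd2}, T_{\gpd1,\gpd2}$ to the Seely equivalences of Lemma~\ref{lem:SeelyEquivSymSpecies}; it is therefore a morphism of pseudo-comonads and induces a pseudo-functor $\SEsp \to \Esp$ that is again strict, faithful, and locally fully faithful. Under it the products $\with$, the evaluation $\Ev$, and the currying $\Lambda$ map exactly to the cartesian closed structure of $\Esp$ established by Fiore et al.~\cite{FioreCartesian2008}. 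Since that structure is a pseudo-natural adjoint equivalence, every coherence datum we need --- the unit and counit of the hom-equivalence, their invertible modifications, and the pseudo-naturality $2$-cells --- already exists in $\Esp$ between $1$-cells that come from stabilized profunctors, and so lifts uniquely to $\SEsp$ by local fullness; every equation between these $2$-cells then holds in $\SEsp$ because it holds in $\Esp$ and the functor is locally faithful.

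The main obstacle is thus confined to the object level: one must check that each structure $1$-cell, and each component of the coherence $2$-cells, is stabilized, so that it lies in the image of the locally fully faithful $\SEsp \to \Esp$. For the $1$-cells this reduces to the facts that $S_{\gpd1,\gpd2}$, $T_{\gpd1,\gpd2}$, $\der$, and $\ev$ are stabilized (the Seely lemma, Lemma~\ref{lem:derelectionStable}, and the closed structure) and that stabilized profunctors are closed under $\otimes$, $\multimap$, and composition. The only genuinely delicate point is that the \emph{components} of the unit and counit of the hom-equivalence involve the comonad laws and the Seely map simultaneously; verifying these are stabilized uses Proposition~\ref{prop:bang_preserves_qprofs} together with the fact that $\oc$ is a pseudo-comonad and $S_{\gpd1,\gpd2}$ a stabilized adjoint equivalence. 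Once this is in place, local faithfulness discharges the triangle identities and pseudo-naturality automatically, and $\SEsp$ is cartesian closed.
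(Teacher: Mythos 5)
Your proposal is correct and follows essentially the same route as the paper: the products come from the biproducts of $\SProf$, and the exponential adjunction is obtained by composing the Seely equivalence $\oc(\kitstr{\gpd1}\with\kitstr{\gpd2})\simeq\oc\kitstr{\gpd1}\otimes\oc\kitstr{\gpd2}$ with the monoidal closed structure of $\SProf$, yielding exactly the paper's $\Lambda$ and $\Ev$ and discharging coherence via the locally fully faithful forgetful pseudo-functor to $\Prof$/$\Esp$. The paper's proof is just a one-line appeal to the Seely adjoint equivalence after the same preparatory lemmas; your write-up makes the same chain of hom-category equivalences and the same coherence-transport argument explicit.
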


 \begin{proof}
 	The adjoint equivalence 
 	\begin{center}
 		\begin{tikzpicture}[line join=round,yscale=0.5]
 		\node (A) at (0,0) {$\SProf_{\oc}(\kitstr{\gpd1}, (\kitstr{\gpd2} \Rightarrow \kitstr{\gpd3}))$};
 		\node (B) at (4.5, 0) {$\SProf_{\oc}(\kitstr{\gpd1} \with \kitstr{\gpd2},\kitstr{\gpd3})$};
 		\draw [->] (A) to [bend left =30]  node [above] {$\Ev_{\kitstr{\gpd2}, \kitstr{\gpd3}} \circ ((-) \with \kitstr{\gpd2})$} (B);
 		\draw [->] (B) to [bend left =30]  node [below] {$\Lambda$} (A);
 		\node (C) at (2.25,0) {$\bot$};
 		\end{tikzpicture}
 	\end{center}	
 	is obtained directly from the Seely adjoint equivalence.
 \end{proof}

 \subsection{Differential structure.}
 To a combinatorial species $F : \PP \to \Set$, one can associate an exponential generating series of the form $x \mapsto \sum_{n\geq 0} f_n \frac{x^n}{n!}$. Here, $f_n$ is identified with the cardinality of the set $F(\ints{n})$, interpreted as the number of $F$-structures of size $n$. This formal power series has a derivative:
\[
	x \mapsto \sum_{n\geq 1} f_n n \frac{x^{n-1}}{n!} = \sum_{m\geq 0} f_{m+1}  \frac{x^{m}}{m!}
\]
and the corresponding species is the functor $\PP \to \Set : \ints{n} \mapsto F(\ints{n+1})$. 

This combinatorial notion extends to the wider universe of generalized
species \cite{fiore2005mathematical}. A generalized species $F : \Sym\gpd1 \profto \gpd2$ 
has a differential $\mathbf{D}F : \oc \gpd1 \otimes \gpd1 \profto
\gpd2$ with action
\[
	(b, (u, a)) \mapsto F(b, u \otimes \seq{a}), 
\]
which coincides with the above construction when $\gpd1 = \gpd2 =
\One$. 

The combinatorial notion of differentiation has a logical counterpart in the framework of \emph{differential linear logic} put forward by Ehrhard et al. \cite{ehrhard2003differential,EhrhardDiLL16}, which extends linear logic with operators for differentiation. The bicategory $\Prof$ provides a model for differential linear logic, in which logical derivatives are formally assigned a combinatorial interpretation \cite{fiore2005mathematical}. 

The differential operators of $\Prof$ extend to $\SProf$, which we
show is also a model for differential linear logic. Indeed, given a
model of linear logic with biproducts, the structure required to model
differential operators is relatively minimal
\cite{Fiore2007DifferentialSI}. The fact that $\SProf$ is a model for
differential linear logic follows from the forgetful functor to $\Prof$, which ensures that coherence axioms are satisfied, together with the following direct observation: 
\begin{lem}
	For a kit $\kitstr{\gpd1}=(\gpd1,\kit1)$, the profunctor  $\coder[\gpd1] :  \gpd1\profto \Sym \gpd1$ given by $(u, a) \mapsto \Sym\gpd1(u, \seq{a})$ is a stabilized profunctor $\kitstr{ \gpd1} \profto \kitstr{ \oc\gpd1}$.
\end{lem}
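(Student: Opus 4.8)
The plan is to reduce the whole statement to a computation inside $\gpd1$, using that the target object $\seq a$ is a one-element sequence. Since $\Sym\gpd1$ has no morphisms between sequences of different lengths, any $p \in \coder[\gpd1](u,a) = \Sym\gpd1(u,\seq a)$ forces $u = \seq{a'}$ for a single object $a' \in \gpd1$, and the permutation part of $p$ is the identity of $\symgroup 1$, so $p$ is carried by a single isomorphism $p_1 : a' \to a$ in $\gpd1$. Similarly an endomorphism $\beta \in \Endo[\Sym\gpd1](\seq{a'})$ has trivial permutation part and is given by a single $\beta_1 \in \Endo[\gpd1](a')$. Tracing the profunctor actions, $\alpha \act p \act \beta$ is carried by the composite $\alpha \circ p_1 \circ \beta_1$ in $\gpd1$, so the fixpoint hypothesis $\alpha \act p \act \beta = p$ becomes
\[
\alpha \circ p_1 \circ \beta_1 = p_1, \qquad\text{equivalently}\qquad \beta_1 = p_1^{-1}\circ\alpha^{-1}\circ p_1 \ \text{ and }\ \alpha = p_1\circ\beta_1^{-1}\circ p_1^{-1}.
\]
In other words $\beta_1$ and $\alpha^{-1}$ are conjugate through $p_1$.

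The second ingredient I would invoke is the identification already recorded in the proof of Lemma~\ref{lem:derelectionStable}: under the canonical group isomorphism $\Endo[\Sym\gpd1](\seq{a'}) \cong \Endo[\gpd1](a')$, the Boolean kit $\oc\kit1(\seq{a'})$ corresponds exactly to $\kit1(a')$. Since orthogonality is defined by trivial intersection, it is preserved by group isomorphisms, so the same identification matches $(\oc\kit1)^\orth(\seq{a'})$ with $\kit1^\orth(a')$. Consequently $\beta \in \bigcup\oc\kit1(\seq{a'})$ if and only if $\beta_1 \in \bigcup\kit1(a')$, and $\beta \in \bigcup(\oc\kit1)^\orth(\seq{a'})$ if and only if $\beta_1 \in \bigcup\kit1^\orth(a')$.

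With these two reductions in place, the two implications of Definition~\ref{def:sprofunctor} follow by the same kind of conjugation argument used for the identity profunctor and for dereliction. For the first, from $\alpha \in G \in \kit1(a)$ one gets $\alpha^{-1} \in G$ (subgroup closure), and closure of $\kit1$ under conjugation (Definition~\ref{def:kit}) places $p_1^{-1}\circ G\circ p_1$ in $\kit1(a')$; since $\beta_1 = p_1^{-1}\circ\alpha^{-1}\circ p_1$ lies in this subgroup, we obtain $\beta_1 \in \bigcup\kit1(a')$, hence $\beta \in \bigcup\oc\kit1(\seq{a'})$. For the second, from $\beta_1 \in K \in \kit1^\orth(a')$ one gets $\beta_1^{-1} \in K$, and closure of $\kit1^\orth$ under conjugation places $p_1\circ K\circ p_1^{-1}$ in $\kit1^\orth(a)$; since $\alpha = p_1\circ\beta_1^{-1}\circ p_1^{-1}$ lies in it, we conclude $\alpha \in \bigcup\kit1^\orth(a)$.

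I do not expect a genuine obstacle here: there is no double-orthogonality saturation argument to run, because the length-one restriction collapses $\oc\kit1(\seq{a'})$ back to $\kit1(a')$ on the nose via Lemma~\ref{lem:derelectionStable}. The only points requiring care are bookkeeping ones, namely checking that the permutation components are forced to be trivial, that the profunctor actions compose to $\alpha\circ p_1\circ\beta_1$ in the correct order, and that the conjugation directions (by $p_1$ versus $p_1^{-1}$) are tracked consistently across the two implications. In spirit the argument is the mirror image of the dereliction case, reflecting that $\coder[\gpd1]$ is the codereliction dual to $\der[\gpd1]$.
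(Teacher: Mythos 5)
Your argument is correct and is exactly the ``direct observation'' the paper has in mind: the paper states this lemma without proof, and its proof of the dereliction case (Lemma~\ref{lem:derelectionStable}) rests on the same identification $\oc\kit1(\seq{a'})\cong\kit1(a')$ for length-one sequences that you use, after which the length-one reduction and the conjugation-closure of $\kit1$ and $\kit1^\orth$ give both implications. Nothing is missing; your bookkeeping of the trivial permutation parts, the composite $\alpha\circ p_1\circ\beta_1$, and the two conjugation directions is accurate.
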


We finally note that, although species $\PP \to \Set$ can be differentiated, 
they do not support an adequate notion of \emph{integration} as needed for instance in the resolution of differential equations \cite{LerouxViennot}. 
It is known however that ``$\mathbf{L}$-species'', which
correspond to $\N$-indexed families of sets without any symmetric group actions, enjoy uniqueness and
existence theorems for solutions of differential equations
\cite{SpeciesBLR}. As $\mathbf{L}$-species correspond to our stable
species $\oc \kitstr{\One} \profto \kitstr{\One}$, the bicategory
$\SEsp$ appears as a candidate framework for the study of formal
differential equations, providing a bridge to similar investigations on the logical side \cite{DBLP:conf/lics/Kerjean18,DBLP:conf/fossacs/KerjeanL19}.

\section*{\Large Extensional Theory: Linear and Stable Functors}

In the rest of the paper, we study the extensional aspects of
stabilized profunctors and stable species. We will see, in
Sections~\ref{sec:Stability} and \ref{sec:Linearity}, that both
stabilized profunctors and stable species can be characterized
extensionally as functors between \emph{subcategories} of presheaves,
which one can view as generalized Scott domains. We give a 
biequivalence between the bicategory of stable species and a
$2$-category of \emph{stable} functors, and as a corollary we
obtain that the latter is cartesian closed.

\section{Categories of stabilized presheaves}
\label{sec:stable-presheaves}
First we describe and study these subcategories of presheaves, by considering presheaves that arise as global elements $1 \profto (\gpd1, \kit1)$ in $\SProf$, which we call \emph{stabilized presheaves}.  
m
For a kit $(\gpd1, \kit1)$, a presheaf $X : \gpd1^\op \to \Set$ defines a stabilized profunctor $\One \profto (\gpd1, \kit1)$ if and only if, for all $a \in \gpd1$, $x \in X(a)$, and $\alpha : a \to a$, 
\[
x \act \alpha = x \implies \alpha \in \bigcup \kit1(a). 
\]
This property is more succinctly expressed using the group-theoretic notion of stabilizer: for $a \in \gpd1$ and $x \in X(a)$, we denote by $\Stab(x)$ the subgroup of $\Endo(a)$ consisting of those endomorphisms $\alpha$ such that $x\act\alpha = x$. We now define ($\kit1$-)stabilized presheaves as those presheaves whose elements all have stabilizer in $\kit1$.  

\begin{defi}
\label{def:StSh}
	For a kit $(\gpd1,\kit1)$ we let $\StPSh(\gpd1,\kit1)$ be the full subcategory
	of $\PSh(\gpd1)$ consisting of presheaves $X$ such that for every $a \in \gpd1$ and $x \in X(a)$, 
	$\Stab(x)\in \kit1(a)$; we call these \defn{stabilized presheaves}. 
\end{defi}

When $\kitstr{\gpd1} = (\gpd1, \kit1)$ is a Boolean kit, there is an  equivalence of categories 
$
\SProf(\One, \kitstr{\gpd1}) \simeq \StPSh(\kitstr{\gpd1})
$. In this section, we focus on properties of the subcategory $\StPSh(\kitstr {\gpd1})$ for an arbitrary kit $\kitstr{\gpd1}$. First we observe that for a groupoid $\gpd1$, the operation sending a kit $\kit1$ to the category $\StPSh(\gpd1, \kit1)$ has a right inverse. 
\begin{lem}
For a groupoid $\gpd1$ and a full subcategory $\asubcat$ of $\PSh(\gpd1),$
the family $\Stab(\asubcat)$ defined by 
\[ 
\Stab(\asubcat)(a) = \{ G  \mid \exists X \in \asubcat, \exists x \in X(a), G = \Stab(x) \} 
\] 
is a kit on $\gpd1$. Additionally,  $\Stab(\StPSh(\gpd1, \kit1)) = (\gpd1, \kit1)$ and $\asubcat \subseteq \StPSh(\gpd1, \Stab(\asubcat))$. 
\end{lem}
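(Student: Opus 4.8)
The statement bundles three claims: that $\Stab(\asubcat)$ is a kit, that $\Stab(\StPSh(\gpd1,\kit1)) = (\gpd1,\kit1)$, and that $\asubcat \subseteq \StPSh(\gpd1, \Stab(\asubcat))$. The last of these is immediate, so I would dispatch it first: for any $X \in \asubcat$ and any $x \in X(a)$, the subgroup $\Stab(x)$ lies in $\Stab(\asubcat)(a)$ by the very definition of $\Stab(\asubcat)$, hence $X$ satisfies the defining condition of $\StPSh(\gpd1,\Stab(\asubcat))$ from Definition~\ref{def:StSh}. The engine behind the remaining two claims is a single computation describing how stabilizers transform under the presheaf action. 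For $X \in \PSh(\gpd1)$, a morphism $\alpha : a' \to a$, and $x \in X(a)$, write $x \act \alpha := X(\alpha)(x) \in X(a')$ (recall $X$ is contravariant, so $X(\alpha) : X(a) \to X(a')$). The plan is to establish the identity
\[
\Stab(x \act \alpha) = \inv\alpha \icomp \Stab(x) \icomp \alpha .
\]
This is a direct verification using contravariance of $X$ together with the fact that, since $\gpd1$ is a groupoid, each $X(\alpha)$ is a bijection: for $g \in \Endo(a)$ one computes $(x \act \alpha) \act (\inv\alpha \icomp g \icomp \alpha) = x \act \alpha$ precisely when $x \act g = x$, and injectivity of $X(\alpha)$ supplies the converse direction.

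With this identity the first claim is immediate. Closure under conjugation (Definition~\ref{def:kit}) requires that, for $\alpha : a' \to a$ and $G \in \Stab(\asubcat)(a)$, the conjugate $\inv\alpha \icomp G \icomp \alpha$ lie in $\Stab(\asubcat)(a')$. Writing $G = \Stab(x)$ with $x \in X(a)$ and $X \in \asubcat$, the displayed identity gives $\inv\alpha \icomp G \icomp \alpha = \Stab(x \act \alpha)$, and $x \act \alpha \in X(a')$ witnesses that this subgroup belongs to $\Stab(\asubcat)(a')$.

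For the equality $\Stab(\StPSh(\gpd1,\kit1)) = (\gpd1,\kit1)$, the inclusion $\subseteq$ is a direct unfolding: any stabilizer realized by an element of a presheaf in $\StPSh(\gpd1,\kit1)$ already lies in $\kit1$ by Definition~\ref{def:StSh}. The reverse inclusion is the crux, and requires, for each $a \in \gpd1$ and each $G \in \kit1(a)$, the construction of a stabilized presheaf carrying an element whose stabilizer is exactly $G$. I would take the quotient $X := \yon(a)/G$ of the representable presheaf by the left action of $G \subgroup \Endo(a)$ by post-composition, so that $X(b) = \gpd1(b,a)/G$ with induced right action $[f] \act \delta = [f \icomp \delta]$. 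The class $e := [\id[a]] \in X(a)$ then satisfies $e \act \gamma = [\gamma]$, which equals $e$ exactly when $\gamma \in G$; hence $\Stab(e) = G$.

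The \textbf{main obstacle} is to check that this $X$ is itself a stabilized presheaf, i.e.\ lies in $\StPSh(\gpd1,\kit1)$: one must verify that \emph{every} element of $X$ has stabilizer in $\kit1$. A short computation identifies the stabilizer of a class $[f] \in X(b)$, for $f : b \to a$, with the conjugate $\inv f \icomp G \icomp f$. This is exactly where the hypothesis that $\kit1$ is a kit is indispensable: closure under conjugation yields $\inv f \icomp G \icomp f \in \kit1(b)$, so all stabilizers occurring in $X$ lie in $\kit1$ and therefore $X \in \StPSh(\gpd1,\kit1)$. Combining the two inclusions gives $\Stab(\StPSh(\gpd1,\kit1))(a) = \kit1(a)$ for every $a$, i.e.\ the desired equality, completing the proof.
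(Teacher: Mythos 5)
Your proof is correct and follows exactly the route the paper intends: the paper states this lemma without proof, but the two ingredients you use — the conjugation identity $\Stab(x\act\alpha)=\inv\alpha\icomp\Stab(x)\icomp\alpha$ and the quotiented representables $\extyon{a}{G}$ with $\Stab([f])=\inv f\icomp G\icomp f$ — are precisely the constructions the paper introduces in the surrounding text (cf.\ the proof sketch that $\extyon{a}{G}\in\StPSh(\gpd1,\kit1)$). Nothing is missing.
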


\subsection{Stabilized quotients of representables.}

For an object $a$ of a groupoid $\gpd1$, consider the representable presheaf $\yon(a) : \gpd1^\op \to \Set$. For $a' \in \gpd1$, every element $\gamma \in \yon(a)(a') = \gpd1(a', a)$ has trivial stabilizer, since in a groupoid $\gamma \comp \alpha = \gamma$ implies $\alpha = \id$. To generate stabilized presheaves, we consider representables \emph{quotiented} by kit subgroups. 
\begin{defi}
For an object $a$ of a groupoid $\gpd1$ and $G \subgroup \Endo(a)$, let
$\extyon a G \in\PSh(\gpd1)$ be the quotient of $\yon(a)$ under $G$; that is,
the colimit of the composite $G\to\gpd1\rightembedding\PSh(\gpd1)$.  
\end{defi}

In elementary terms, the presheaf $\extyon a G$ maps an object $a'$ to the
	quotient of $\yon(a)(a') = \gpd1(a', a)$ under the equivalence relation
	$\qeqrel[G]$ given by $\gamma' \qeqrel[G] \gamma$ if and only if
	$\gamma' \icomp \inv\gamma \in G$. 
For example, $\extyon a{\Triv(a)} \iso \yon(a)$, and $\extyon a{\Endo(a)}(a')$ is a singleton when $a\cong a'$, and $\varnothing$ otherwise. 

\begin{lem}
For $(\gpd1, \kit1)$ a kit, $a \in \gpd1$ and $G \in \kit1(a)$, $\extyon a G \in \StPSh(\gpd1, \kit1)$. 
\end{lem}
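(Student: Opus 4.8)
The plan is to verify the defining condition of $\StPSh(\gpd1, \kit1)$ directly: for every object $a'$ of $\gpd1$ and every element of $\extyon a G(a')$, I will compute the stabilizer and show it is a conjugate of $G$, so that membership in $\kit1(a')$ follows from closure under conjugation. By the elementary description of $\extyon a G$, an element of $\extyon a G(a')$ is a $\qeqrel[G]$-equivalence class $[\gamma]$ of some $\gamma \in \gpd1(a', a)$, and every element arises this way; it therefore suffices to treat an arbitrary such $[\gamma]$.

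First I would identify the presheaf action of an endomorphism $\alpha' \in \Endo(a')$. Since $\extyon a G$ is a quotient of the representable $\yon(a)$, whose action is precomposition, this action is $[\gamma] \act \alpha' = [\gamma \icomp \alpha']$. Unfolding the definition of the equivalence relation, $\alpha'$ stabilizes $[\gamma]$ precisely when $\gamma \icomp \alpha' \qeqrel[G] \gamma$, that is, when $(\gamma \icomp \alpha') \icomp \inv\gamma = \gamma \icomp \alpha' \icomp \inv\gamma \in G$. Solving this membership for $\alpha'$ gives $\alpha' = \inv\gamma \icomp g \icomp \gamma$ with $g \in G$.

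From this I would conclude that $\Stab([\gamma]) = \inv\gamma \icomp G \icomp \gamma$, the conjugate subgroup of $G$ along the morphism $\gamma : a' \to a$. Since $G \in \kit1(a)$ and $\kit1$ is closed under conjugation by Definition~\ref{def:kit} (applied with the morphism $\gamma : a' \to a$), the subgroup $\inv\gamma \icomp G \icomp \gamma$ lies in $\kit1(a')$. As $a'$ and $[\gamma]$ were arbitrary, every element of $\extyon a G$ has stabilizer in $\kit1$, which is exactly the condition for $\extyon a G \in \StPSh(\gpd1, \kit1)$.

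There is no substantial obstacle: the argument is a one-line stabilizer computation followed by an appeal to closure under conjugation. The only point requiring care is bookkeeping the variance, that is, ensuring that the presheaf action is precomposition and that the resulting conjugation $\alpha' \mapsto \gamma \icomp \alpha' \icomp \inv\gamma$ matches the direction used in Definition~\ref{def:kit}, where the morphism goes $a' \to a$ and the conjugate lands in $\kit1(a')$. A minor subtlety worth noting is that distinct representatives of $[\gamma]$ produce conjugate stabilizers, but all of these lie in $\kit1(a')$ in any case, so the conclusion is independent of the chosen representative.
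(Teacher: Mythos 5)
Your proof is correct and follows exactly the same route as the paper's (sketched) argument: compute that the stabilizer of a class $[\gamma] \in \extyon a G(a')$ is the conjugate subgroup $\inv\gamma \icomp G \icomp \gamma$ and invoke closure of kits under conjugation. Your version just fills in the details the paper leaves implicit (the precomposition action, the unfolding of $\qeqrel[G]$, and independence of the representative).
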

\begin{proof}[Proof sketch]
The stabilizer of an element $[\alpha]_{\qeqrel[G]} \in \extyon a G (a')$ is the conjugate subgroup $\alpha^{-1} G \alpha$, which is in $\kit1(a')$ since kits are closed under conjugation.  
\end{proof}

\subsection{Stabilized presheaves as coproduct completions.}

We show that in fact $\StPSh(\gpd1,\kit1)$ is equivalent to the free coproduct completion of the full subcategory of $\PSh(\gpd1)$ spanned by quotients of representables of the form $\extyon a G$, for $a \in \gpd1$ and $G \in \kit1(a)$. 
The first step in showing this is to observe that $\StPSh(\gpd1, \kit1)$ has all coproducts, calculated as in $\PSh(\gpd1)$. This holds because for presheaves $X$ and $Y$, the stabilizer of an element of $X + Y$ is equal to the stabilizer of the corresponding element in $X$ or $Y$.

Next, we show a representation theorem stating that every object of $\StPSh(\gpd1,\kit1)$ can be obtained as a sum of quotients of representables of the form $\extyon a G$, with $a \in \gpd1$ and $G \in \kit1(a)$.

\begin{lem}
\label{lem:StShRepresentation}
	For a kit $(\gpd1,\kit1)$, using the axiom of choice, every presheaf 
	$X \in \StPSh(\gpd1,\kit1)$ has a representation 
 \[ X \cong \coprod_{i\in I}\extyon{a_i}{G_i} \] where each $a_i \in \gpd1$ and $G_i \in \kit1(a_i)$.  
\end{lem}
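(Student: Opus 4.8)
The plan is to realize $X$ as the coproduct of its orbits under the groupoid action, and then to identify each orbit with a single quotient of a representable. First I would partition the total collection of elements $\coprod_{a\in\gpd1} X(a)$ by the relation declaring $(a,x)$ and $(a',x')$ equivalent whenever there is some $\gamma\in\gpd1(a',a)$ with $x\act\gamma = x'$. Since $\gpd1$ is a groupoid this is an equivalence relation: reflexivity uses identities, symmetry uses $\inv\gamma$, and transitivity uses $\icomp$. Each class $i$ determines a subpresheaf $X_i\subseteq X$ with $X_i(a')$ the set of those $x'\in X(a')$ lying in class $i$; the presheaf action restricts to each $X_i$ because the relation is stable under the presheaf action. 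Consequently $X\cong\coprod_{i\in I}X_i$, this coproduct being computed as in $\PSh(\gpd1)$ and hence also in $\StPSh(\gpd1,\kit1)$, which was already observed to be closed under coproducts. The axiom of choice enters exactly here, to select the index set $I$ together with a representative $x_i\in X(a_i)$ in each orbit.

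Next I would put $G_i\eqdef\Stab(x_i)$, which lies in $\kit1(a_i)$ precisely because $X\in\StPSh(\gpd1,\kit1)$, and define a candidate isomorphism $\phi_i:\extyon{a_i}{G_i}\to X_i$ by
\[
\phi_i\bigl([\gamma]_{\qeqrel[G_i]}\bigr)\eqdef x_i\act\gamma\qquad(\gamma\in\gpd1(a',a_i)).
\]
This is well defined on $\qeqrel[G_i]$-classes: if $\gamma'\qeqrel[G_i]\gamma$ then $g\eqdef\gamma'\icomp\inv\gamma\in G_i=\Stab(x_i)$, so $\gamma'=g\icomp\gamma$ and $x_i\act\gamma'=(x_i\act g)\act\gamma=x_i\act\gamma$. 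Naturality of $\phi_i$ in $a'$ is immediate from $x_i\act(\gamma\icomp\delta)=(x_i\act\gamma)\act\delta$, which matches the presheaf structure of $\extyon{a_i}{G_i}$ given by precomposition.

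Finally I would verify that each $\phi_i$ is a bijection at every $a'$. Surjectivity is direct: any $x'\in X_i(a')$ lies in the orbit of $x_i$, so there is $\gamma\in\gpd1(a',a_i)$ with $x_i\act\gamma=x'=\phi_i([\gamma])$. For injectivity, if $x_i\act\gamma=x_i\act\gamma'$ then $g\eqdef\gamma'\icomp\inv\gamma\in\Endo(a_i)$ satisfies $x_i\act g=(x_i\act\gamma')\act\inv\gamma=(x_i\act\gamma)\act\inv\gamma=x_i\act\id[a_i]=x_i$, so $g\in\Stab(x_i)=G_i$ and $\gamma'\qeqrel[G_i]\gamma$. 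Hence each $\phi_i$ is an isomorphism in $\PSh(\gpd1)$, and since $\StPSh(\gpd1,\kit1)$ is full it is an isomorphism there; assembling the $\phi_i$ over $i\in I$ gives $X\cong\coprod_{i\in I}\extyon{a_i}{G_i}$ with each $G_i\in\kit1(a_i)$, as required.

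The hard part is not any individual step but keeping the contravariant bookkeeping of the groupoid action consistent: both the well-definedness and the injectivity of $\phi_i$ hinge on the single identity $G_i=\Stab(x_i)$, so the crux is to fix the action conventions so that the membership $\gamma'\icomp\inv\gamma\in\Stab(x_i)$ is exactly the condition $x_i\act\gamma=x_i\act\gamma'$ defining the relation $\qeqrel[G_i]$ that presents $\extyon{a_i}{G_i}$.
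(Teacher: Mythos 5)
Your proof is correct and follows essentially the same route as the paper's: decompose $X$ into orbits (the paper first fixes one object per connected component and takes $\Endo$-orbits there, which amounts to the same partition), choose representatives $x_i$ by the axiom of choice, and identify each orbit with $\extyon{a_i}{\Stab(x_i)}$ via $[\gamma]\mapsto x_i\act\gamma$, checking well-definedness, naturality, injectivity and surjectivity exactly as the paper does. The only difference is organizational — you split $X$ into subpresheaves $X_i$ and give orbitwise isomorphisms rather than one global natural isomorphism — which changes nothing of substance.
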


\begin{proof}
	Using the axiom of choice, we choose a representative $a_c \in \gpd1$ for each connected component $c\in \pi_0 (\gpd1)$. For each $x \in X(a)$, we denote by $o(x) :=\{ x \cdot \alpha \mid \alpha \in \gpd1(a,a) \}$, the orbit of $x$. For $a \in \gpd1$, we write $\Orbits(X(a)):=\{ o(x) \mid x \in X(a)\}$ for the set of orbits of $X(a)$. Using the axiom of choice again, we chose a representative $x_i$ for each orbit $i \in \Orbits(X(a))$.
	We define the presheaf $Y$ as:
	\[
	\coprod_{c \in \pi_0(\gpd1)} \:\coprod_{i \in \Orbits(X(a_c))} \extyon{a_c}{\Stab(x_i)}
	\]
	Let $f : Y \Rightarrow X$ be the natural transformation whose components $f_a$ are given by:
	\[
	(c, i, [\alpha: a \to a_c]) \quad \mapsto \quad x_i \cdot \alpha
	\]
	We first start by showing that $f_a$ is well-defined: assume that $[\alpha]=[\beta]$ i.e. $\alpha \beta^{-1}$ is in $\Stab(x_i)$, then $x_i \cdot (\alpha \beta^{-1}) = x_i$ which implies that $x_i \cdot \alpha = x_i \cdot \beta$ as desired.
	
	For surjectivity, let $x$ be in $X(a)$ and let $c$ be the connected component containing $a$ so there exists a morphism $\gamma : a_c \to a$. Let $x_i$ be the representative of the orbit $o (x\cdot h) \in \Orbits(X(a_c))$ so there exists $\beta : a_c \to a_c$ such that $x \cdot \gamma = x_i \cdot \beta$. The element $(c, o(x \cdot \gamma), [\beta \gamma^{-1}])$ in $Y(a)$ is then a preimage of $x$ as $x_i \cdot \beta \gamma^{-1} =x$.
	
	For injectivity, let $(c, i, [\alpha: a \to a_c])$ and $(d, j, [\beta: a \to a_d])$ be two elements of $Y(a)$ such that $x_i \cdot \alpha= x_j \cdot \beta$. Since $\beta \alpha^{-1} \in \gpd1(a_c, a_d)$, $a_c$ and $a_d$ are in the same connected component which implies that $c=d$ and therefore $a_c = a_d$ since there is one representative chosen for each component. Hence, we have $x_i = x_j \cdot (\beta \alpha^{-1}) \in X(a_c)$ which implies that $x_i$ and $x_j$ are in the same orbit so $i=j$ and $x_i =x_j$ since there is one representative chosen for each orbit. We now have $x_i = x_j \cdot (\beta \alpha^{-1})$ which implies that $[\alpha]=[\beta]$.
	
	Naturality of $f$ is immediate so we conclude that it is a natural isomorphism.
\end{proof}

\begin{cor}\label{cor:RepresentationQuantitativePresheaves}
	For a kit $(\gpd1, \kit1)$, using the axiom of choice, every presheaf in
	$\StPSh(\gpd1, \kit1)$ has a representation as a filtered colimit of finitely
	presentable presheaves of the form $\coprod_{i\in I}\extyon{a_i}{G_i}$ with
	the set $I$ finite and each group $G_i \in \kit1(a_i)$ is finitely generated for $i\in I$.
\end{cor}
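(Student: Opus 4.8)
The plan is to derive the corollary from the representation in Lemma~\ref{lem:StShRepresentation} by carrying out two independent finitary approximations on it: one that truncates the (possibly infinite) index set of the coproduct, and one that truncates each (possibly infinitely generated) kit subgroup appearing as a coefficient.

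First I would fix, by Lemma~\ref{lem:StShRepresentation}, an isomorphism $X \cong \coprod_{i \in I}\extyon{a_i}{G_i}$ with $a_i \in \gpd1$ and $G_i \in \kit1(a_i)$. The two key observations are then as follows. First, any coproduct is the filtered colimit of its finite subcoproducts: indexing over the directed poset of finite subsets $J \subseteq I$, one has $\coprod_{i \in I}\extyon{a_i}{G_i} \cong \mathrm{colim}_{J}\coprod_{i \in J}\extyon{a_i}{G_i}$. Second, for a fixed object $a$ and subgroup $G \subgroup \Endo(a)$, the presheaf $\extyon a G$ is the filtered colimit of the presheaves $\extyon{a}{G'}$ as $G'$ ranges over the finitely generated subgroups of $G$. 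This holds because, evaluated at any $a'$, the equivalence relation $\qeqrel[G]$ on $\gpd1(a',a)$ is the directed union of the relations $\qeqrel[G']$, since $G = \bigcup_{G'} G'$ (each element lies in the cyclic subgroup it generates) and these subgroups are directed; quotient sets commute with such directed unions, and colimits in $\PSh(\gpd1)$ are computed pointwise. Moreover each $G'$ lies in $\kit1(a)$ by closure under subgroups (Lemma~\ref{lem:kitDownClosed}), so every $\extyon{a}{G'}$ is again a stabilized presheaf.

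I would then combine these into a single filtered diagram indexed by the directed poset $D$ of pairs $(J, (G'_j)_{j \in J})$ with $J \subseteq I$ finite and each $G'_j \subgroup G_j$ finitely generated, ordered by inclusion of index sets together with inclusion of subgroups; $D$ is directed since a finite union of finite sets is finite and the subgroup generated by two finitely generated subgroups is again finitely generated. The associated functor sends $(J, (G'_j))$ to $\coprod_{j \in J}\extyon{a_j}{G'_j}$, with transition maps given by coproduct coprojections and the canonical quotient maps $\extyon{a_j}{G'_j} \to \extyon{a_j}{H'_j}$ induced by $G'_j \subgroup H'_j$. Each such object is a finite coproduct of quotients of representables by finitely generated groups; since representables are finitely presentable in $\PSh(\gpd1)$, each $\extyon{a}{G'}$ is a finite colimit of finitely presentable presheaves and hence finitely presentable, and finite coproducts of finitely presentable presheaves are finitely presentable. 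By the interchange of colimits, the colimit of this diagram is $\coprod_{i \in I}\mathrm{colim}_{G'}\extyon{a_i}{G'} \cong \coprod_{i \in I}\extyon{a_i}{G_i} \cong X$, yielding the required representation.

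The main obstacle is the verification underlying the second observation: that forming $\extyon a G$ commutes with the passage to the directed union of finitely generated subgroups, which reduces to checking that a quotient by a directed union of equivalence relations is the filtered colimit of the individual quotients and that this interacts correctly with the pointwise computation of colimits in $\PSh(\gpd1)$. Establishing finite presentability of $\extyon{a}{G'}$ for finitely generated $G'$ — by exhibiting it as a finite reflexive coequalizer of representables determined by a finite generating set of $G'$ — is the other point that requires care.
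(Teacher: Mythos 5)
Your argument is correct and is exactly the argument the paper leaves implicit: the corollary is stated without proof as an immediate consequence of Lemma~\ref{lem:StShRepresentation}, and the two finitary approximations you describe (finite sub-coproducts, and finitely generated subgroups of each $G_i$) are the expected way to fill it in, with the interchange of colimits and the finite presentability of $\extyon{a}{G'}$ for finitely generated $G'$ handled as you indicate. One point worth making explicit: your appeal to Lemma~\ref{lem:kitDownClosed} to place the finitely generated subgroups $G' \subgroup G_i$ back in $\kit1(a_i)$ requires the kit to be Boolean, whereas the corollary is stated for an arbitrary kit; this restriction is genuinely needed, since a kit whose only subgroup at some object is a non-finitely-generated group (e.g.\ the one-object groupoid $\mathbb{Q}$ with $\kit1(a)=\{\mathbb{Q}\}$) admits the stabilized presheaf $\extyon{a}{\mathbb{Q}}$ but no nonempty finitely presentable presheaves of the required form. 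So the gap is in the paper's statement of the hypothesis rather than in your proof, which correctly identifies closure under subgroups as the ingredient that makes the approximation land back inside $\StPSh(\gpd1,\kit1)$.
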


From this we derive the following: 
\begin{prop}
For $(\gpd1, \kit1)$ a kit, the category $\StPSh(\gpd1, \kit1)$ is equivalent to the free coproduct completion of the full subcategory of $\PSh(\gpd1)$ spanned by the $\extyon a G$, for $a \in \gpd1$ and $G \in \kit1(a)$.    
\end{prop}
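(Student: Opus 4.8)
The plan is to exhibit the equivalence via the universal property of the free coproduct completion. Write $\mathcal{C}$ for the full subcategory of $\PSh(\gpd1)$ spanned by the objects $\extyon a G$ with $a \in \gpd1$ and $G \in \kit1(a)$, and let $\mathrm{Fam}(\mathcal{C})$ denote its free coproduct completion, whose objects are small families $(c_i)_{i\in I}$ of objects of $\mathcal{C}$ and whose hom-sets are $\mathrm{Fam}(\mathcal{C})\bigl((c_i)_{i\in I}, (d_j)_{j\in J}\bigr) = \prod_{i\in I}\coprod_{j\in J}\mathcal{C}(c_i, d_j)$. Since $\StPSh(\gpd1,\kit1)$ has all coproducts (computed as in $\PSh(\gpd1)$) and contains $\mathcal{C}$ as a full subcategory, the inclusion extends, essentially uniquely, to a coproduct-preserving functor $E : \mathrm{Fam}(\mathcal{C}) \to \StPSh(\gpd1,\kit1)$ sending a family $(c_i)_{i\in I}$ to $\coprod_{i\in I} c_i$. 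The goal is then to show that $E$ is an equivalence, i.e.\ essentially surjective and fully faithful.

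Essential surjectivity is exactly the content of the representation theorem: by Lemma~\ref{lem:StShRepresentation}, every $X \in \StPSh(\gpd1,\kit1)$ is isomorphic to a coproduct $\coprod_{i\in I}\extyon{a_i}{G_i}$ with $G_i \in \kit1(a_i)$, and hence lies in the image of $E$ up to isomorphism.

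The core of the argument is fully faithfulness, which reduces to the claim that each $\extyon a G$ is a \emph{connected} (atomic) object of $\PSh(\gpd1)$, meaning that $\PSh(\gpd1)(\extyon a G, -)$ preserves coproducts. To establish this I would use the description of $\extyon a G$ as the colimit of the composite $G \to \gpd1 \rightembedding \PSh(\gpd1)$: by the colimit/hom adjunction, $\PSh(\gpd1)(\extyon a G, X) \cong \{\, x \in X(a) \mid G \subseteq \Stab(x)\,\}$, the set of $G$-fixed points of $X(a)$ under the groupoid action. As the action of $G$ on a coproduct of presheaves is componentwise, a fixed point must lie in a single summand, so $G$-fixed points commute with coproducts; this yields connectedness. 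Granting this, for families $(c_i)_{i\in I}$ and $(d_j)_{j\in J}$ of objects of $\mathcal{C}$ one computes
\[
\StPSh(\gpd1,\kit1)\Bigl(\coprod_i c_i, \coprod_j d_j\Bigr) \cong \prod_i \PSh(\gpd1)\Bigl(c_i, \coprod_j d_j\Bigr) \cong \prod_i \coprod_j \PSh(\gpd1)(c_i, d_j),
\]
using fullness of $\StPSh(\gpd1,\kit1)$ in $\PSh(\gpd1)$, the universal property of $\coprod_i c_i$, and connectedness of each $c_i$. The right-hand side is precisely $\mathrm{Fam}(\mathcal{C})((c_i)_{i}, (d_j)_{j})$, and one checks that this chain of isomorphisms is the map induced by $E$ on hom-sets, so $E$ is fully faithful.

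I expect the main obstacle to be the connectedness step: verifying carefully that $\PSh(\gpd1)(\extyon a G, -)$ preserves coproducts via the fixed-point formula, and in particular that fixed points of the groupoid action are computed summand by summand. Everything else — the construction of $E$, essential surjectivity from Lemma~\ref{lem:StShRepresentation}, and the hom-set bookkeeping — is then routine.
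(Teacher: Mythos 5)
Your proposal is correct and follows essentially the same route as the paper's (sketched) proof: the canonical coproduct-preserving functor out of the free coproduct completion, essential surjectivity from Lemma~\ref{lem:StShRepresentation}, and full faithfulness by a direct hom-set computation. Your elaboration of the paper's ``straightforward argument'' via the connectedness of each $\extyon a G$ — using the fixed-point description $\PSh(\gpd1)(\extyon a G, X) \cong \{\, x \in X(a) \mid G \subseteq \Stab(x)\,\}$ and the fact that the $G$-action on a coproduct is componentwise — is exactly the intended filling-in and is sound.
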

\begin{proof}[Proof sketch]
Write $\coprod_{\kit1}$ for this completion. Since $\StPSh(\gpd1, \kit1)$ contains the $\extyon a G$ and has all coproducts there is a canonical functor $\coprod_{\kit1} \to \StPSh(\gpd1, \kit1)$. It is full and faithful by a straightforward argument and essentially surjective on objects by Lemma~\ref{lem:StShRepresentation}.
\end{proof}

\subsection{Presheaves and orthogonality.}
\newcommand{\FS}{\mathbf{FS}}

In this section, we show that the duality enjoyed by the kits on a groupoid $\gpd1$ through the $(-)^\perp$ construction translates to a duality between presheaves and co-presheaves over $\gpd1$. Presheaves over $\gpd1$ (equivalently profunctors $\One \profto \gpd1$) are thought of as modelling closed programs of type $\gpd1$ and co-presheaves (or profunctors $\gpd1 \profto \One$) provide a notion of program environment\footnote{This duality has many  alternative descriptions as proofs/counterproofs, Player/Opponent, etc.}.
In this viewpoint, orthogonality provides a refined setting in which one explicitly controls the possible interactions between programs and environments. Here we ensure that the stabilizer groups on either side of the interaction have no elements in common but the identity.

\begin{defi}
For a groupoid $\gpd1$, a presheaf $X : \gpd1^\op \to \Set$ and a co-presheaf $Y : \gpd1 \to \Set$, we say that $X$ and $Y$ are \emph{orthogonal}, written $X \orth Y$, if the presheaf on $\gpd1$ with object mapping 
$
a \longmapsto X(a) \times Y(a) 
$
and functorial action defined by 
\[
(x, y) \act \alpha = (x \act \alpha, \alpha^{-1} \act y)
\]
is \defn{free}: all its elements have trivial stabilizer. Explicitly, a presheaf $Z\in\PSh(\gpd1)$ is free whenever, for
all $\alpha, \beta: a'\to a$ in $\gpd1$ and $z\in Z(a)$,
\[
z\act\alpha = z\act \beta \implies \alpha = \beta.
\]
\end{defi}

Orthogonality in this sense is the counterpart, in our model, of the orthogonality relations underlying a number of well-known models of classical linear logic \cite{GirardCoherence, LoaderTotality, EhrhardFiniteness, DanosProbaCoherence}. All of these models were given a unified, axiomatic treatment based on \emph{double-glueing} categories along hom-functors, in a framework developed by Hyland and Shalk \cite{GlueingHylandShalk}. The bicategorical version of this framework, which should encompass the model of this paper, has yet to be worked out. 

The following construction may then be performed on subcategories of presheaves: for any full subcategory $\asubcat$ of $\PSh(\gpd1)$, the category $\asubcat^\orth$ is the full subcategory of $\PSh(\gpd1^\op)$ with objects 
\[
\{ Y : \gpd1 \to \Set \mid \forall X \in \asubcat, X \perp Y \}.
\]
The two notions of duality (on kits and full subcategories) are closely related. Let $\FS(\gpd1)$ be the (large) poset of full subcategories of $\PSh(\gpd1)$ under inclusion. As with kits, the orthogonality relation induces a Galois connection
	\begin{center}
	\begin{tikzpicture}[thick, line join=round,yscale=0.5]
	\node (A) at (0,0) {$\FS(\gpd1)^\op$};
	\node (B) at (3.5, 0) {$\FS(\gpd1^\op)$};
	\draw [->] (A) to [bend left =30]  node [above] {$(-)^\orth$} (B);
	\draw [->] (B) to [bend left =30]  node [below] {$(-)^\orth$} (A);
	\node (C) at (1.75,0) {$\bot$};
	\end{tikzpicture}
\end{center}
whose fixed points are those $\asubcat$ verifiying $\asubcat^\dorth = \asubcat$. 
Thus we have two notions of duality, respectively intensional (based on subgroups) and extensional (based on presheaves). These are fundamentally connected via the equations 
\[
\StPSh(\gpd1, \kit1)^\orth = \StPSh(\gpd1^\op, \kit1^\orth) \quad \text{and} \quad \Stab(\asubcat)^\orth = \Stab(\asubcat^\orth)
\]
from which we derive another characterization of Boolean kits as those kits $(\gpd1, \kit1)$ satisfying $\StPSh(\gpd1, \kit1)^\dorth = \StPSh(\gpd1, \kit1)$. Moreover, our earlier definition of stabilized profunctor (Definition \ref{def:sprofunctor}) may be rephrased in extensional style: 
\begin{lem}
	\label{lem:qprofunctor_characterisation}
	A profunctor $P : \gpd1 \profto \gpd2$ induces two functors $P\lanyon:= \Lan_{\yon[\gpd1]}P : \PSh(\gpd1) \to \PSh(\gpd2)$ and $P\lanyonop := (P^\dual)\lanyon: \PSh(\gpd2^\op)\to\PSh(\gpd1^\op)$. For Boolean kits $\kitstr{\gpd1}=(\gpd1,\kit1)$ and $\kitstr{\gpd2}=(\gpd2,\kit2)$, $P$ is a stabilized profunctor from
	$\kitstr{\gpd1}$ to $\kitstr{\gpd2}$ if and
	only if 
	\begin{center}
		\begin{minipage}{3cm}
			\begin{tikzpicture}[scale=.75]
			\node (A) at (0, 2) {$\PSh(\gpd1)$};
			\node (B) at (2.5, 2) {$\PSh(\gpd2)$};
			\node (C) at (0, 0) {$\StPSh(\kitstr{\gpd1})$};
			\node (D) at (2.5, 0) {$\StPSh(\kitstr{\gpd2})$};
			
			\draw [->] (A) -- node [above] {$P\lanyon$} (B);
			\draw [right hook->] (D) to node {} (B);
			\draw [right hook->] (C) to node {} (A);
			\draw [->, dashed] (C) to node {} (D);
			\end{tikzpicture}
		\end{minipage}
		\qquad and \qquad
		\begin{minipage}{3cm}
			\begin{tikzpicture}[scale=.75]
			\node (A) at (0, 2) {$\PSh(\gpd2^\op)$};
			\node (B) at (2.5, 2) {$\PSh(\gpd1^\op)$};
			\node (C) at (0, 0) {$\StPSh(\kitstr{\gpd2}^\star)$};
			\node (D) at (2.5, 0) {$\StPSh(\kitstr{\gpd1}^\star)$};
			
			\draw [->] (A) -- node [above] {$P\lanyonop$} (B);
			\draw [right hook->] (D) to node {} (B);
			\draw [right hook->] (C) to node {} (A);
			\draw [->, dashed] (C) to node {} (D);
			\end{tikzpicture}
		\end{minipage}
	\end{center}
	That is, $P\lanyon: \PSh(\gpd1) \rightarrow \PSh(\gpd2)$ restricts to a
	functor $\StPSh(\kitstr{\gpd1}) \to \StPSh(\kitstr{\gpd2})$ and 
	$P\lanyonop: \PSh(\gpd1^\op) \rightarrow \PSh(\gpd2^\op)$ restricts to a
	functor $\StPSh(\kitstr{\gpd2}^\star) \to \StPSh(\kitstr{\gpd1}^\star)$.
\end{lem}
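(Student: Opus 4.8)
The plan is to use that $P\lanyon$ is the colimit-preserving extension of $P$, so on objects it is computed by the coend $P\lanyon(X)(b) = \int^{a\in\gpd1} X(a) \times P(b,a)$, whose elements are classes $[x,p]$ with $x \in X(a)$ and $p \in P(b,a)$. Since $\StPSh(\kitstr\gpd1)$ and $\StPSh(\kitstr\gpd2)$ are \emph{full} subcategories, a colimit-preserving functor restricts to them exactly when it sends objects into them, so both halves of the statement are conditions on objects only. I would first isolate the single equivalence (Claim~A): \emph{$P\lanyon$ restricts to a functor $\StPSh(\kitstr\gpd1) \to \StPSh(\kitstr\gpd2)$ if and only if, for all $p \in P(b,a)$, $\alpha \in \Endo(a)$, $\beta \in \Endo(b)$ with $\alpha \act p \act \beta = p$, one has $\alpha \in \bigcup\kit1(a) \implies \beta \in \bigcup\kit2(b)$} — which is precisely the first implication of Definition~\ref{def:sprofunctor}. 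The full lemma then follows by applying Claim~A twice: once to $P$ (giving the left diagram $\iff$ first implication) and once to the dual profunctor $P^\dual : (\gpd2^\op,\kit2^\orth) \profto (\gpd1^\op,\kit1^\orth)$, whose associated functor is $(P^\dual)\lanyon = P\lanyonop$ and whose first implication, by the self-duality of stabilized profunctors recalled after the dual definition, is exactly the second implication of Definition~\ref{def:sprofunctor} for $P$. As $\kit1,\kit2$ and hence $\kit1^\orth,\kit2^\orth$ are all Boolean, both applications are legitimate, and $P$ is stabilized iff both implications hold iff both diagrams commute.

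For the forward direction of Claim~A, I would take $X \in \StPSh(\kitstr\gpd1)$ and a class $t = [x,p] \in P\lanyon(X)(b)$ and compute $\Stab(t)$. The key calculation—identical in spirit to the composition lemma for stabilized profunctors—is that in a groupoid the dinaturality relation defining the coend collapses to a single invertible step, so $\beta \in \Stab(t)$ holds iff there exists $\gamma \in \Endo(a)$ with $x \act \gamma = x$ and $\gamma \act p \act \beta = p$; that is, iff there is $\alpha := \gamma \in \Stab(x)$ with $\alpha \act p \act \beta = p$. Since $\Stab(x) \in \kit1(a)$ we get $\alpha \in \bigcup\kit1(a)$, and the assumed implication yields $\beta \in \bigcup\kit2(b)$. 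As $\beta \in \Stab(t)$ was arbitrary, $\Stab(t) \subseteq \bigcup\kit2(b)$, and because $\kit2$ is Boolean, Lemma~\ref{lem:characterisationDoubleOrth}(1) gives $\Stab(t) \in \kit2(b)$; hence $P\lanyon(X) \in \StPSh(\kitstr\gpd2)$.

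For the converse direction of Claim~A, suppose the functor restricts and take $\alpha \act p \act \beta = p$ with $\alpha \in \bigcup\kit1(a)$. Since $\kit1$ is Boolean it is closed under subgroups (Lemma~\ref{lem:kitDownClosed}), so $G := \cyclic\alpha \in \kit1(a)$ and therefore $\extyon a G \in \StPSh(\kitstr\gpd1)$. The canonical element $u \in \extyon a G(a)$ (the class of $\id_a$) has stabilizer exactly $G \ni \alpha$, and the coend characterization above shows that $\beta \in \Stab([u,p])$ in $P\lanyon(\extyon a G)(b)$, witnessed by $\alpha$. By hypothesis $P\lanyon(\extyon a G) \in \StPSh(\kitstr\gpd2)$, so this stabilizer lies in $\kit2(b)$ and hence $\beta \in \bigcup\kit2(b)$, which is the desired implication.

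The main obstacle is the coend stabilizer computation underlying both directions, namely that $\beta \in \Stab([x,p])$ is equivalent to the existence of $\alpha \in \Stab(x)$ with $\alpha \act p \act \beta = p$; this rests entirely on the invertibility of morphisms in $\gpd1$, which makes the zigzags defining $\int^{a} X(a)\times P(b,a)$ collapse to one step. Everything else is formal: the reduction to Claim~A, and the transfer to the second diagram via $P^\dual$, using only the stated identity $\StPSh(\gpd1,\kit1)^\orth = \StPSh(\gpd1^\op,\kit1^\orth)$ together with the fact that orthogonal kits are always Boolean.
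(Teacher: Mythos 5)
Your proposal is correct and follows essentially the same route as the paper's proof: the forward direction computes the stabilizer of a coend class $[x,p]$ via the one-step collapse of the coend relation over a groupoid, and the converse tests the restricted functor on the quotiented representable $\extyon a{\langle\alpha\rangle}$ (resp.\ $\extyon b{\langle\beta\rangle}$ for the dual half). Your packaging of the second implication as an application of Claim~A to $P^\dual$, and your explicit appeal to Lemma~\ref{lem:characterisationDoubleOrth}(1) to pass from $\Stab(t)\subseteq\bigcup\kit2(b)$ to $\Stab(t)\in\kit2(b)$, are only presentational refinements of what the paper does with ``analogously''.
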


\begin{proof}
	\begin{itemize}
		\item[] 
		\item[$(\Rightarrow)$] Assume that $P:\gpd1 \profto \gpd2$ is a stabilized profunctor. For $X\in \StPSh(\kitstr{\gpd1})$ and $b\in\gpd2$ we want to show that the stabilizer of every element in $P^{\#} X(b) = \int^{a\in\gpd1} P(b,a) \times X(a)$ is in $\kit2(b)$. Let $t = p \coendtensor[a] x$ be in $P^{\#} X(b)$.  For every $\beta: b\to b$ in $\Stab(t)$, we have $(p \cdot \beta) \coendtensor[a] x 
		= (p \coendtensor[a] x) \cdot \beta	= p \coendtensor[a] x$, i.e. there exists $\alpha : a \rightarrow a$ in $\gpd1$ such that $\alpha \cdot p \cdot \beta = p$ and $x \cdot \alpha = x$. Since $X$ is in $\StPSh(\kitstr{\gpd1})$, we must have $\alpha \in \bigcup \kit1(a)$ and since $P$ is a stabilized profunctor, we must also have $g \in \bigcup \kit2(b)$ as desired. One shows that $P_{\#} Y$ is in $\StPSh(\kitstr{\gpd1}^\orth)$ for $Y \in \StPSh(\kitstr{\gpd2}^\orth)$ analogously.

		\item[$(\Leftarrow)$] Assume that $P^{\#}(\StPSh(\kitstr{\gpd1})) \hookrightarrow \StPSh(\kitstr{\gpd2})$. Let $a\in\gpd1$, $b\in\gpd2$, $p\in P(b,a)$, $\alpha\in\gpd1(a,a)$, and $\beta\in\gpd2(b,b)$ be such that $\alpha\cdot p\cdot \beta=p$. Suppose that $\alpha\in \bigcup \kit1(a)$ and let $X = \extyon a {\langle \alpha \rangle} \in \StPSh(\kitstr{\gpd1})$. We then have $P^{\#}(X) \in \StPSh(\kitstr{\gpd2})$ by hypothesis. Hence, for the element $p \coendtensor[a] {\langle \alpha \rangle} = p \coendtensor[a] {\langle \alpha \rangle} \alpha
		= (\alpha \cdot p) \coendtensor[a] {\langle \alpha \rangle}$ in $P^{\#} X(b)$, it follows that $(p \coendtensor[a] {\langle \alpha \rangle}) \cdot \beta
		= (\alpha \cdot p \cdot \beta) \coendtensor[a] {\langle \alpha \rangle}
		=  p \coendtensor[a] {\langle \alpha \rangle}$ which implies $\beta \in \bigcup \kit2(b)$ as desired. 
		
		Assuming $P_{\#}(\StPSh(\kitstr{\gpd2}^\orth)) \hookrightarrow \StPSh(\kitstr{\gpd1}^\orth)$,
		for $\beta \in \bigcup \kit2^\perp(b)$, one considers 
		$\extyon b {\langle \beta \rangle} \in \StPSh(\kitstr{\gpd2}^\orth)$ and reasons
		analogously to show that ${\alpha\in \bigcup \kit1^\perp(a)}$.\qedhere
	\end{itemize}
\end{proof}

\subsection{Stabilized presheaves on Boolean kits}

We now focus on categories of stabilized presheaves associated with Boolean kits and discuss the additional properties which hold there. A major difference with the kit case is the existence in $\StPSh(\gpd1, \kit1)$ of all non-empty limits and filtered colimits. 
For the latter, we use that kits are closed under lubs of directed sets of subgroups.
\begin{lem}
For a Boolean kit $(\gpd1, \kit1)$, $\StPSh(\gpd1, \kit1)$ has filtered colimits.  
\end{lem}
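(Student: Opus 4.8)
The plan is to compute the candidate colimit pointwise in $\PSh(\gpd1)$ and show it remains stabilized; since the inclusion $\StPSh(\gpd1,\kit1)\hookrightarrow\PSh(\gpd1)$ is full, a colimit computed in $\PSh(\gpd1)$ that happens to land in the subcategory is automatically a colimit there as well. So let $(X_j)_{j\in J}$ be a filtered diagram of stabilized presheaves, with transition maps $\theta_{jk}\colon X_j\to X_k$ for each morphism $j\to k$ of $J$, and let $X=\mathrm{colim}_j X_j$ with colimiting cocone $(\iota_j\colon X_j\to X)_{j\in J}$. Colimits in $\PSh(\gpd1)$ are computed objectwise, and since filtered colimits in $\Set$ are well understood, every element of $X(a)$ is of the form $\iota_j(x_j)$ for some $j\in J$ and $x_j\in X_j(a)$. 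It thus remains to verify that $\Stab\bigl(\iota_j(x_j)\bigr)\in\kit1(a)$.

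First I would identify the stabilizer of $x:=\iota_j(x_j)$ with a directed union of stabilizers occurring along the diagram. For $\alpha\in\Endo(a)$, naturality of the cocone gives $x\act\alpha=\iota_j(x_j\act\alpha)$, so $x\act\alpha=x$ holds in the filtered colimit precisely when $\theta_{jk}(x_j)\act\alpha=\theta_{jk}(x_j)$ for some transition map $\theta_{jk}$ out of $X_j$, i.e.\ when $\alpha\in\Stab\bigl(\theta_{jk}(x_j)\bigr)$. Moreover, because the transition maps commute with the $\gpd1$-action, these subgroups can only grow: if $\alpha$ fixes $\theta_{jk}(x_j)$ then it fixes $\theta_{jk'}(x_j)=\theta_{kk'}\bigl(\theta_{jk}(x_j)\bigr)$ for any $k'$ past $k$. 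Filteredness of $J$ (equivalently, of the coslice $j/J$) then turns the family of subgroups $\Stab\bigl(\theta_{jk}(x_j)\bigr)$, indexed by the transition maps out of $X_j$, into a directed family, and one obtains
\[
\Stab(x)=\bigcup_{j\to k}\Stab\bigl(\theta_{jk}(x_j)\bigr).
\]

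Each subgroup $\Stab\bigl(\theta_{jk}(x_j)\bigr)$ lies in $\kit1(a)$, since $\theta_{jk}(x_j)\in X_k(a)$ and $X_k$ is a stabilized presheaf. As $(\gpd1,\kit1)$ is a Boolean kit, Lemma~\ref{lem:kitDownClosed} ensures that $\kit1(a)$ is closed under directed unions of subgroups, so the directed union $\Stab(x)$ is again in $\kit1(a)$. Hence $X$ is stabilized, and by fullness it is the colimit of $(X_j)_{j\in J}$ in $\StPSh(\gpd1,\kit1)$.

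The main obstacle is the second step: one must check with care that $\Stab(x)$ is exactly the displayed union, and in particular that the indexing family is genuinely \emph{directed}, since this is precisely the hypothesis demanded by Lemma~\ref{lem:kitDownClosed}. The crucial point is that stabilizers increase monotonically along the diagram, a direct consequence of naturality of the transition maps, so that filteredness upgrades this monotone family to a directed one. This is also where the Boolean condition $\kit1=\kit1^\dorth$ enters essentially, through its consequence (Lemma~\ref{lem:kitDownClosed}) that kits are closed under directed unions of subgroups; for a general (non-Boolean) kit the argument would break down at exactly this stage.
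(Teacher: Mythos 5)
Your proof is correct and follows essentially the same route as the paper's (given within the proof of Corollary~\ref{cor:QuantitativePresheavesCreation}): compute the colimit pointwise in $\PSh(\gpd1)$, use filteredness to push any stabilizing endomorphism of a colimit element to a stabilizing endomorphism of its image at some later stage of the diagram, and conclude by a closure property of Boolean kits. The only cosmetic difference is which consequence of $\kit1=\kit1^\dorth$ is invoked at the end: you organize $\Stab(x)$ as a directed union of stage-wise stabilizers and apply closure under directed unions (Lemma~\ref{lem:kitDownClosed}), whereas the paper shows each $\alpha\in\Stab(x)$ lies in $\bigcup\kit1(a)$ and concludes via the characterisation of Lemma~\ref{lem:characterisationDoubleOrth}.
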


The case of non-empty limits is immediate given the following:
\begin{lem}\label{lem:StPShconnectedLimits}
For a Boolean kit $(\gpd1, \kit1)$ and a morphism $X \xrightarrow{f} Y$ in $\PSh(\gpd1)$, if $Y \in \StPSh(\gpd1, \kit1)$ then $X \in \StPSh(\gpd1, \kit1)$.  
\end{lem}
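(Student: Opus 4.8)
The plan is to compare, for each element of $X$, its stabilizer with the stabilizer of its image under $f$ in $Y$, and then to use that Boolean kits are closed under subgroups.

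First I would fix an object $a \in \gpd1$ and an element $x \in X(a)$, and take any $\alpha \in \Stab(x)$, so that $x \act \alpha = x$. By naturality of $f$ at the endomorphism $\alpha : a \to a$ we have $f_a(x \act \alpha) = f_a(x) \act \alpha$; combined with $x \act \alpha = x$ this yields $f_a(x) \act \alpha = f_a(x)$, i.e.\ $\alpha \in \Stab(f_a(x))$. Hence $\Stab(x) \subseteq \Stab(f_a(x))$ as subgroups of $\Endo(a)$.

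Next, since $Y \in \StPSh(\gpd1, \kit1)$, the stabilizer $\Stab(f_a(x))$ belongs to $\kit1(a)$. Because $\kit1$ is a \emph{Boolean} kit, Lemma~\ref{lem:kitDownClosed} tells us that $\kit1(a)$ is closed under subgroups; applying this to the inclusion just established gives $\Stab(x) \in \kit1(a)$. As $a$ and $x$ were arbitrary, every element of $X$ has stabilizer in $\kit1$, which is precisely the defining condition for $X \in \StPSh(\gpd1, \kit1)$.

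I do not expect a genuine obstacle here: the argument is a one-line naturality computation followed by a closure property. The only subtle point is to identify where the Boolean hypothesis is used. The inclusion $\Stab(x) \subseteq \Stab(f_a(x))$ together with $\Stab(f_a(x)) \in \kit1(a)$ would be useless for a general kit, since arbitrary kits need not be down-closed; it is exactly the closure under subgroups of Boolean kits (Lemma~\ref{lem:kitDownClosed}) that lets the conclusion follow. The lemma should therefore be read as saying that stabilized presheaves over a Boolean kit are stable under any incoming morphism (and in particular under sub-presheaves), which is what makes the non-empty limits in $\StPSh(\gpd1, \kit1)$ computable as in $\PSh(\gpd1)$.
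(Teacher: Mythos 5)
Your proof is correct and follows exactly the same route as the paper's (which is only sketched as a proof idea): the inclusion $\Stab(x) \subseteq \Stab(f_a x)$ via naturality, membership of the latter in $\kit1(a)$ since $Y$ is stabilized, and closure of Boolean kits under subgroups (Lemma~\ref{lem:kitDownClosed}). Your expanded naturality computation and the remark pinpointing where the Boolean hypothesis is needed are both accurate.
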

\begin{proof}[Proof idea]
For $a \in \gpd1$ and $x \in X(a)$, $\Stab(x) \subseteq \Stab(f_a x) \in \kit1(a)$, and Boolean kits are closed under subgroups. 
\end{proof}

\begin{cor}\label{cor:QuantitativePresheavesCreation}
	For a Boolean kit $(\gpd1,\kit1)$, the embedding 
	$\StPSh(\gpd1,\kit1)\rightembedding\PSh(\gpd1)$ creates isomorphisms,
	coproducts, filtered colimits, epimorphisms and non-empty limits.  
\end{cor}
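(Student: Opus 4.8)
The plan is to exploit that $\StPSh(\gpd1,\kit1)\rightembedding\PSh(\gpd1)$ is the inclusion of a \emph{full} subcategory that is injective on objects. For such an inclusion the creation of a class of (co)limits reduces to a \emph{closure} statement: if a diagram valued in $\StPSh(\gpd1,\kit1)$ has its (co)limit, as computed in $\PSh(\gpd1)$, already lying in $\StPSh(\gpd1,\kit1)$, then by fullness and faithfulness this object is the unique lift of the (co)limiting cone and is itself (co)limiting in $\StPSh(\gpd1,\kit1)$. So for isomorphisms, coproducts, filtered colimits, and non-empty limits it suffices to check that the corresponding constructions, performed in $\PSh(\gpd1)$, preserve membership in $\StPSh(\gpd1,\kit1)$. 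For epimorphisms the statement is instead one of \emph{detection}: that a morphism of stabilized presheaves is epi in $\StPSh(\gpd1,\kit1)$ exactly when it is epi in $\PSh(\gpd1)$.

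First I would dispatch isomorphisms and non-empty limits together, both directly from Lemma~\ref{lem:StPShconnectedLimits}. An isomorphism $X \cong Y$ with $Y \in \StPSh(\gpd1,\kit1)$ is in particular a morphism $X \to Y$, so $X \in \StPSh(\gpd1,\kit1)$ and the subcategory is replete. For a diagram $F \colon J \to \StPSh(\gpd1,\kit1)$ with $J$ non-empty and limit $L$ computed in $\PSh(\gpd1)$, choosing any $j_0 \in J$ yields a projection $L \to F(j_0)$ into a stabilized presheaf, whence $L \in \StPSh(\gpd1,\kit1)$ again by Lemma~\ref{lem:StPShconnectedLimits}. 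This is exactly where non-emptiness is used: the empty limit is the terminal presheaf, whose elements have full stabilizer, and which is generally not stabilized. Coproducts are handled by the stabilizer computation already recorded, namely that an element of $\coprod_i X_i$ over $a$ has the same stabilizer as the corresponding element of some $X_i$, which lies in $\kit1(a)$; and filtered colimits are the content of the preceding lemma, whose proof shows the colimit computed in $\PSh(\gpd1)$ remains stabilized, using that Boolean kits are closed under directed unions of subgroups (Lemma~\ref{lem:kitDownClosed}).

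The main work, and the one genuinely delicate point, is epimorphisms, because $\StPSh(\gpd1,\kit1)$ is \emph{not} closed under quotients (a surjective image of a stabilized presheaf can acquire larger stabilizers lying outside $\kit1$). Thus ``creation'' here cannot mean closure and must be read as detection. The direction ``pointwise surjective $\Rightarrow$ epi in $\StPSh(\gpd1,\kit1)$'' is immediate by restricting the cancellation property along the full inclusion. For the converse I would argue contrapositively, using the special structure of presheaves over a \emph{groupoid}: every subpresheaf $I \subseteq Y$ splits off its complement, $Y \cong I + C$ with $C(a) = Y(a)\setminus I(a)$, since each $Y(\alpha)$ is a bijection and so the complement is again a subpresheaf; both summands lie in $\StPSh(\gpd1,\kit1)$ by Lemma~\ref{lem:StPShconnectedLimits}. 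Given $f \colon X \to Y$ in $\StPSh(\gpd1,\kit1)$ that is not pointwise surjective, its image $I \subsetneq Y$ is a proper subpresheaf, so $C \neq \varnothing$. Consider the two maps $Y \to Y + C$, one being the first coproduct injection, the other acting as the first injection on $I$ but as the second $C$-injection on the $C$-summand of $Y$; they agree on $I$ yet differ on $C$, witnessing that $I \rightembedding Y$, and hence $f$, is not epi in $\StPSh(\gpd1,\kit1)$. Since $Y + C \in \StPSh(\gpd1,\kit1)$ by the coproduct case, this witness lives inside the subcategory, as required. The hard part is therefore recognising that closure fails for epis and supplying this explicit coproduct witness internal to $\StPSh(\gpd1,\kit1)$.
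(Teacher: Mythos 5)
Your proposal is correct, and for isomorphisms, non-empty limits (both via Lemma~\ref{lem:StPShconnectedLimits}), coproducts, and filtered colimits it coincides with the paper's argument, which likewise reduces creation along the full, isomorphism-closed inclusion to closure of $\StPSh(\gpd1,\kit1)$ under the relevant constructions in $\PSh(\gpd1)$. The one place where you genuinely diverge is the epimorphism case, and your route is arguably cleaner. The paper also argues that an epi in $\StPSh(\gpd1,\kit1)$ must be pointwise surjective, but it does so by invoking the representation theorem (Lemma~\ref{lem:StShRepresentation}) to write $Y \cong \coprod_{i} \extyon{a_i}{G_i}$, observing that a missed element forces an entire orbit component $\extyon{a_j}{G_j}$ to be missed, and then separating the first coproduct injection $Y \to Y + \One$ from a map that collapses the $j$-th component onto the extra summand. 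This costs two uses of the axiom of choice (inherited from Lemma~\ref{lem:StShRepresentation}) and leaves implicit the verification that the witness object $Y + \One$ is itself stabilized, which is not automatic since the terminal presheaf generally is not. Your version sidesteps both issues: the splitting $Y \cong I + C$ of image and complement is available for any subpresheaf over a \emph{groupoid} (the complement is a subpresheaf because restrictions are bijections), both $I$ and $C$ are stabilized by Lemma~\ref{lem:StPShconnectedLimits}, and the two maps $Y \rightrightarrows Y + C$ that agree on $I$ but disagree on $C$ live entirely inside $\StPSh(\gpd1,\kit1)$ by the coproduct case. So the paper's proof buys a concrete orbit-level description of what an epi can miss, while yours buys a choice-free, self-contained witness; both establish the same characterization of epis as the pointwise surjections.
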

\begin{proof}
	Let $\{X_i\}_{i \in I}$ be a finite set of presheaves in $\StPSh(\kitstr{\gpd1})$. We show that their coproduct in $\PSh{\gpd1}$ is an element of $\StPSh(\kitstr{\gpd1})$. For $(i, x) \in  (\coprod_{i \in I} X_i)(a) = \coprod_{i \in I} X_i(a)$, ${\Stab}_{\coprod_{i \in I} X_i}(i, x)= {\Stab}_{X_i}(x)$ which is in $\kit{A}(a)$, so we are done.
	
	Let $D : \filt{I} \to \StPSh(\kitstr{\gpd1})$ be a filtered diagram. Denote by $X_i$ the presheaf $D(i)$ for $i \in \filt{I}$ and write $X$ for the colimit of $D$. Let $a$ be in $\gpd1$, for every $x \in X(a)$, there exists $i \in \filt{I}$ and $y \in X_i(a)$ such that $(\colimin_i)_a(y) = x$, where $\colimin_i : X_i \to X$ denotes the cocone component. By naturality of $\colimin_i$, if $\alpha: a \to a$ is in ${\Stab}_X(x)$ then $(\colimin_i)_a (y \alpha ) = (\colimin_i)_a(y)$. Since $\filt{I}$ is filtered, there exists $j \in \filt{I}$ and $f : i \to j$ such that $D(f)_a(y \cdot \alpha) = D(f)_a( y)$. Hence, $\alpha\in {\Stab}_{X_j}(D(f)_a(y))$ so $\alpha$ is in $\bigcup \kit{A}(a)$ as desired.

	We show that a morphism $e : X \Rightarrow Y$ is an epimorphism in $\StPSh(\kitstr{\gpd1})$ if and only if its components $e_a : X(a) \to Y(a)$ are surjective for all $a \in \gpd1$. Assume that there exists $a \in \gpd1$ and $y \in Y(a)$ such that for all $x\in X(a)$, $e_{a}(x) \neq y$. By Lemma \ref{lem:StShRepresentation}, $Y \cong \coprod_{i\in I} \extyon {a_{i}} {G_{i}}$ where each $G_{i}$ is a group in $\kit{A}(a_{i})$. Let $j \in I$ be such that $y$ is in the component $\extyon {a_{j}} {G_{j}}(a)$. We show that for all $b \in \gpd1$ and $z \in \extyon {a_{j}} {G_{j}}(b)$, $z$ is not in the image of $e_b$. Assume that there exists $x \in X(b)$ such that $e_b (x) = z$, then for any representative $\alpha \in \gpd1(a,a_j)$ of the equivalence class $y$ and $\beta \in \gpd1(b, a_j)$ of the equivalence class $z$, we obtain by naturality of $e$ that
	\[
	e_a(x \cdot (\beta^{-1}\alpha)) = (e_b (x))\cdot (\beta^{-1}\alpha) = [\beta] \cdot (\beta^{-1}\alpha) = [\beta \beta^{-1}\alpha] =[\alpha]=y.
	\]
	Hence, we can factor $e$ as 
	\begin{center}
		\begin{tikzpicture}[thick]
			\node (A) at (0, 0) {$X$};
			\node (B) at (2.5, 0) {$\sum_{\substack{i \in I \\ i \neq j}}\extyon {a_{i}} {G_{i}} $};
			\node (C) at (6.5, 0) {$\sum_{i\in I} \extyon {a_{i}} {G_{i}} \cong Y$};
			
			\draw [->] (A) -- node [above] {} (B);
			\draw [right hook->] (B) to node [above] {}  (C);
		\end{tikzpicture}
	\end{center}
	Let $\colimin_Y : Y \to Y + \One$ be the left hand coproduct inclusion and let $f : Y \to Y+\One$ be the natural transformation whose components $f_a$ are given by 
	\[ \sum_{i\in I} \extyon {a_{i}} {G_{i}}(a) \ni (i, y)  \mapsto \begin{cases}
		\colimin_a (i,y) \in \{1\} \times Y(a) & \text{ if } i \neq j\\
		(2, \star) \in \{2\} \times \One(a)& \text { if } i =j
	\end{cases}\]
	It is immediate that $\colimin_Y e = f e$ but we do not have $\colimin_Y = f$ contradicting the assumption of $e$ being an epimorphism. Hence, we must have that $e$ is pointwise surjective as desired.
\end{proof}

On the other hand, the category $\StPSh(\gpd1, \kit1)$ has a terminal object (an empty limit) only if $\kit1$ is maximal:  
\begin{prop}
	For a Boolean kit $(\gpd1,\kit1)$, the following are equivalent.
	\begin{enumerate}
				\item 
		The terminal presheaf is in $\StPSh(\gpd1,\kit1)$.
		\item 
		For all $a\in\gpd1$, 
		$\kit1(a) = \{ G \mid G\subgroup\Endo(a) \}$.
					\item 
		$\StPSh(\gpd1,\kit1) = \PSh(\gpd1)$. 
			\end{enumerate}
\end{prop}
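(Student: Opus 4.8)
The plan is to establish the cycle of implications $(1) \Rightarrow (2) \Rightarrow (3) \Rightarrow (1)$. The key observation underlying everything is that the terminal presheaf $\One \in \PSh(\gpd1)$, which sends every object to a singleton with trivial action, has the property that each element $\star \in \One(a)$ satisfies $\Stab(\star) = \Endo(a)$: indeed every $\alpha \in \Endo(a)$ acts as the identity on the singleton, so it fixes $\star$. Thus membership of $\One$ in $\StPSh(\gpd1, \kit1)$ is, by Definition~\ref{def:StSh}, exactly the statement that $\Endo(a) \in \kit1(a)$ for all $a \in \gpd1$.

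For $(1) \Rightarrow (2)$, I would argue as follows. If $\One \in \StPSh(\gpd1, \kit1)$, then by the observation above $\Endo(a) \in \kit1(a)$ for every $a \in \gpd1$. Since $(\gpd1, \kit1)$ is a \emph{Boolean} kit, Lemma~\ref{lem:kitDownClosed} guarantees that each $\kit1(a)$ is closed under subgroups. Applying this closure to the full group $\Endo(a) \in \kit1(a)$ shows that every subgroup $G \subgroup \Endo(a)$ lies in $\kit1(a)$, which is precisely the maximality condition $\kit1(a) = \{ G \mid G \subgroup \Endo(a) \}$.

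For $(2) \Rightarrow (3)$, I would observe that if each $\kit1(a)$ contains all subgroups of $\Endo(a)$, then for any presheaf $X \in \PSh(\gpd1)$, any $a \in \gpd1$ and any $x \in X(a)$, the stabilizer $\Stab(x)$ is a subgroup of $\Endo(a)$ and hence automatically lies in $\kit1(a)$. By Definition~\ref{def:StSh} this means $X \in \StPSh(\gpd1, \kit1)$; as $\StPSh(\gpd1, \kit1)$ is by construction a full subcategory of $\PSh(\gpd1)$ and now contains every object, the two categories coincide. The remaining implication $(3) \Rightarrow (1)$ is then immediate, since the terminal presheaf $\One$ is an object of $\PSh(\gpd1) = \StPSh(\gpd1, \kit1)$.

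The only step carrying any content is $(1) \Rightarrow (2)$, and its crux is the closure of Boolean kits under subgroups supplied by Lemma~\ref{lem:kitDownClosed}. This is exactly where the Boolean hypothesis is essential: for a general (non-Boolean) kit the condition $\Endo(a) \in \kit1(a)$ does not force $\kit1(a)$ to contain all subgroups, so the equivalence would break down. I therefore expect no genuine obstacle beyond making this use of the Boolean condition explicit.
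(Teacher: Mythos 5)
Your proof is correct, and the only substantive step, $(1)\Rightarrow(2)$, is carried out by a different mechanism than in the paper. The paper's argument exhibits, for each subgroup $G\subgroup\Endo(a)$, a specific presheaf witnessing $G$ as a stabilizer: it uses the quotiented representable $\extyon a G$, whose class of the identity has stabilizer exactly $G$, and deduces $\extyon a G\in\StPSh(\gpd1,\kit1)$ from assumption $(1)$ via Lemma~\ref{lem:StPShconnectedLimits} (every presheaf admits a map to the terminal one). You instead use a single witness --- the terminal presheaf itself, whose elements have stabilizer the full group $\Endo(a)$ --- and then close downward using the subgroup-closure of Boolean kits from Lemma~\ref{lem:kitDownClosed}. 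Both routes ultimately rest on the same Boolean hypothesis (Lemma~\ref{lem:StPShconnectedLimits} is itself proved from subgroup closure), but yours invokes it directly and makes the dependence explicit, whereas the paper's version packages it inside the lemma on maps into stabilized presheaves; yours also avoids any appeal to the $\extyon a G$ construction. As a minor point in your favour, your implication $(2)\Rightarrow(3)$ is the straightforward one (every stabilizer is \emph{some} subgroup, hence permitted), whereas the corresponding bullet in the paper's proof actually argues $(2)\Rightarrow(1)$ and relies on the remaining implications to close the cycle. No gaps.
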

\begin{proof}
	\begin{itemize}
		\item[]
		\item $(1 \Rightarrow 2)$ Let $a$ be in $\gpd1$ and $G$ be a subgoup of $\gpd1(a,a)$. Since $\extyon{a}{G}$ is in $\StPSh(\kitstr{\gpd1})$ by assumption and $\Stab(\extyon{a}{G})= \{ K \mid K\leq G \}$, we obtain that $G\in \kit{A}(a)$ as desired.
		\item $(2 \Rightarrow 3)$ Let $1_{\gpd1}: \gpd1^{op} \rightarrow \Set$ denote the terminal presheaf. For all $a \in\gpd1$, $\Stab(1_{\gpd1})(a) =   \{ G\leq \gpd1(a,a) \mid a\in\gpd1 \}$ which implies that $1_{\gpd1}$ is in $\StPSh(\kitstr{\gpd1})$ by definition.
		\item $(3 \Rightarrow 1)$ Direct consequence of Lemma \ref{lem:StPShconnectedLimits}.\qedhere
	\end{itemize}
\end{proof}
	
We additionally note that for a Boolean kit $(\gpd1, \kit1)$, the Yoneda embedding $\gpd1 \to \PSh(\gpd1)$ factors through the inclusion $\incl[\kit1]: \StPSh(\gpd1, \kit1) \hookrightarrow \PSh(\gpd1)$ via an embedding $\gpd1 \hookrightarrow \StPSh(\gpd1, \kit1)$ which we denote $\yy[\kit1]$.
 This holds because Boolean kits contain all trivial subgroups, and for an object $a$ of a groupoid $\gpd1$ we have $\extyon a {\{ \id[a] \}} \cong \yon (a)$.

We continue our development in the next section and introduce \emph{stable functors}. We will show that for a stable species $P$, the mapping $P \mapsto \rsLan{P}$ where $\rsLan{P}= (\Lan_{s}P)\incl[\kit1]$ induces an equivalence between the hom-category of stable species and that of stable functors. Among the stable functors we will then identify the \emph{linear} ones, and show they correspond to stabilized profunctors.

\section{Stable functors}
\label{sec:Stability}
\subsection{Local right adjoints and generic factorizations}\label{subsec:LRAgenericfact}

The notion of stability in domain theory is often stated by means of a \emph{minimal data property} first put forward by Berry~\cite{BerryStable}: 
\begin{defi}[Berry]
\label{def:stabilitydomains}
For cpos $(A, \leq_A)$ and $(B, \leq_B)$, a Scott-continuous function $f : A \to B$ is \defn{stable} if for every $x \in A$ and $y \in B$ such that $y \leq_B f x$, there exists a unique minimal $x_0 \in A$ such that $x_0 \leq_A x$ and $y \leq_Bf x_0$.   
\end{defi}
For $x$ and $y$ as in the definition, $x_0$ has an operational meaning  as the unique minimal amount of information used from $x$ to produce the output $y$. We denote this minimal element $x_0$ by $m(f,x,y)$. It is less well-known that the points $m(f,x,y)$ determine a family of \emph{local} left adjoints to the restriction $f_x: A/x \to B/f(x)$ between the slices $A/x =\{ x' \in A \mid x' \leq_A x\}$ and $B/f(x) = \{ y \in B \mid y\leq_B f(x)\}$:
\begin{center}
	\begin{tikzpicture}[thick, line join=round,yscale=0.5, xscale=0.8]
		\node (A) at (0,0) {$A/x$};
		\node (B) at (3.5, 0) {$B/f(x)$};
		\draw [->] (A) to [bend left =30]  node [above] {$f_x$} (B);
		\draw [->] (B) to [bend left =30]  node [below] {$m(f,a,-)$} (A);
		\node (C) at (1.75,0) {$\top$};
	\end{tikzpicture}
\end{center}
This condition is equivalent to stability.
\begin{lem}
For cpos $(A, \leq_A)$ and $(B, \leq_B)$, a Scott-continuous function $f : A \to B$ is \defn{stable} if and only if for every $x \in A$, the restriction $f_x : A/x \to B/fx$ has a left adjoint. 
\end{lem}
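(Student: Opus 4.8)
The plan is to reduce the statement to a standard fact about monotone maps of posets: a monotone map $g : P \to Q$ admits a left adjoint precisely when, for every $q \in Q$, the set $\{ p \in P \mid q \leq g(p)\}$ has a least element, in which case that least element is the value of the left adjoint at $q$. Since $f$ is Scott-continuous it is in particular monotone, so each restriction $f_x : A/x \to B/f(x)$ is a well-defined monotone map of posets. Applying the general fact with $g = f_x$, the existence of a local left adjoint $m(f,x,-)$ amounts to saying that, for each $y \in B/f(x)$, the set $S_y := \{ x' \in A/x \mid y \leq_B f(x')\}$ has a \emph{least} element. Berry's condition, on the other hand, asserts that $S_y$ has a \emph{unique minimal} element, namely $m(f,x,y)$. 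So the whole lemma comes down to comparing these two properties of $S_y$.

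The $(\Leftarrow)$ direction is then immediate: if $f_x$ has a left adjoint for every $x$, then for each $y \leq_B f(x)$ the value $m(f,x,y)$ of that adjoint is the least element of $S_y$, and a least element is in particular the unique minimal element. Hence Berry's minimal-data property holds and $f$ is stable.

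The $(\Rightarrow)$ direction is where care is needed, and the main obstacle is exactly that stability only supplies a unique \emph{minimal} element of $S_y$, whereas the adjoint requires a \emph{least} one; in a general cpo these need not coincide. The key point is that stability is hypothesized at \emph{every} point of $A$, so I can invoke the minimal-data property not merely at $x$ but at each $x' \in S_y$. Concretely, given $x' \in S_y$, from $y \leq_B f(x')$ and $x' \leq_A x$ stability at $x'$ produces the unique minimal $x_0' := m(f,x',y) \leq_A x'$ with $y \leq_B f(x_0')$. I would then verify that $x_0'$ is already minimal in the larger set $S_y$: any $z \in S_y$ with $z \leq_A x_0'$ satisfies $z \leq_A x'$ and $y \leq_B f(z)$, so minimality of $x_0'$ below $x'$ forces $z = x_0'$. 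By the uniqueness of the minimal element of $S_y$ coming from stability at $x$, this gives $x_0' = m(f,x,y)$, and therefore $m(f,x,y) = x_0' \leq_A x'$. As $x' \in S_y$ was arbitrary, $m(f,x,y)$ is the least element of $S_y$.

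To conclude I would record the adjunction explicitly: with $m(f,x,y)$ now known to be least in $S_y$, monotonicity of $f$ yields, for every $x' \in A/x$, the equivalence $m(f,x,y) \leq_A x'$ iff $y \leq_B f(x')$, which is precisely the characterization of $m(f,x,-)$ as left adjoint to $f_x$ (monotonicity of $m(f,x,-)$ being then automatic for adjoints between posets). This establishes both implications.
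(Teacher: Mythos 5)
Your proof is correct. The paper states this lemma without proof, so there is no argument of its own to compare against; your write-up supplies the missing details. You also identify and resolve the one genuine subtlety: Berry's condition as stated in Definition~\ref{def:stabilitydomains} gives only a \emph{unique minimal} element of $S_y = \{x' \leq_A x \mid y \leq_B f(x')\}$, which in a general poset is weaker than a \emph{least} element, whereas the left adjoint requires the latter. Your bridge --- applying stability at an arbitrary $x' \in S_y$ to produce $m(f,x',y)$, checking it is minimal in the larger set $S_y$, and then using uniqueness of the minimal element of $S_y$ to conclude $m(f,x,y) \leq_A x'$ --- is exactly the right move, and the remaining steps (the poset adjoint-functor criterion, the Galois-connection equivalence, automatic monotonicity of the adjoint) are all standard and correctly applied. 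One minor observation: Scott-continuity is never actually used, only monotonicity, which is consistent with the lemma being a purely order-theoretic fact.
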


For our purposes, this characterization has the advantage that the existence of local adjoints may be considered in full generality: 
\begin{defi}\label{def:localAdjoint}
For categories $\gpd1$ and $\gpd2$, 
a functor $T : \gpd1 \rightarrow \gpd2$ is said to be a 
\emph{local right (resp.~left) adjoint} if for every object 
$a \in \gpd1$, the induced functor $ T_a : \gpd1/_a \rightarrow \gpd2/_{T(a)}$ is a right (resp.~left) adjoint.
\end{defi}

This notion is related to the \emph{multiadjoint functors} initially developed by Diers \cite{diers1977categories} (see also \cite{carboni_johnstone_1995,Weber2004generic, Weber2007familial,Osmond2020diers,gambinokock2013}). Finitary local right adjoints are precisely the normal functors considered by Girard \cite{GirardNormal} and Hasegawa \cite{HasegawaAnalytic}, and the relevance of this notion to stable domain theory was further emphasized by Lamarche \cite{LamarcheThesis} and Taylor \cite{TaylorGroupoidsLinearLogic}. The extensional theory of stable species of structures fits into this line of research:

\begin{defi}\label{def:stablefunctor}
For kits $\kitstr{\gpd1}$ and $\kitstr{\gpd2}$, a functor $\StPSh(\kitstr{\gpd1}) \to \StPSh(\kitstr{\gpd2})$ is called \defn{stable} when it is a finitary, local right adjoint functor that preserves epimorphisms.  
\end{defi}

The requirement of preserving epimorphisms will become clear as we construct a pseudo-inverse to the mapping $P \mapsto  \rsLan{P}$. Epi-preserving functors are of known significance in this area since, as shown by Fiore \cite{Fioreanalytic}, analytic functors between presheaf categories over groupoids may be characterized as finitary functors preserving \emph{wide quasi-pullbacks}; this forces the preservation of epis. 

We note, additionally, that the local right adjoint property ensures that stable functors preserve \emph{wide pullbacks}, connecting with yet another presentation of stable functions in Berry's theory, as Scott continuous functions preserving bounded meets. 

\newcommand{\Bool}{\mathbf{Bool}}
\newcommand{\tru}{\mathbf{t}}
\newcommand{\fal}{\mathbf{f}}

\begin{exa} \label{ex:parallelor}
	In a cartesian closed and extensive category, such as a topos, the finite
	product and finite coproduct functors are stable.  In connection to this, we
	discuss the prototypical non-stable, and hence non-sequential, function from
	Berry's stable domain theory~\cite{BerryStable}.  To this end, let $S$ be the
	Sierpinski space $(0\subset1)$ and, for $\mathrm{Bool} = \setof{\fal,\tru}$,
	consider the \mbox{\em parallel-or} function
	$\textit{por}:S^\mathrm{Bool}\times S^\mathrm{Bool}\to S^\mathrm{Bool}$
	defined as the least monotone function such that:
	$\textit{por}\big(\setof{\fal},\setof{\fal}\big)=\setof{\fal}$ and 
	$\textit{por}\big(\setof{\tru},\setof{\,}\big)
	=\textit{por}\big(\setof{\,},\setof{\tru}\big)
	=\setof{\tru}$. 

	Parallel-or is not realizable as a stable functor at the categorical level in
	the strong sense that there is no stable functor
	$F:\Set^{\mathrm{Bool}}\times\Set^{\mathrm{Bool}}\to\Set^{\mathrm{Bool}}$ 
	such that $F\big(0,0)=0$ and 
	$F(\yon(\tru),0\big) = F(0,\yon(\tru)) = F(\yon(\tru),\yon(\tru)) 
	= \yon(\tru)$. 
	Indeed, such functors do not preserve the pullback
	\[
	\begin{tikzcd}[row sep = 0em, column sep = 2em]
		& \big(\yon(\tru),\yon(\tru)\big) & \\
		\big(\yon(\tru),0\big) \arrow{ur}{} & & \big(0,\yon(\tru)\big) \arrow{ul}{} \\ 
		& \big(0,0\big)\arrow{ur}{} \arrow{ul}{} & 
	\end{tikzcd}
	\]
	and thus induce local functors
	$F/(\yon(\tru),\yon(\tru))
	: \Set^{\mathrm{Bool}}\times\Set^{\mathrm{Bool}}/(\yon(\tru),\yon(\tru))
	\to
	\Set^{\mathrm{Bool}}/\yon(\tru)$
	that fail to be right adjoints.
	
	However, the generalization from domains to categories allows for an
	intensional quantitative interpretation of parallel or.  Indeed, for 
	$K:\Set\to S$ the \emph{collapse} functor mapping a set to $0$ if it is
	empty and to $1$ otherwise, we have a stable functor 
	\[ 
	P(X,Y) = (X_\fal\times Y_\fal,X_\tru + Y_\tru)
	\]
	lifting 
	$\textit{por}$ as follows: 
	\[\begin{tikzcd} 
		\ar[d, "K^{\mathrm{Bool}}\times K^{\mathrm{Bool}}", swap]
		\Set^{\mathrm{Bool}} \times \Set^{\mathrm{Bool}} \ar[r, "P"] & 
		\Set^{\mathrm{Bool}} \ar[d, "K^\mathrm{Bool}"]
		\\
		S^{\mathrm{Bool}} \times S^{\mathrm{Bool}} \ar[r, "\textit{por}", swap] &
		S^{\mathrm{Bool}}
	\end{tikzcd}\]
\end{exa}

For kits $\kitstr{\gpd1}$ and $\kitstr{\gpd2}$ and a local right adjoint functor $T : \StPSh(\kitstr{\gpd1}) \to \StPSh(\kitstr{\gpd2})$, one may unfold for each $X \in \StPSh(\kitstr{\gpd1})$ the right adjoint property for $T_X$: for every $h : Y \to TX$, there exist $f : X_0 \to X$ and $g: Y \to TX_0$ such that $h$ factorizes as follows:
\begin{center}
	\begin{tikzpicture}[scale=0.9]
		\node (A) at (0, 0) {$Y$};
		\node (B) at (2, 0) {$T(X_0)$};
		\node (C) at (4, 0) {$T(X)$};
		
		\draw [->] (A) -- node [below] {$g$} (B);
		\draw [->] (B) to node [below] {$T(f)$} (C);
		\draw [->](A) to [bend left =30]  node [above] {$h$} (C);
	\end{tikzpicture}
\end{center}

Note the analogy with Definition~\ref{def:stabilitydomains}.  
The minimality requirement for $X_0$ is represented here by a universal property known as \emph{genericity}, which $g$ must additionally satisfy. The idea is to exhibit $T(f) \circ g$ as initial in the category of factorizations of $h$ of the form $T(-) \circ -$. These correspond to the normal forms studied by Girard and Hasegawa \cite{GirardNormal, HasegawaAnalytic}.
\begin{defi}
	\label{def:generic}
	Suppose that $T : \gpd1 \rightarrow \gpd2$ is a functor between categories $\gpd1$ and $\gpd2$. A morphism $g : b \rightarrow T(a)$ in $\gpd2$ is said to be \emph{generic} if for
	every commuting square as on the left below, there exists a unique morphism $k : a \rightarrow a'$ in $\gpd1$ making the two triangles on the right commute. 
	\[
	\begin{tikzcd}[column sep=0em, row sep=1em]
		& T(a'') & \\
		T(a) \arrow{ur}{T(h)} & & \arrow[swap]{ul}{T(f)} T(a') \\ 
		& b \arrow{ul}{g} \arrow[swap]{ur}{v} &
	\end{tikzcd}\qquad
	\begin{tikzcd}[column sep=0em, row sep=1em]
		& a'' & \\
		a \arrow{ur}{h} \arrow[dashed]{rr}{k} & & \arrow[swap]{ul}{f} a' \\[-7pt]
		T(a) \arrow{rr}{T(k)} &[+8pt] & T(a') \\ 
		& b \arrow{ul}{g} \arrow[swap]{ur}{v} &
	\end{tikzcd}
	\]
\end{defi}
We can then derive a standard result:
\begin{thm}
	A functor $T : \gpd1 \rightarrow \gpd2$ is a local right adjoint if and only if it \defn{admits generic factorizations}, in the sense that every morphism $h : b \to T(a)$ has a factorisation $h = T(f) \circ g$ with $g$ generic. 
\end{thm}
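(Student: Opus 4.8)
The plan is to translate both sides of the equivalence into statements about initial objects in the slices of the comma category $b \downarrow T$, whose objects are pairs $(a,v)$ with $v : b \to T(a)$ and whose morphisms $(a,v) \to (a',v')$ are maps $k : a \to a'$ in $\gpd1$ satisfying $T(k) \circ v = v'$. Two elementary translations drive the argument. First, for a fixed object $(a,h)$, the slice $(b \downarrow T)/(a,h)$ is isomorphic to the comma category $h \downarrow T_a$; hence, by Definition~\ref{def:localAdjoint}, $T$ is a local right adjoint exactly when every such slice $(b\downarrow T)/(a,h)$ has an initial object, namely the unit of the local adjunction. Second, unwinding Definition~\ref{def:generic} shows that $g : b \to T(a)$ is generic precisely when, for every morphism $p$ out of the object $(a,g)$ to some $(a'',w)$, the pair $((a,g),p)$ is initial in $(b\downarrow T)/(a'',w)$.

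For the direction from generic factorizations to local right adjoints, I would take a generic factorization $h = T(f) \circ g$ of a given $h : b \to T(a)$, with $g : b \to T(a_0)$ generic and $f : a_0 \to a$, and verify that $(a_0,g)$ together with the leg $f$ is initial in $(b\downarrow T)/(a,h)$. Indeed, any competing object is some $(a',v)$ with $f' : a' \to a$ and $T(f') \circ v = h$, so that $T(f)\circ g = T(f')\circ v$ is exactly a genericity square with apex $a$; genericity then supplies the unique comparison $k : a_0 \to a'$ with $f' \circ k = f$ and $T(k)\circ g = v$. This yields the required local left adjoints, and the direction is routine.

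The substance lies in the converse, where the factorization is the unit $g : b \to T(a_0)$ of the local adjunction at $h$, and the difficulty is to show that this unit is genuinely generic rather than merely initial in the single slice over $(a,h)$. I would isolate this as a purely formal lemma: if every slice of a category $\mathcal C$ has an initial object $\iota_Z : G_Z \to Z$, then each domain $G_Z$ is generic in the sense above. Given $p : G_Z \to W$ and a competitor $(Y,q)$ in $\mathcal C/W$, the initial object $\iota_W : G_W \to W$ furnishes comparison maps $s : G_W \to G_Z$ with $p \circ s = \iota_W$ and $t : G_W \to Y$ with $q \circ t = \iota_W$; the sought lift is $t \circ s^{-1}$, provided $s$ is invertible, and uniqueness then follows from the initiality of $\iota_W$.

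The main obstacle is exactly the invertibility of $s$. It cannot be extracted from a single slice, because $G_Z$ and $G_W$ are initial over the different bases $Z$ and $W$; the resolution is to play the two initialities against one another. Transporting along $\iota_Z \circ s : G_W \to Z$ produces, by initiality of $\iota_Z$ in $\mathcal C/Z$, a map $u : G_Z \to G_W$ with $\iota_Z \circ s \circ u = \iota_Z$. Then $s \circ u$ is an endomorphism of the initial object of $\mathcal C/Z$, hence the identity, and a short computation shows $u \circ s$ is an endomorphism of the initial object of $\mathcal C/W$, hence also the identity; thus $s$ is an isomorphism. Applying this lemma to $\mathcal C = b \downarrow T$, whose slices have initial objects by hypothesis, shows that the unit $g$ is generic, completing the generic factorization $h = T(f)\circ g$.
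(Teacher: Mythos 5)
Your argument is correct, and in fact it supplies a proof that the paper itself omits: the theorem is stated there as ``a standard result'' with no argument given (it is essentially Weber's characterization of local right adjoints via strict generic factorizations). Your reduction is the natural one: local right adjointness of $T$ is equivalent to every slice $(b\downarrow T)/(a,h) \cong h \downarrow T_a$ having an initial object, while genericity of $g$ is initiality of $(a,g)$ over \emph{every} object under it; the easy direction then follows because a generic factorization of $h$ is visibly initial over $(a,h)$. The only point a reader must check is the ``short computation'' you defer in the key lemma, and it does go through: from $p \circ s = \iota_W$ and $s \circ u = \id_{G_Z}$ (the latter because $s\circ u$ is an endomorphism of the initial object of $\mathcal{C}/Z$) one gets $\iota_W \circ u = p \circ s \circ u = p$, hence $\iota_W \circ u \circ s = p \circ s = \iota_W$, so $u \circ s$ is an endomorphism of the initial object of $\mathcal{C}/W$ and therefore the identity. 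This makes $s$ invertible, and existence and uniqueness of the comparison map follow as you say. The lemma that initial objects of slices are automatically ``generic'' in this sense is exactly the substance of the theorem, and isolating it as a purely formal statement about a category all of whose slices have initial objects is a clean way to package the converse direction.
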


Generic morphisms (elsewhere known as \emph{strict generic}~\cite{Weber2004generic} or \emph{candidates}~\cite{TaylorGroupoidsLinearLogic}) 
provide the elements in the construction of a stable species from a stable functor.  What we obtain is analogous to the \emph{trace} of a stable function in Berry's work~\cite{BerryStable} and closely related to Taylor's trace \cite{tracefactorization}. It is appropriate here to use the same terminology.
In the case of combinatorial or generalized species, the map $k$ in Definition \ref{def:generic} is not required to be unique and one works with the corresponding notion of \emph{generic elements} introduced by Joyal \cite{JoyalAnalytic} and generalized by Fiore \cite{Fioreanalytic} in order to construct a species from an analytic functor.

To show that stable species induce stable functors via the mapping $P \mapsto  \rsLan{P}$, we reduce the proof of admitting generic factorizations to only considering generic factorizations relative to representables:

\newcommand{\cat}[1]{\mathbb{#1}}
\begin{lem}\label{lem:ColimGeneric}
	Let $T:\gpd1\to\gpd2$ be a functor.  
	\begin{enumerate}[beginpenalty=99] 
		\item For diagrams $A: \cat{I}\to\gpd1$ and $B: \cat{I}\to\gpd2$, and a natural transformation $g: B \Rightarrow TA: \cat{I}\to \gpd2$, if $g_i: B_i\to T(A_i)$ is generic for all $i\in\cat{I}$ then so is the induced composite 
		\[\begin{tikzcd}[column sep = large]
			k 
			=
			\colim (B) \arrow[r,"\colim(g)"] & \colim (TA)
			\arrow[r,"{[T\colimin_i]_{i\in\cat{I}}}"] & T(\colim A).
		\end{tikzcd}\]
		\item For $b\in\gpd2$, a diagram $A: \cat{I}\to\gpd1$, and a cone $g: b \Rightarrow TA: \cat{I}\to \gpd2$, if $g_i: b \to T(A_i)$ is generic for all $i\in\cat{I}$ then so are the composites
		\[
		b \xlongto{g_i} T(A_i)\xto{T(\colimin_i)} T(\colim(A))
		\]
		for every $i\in\cat{I}$.
	\end{enumerate}
\end{lem}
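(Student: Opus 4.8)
The plan is to prove that genericity, which is a factorization-and-uniqueness property, is transported along colimits by solving the genericity problem separately on each vertex of the diagram and then gluing the solutions through the universal property of the colimit. I would prove part~(1) in detail and deduce part~(2) as the special case of a constant diagram.

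For part~(1), write $p_i \colon A_i \to \colim A$ and $q_i \colon B_i \to \colim B$ for the colimit injections and let $k \colon \colim B \to T(\colim A)$ be the composite in the statement. Unwinding the two universal properties (for $\colim(g)$ and for the comparison map $[T(p_i)]_i$) shows the single identity $k \circ q_i = T(p_i) \circ g_i$ for all $i$, which is the only property of $k$ I will use. Now suppose given a commuting square testing genericity of $k$, i.e.\ morphisms $v \colon \colim B \to T(a')$, $h \colon \colim A \to a''$ and $f \colon a' \to a''$ with $T(h) \circ k = T(f) \circ v$. Precomposing with $q_i$ and applying the identity above produces, for each $i$, a commuting square $T(h \circ p_i) \circ g_i = T(f) \circ (v \circ q_i)$ testing genericity of $g_i$; since $g_i$ is generic this yields a unique $\kappa_i \colon A_i \to a'$ with $f \circ \kappa_i = h \circ p_i$ and $T(\kappa_i) \circ g_i = v \circ q_i$.

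The heart of the argument is to show that $(\kappa_i)_i$ is a cocone on $A$. For $\phi \colon i \to j$ in $\cat{I}$ I would check that $\kappa_j \circ A\phi$ satisfies the two equations defining $\kappa_i$ and conclude $\kappa_j \circ A\phi = \kappa_i$ from the uniqueness clause in the genericity of $g_i$: the $\gpd1$-level equation $f \circ (\kappa_j \circ A\phi) = h \circ p_i$ uses the cocone identity $p_j \circ A\phi = p_i$, while the $\gpd2$-level equation $T(\kappa_j \circ A\phi) \circ g_i = v \circ q_i$ is precisely where the naturality of $g$ is needed, through $T(A\phi) \circ g_i = g_j \circ B\phi$ and $q_j \circ B\phi = q_i$. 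The cocone then induces $\kappa \colon \colim A \to a'$ with $\kappa \circ p_i = \kappa_i$, and the two required triangles $f \circ \kappa = h$ and $T(\kappa) \circ k = v$ follow by precomposing with the jointly epimorphic families $(p_i)_i$ and $(q_i)_i$ respectively; uniqueness of $\kappa$ is obtained by the same joint-epi argument. Keeping the base-level and $T$-image triangles straight and invoking naturality at the correct point is the main obstacle; the remainder is bookkeeping of universal properties.

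For part~(2) I would invoke part~(1) with $B$ the constant diagram at $b$ and $g \colon b \Rightarrow TA$ the given cone viewed as a natural transformation. For the connected index categories occurring in the intended applications — a group $G$ in the formation of a quotient $\extyon a G$, or a filtered category for the colimit presentations of Corollary~\ref{cor:RepresentationQuantitativePresheaves} — the colimit of the constant diagram is $b$ itself and every composite $T(\colimin_i) \circ g_i$ coincides with the induced map $k$, which part~(1) shows to be generic; hence each such composite is generic.
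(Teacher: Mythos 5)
Your proof of part~(1) is correct and follows essentially the same route as the paper's: test genericity of $k$ against the colimit injections, obtain the fillers $\kappa_i$ from genericity of each $g_i$, check the cocone identity $\kappa_j\circ A\phi=\kappa_i$ via the uniqueness clause (this is indeed where naturality of $g$ enters; the paper leaves that verification implicit), and glue. One small gloss: ``uniqueness of $\kappa$ by the same joint-epi argument'' hides a step --- joint epimorphicity of the $(p_i)_i$ only applies once you know a competitor $\kappa'$ satisfies $\kappa'\circ p_i=\kappa_i$, and that requires invoking the uniqueness clause of the genericity of $g_i$ once more, as the paper does explicitly.

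For part~(2) you genuinely diverge: the paper gives a direct argument (fix $i$, produce morphisms $e_j$ by genericity of each $g_j$, check they form a cocone, verify the two triangles for the induced map), whereas you reduce to part~(1) by taking $B$ constant at $b$. Two caveats. First, part~(1) as stated presupposes that $\colim(TA)$ exists in $\gpd2$, which part~(2) does not assume; your reduction survives only because your proof of~(1) uses nothing about $k$ beyond the identity $k\circ q_i=T(p_i)\circ g_i$, so you are really invoking a restatement of~(1) in which $k$ is the mediating map of the cocone $(T(p_i)\circ g_i)_i$ --- this should be made explicit. Second, the reduction needs $\cat{I}$ connected, so that $\colim B\cong b$ and all the composites $T(\colimin_i)\circ g_i$ coincide with $k$. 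You flag this honestly, and it is not a defect relative to the paper: the paper's own proof of~(2) tacitly needs the test square $T(k\circ\colimin_j)\circ g_j=T(f)\circ l$ to commute for \emph{every} $j$, which follows from the cone property of $g$ only within the connected component of the fixed $i$. Indeed the statement fails for a discrete two-object $\cat{I}$: take $T=\id$ on $\Set$ (for which generic means isomorphism) and $g_1=g_2=\id_b$; the coproduct injection $b\to b+b$ is not generic. So your route is sound on the cases actually used later (groups and filtered categories), at the cost of making a connectedness hypothesis explicit that the paper leaves implicit.
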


\begin{proof}
	\begin{enumerate}
		\item[]
		\item Let $(\colimin_i: B_i\to \colim(B))_{i\in\cat{I}}$ be a colimiting cocone. Consider $l : \colim(B) \rightarrow T(d)$ in $\gpd2$ and $h : \colim(A) \rightarrow c$, $f: d\rightarrow c$ in $\gpd1$ such that $T(h) \circ k = T(f) \circ l$; equivalently, $T(h) \circ T(\colimin_i) \circ g_i = T(f) \circ l \circ \colimin_i$ for all $i\in \cat{I}$. Since each $g_i$ is generic, there exists, for every $i \in \cat{I}$, a unique morphism $e_i : A_i \rightarrow d$ such that $h \circ \colimin_i = f \circ e_i$ and $T(e_i) \circ g_i = l \circ \colimin_i$.  
		\begin{center}
			\begin{tikzpicture}[thick]
				\node (A) at (0,0) {$T(\colim A)$};
				\node (B) at (3,0) {$T(c)$};
				\node (C) at (0,2.5) {$\colim(B)$};
				\node (D) at (3,2.5) {$T(d)$};
				\node (F) at (-3,2.5) {$B_i$};
				\node (G) at (-3,0) {$T(A_i)$};
				
				\draw [->] (C) -- node [left,pos=0.25] {$k$} (A);
				\draw [->] (A) -- node [below] {$T(h)$} (B);
				\draw [->] (C) -- node [above] {$l$} (D);
				\draw [->] (D) -- node [right] {$T(f)$} (B);
				\draw [->] (F) -- node [left] {$g_i$} (G);
				\draw [->] (F) -- node [above] {$\colimin_i$} (C);
				\draw [->] (G) -- node [below] {$T(\colimin_i)$} (A);
				\draw [->,dotted,/tikz/commutative diagrams/crossing over] (G) -- node [above,pos=0.2,xshift=-3pt] {$T(e_i)$} (D);
				\draw [->,dotted] (A) -- node [below right] {$T(e)$} (D);
			\end{tikzpicture}
		\end{center}
		One can show that $(e_i : A_i \rightarrow d)_{i \in\cat{I}}$ is a cocone of $A$.  So there exists a unique morphism $e:\colim(A) \to d$ such that $e\circ\colimin_i=e_i$; from which it follows that $f\circ e=h$. Moreover, $T(e)\circ k = l$ because $T(e)\circ k\circ\colimin_i = l\circ\colimin_i$ for all $i\in\cat{I}$.  This establishes existence. As for uniqueness, if $e':\colim(A)\to d$ has the required property of $k$ with respect to $h$, $f$, $l$ then each $e'\circ\colimin_i$ has the generic property of $g_i$ with respect to $h\circ\colimin_i$, $f$, $l\circ\colimin_i$ making $e'\circ\colimin_i=e_i$ and therefore $e'=e$.
		
		\item Let $i \in \cat{I}$ and assume that there exists $l :b \rightarrow T(d)$, $k : \colim(A) \rightarrow c$ and $f: d\rightarrow c$ such that $T(k) \circ T(\colimin_i) \circ g_i = T(f) \circ l$. Since $g_i$ is generic, there exists a unique $e_i : A_i \rightarrow d$ such that $ k \circ \colimin_i = f \circ e_i$ and $T(e_i) \circ g_i = l$. 
		\begin{center}
			\begin{tikzpicture}[thick]
				\node (A) at (0,0) {$T(\colim(A))$};
				\node (B) at (3,0) {$T(c)$};
				\node (C) at (0,3) {$b$};
				\node (D) at (3,3) {$T(d)$};
				\node (E) at (0,1.5) {$T(A_i)$};
				
				\draw [->] (C) -- node [left] {$g_i$} (E);
				\draw [->] (E) -- node [left] {$T(\colimin_i)$} (A);
				\draw [->] (A) -- node [below] {$T(k)$} (B);
				\draw [->] (C) -- node [above] {$l$} (D);
				\draw [->] (D) -- node [right] {$T(f)$} (B);
				\draw [->,dotted] (E) -- node [below right] {$T(e_i)$} (D);
			\end{tikzpicture}
		\end{center}
		
		One can check that $(e_i : A_i \rightarrow d)_{i \in \cat{I}}$ is a cocone of $A$. Hence, there exists a unique morphism $e : \colim(A) \rightarrow d$ such that $e \circ \colimin_i = e_i$ for all $i \in \cat{I}$. Since $(f \circ e_i : a_i \rightarrow c)_{i \in \cat{I}}$ is a cocone of $A$, there exists a unique $q : \colim(A) \rightarrow c$ such that $q \circ \colimin_i = f \circ e_i$ for all $i \in \cat{I}$.
		Both $k$ and $f \circ e$ verify this condition which implies that $k= f\circ e$. We also have that $T(e) \circ T(\colimin_i) \circ g_i = T(e_i) \circ g_i = l$.
		To show that $T(\colimin_i) \circ g_i$ is generic, it remains to show that $e : \colim(A) \rightarrow d$ is the unique morphism that makes the two diagrams below commute.  
		\begin{center}
			\begin{tikzpicture}[scale =0.6]
				\node (A) at (0,0) {$T(\colim(A))$};
				\node (A1) at (5,0) {$\colim(A)$};
				\node (B) at (7.5,0) {$c$};
				\node (C) at (0,2.5) {$b$};
				\node (D) at (3,2.5) {$T(d)$};
				\node (D1) at (7.5,2.5) {$d$};
				
				\draw [->] (C) -- node [left] {$T(\colimin_i)\circ g_i$} (A);
				\draw [->] (A1) -- node [below] {$k$} (B);
				\draw [->] (C) -- node [above] {$l$} (D);
				\draw [->] (D1) -- node [right] {$f$} (B);
				\draw [->] (A) -- node [below right] {$T(e)$} (D);
				\draw [->,dotted] (A1) -- node [above left] {$e$} (D1);
			\end{tikzpicture}
		\end{center}
		Assume that there exists $e': \colim(A) \rightarrow d$ such that $k= f\circ e'$ and $T(e') \circ T(\colimin_i) \circ g_i = l$. We then have that $e' \circ \colimin_i= e_i$ for all $i \in \cat{I}$ by genericity of the $g_i$'s which implies that $e' = e$ by the universal property of the colimit. \qedhere
	\end{enumerate}
\end{proof}

\begin{prop}
	\label{prop:ColimGeneric}
	Let $T : \gpd1 \rightarrow \gpd2$ be a functor.  For a diagram 
	$B: \cat{I}\to\gpd2$, if $T$ admits generic factorizations relative to $b_i$
	for all $i\in \cat{I}$ then it admits generic factorizations relative to
	$\colim(B)$ whenever this exists. 
\end{prop}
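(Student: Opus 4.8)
The plan is to reduce a generic factorization of an arbitrary $h : \colim(B) \to T(a)$ to the pointwise generic factorizations supplied by the hypothesis, and then glue them using Lemma~\ref{lem:ColimGeneric}(1). Write $\colimin_i : b_i \to \colim(B)$ for the colimiting cocone of $B$. For each $i \in \cat{I}$, the composite $h \circ \colimin_i : b_i \to T(a)$ admits, by hypothesis, a generic factorization $h \circ \colimin_i = T(f_i) \circ g_i$ with $g_i : b_i \to T(A_i)$ generic and $f_i : A_i \to a$.

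First I would promote the family $(A_i)_{i \in \cat{I}}$ to a diagram $A : \cat{I} \to \gpd1$ and the family $(g_i)$ to a natural transformation $g : B \Rightarrow TA$. For a morphism $\phi : i \to j$ of $\cat{I}$, naturality of the colimiting cocone gives $h \circ \colimin_i = h \circ \colimin_j \circ B(\phi)$, whence $T(f_i) \circ g_i = T(f_j) \circ (g_j \circ B(\phi))$. Viewing this equation as a commuting square witnessing the genericity of $g_i$, I obtain a unique morphism $A(\phi) : A_i \to A_j$ satisfying $f_j \circ A(\phi) = f_i$ and $T(A(\phi)) \circ g_i = g_j \circ B(\phi)$. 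The latter equation is precisely naturality of $g$ at $\phi$, while the former exhibits $(f_i)_{i}$ as a cocone over $A$. Functoriality of $A$ then follows from the uniqueness clause in the genericity of $g_i$: for identities and for composites $\psi \circ \phi$, the two competing morphisms each satisfy the defining genericity square, hence coincide.

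With the diagram $A$, the natural transformation $g$, and the cocone $(f_i)$ in hand, I would apply Lemma~\ref{lem:ColimGeneric}(1) to conclude that the induced composite $k : \colim(B) \to T(\colim A)$ is generic; by construction it is characterized by $k \circ \colimin_i = T(\colimin_i^A) \circ g_i$, where $\colimin_i^A : A_i \to \colim A$ is the colimiting cocone of $A$. The cocone $(f_i)$ induces a unique $\bar f : \colim A \to a$ with $\bar f \circ \colimin_i^A = f_i$. Precomposing with each $\colimin_i$ then yields $T(\bar f) \circ k \circ \colimin_i = T(\bar f \circ \colimin_i^A) \circ g_i = T(f_i) \circ g_i = h \circ \colimin_i$, so by the universal property of $\colim(B)$ we get $h = T(\bar f) \circ k$ with $k$ generic — a generic factorization relative to $\colim(B)$.

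The main obstacle is the assembly step: the chosen factorizations $g_i$ carry no \emph{a priori} compatibility, and it is the genericity of each $g_i$ — invoked both for the existence of the connecting maps $A(\phi)$ and, crucially, for their uniqueness — that forces them into a coherent diagram together with a natural transformation and a cocone. One should also note that the use of Lemma~\ref{lem:ColimGeneric}(1) presupposes that $\colim A$ (and $\colim TA$) exist; this is automatic in the intended application, where $\gpd1$ and $\gpd2$ are categories of stabilized presheaves and hence have the required colimits.
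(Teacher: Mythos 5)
Your proposal is correct and follows essentially the same route as the paper's proof: factor each $h\circ\colimin_i$ generically, use the genericity of the $g_i$ to assemble the objects $A_i$ into a diagram with a natural transformation $g : B \Rightarrow TA$ and a cocone $(f_i)$, glue via Lemma~\ref{lem:ColimGeneric}(1), and conclude with the mediating morphism out of $\colim(A)$. If anything, you are slightly more careful than the paper in spelling out the functoriality of $A$ and the implicit assumption that $\colim(A)$ exists.
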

\begin{proof}
	Assume that $\colim(B)$ exists in $\gpd2$ and let $f : \colim(B) \rightarrow T(a)$ be a morphism in $\gpd2$. By hypothesis, each $f_i := f \circ \colimin_i : b_i \rightarrow T(a)$ can be factored as:
	\begin{center} 
		\begin{tikzpicture}
			\node (A) at (0, 0) {$b_i$};
			\node (B) at (2, 0) {$T(c_i)$};
			\node (C) at (4, 0) {$T(a)$};
			
			\draw [->] (A) -- node [above] {$g_i$} (B);
			\draw [->] (B) to node [above] {$T(h_i)$}  (C);
		\end{tikzpicture}
	\end{center}
	where $g_i$ is generic. Define the diagram $C: \cat{I} \rightarrow \gpd1$ as follows: $C(i) := c_i$ for $i \in \cat{I}$ and for a morphism $e : i \rightarrow j \in \cat{I}$, $C(e)$ is the unique morphism $c_i \rightarrow c_j$ obtained from the genericity of $g_i$ in the square below: 
	\begin{center}
		\begin{tikzpicture}[scale =0.8]
			\node (A) at (0,0) {$T(c_i)$};
			\node (B) at (3,0) {$T(a)$};
			\node (C) at (0,2.5) {$b_i$};
			\node (D) at (3,2.5) {$T(c_j)$};
			
			\draw [->] (C) -- node [left] {$g_i$} (A);
			\draw [->] (A) -- node [below] {$T(h_i)$} (B);
			\draw [->] (C) -- node [above] {$g_j \circ B(e)$} (D);
			\draw [->] (D) -- node [right] {$T(h_j)$} (B);
		\end{tikzpicture}
	\end{center}
	By Lemma \ref{lem:ColimGeneric}$.1$, $[T(\colimin_i)]_{i\in \cat{I}} \circ \colim(g) : \colim(B) \to \colim(C)$ is generic. Let $[h] : \colim(C) \rightarrow a$ be the mediating morphism induced by the cocone $(h_i : c_i \rightarrow a)_{i \in \cat{I}}$. Both $f$ and $T([h]) \circ [T(\colimin_i)]_{i\in \cat{I}} \circ \colim(g)$ verify the conditions of the mediating morphism $\colim(B) \rightarrow T(a)$ induced by the cocone $(T(h_i) \circ T(\colimin_i) \circ g_i : b_i \rightarrow T(a))_{i \in \cat{I}}$ which implies that $f = T([h]) \circ [T(\colimin_i)]_{i\in \cat{I}} \circ \colim(g)$ as desired.
\end{proof}

Recall that a generalized species $P : \Sym \gpd1 \profto \gpd2$ induces an analytic functor $\Lan_{s} P: \PSh{\gpd1} \to \PSh{\gpd2}$ mapping a presheaf $X \in \PSh \gpd1$ and an object $b\in \gpd2$ to $(\Lan_{s} P)(X)(b)$ given by the following coend formula:
\[
\coend^{u=\seq{a_1, \dots, a_n}\in \Sym \gpd1} P(b,u) \times \PSh\gpd1(s(u), X) \cong \coend^{u=\seq{a_1, \dots, a_n}\in \Sym \gpd1} P(b,u) \times \prod_{i=1}^nX(a_i)
\]
which explicitly consists of equivalences classes of triples $u \in \Sym \gpd1$, $p\in P(b,u)$, $\bar{x} \in \PSh\gpd1(s(u), X)\cong X^\Sym(u)$ that we denote by $p\coendtensor[u]\overline{x}$. Thus, $p \coendtensor[u] \bar{x}=q \coendtensor[v] \bar{y}$ if and only if there exists $\alpha: u \to v$ in $\Sym \gpd1$ such that $\bar{y} \cdot \alpha = \bar{x}$ and $\alpha \cdot p = q$.

The Boolean kit $(\Sym \gpd1,\oc  \kit1)$ in Definition \ref{def:expBooleanKit} has an extensional characterization in terms of stabilized presheaf which we will use in the proofs:

\begin{lem}\label{lem:bangKitsPresheaves}
	For a kit $\kitstr{\gpd1}= (\gpd1, \kit1)$, the following equality holds for all $u \in \Sym \gpd1$:
	\[
	\oc \kit1(u) = \{ G \subgroup \Endo(u) \mid \exists X \in \StPSh(\kitstr{\gpd1}), G \in \Stab(X^\Sym(u)) \}^\dorth.
	\]
\end{lem}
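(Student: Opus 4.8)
The plan is to reduce the claimed identity of \emph{Boolean} kits to an elementary identity of \emph{sets} of morphisms. Write $\mathcal S(u) = \{ G \subgroup \Endo(u) \mid \exists X \in \StPSh(\kitstr{\gpd1}),\ G \in \Stab(X^\Sym(u)) \}$ for the family appearing on the right, so that the right-hand side is $\mathcal S(u)^\dorth$, while by Definition~\ref{def:expBooleanKit} the left-hand side is $\oc\kit1(u) = (\kit1^\Sym(u))^\dorth$. The key observation is that, for any family $\mathcal K$ of subgroups of a fixed group, one has $G \in \mathcal K^\orth$ if and only if $G \cap \bigcup\mathcal K = \{\id\}$ (since $\bigcup_{H}(G\cap H) = G \cap \bigcup\mathcal K$); hence $\mathcal K^\orth$, and so also $\mathcal K^\dorth$, depends only on the subset $\bigcup\mathcal K$. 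It therefore suffices to prove the single equality
\[
\bigcup \mathcal S(u) \ =\ \bigcup \kit1^\Sym(u)
\]
of subsets of $\Endo(u)$, whence $\mathcal S(u)^\orth = (\kit1^\Sym(u))^\orth$ and so $\mathcal S(u)^\dorth = (\kit1^\Sym(u))^\dorth = \oc\kit1(u)$ follow at once.

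Next I would make both unions explicit. As $\kit1^\Sym(u)$ is closed under subgroups (being defined by a condition on individual elements), $\alpha \in \bigcup\kit1^\Sym(u)$ iff $\cyclic\alpha \in \kit1^\Sym(u)$, i.e.\ iff $\bangloop{\alpha^k}{i} \in \bigcup\kit1(a_i)$ for all $k\in\N$ and all $i \in \ints n$, where $u = \seq{a_1,\dots,a_n}$. On the presheaf side I use $X^\Sym(u) \cong \prod_{i} X(a_i)$ with $\Endo(u)$ acting by reindexing and the presheaf action, together with the identity $\bangloop\alpha i = (\alpha^{o(i)})_i$: if $\alpha$ stabilizes $\bar x = (x_i)_i$ then, chasing the resulting equalities $x_{\sigma(i)}\cdot\alpha_i = x_i$ around the cycle of $i$, one obtains $\bangloop\alpha i \in \Stab(x_i)$. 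Applying this to every power $\alpha^k \in \Stab(\bar x)$ and using $X \in \StPSh(\kitstr{\gpd1})$, so that $\Stab(x_i) \in \kit1(a_i)$, gives $\bangloop{\alpha^k}{i} \in \bigcup\kit1(a_i)$ for all $k,i$. This proves $\bigcup\mathcal S(u) \subseteq \bigcup\kit1^\Sym(u)$.

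The converse inclusion is the main obstacle, and is where the Boolean hypothesis enters. Given $\alpha$ with $\bangloop{\alpha^k}{i} \in \bigcup\kit1(a_i)$ for all $k,i$, I must exhibit a single $X \in \StPSh(\kitstr{\gpd1})$ and an element of $X^\Sym(u)$ stabilized by $\alpha$. For each cycle $c$ of $\sigma$ with chosen representative $i_c$ and length $\ell = o(i_c)$, the powers of $\bangloop\alpha{i_c}$ are exactly the $\bangloop{\alpha^{\ell j}}{i_c}$, so $\cyclic{\bangloop\alpha{i_c}} \subseteq \bigcup\kit1(a_{i_c})$; since $\kit1$ is Boolean, Lemma~\ref{lem:characterisationDoubleOrth} yields $\cyclic{\bangloop\alpha{i_c}} \in \kit1(a_{i_c})$, and the quotient $\extyon{a_{i_c}}{\cyclic{\bangloop\alpha{i_c}}}$ is a stabilized presheaf. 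I then set $X = \coprod_c \extyon{a_{i_c}}{\cyclic{\bangloop\alpha{i_c}}}$, which is stabilized as a coproduct of stabilized presheaves (Corollary~\ref{cor:QuantitativePresheavesCreation}), and define $\bar x$ along each cycle by $x_{\sigma^r(i_c)} = [(\alpha_{\sigma^{r-1}(i_c)} \cdots \alpha_{i_c})^{-1}]$ in the $c$-summand. The cycle closes up precisely because $\bangloop\alpha{i_c} \in \cyclic{\bangloop\alpha{i_c}}$, so $\bar x$ is well defined and satisfies $\bar x \cdot \alpha = \bar x$, giving $\alpha \in \bigcup\mathcal S(u)$. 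The delicate points are fixing the exact form of the $\Endo(u)$-action on $X^\Sym(u)$ (to pin down the index conventions and the direction of the relation $\qeqrel[G]$) and checking that the representatives chosen on a cycle are consistent modulo $\cyclic{\bangloop\alpha{i_c}}$; both reduce to the single computation $\alpha_{\sigma^{\ell-1}(i_c)}\cdots\alpha_{i_c} = \bangloop\alpha{i_c}$.
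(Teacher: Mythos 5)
The paper states Lemma~\ref{lem:bangKitsPresheaves} without proof, so there is no argument of the authors' to measure yours against; judged on its own terms, your proof is correct and complete. The reduction to the single set equality $\bigcup\mathcal S(u)=\bigcup\kit1^\Sym(u)$ is sound, since $\mathcal K^\orth=\{G\mid G\cap\bigcup\mathcal K\subseteq\{\id\}\}$ indeed makes $\mathcal K^\dorth$ depend only on $\bigcup\mathcal K$. The forward inclusion is as you say (in fact you get the stronger statement $\mathcal S(u)\subseteq\kit1^\Sym(u)$ directly, since the cycle-chasing argument applies to \emph{every} element of $\Stab(\bar x)$, not just the powers of a fixed $\alpha$), and I checked that your construction in the converse direction closes up: with $x_{\sigma^r(i_c)}=[(\alpha_{\sigma^{r-1}(i_c)}\cdots\alpha_{i_c})^{-1}]$ the only nontrivial equality is at the last step of each cycle, where it reduces precisely to $\bangloop{\alpha}{i_c}\in G_c$.

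One substantive remark: the Boolean hypothesis you flag as essential in the converse direction is not actually needed. The hypothesis $\bangloop{\alpha}{i_c}\in\bigcup\kit1(a_{i_c})$ already hands you a subgroup $H_c\in\kit1(a_{i_c})$ containing $\bangloop{\alpha}{i_c}$; taking $X=\coprod_c\extyon{a_{i_c}}{H_c}$ instead of using the cyclic subgroups makes the same element $\bar x$ well defined and $\alpha$-invariant, with no appeal to Lemma~\ref{lem:characterisationDoubleOrth}. This matters insofar as the lemma is stated for an arbitrary kit $(\gpd1,\kit1)$, whereas your argument as written only covers Boolean kits; in practice the point is moot because $\oc\kit1$ is only defined for Boolean kits (Definition~\ref{def:expBooleanKit}), but the simpler choice of witnessing subgroup both removes the restriction and shortens the proof.
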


\begin{lem}\label{lem:FreenessStable}
	For Boolean kits $\kitstr{\gpd1}, \kitstr{\gpd2}$, a presheaf $X\in\StPSh(\kitstr{\gpd1})$ and a stable species $P:\oc\kitstr{\gpd1} \profto \kitstr{\gpd2}$, if two elements $p \coendtensor[u] \bar{x}$ and $ q \coendtensor[v] \bar{y}$ in $(\Lan_{s} P)(X)(b)$ are equal for some $b \in \gpd2$, then there exists a unique $\alpha : u \to v$ in $\Sym \gpd1$ such that $\bar{y} \cdot \alpha = \bar{x}$ and $\alpha \cdot p = q$. 
\end{lem}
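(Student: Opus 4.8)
The plan is to separate existence from uniqueness. Existence of \emph{some} $\alpha : u \to v$ with $\bar y\cdot\alpha = \bar x$ and $\alpha\cdot p = q$ is immediate: it is exactly the forward direction of the explicit coend description of $(\Lan_s P)(X)(b)$ recalled just before the statement. All the content is therefore in uniqueness, and the natural strategy is to show that any two such witnesses coincide by exhibiting their ``difference'' as a simultaneously permitted and forbidden stabilizer.

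Concretely, suppose $\alpha,\alpha' : u \to v$ both satisfy $\bar y\cdot\alpha = \bar x = \bar y\cdot\alpha'$ and $\alpha\cdot p = q = \alpha'\cdot p$. I would set $\beta := \alpha^{-1}\circ\alpha' \in \Endo(u)$, so that proving $\beta = \id$ is the same as proving $\alpha = \alpha'$. The first step is to record that $\beta$ fixes \emph{both} pieces of data. From $\alpha\cdot p = \alpha'\cdot p$, applying $\alpha^{-1}\cdot(-)$ and using functoriality of the left action gives $\beta\cdot p = p$. From $\bar x = \bar y\cdot\alpha'$ together with $\alpha\circ\beta = \alpha'$ one computes $\bar x\cdot\beta = (\bar y\cdot\alpha)\cdot\beta = \bar y\cdot(\alpha\circ\beta) = \bar y\cdot\alpha' = \bar x$, so $\beta\in\Stab(\bar x)$.

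Next I would feed these two facts into the kit machinery. Since $X\in\StPSh(\kitstr{\gpd1})$ and $\bar x\in X^\Sym(u)$, the subgroup $\Stab(\bar x)$ lies in the generating family $\{\,G \mid \exists X\in\StPSh(\kitstr{\gpd1}),\ G\in\Stab(X^\Sym(u))\,\}$ of Lemma~\ref{lem:bangKitsPresheaves}; as that family is contained in its own double orthogonal (Lemma~\ref{lem:BasicOrthogonalityProperties}), the lemma gives $\Stab(\bar x)\in\oc\kit1(u)$ and hence $\beta\in\bigcup\oc\kit1(u)$. On the other hand, we have $\beta\cdot p\cdot\id[b] = \beta\cdot p = p$, and $\id[b]\in\bigcup\kit2^\orth(b)$ because every Boolean kit contains the trivial subgroup (Lemma~\ref{lem:kitDownClosed}). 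The second implication in the definition of the stabilized profunctor $P : \oc\kitstr{\gpd1}\profto\kitstr{\gpd2}$ (Definition~\ref{def:sprofunctor}), applied with $\delta = \id[b]$, then yields $\beta\in\bigcup\oc\kit1^\orth(u)$.

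To finish, I would promote both memberships to the cyclic subgroup $\cyclic\beta$ using down-closure of Boolean kits (Lemma~\ref{lem:kitDownClosed}), obtaining $\cyclic\beta\in\oc\kit1(u)$ and $\cyclic\beta\in\oc\kit1^\orth(u)$; the defining orthogonality between $\oc\kit1$ and $\oc\kit1^\orth$ then forces $\cyclic\beta\cap\cyclic\beta = \{\id\}$, so $\cyclic\beta = \{\id\}$ and $\beta = \id$, i.e.\ $\alpha=\alpha'$. The step I expect to be the main obstacle is the symmetric use of the stabilized-profunctor condition: one must correctly identify $\Stab(\bar x)$ as a \emph{permitted} stabilizer for $\oc\kit1$ through Lemma~\ref{lem:bangKitsPresheaves}, and then use the trivial endomorphism $\delta=\id[b]$ on the codomain to simultaneously land $\beta$ in $\oc\kit1^\orth$; getting both sides of the orthogonality to bear on the same $\beta$ is the crux, while the remaining manipulations of the groupoid actions are routine bookkeeping.
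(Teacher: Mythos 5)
Your proposal is correct and follows essentially the same route as the paper: existence from the pointwise coend description, then uniqueness by showing the difference endomorphism of $u$ stabilizes both $p$ (hence lies in $\bigcup(\oc\kit1)^\orth(u)$ via the stabilized-profunctor condition with $\id[b]\in\bigcup\kit2^\orth(b)$) and $\bar x$ (hence lies in $\bigcup\oc\kit1(u)$ via Lemma~\ref{lem:bangKitsPresheaves}), forcing it to be the identity by orthogonality. The only cosmetic differences are your choice of $\alpha^{-1}\alpha'$ versus the paper's $\beta^{-1}\alpha$ and your small detour through the cyclic subgroup $\cyclic\beta$ where the paper intersects the two witnessing subgroups directly.
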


\begin{proof}
	Assume that for $p \in P(b,u)$, $q\in P(b,v)$, $\bar{x} \in X^\Sym(u)$ and $\bar{y}\in X^\Sym(v)$, we have the equality $p \coendtensor[u] \bar{x}= q \coendtensor[v] \bar{y}$. The existence of $\alpha:u\to v$ in $\Sym \gpd1$ such that $\bar{y} \cdot \alpha = \bar{x}$ and $\alpha \cdot p = q$ follows from the pointwise definition of the coend. For uniqueness, assume that there exists $\beta : u \to v$ such that $\bar{y} \cdot \beta = \bar{x}$ and $\beta \cdot p = q$ as well.
	The equality $\alpha \cdot p =\beta\cdot p$ implies that $\beta^{-1} \alpha \cdot p = p$ and since $\id[b]$ is in $\bigcup \kit2^\orth(b)$, we have $\beta^{-1} \alpha \in \bigcup (\oc \kit1)^\orth(u)$ as $P$ is a stable species.
	On the other hand, $\bar{y} \cdot \alpha = \bar{y} \cdot \beta$ implies that $\beta^{-1}\alpha$ is in $\Stab(X^\Sym(u))$ which entails that $\beta^{-1}\alpha\in \bigcup (\oc \kit1)(u)$ by Lemma \ref{lem:bangKitsPresheaves}. Hence, $\beta^{-1}\alpha =\id$ as desired.
\end{proof}

\begin{prop}\label{prop:rsLanStable}
	For a stable species $P: \oc\kitstr{\gpd1} \profto \kitstr{\gpd2}$, the
	restricted left Kan extension 
	$\rsLan{P}=(\Lan_{s} P)\incl[\kit1]: \StPSh(\kitstr{\gpd1}) \to \PSh{\gpd2}$ 
	factors through the inclusion
	$\StPSh(\kitstr{\gpd2} ) \hookrightarrow \PSh(\gpd2)$.  Furthermore, the
	resulting functor $\rsLan P:\StPSh(\kitstr{\gpd1})\to \StPSh(\kitstr{\gpd2})$
	is stable.
\end{prop}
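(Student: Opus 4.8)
The plan is to establish the two assertions separately: first that $\rsLan{P}$ lands in $\StPSh(\kitstr{\gpd2})$, and then that the corestricted functor is finitary, epi-preserving, and a local right adjoint, so that it is stable in the sense of Definition~\ref{def:stablefunctor}. For the factorization through $\StPSh(\kitstr{\gpd2})$, I would fix $X \in \StPSh(\kitstr{\gpd1})$, an object $b \in \gpd2$, and an element $t = p\coendtensor[u]\bar{x} \in (\Lan_{s}P)(X)(b)$, and bound $\Stab(t)$. For $\beta \in \Stab(t)$ the equality $(p\cdot\beta)\coendtensor[u]\bar{x} = p\coendtensor[u]\bar{x}$ yields, by the pointwise description of the coend, some $\alpha \in \Endo(u)$ with $\bar{x}\cdot\alpha = \bar{x}$ and $\alpha\cdot p\cdot\beta = p$. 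Thus $\alpha \in \Stab(\bar{x})$, and since $\bar{x} \in X^\Sym(u)$ with $X$ stabilized, Lemma~\ref{lem:bangKitsPresheaves} gives $\Stab(\bar{x}) \in \oc\kit1(u)$, so $\alpha \in \bigcup\oc\kit1(u)$. As $P$ is a stabilized profunctor $\oc\kitstr{\gpd1}\profto\kitstr{\gpd2}$, Definition~\ref{def:sprofunctor} forces $\beta \in \bigcup\kit2(b)$. Hence $\Stab(t) \subseteq \bigcup\kit2(b)$, and since $\kit2$ is Boolean, Lemma~\ref{lem:characterisationDoubleOrth} gives $\Stab(t) \in \kit2(b)$.

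For finitariness and epi-preservation I would argue extensionally. The functor $\Lan_{s}P$ is finitary because the defining coend commutes with filtered colimits and each factor $\prod_{i} X(a_i)$ is a finite product, hence preserves them; since both inclusions $\StPSh(\kitstr{\gpd1})\hookrightarrow\PSh(\gpd1)$ and $\StPSh(\kitstr{\gpd2})\hookrightarrow\PSh(\gpd2)$ create filtered colimits by Corollary~\ref{cor:QuantitativePresheavesCreation}, the corestriction $\rsLan{P}$ is finitary. For epimorphisms, recall from the proof of Corollary~\ref{cor:QuantitativePresheavesCreation} that epis in $\StPSh$ are exactly the pointwise surjections; a pointwise-surjective $e : X \to Y$ induces a pointwise-surjective $e^\Sym_u : X^\Sym(u) \to Y^\Sym(u)$ (a finite product of surjections), and since the coend is a quotient of $\coprod_u P(b,u)\times X^\Sym(u)$, the induced map stays surjective. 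Thus $\rsLan{P}$ preserves epis.

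The substantive part is that $\rsLan{P}$ is a local right adjoint, which by the generic-factorisation characterisation of local right adjoints reduces to producing, for every $h : Z \to \rsLan{P}(X)$ in $\StPSh(\kitstr{\gpd2})$, a factorization $h = \rsLan{P}(f)\circ g$ with $g$ generic. Using that every stabilized presheaf over $\kitstr{\gpd2}$ is a filtered colimit of finite coproducts of quotients $\extyon{b}{G}$ (Corollary~\ref{cor:RepresentationQuantitativePresheaves}), and that each $\extyon{b}{G}$ is the colimit of the $G$-indexed diagram on $\yon(b)$, three successive applications of Proposition~\ref{prop:ColimGeneric} let me reduce to generic factorizations relative to representables $\yon(b)$. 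For $h : \yon(b)\to\rsLan{P}(X)$ corresponding under Yoneda to $t = p\coendtensor[u]\bar{x}$, I take $X_0 = s(u) = \coprod_{i}\yon(a_i)$, which lies in $\StPSh(\kitstr{\gpd1})$ since Boolean kits contain the trivial subgroups and $\StPSh$ is closed under coproducts; I set $f := \bar{x} : s(u)\to X$ and let $g$ be classified by $p\coendtensor[u]\id[s(u)] \in \rsLan{P}(s(u))(b)$. Then $\rsLan{P}(f)\circ g$ is classified by $p\coendtensor[u]\bar{x} = t$, so $\rsLan{P}(f)\circ g = h$.

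The hard step, and the crux of the whole correspondence, is verifying that this $g$ is generic, and it is exactly here that the stabilizer constraint, through Lemma~\ref{lem:FreenessStable}, is indispensable. Given a commuting square as in Definition~\ref{def:generic} with legs $h_1 : s(u)\to X_2$, $f_1 : X_1\to X_2$, and $v = q\coendtensor[w]\bar{z}$, the identity $\rsLan{P}(h_1)\circ g = \rsLan{P}(f_1)\circ v$ unfolds to the coend equality $p\coendtensor[u]h_1 = q\coendtensor[w](f_1\circ\bar{z})$ in $\rsLan{P}(X_2)(b)$; since $X_2$ is stabilized, Lemma~\ref{lem:FreenessStable} supplies a \emph{unique} $\gamma : u\to w$ with $(f_1\circ\bar{z})\cdot\gamma = h_1$ and $\gamma\cdot p = q$. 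Setting $k := \bar{z}\cdot\gamma : s(u)\to X_1$ gives $f_1\circ k = h_1$ directly, and pushing $\gamma$ across the coend gives $\rsLan{P}(k)\circ g$ classified by $p\coendtensor[u](\bar{z}\cdot\gamma) = (\gamma\cdot p)\coendtensor[w]\bar{z} = v$. For uniqueness of $k$, any competitor $k'$ with $\rsLan{P}(k')\circ g = v$ yields, by Lemma~\ref{lem:FreenessStable} over $X_1$, a unique $\gamma'$ with $\bar{z}\cdot\gamma' = k'$ and $\gamma'\cdot p = q$; then $f_1\circ k' = h_1$ shows $\gamma'$ solves the same two equations that determined $\gamma$ uniquely over $X_2$, whence $\gamma' = \gamma$ and $k' = k$. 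This uniqueness is precisely the feature that fails for arbitrary generalized species and is recovered here by stabilization; admitting generic factorizations makes $\rsLan{P}$ a local right adjoint, completing the proof that it is stable.
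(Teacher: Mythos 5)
Your proof is correct and follows essentially the same route as the paper's: the same stabilizer computation for the factorization through $\StPSh(\kitstr{\gpd2})$, the same reduction via Proposition~\ref{prop:ColimGeneric} to representables, the same candidate generic morphism $p\coendtensor[u]\id$, and the same use of Lemma~\ref{lem:FreenessStable} for both existence and uniqueness in the genericity square. The only divergence is cosmetic: you verify finitariness and epi-preservation by hand (filtered colimits commuting with the coend, pointwise surjectivity) where the paper cites the analytic-functor results of \cite{Fioreanalytic}, and you spell out the colimit decomposition behind the reduction to representables slightly more explicitly.
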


\begin{proof}
	To prove that $(\Lan_{s} P)\incl[\kit1]$ restricts to a functor $\StPSh(\kitstr{\gpd1}) \to \StPSh(\kitstr{\gpd2})$, we show that for all $X \in \StPSh(\kitstr{\gpd1})$ and $b \in \gpd2$, the stabilizer of every element in $\Lan_{s} P  X(b) = \int^{u\in \Sym \gpd1} P(b,u) \times X^\Sym(u)$ is in $\kit2(b)$. Let $t = p \coendtensor[u] \bar{x}$ in $\Lan_{s} P  X(b)$. For every $\beta: b\to b$ in $\Stab(t)$, we have $(p \cdot \beta) \coendtensor[u] \bar{x} = p \coendtensor[u] \bar{x}$, i.e. there exists $\alpha : u \to u$ in $\Sym \gpd1$ such that $\alpha \cdot p \cdot \beta = p$ and $\bar{x} \cdot \alpha = \bar{x}$. Since $X$ is in $\StPSh(\kitstr{\gpd1})$ and $\alpha$ is in $\Stab(X^\Sym(u))$, we must have $\alpha \in \bigcup \oc \kit1(u)$ and since $P$ is a stable species, it implies that $\beta \in \bigcup \kit2(b)$ as desired. It remains to show that the functor $\rsLan{P} : \StPSh(\kitstr{\gpd1}) \to \StPSh(\kitstr{\gpd2})$ is stable.
	
	The functor $\Lan_{s} P:\PSh \gpd1 \to \PSh \gpd2$ is analytic and therefore is finitary and preserves quasi-pullbacks (and hence epimorphisms) \cite{Fioreanalytic}.
	Since the embeddings $\incl[\kit1] : \StPSh\kitstr{\gpd1} \to \PSh\gpd1$ and $\incl[\kit2]: \StPSh\kitstr{\gpd2} \to \PSh\gpd2$ create epimorphisms and filtered colimits, $\rsLan{P}$ is epi-preserving and finitary.
	
	It remains to show that $\rsLan{P}$ admits generic factorizations. By Proposition \ref{prop:ColimGeneric}, it suffices to show that $\rsLan{P}$ admits generic factorizations relative to representables. Consider a morphism $\yon b \to \rsLan{P}(X)$ in $\StPSh(\kitstr{\gpd2})$, it is of the form $p \coendtensor[u]\bar{x}$ with $p \in P(b,u)$ and $\bar{x}: su \to X $ for some $u\in \Sym \gpd1$. We show that
	\begin{center}
		\begin{tikzpicture}[thick]
			\node (A) at (0, 0) {$ \yon b$};
			\node (B) at (2.5, 0) {$\rsLan{P} (su)$};
			\node (C) at (5, 0) {$\rsLan{P}(X)$};
			
			\draw [->] (A) -- node [above] {$p \coendtensor[u] \id$} (B);
			\draw [->] (B) to node [above] {$\rsLan{P}(\bar{x})$}  (C);
		\end{tikzpicture}
	\end{center}
	is a generic factorization for $p \coendtensor[u]\bar{x}$. The equality $p \coendtensor[u]\bar{x} = \rsLan{P}(\bar{x})(p \coendtensor[u] \id)$ is immediate, it remains to prove that $p \coendtensor[a] \id$ is generic.
	
	Assume that there exist $Y,Z \in \StPSh(\kitstr{\gpd1})$ and morphisms $p' \coendtensor[v] \bar{y} : \yon b \to \rsLan{P}(Y)$ and $f: Y \to Z$ and $\bar{z} : su \to Z$ such that the following diagram commutes in $\StPSh(\kitstr{\gpd2})$:

	\begin{center}
		\begin{tikzpicture}[scale=1, thick]
			
			\node (A) at (0, 2) {$\yon b$};
			\node (B) at (3,2) {$\rsLan{P} (Y)$};
			\node (C) at (0,0) {$\rsLan{P} (su)$};
			\node (D) at (3,0) {$\rsLan{P} (Z)$};
			
			\draw [->] (A) to node [above] {$p' \coendtensor[v] \bar{y}$} (B);
			\draw [->] (A) to node [left] {$p \coendtensor[u] \id$} (C);
			\draw [->] (B) to node [right] {$\rsLan{P}(f)$} (D);
			\draw [->] (C) to node [below] {$\rsLan{P}(\bar{z})$} (D);
			
		\end{tikzpicture}
	\end{center}
	
	The equality $\rsLan{P}(f) (p' \coendtensor[v] \bar{y}) = p \coendtensor[u] \bar{z}$ is equivalent to $p' \coendtensor[v] f \bar{y} = p \coendtensor[u] \bar{z}$ which implies by Lemma \ref{lem:FreenessStable} that there exists a unique $\alpha : u \to v$ in $\Sym \gpd1$ such that $(f \bar{y}) \cdot \alpha = \bar{z}$ and $\alpha \cdot p = p'$. We then have $\rsLan{P}(\bar{y} \cdot \alpha) (p \coendtensor[u] \id) = p \coendtensor[u] \bar{y} \cdot \alpha = \alpha \cdot p \coendtensor[v]\bar{y} = p' \coendtensor[v] \bar{y}$. It remains to show uniqueness, assume that there exists $\bar{t} : su \to Y$ such that $f \bar{t} = \bar{z}$ and $\rsLan{P}(\bar{t})(p \coendtensor[u] \id) = p' \coendtensor[v] \bar{y}$ i.e. $p \coendtensor[u] \bar{t} = p' \coendtensor[v] \bar{y}$. 
	\begin{center}
		\begin{tikzpicture}[scale=1, thick]
			
			\node (A) at (0, 2) {$\yon b$};
			\node (B) at (3,2) {$\rsLan{P} (Y)$};
			\node (C) at (0,0) {$\rsLan{P} (su)$};
			\node (D) at (3,0) {$\rsLan{P} (Z)$};
			
			\draw [->] (A) to node [above] {$p' \coendtensor[v] \bar{y}$} (B);
			\draw [->] (A) to node [left] {$p \coendtensor[u] \id$} (C);
			\draw [->] (B) to node [right] {$\rsLan{P}(f)$} (D);
			\draw [->] (C) to node [below] {$\rsLan{P}(\bar{z})$} (D);
			\draw [->, dotted] (C) to node [fill=white] {$\rsLan{P}(\bar{t})$} (B);
		\end{tikzpicture}
	\end{center}
	It implies that there exists $\beta: u \to v$ such that $\bar{y} \cdot \beta= \bar{t}$ and $\beta \cdot p = p'$. Hence, $(f \bar{y}) \cdot \alpha = f\bar{t} = \bar{z}$ which implies that $\alpha=\beta$ so that $\bar{t} = \bar{y} \cdot \alpha$ as desired.
\end{proof}

Analogously to the trace operator for stable functions, we now define a trace functor providing an inverse to the operation $P \mapsto \widetilde{P}$ by using the generic factorization property.

\begin{defi}
Consider Boolean kits $\kitstr{\gpd1}$ and $\kitstr{\gpd2}$ and a stable functor $T : \StPSh(\kitstr{\gpd1}) \to \StPSh(\kitstr{\gpd2})$. The \defn{trace} of $T$ is the profunctor $\trace(T) : \gpd2^\op \times \Sym \gpd1 \to \Set$ with object mapping   
\[
\trace(T)(b, \seq{a_i}_{i \in [n]}) = \left\{ g 
 : \yon b \to T\left(\textstyle \coprod_{i} \yon a_i\right) \mid 
 g \text{ is generic} \right\}
\]
and functorial action given by composition.
\end{defi}

The trace of a stable functor is a stable species of structures: 
\begin{prop}\label{prop:StableQProf}
For Boolean kits $\kitstr{\gpd1}$ and $\kitstr{\gpd2}$ and a stable functor $T : \StPSh(\kitstr{\gpd1}) \to \StPSh(\kitstr{\gpd2})$, $\trace(T) \in \SEsp(\kitstr{\gpd1}, \kitstr{\gpd2})$.  
 \end{prop}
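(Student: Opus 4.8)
The plan is to first check that $\trace(T)$ is a well-defined profunctor $\Sym\gpd1\profto\gpd2$, and then to verify the two stabilization conditions of Definition~\ref{def:sprofunctor} for the kits $\oc\kit1$ and $\kit2$; since $\SEsp(\kitstr{\gpd1},\kitstr{\gpd2})=\SProf_\oc(\kitstr{\gpd1},\kitstr{\gpd2})$, this is exactly what must be shown. For well-definedness, the object assignment is given and the two actions are by composition: a morphism $\alpha:u\to u'$ in $\Sym\gpd1$ acts by $g\mapsto T(s\alpha)\circ g$, and a morphism $\beta:b'\to b$ in $\gpd2$ acts by $g\mapsto g\circ\yon\beta$. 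Since $\Sym\gpd1$ and $\gpd2$ are groupoids, $s\alpha$ and $\yon\beta$ are isomorphisms, so I would record the elementary fact that genericity (Definition~\ref{def:generic}) is preserved under postcomposition with $T$ of an isomorphism and under precomposition with an isomorphism; functoriality and the commutation of the two actions are then immediate, so $\trace(T)$ is a functor $\gpd2^\op\times\Sym\gpd1\to\Set$.

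For stabilization, fix a generic $g:\yon b\to T(su)$ in $\trace(T)(b,u)$ with $u=\seq{a_1,\dots,a_n}$, and suppose $\alpha\in\Endo(u)$, $\beta\in\Endo(b)$ satisfy $T(s\alpha)\circ g\circ\yon\beta=g$; raising to the $k$-th power gives $T(s\alpha^k)\circ g\circ\yon{\beta^k}=g$. The engine of the argument is the following push-forward observation. Suppose $\delta=\alpha^k$ stabilizes some test map $\bar x:su\to X$ with $X\in\StPSh(\kitstr{\gpd1})$, i.e. $\bar x\circ s\delta=\bar x$. Postcomposing $T(s\delta)\circ g\circ\yon{\beta^k}=g$ with $T(\bar x)$ and using $\bar x\circ s\delta=\bar x$ yields $T(\bar x)\circ g\circ\yon{\beta^k}=T(\bar x)\circ g$, so $\beta^k$ stabilizes the element of $TX(b)$ named by $T(\bar x)\circ g$. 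Since $T$ is a stable functor, $TX\in\StPSh(\kitstr{\gpd2})$, and hence $\beta^k\in\bigcup\kit2(b)$ (the special case $X=su$, $\bar x=\id$, using $T(su)\in\StPSh(\kitstr{\gpd2})$, covers the situation $\alpha^k=\id$). I would phrase the relevant membership tests through Lemma~\ref{lem:bangKitsPresheaves}, which identifies $\oc\kit1(u)$ as the double orthogonal of the family of stabilizers $\Stab(\bar x)$ of elements $\bar x\in X^\Sym(u)$ with $X\in\StPSh(\kitstr{\gpd1})$.

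For the first implication, assume $\alpha\in\bigcup\oc\kit1(u)$. By the saturation characterization (Lemma~\ref{lem:characterisationDoubleOrth}) together with Lemma~\ref{lem:bangKitsPresheaves}, for every $n$ either $\alpha^n=\id$ or some nontrivial power $\alpha^{nm}$ lies in a stabilizer $\Stab(\bar x)$ as above; in either case the push-forward observation places $\beta^n$, respectively $\beta^{nm}$, in $\bigcup\kit2(b)$. Feeding this into the saturation formula $\Phi_{\bigcup\kit2(b)}(\beta)$, exactly as in the proof of Proposition~\ref{prop:bang_preserves_qprofs}, and using that $\kit2$ is Boolean hence saturated, gives $\beta\in\bigcup\kit2(b)$. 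The only delicate point here is the bookkeeping of powers needed to produce a \emph{nontrivial} witness, which is handled precisely as in the exponential case.

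For the second implication, assume $\beta\in\bigcup\kit2^\orth(b)$. Using Lemma~\ref{lem:bangKitsPresheaves} once more, and that $(\oc\kit1)^\orth$ is the single orthogonal of the same family (a triple orthogonal collapsing to one), $\alpha\in\bigcup(\oc\kit1)^\orth(u)$ amounts to showing that whenever a power $\delta=\alpha^k$ stabilizes some test $\bar x:su\to X$, $X\in\StPSh(\kitstr{\gpd1})$, then $\delta=\id$. So fix such $\delta$ and $\bar x$ with $\bar x\circ s\delta=\bar x$. The push-forward observation gives $\beta^k\in\bigcup\kit2(b)$; together with $\beta^k\in\bigcup\kit2^\orth(b)$ and closure under subgroups (Lemma~\ref{lem:kitDownClosed}), orthogonality forces $\cyclic{\beta^k}=\{\id\}$, i.e. $\beta^k=\id$. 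Feeding $\beta^k=\id$ back into $T(s\delta)\circ g\circ\yon{\beta^k}=g$ leaves $T(s\delta)\circ g=g$. This is where genericity is essential: both $\id_{su}$ and $s\delta$ are morphisms $k:su\to su$ satisfying $\bar x\circ k=\bar x$ and $T(k)\circ g=g$, so the uniqueness clause in the genericity of $g$ forces $s\delta=\id_{su}$, and since $s$ is faithful, $\delta=\id$. This genericity–uniqueness step is the main obstacle of the proof and the one place where the restriction to \emph{generic} morphisms in the definition of $\trace(T)$ is indispensable; every other step uses only that $T$ preserves stabilized presheaves together with the orthogonality bookkeeping for Boolean kits.
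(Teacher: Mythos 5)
Your proof is correct and follows essentially the same route as the paper's: genericity is closed under composition with isomorphisms (well-definedness), the push-forward along $T(\bar x)$ places $\beta$-powers in stabilizers of elements of $TX(b)$ with $TX\in\StPSh(\kitstr{\gpd2})$, and the genericity--uniqueness clause applied to $s(\alpha^k)$ versus $\id_{su}$ closes both implications. One small caveat: in the first implication you defer the production of a \emph{nontrivial} witness $\beta^{nm}\neq\id$ to ``the exponential case'', but in Proposition~\ref{prop:bang_preserves_qprofs} that step is carried by the second stabilization condition of the profunctor $P$, which is not available here; the correct mechanism is exactly the genericity--uniqueness argument you spell out for the second implication (if $\beta^{nm}=\id$ then $T(s\alpha^{nm})\circ g=g$ and $\bar x\circ s\alpha^{nm}=\bar x$ force $\alpha^{nm}=\id$, a contradiction), which is how the paper handles it.
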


\begin{proof}
	We first need to show that $\trace(T)$ is indeed a species, i.e. for $t : b \to T(s(u))$ generic and morphisms $\alpha \in \Sym\gpd1(u,u)$, $\beta \in \gpd2(b,b)$, $ \alpha\cdot t \cdot \beta = T(s(\alpha)) \circ t \circ \beta$ is also generic which holds since generic elements are stable under precomposition or postcomposition by isomorphisms. We now prove that $\trace(T)$ is a stable species: assume that $\alpha \cdot t  \cdot \beta =t$ i.e. the following square commutes in $\StPSh(\kitstr{\gpd2})$:
	\begin{center}
		\begin{tikzpicture}[scale=0.75]
			\node (A) at (0, 2.3) {$b$};
			\node (B) at (3,2.3) {$T(s(u)) $};
			\node (C) at (0,0) {$b$};
			\node (D) at (3,0) {$T(s(u)) $};
			
			\draw [->] (A) to node [above] {$t$} (B);
			\draw [->] (C) to node [left] {$\beta$} (A);
			\draw [->] (B) to node [right] {$T(s\alpha)$} (D);
			\draw [->] (C) to node [below] {$t$} (D);
		\end{tikzpicture}
	\end{center}
	\begin{itemize}
		\item Assume that $\alpha\in \bigcup\oc \kit1(u)$, using Lemmas \ref{lem:characterisationDoubleOrth} and \ref{lem:bangKitsPresheaves}, it implies that for all $n \in \N$, $\alpha^n =\id$ or there exists $m \in \N$ and $\id \neq \alpha^{nm} \in \Stab(X^\Sym(u))$ for some $X \in \StPSh(\kitstr{\gpd1})$.
		We want to show that $\beta$ is in $\bigcup \kit2(b) =\bigcup \kit2^\dorth(b)$ i.e. for all $n \in \N$, $\beta^n = \id$ or there exists $m \in \N$ such that $\id \neq \beta^{nm} \in\bigcup \kit2(b)$.  Equivalently, we show that for all $n \in \N$, $\beta^n = \id$ or there exist $m \in \N$ and $Y \in \StPSh(\kitstr{\gpd2})$ with $\id \neq \beta^{nm} \in \Stab(Y(b))$.
		
		Let $n \in \N$ and assume that $\beta^n \neq \id$,  if $\alpha^n = \id$, then $t \cdot \beta^n = t$ which implies that $\beta^n$ is in $\Stab(T(s(u))$. Since $T(s(u))$ is in $\StPSh(\kitstr{\gpd2})$, we can take $m:=1$ and obtain the desired result. If there exist $m \in \N$ and $X \in \StPSh(\kitstr{\gpd1})$ such that $\id \neq \alpha^{nm} \in \Stab(X^\Sym(u))$, then there exists $\bar{x} : s(u) \to X$ in $\StPSh(\kitstr{\gpd1})$ (or equivalently $\bar{x} \in X^\Sym(u)$) such that $\bar{x} \cdot \alpha^{nm} = \bar{x}$ which implies that the following diagram commutes in $\StPSh(\kitstr{\gpd2})$:
		\begin{center}
			\begin{tikzpicture}[scale=0.8]
				\node (A) at (0, 2.3) {$b$};
				\node (B) at (3,2.3) {$T(s(u)) $};
				\node (C) at (0,0) {$b$};
				\node (D) at (3,0) {$T(s(u)) $};
				\node (E) at (7,1.1) {$T(X)$};
				
				\draw [->] (A) to node [above] {$t$} (B);
				\draw [->] (C) to node [left] {$\beta^{nm}$} (A);
				\draw [->] (B) to node [right] {$T(s\alpha^{nm})$} (D);
				\draw [->] (C) to node [below] {$t$} (D);
				\draw [->] (B) to node [above] {$T(\bar{x})$} (E);
				\draw [->] (D) to node [below] {$T(\bar{x})$} (E);
			\end{tikzpicture}
		\end{center}
		Hence, $\beta^{nm}$ is in $\Stab(T(X)(b))$ and since $T(X)$ is in $\StPSh(\kitstr{\gpd2})$, it only remains to show that $\beta^{nm}\neq \id$. Assume that $\beta^{nm} =\id$, then $\alpha^{nm} \cdot t =t$, so the following diagram commutes in $\StPSh(\kitstr{\gpd2})$:
		\begin{center}
			\begin{tikzpicture}[scale =0.8]
				\node (A) at (0,0) {$T(s(u)) $};
				\node (B) at (4,0) {$T(X)$};
				\node (C) at (0,2.5) {$b$};
				\node (D) at (4,2.5) {$T(s(u)) $};
				
				\draw [->] (C) -- node [left] {$t$} (A);
				\draw [->] (A) -- node [below] {$T(\bar{x})$} (B);
				\draw [->] (C) -- node [above] {$t$} (D);
				\draw [->] (D) -- node [right] {$T(\bar{x})$} (B);
				\draw [->] (A) -- node [fill=white] {$T(s\alpha^{nm})$} (D);
			\end{tikzpicture}
		\end{center}
		since $t$ is generic, it implies that $\alpha^{nm}=\id$ so $\beta^{nm}\neq \id$ as desired.
		\item 	Assume now that $\beta \in \bigcup \kit2^\orth(b)$, we want to show that $\alpha \in\bigcup (\oc \kit1)^\orth(u)$ i.e. for all $n$, if there exists $X' \in \StPSh(\kitstr{\gpd1})$ such that $\alpha^n$ is in $\Stab(X'^\Sym(u))$, then $\alpha^n = \id$. Assume that there exists $X' \in \StPSh(\kitstr{\gpd1})$ such that $\alpha^n$ is in $\Stab(X'^\Sym(u))$, i.e. there exists $\bar{x}' : s(u) \to X'$ in $\StPSh(\kitstr{\gpd1})$ such that $\bar{x}' \cdot \alpha^n = \bar{x}'$ which implies that the following diagram commutes in $\StPSh(\kitstr{\gpd2})$:
		\begin{center}
			\begin{tikzpicture}[scale=0.8]
				\node (A) at (0, 2.2) {$b$};
				\node (B) at (3,2.2) {$T(s(u) )$};
				\node (C) at (0,0) {$b$};
				\node (D) at (3,0) {$T(s(u) )$};
				\node (E) at (7,1.1) {$T( X')$};
				
				\draw [->] (A) to node [above] {$t$} (B);
				\draw [->] (C) to node [left] {$\beta^n$} (A);
				\draw [->] (B) to node [left] {$T(s\alpha^{n})$} (D);
				\draw [->] (C) to node [below] {$t$} (D);
				\draw [->] (B) to node [above] {$T(\bar{x}')$} (E);
				\draw [->] (D) to node [below] {$T(\bar{x}')$} (E);
			\end{tikzpicture}
		\end{center}
		Hence, $\beta^n \in \Stab(T(X')(b))$ which implies that $\beta^n =\id$ so that the following diagram commutes:
		\begin{center}
			\begin{tikzpicture}[scale =0.8]
				\node (A) at (0,0) {$T(s(u) )$};
				\node (B) at (4,0) {$T(X')$};
				\node (C) at (0,2.5) {$b$};
				\node (D) at (4,2.5) {$T(s(u) )$};
				
				\draw [->] (C) -- node [left] {$t$} (A);
				\draw [->] (A) -- node [below] {$T(\bar{x}')$} (B);
				\draw [->] (C) -- node [above] {$t$} (D);
				\draw [->] (D) -- node [right] {$T(\bar{x}')$} (B);
				\draw [->] (A) -- node [fill=white]  {$T( s(\alpha^n ))$} (D);
			\end{tikzpicture}
		\end{center}
		Since $t$ is generic, we obtain that $\alpha^n = \id$ as desired.\qedhere
	\end{itemize}
\end{proof}

We proceed to show that for a stable species $P \in \SEsp(\kitstr{\gpd1}, \kitstr{\gpd2})$, $\trace(\widetilde{P}) \cong P$, and for a stable functor $T : \StPSh(\kitstr{\gpd1}) \to \StPSh(\kitstr{\gpd2})$, $\widetilde{\trace(T)} \cong T$.

\begin{lem}\label{lem:StableUnitIso}
	For a stable species $P: \oc \kitstr{\gpd1} \profto\kitstr{\gpd2}$, the transformation $\eta : P \Rightarrow \trace(\rsLan{P})$ whose components $\eta_{b,u}: P(b,u) \to \trace(\rsLan{P})(b,u)$ are given by
	\[
	p  \mapsto (p \coendtensor[u] \id) : b \to \rsLan{P}(s(u))
	\]
	is a natural isomorphism.
\end{lem}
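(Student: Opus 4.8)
The plan is to establish three things: that each $\eta_{b,u}$ is well defined (i.e.\ lands in the set of generic morphisms), that each $\eta_{b,u}$ is a bijection, and that $\eta$ is natural in $(b,u)\in\gpd2^\op\times\Sym\gpd1$. Well-definedness is already contained in the proof of Proposition~\ref{prop:rsLanStable}: specialising that argument to $X=s(u)$ and $\bar x=\id$ shows precisely that $p\coendtensor[u]\id:\yon b\to\rsLan{P}(s(u))$ is generic, so it is a legitimate element of $\trace(\rsLan{P})(b,u)$. For naturality, given $\beta\in\gpd2(b',b)$ and $\alpha\in\Sym\gpd1(u,u')$ (both isomorphisms, the groupoids being what they are), I would compute on coend representatives: the $\gpd2$-action sends $p\coendtensor[u]\id$ to $(p\cdot\beta)\coendtensor[u]\id$, applying $\rsLan{P}(s\alpha)$ replaces the second component by $s\alpha\circ\id=\id_{s(u')}\cdot\alpha$, and the coend identity rewrites $(p\cdot\beta)\coendtensor[u](\id_{s(u')}\cdot\alpha)$ as $(\alpha\cdot p\cdot\beta)\coendtensor[u']\id_{s(u')}=\eta_{b',u'}(\alpha\cdot p\cdot\beta)$, which is exactly the trace action $g\mapsto\rsLan{P}(s\alpha)\circ g\circ\yon\beta$ applied to $\eta_{b,u}(p)$.

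Injectivity follows from Lemma~\ref{lem:FreenessStable}. The presheaf $s(u)=\coprod_i\yon(a_i)$ lies in $\StPSh(\kitstr{\gpd1})$, since each representable is stabilized and coproducts of stabilized presheaves are stabilized. Hence if $p\coendtensor[u]\id=p'\coendtensor[u]\id$ in $\rsLan{P}(s(u))(b)$, the lemma yields a \emph{unique} $\alpha:u\to u$ with $\id_{s(u)}\cdot\alpha=\id_{s(u)}$ and $\alpha\cdot p=p'$. But $\id_{s(u)}\cdot\alpha=s(\alpha)$, so $s(\alpha)=\id$ forces $\alpha=\id_u$ and therefore $p'=p$.

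The main work, and the step I expect to be the obstacle, is surjectivity. Given a generic $g:\yon b\to\rsLan{P}(s(u))$, I would write it as a coend representative $g=q\coendtensor[v]\bar y$ with $q\in P(b,v)$ and $\bar y:s(v)\to s(u)$. By Proposition~\ref{prop:rsLanStable} this admits the generic factorization $g=\rsLan{P}(\bar y)\circ(q\coendtensor[v]\id)$, so both $g$ and $q\coendtensor[v]\id$ are generic. I then show that $\bar y$ is an isomorphism. Applying the genericity of $g$ to the commuting square $\rsLan{P}(\id_{s(u)})\circ g=\rsLan{P}(\bar y)\circ(q\coendtensor[v]\id)$ produces a unique $k:s(u)\to s(v)$ with $\bar y\circ k=\id_{s(u)}$ and $\rsLan{P}(k)\circ g=q\coendtensor[v]\id$. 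For the other composite I apply the genericity of $q\coendtensor[v]\id$ to the square whose two upper legs are both $\bar y:s(v)\to s(u)$ (valid since $\rsLan{P}(\bar y)\circ(q\coendtensor[v]\id)=\rsLan{P}(\bar y)\circ(q\coendtensor[v]\id)$): the induced morphism is unique, and both $\id_{s(v)}$ and $k\circ\bar y$ satisfy its defining conditions (using $\bar y\circ k=\id_{s(u)}$ and $\rsLan{P}(k)\circ g=q\coendtensor[v]\id$), forcing $k\circ\bar y=\id_{s(v)}$. Thus $\bar y$ is invertible.

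Finally, since $s(v)$ and $s(u)$ are coproducts of representables and representables are the connected (indecomposable) objects of $\PSh(\gpd1)$, any isomorphism $\bar y:s(v)\to s(u)$ permutes summands and restricts to isomorphisms $\yon(b_j)\cong\yon(a_{\sigma j})$; by Yoneda these assemble into a unique isomorphism $\gamma:v\to u$ in $\Sym\gpd1$ with $s(\gamma)=\bar y$. Setting $p:=\gamma\cdot q\in P(b,u)$, the coend identity gives $\eta_{b,u}(p)=(\gamma\cdot q)\coendtensor[u]\id=q\coendtensor[v](\id\cdot\gamma)=q\coendtensor[v]\bar y=g$, establishing surjectivity and hence that $\eta$ is a natural isomorphism. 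The two delicate points are the genericity manipulation proving $\bar y$ invertible (the second square must be chosen with non-identity legs $\bar y,\bar y$ to avoid a vacuous constraint) and the passage from presheaf isomorphisms of $s(v),s(u)$ to genuine morphisms of $\Sym\gpd1$.
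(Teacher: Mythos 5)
Your proposal is correct and follows essentially the same route as the paper's proof: genericity of $p \coendtensor[u] \id$ (which the paper re-derives inline but is indeed already contained in Proposition~\ref{prop:rsLanStable}), injectivity via the freeness of the coend representation, and surjectivity by the two-sided genericity argument showing $\bar y$ is invertible followed by the identification $\bar y = s(\gamma)$ (the paper's Lemma~\ref{lem:IsomorphismSumRep}, which you re-derive). Your explicit naturality check and the careful choice of the second genericity square with both legs equal to $\bar y$ are welcome details that the paper leaves implicit.
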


\begin{proof}
	
	We first need to show that $(p \coendtensor[u] \id) : b \to \rsLan{P}(s(u))$ is generic for the map $\eta_{b,u}$ to be well-defined. Assume that the diagram below commutes in $\StPSh(\kitstr{\gpd2})$ for some morphisms $\bar{x}:s(u) \to X$, $f : Y \to X$ and $h : b \to \rsLan{P}(Y)$:
	\begin{center}
		\begin{tikzpicture}[scale=0.75]
			
			\node (A) at (0, 2) {$b$};
			\node (B) at (3,2) {$\rsLan{P}(Y)$};
			\node (C) at (0,0) {$\rsLan{P}(s(u))$};
			\node (D) at (3,0) {$\rsLan{P}(X)$};
			
			\draw [->] (A) to node [above] {$h$} (B);
			\draw [->] (A) to node [left] {$(p \coendtensor[u] \id)$} (C);
			\draw [->] (B) to node [right] {$\rsLan{P}(f)$} (D);
			\draw [->] (C) to node [below] {$\rsLan{P}(\bar{x})$} (D);
		\end{tikzpicture}
	\end{center}
	
	The morphism $h : b \to \rsLan{P}(Y)$ corresponds to an element $(p' \coendtensor[u'] \bar{y})$ where $p'\in P(b,u')$ and $\bar{y} : s(u') \to Y$. Since the diagram commutes, we have $p \coendtensor[u] \bar{x} = p' \coendtensor[u'] f \bar{y}$ i.e. there exists $\alpha : u \to u' \in \Sym \gpd1$ such that $\alpha \cdot p = p'$ and $f \bar{y} s(\alpha)= \bar{x}$. Define $k : s(u) \to Y$ to be $\bar{y} s(\alpha)$. We then have $f k =\bar{x}$ and $\rsLan{P}(k)(p \coendtensor[u] \id) = p \coendtensor[u] \bar{y} s(\alpha) = \alpha \cdot p \coendtensor[u'] \bar{y} =p' \coendtensor[u'] \bar{y}$ as desired.  
	
	For uniqueness, assume that there exists $k' :  s(u) \to Y$ such that $f k' = \bar{x}$ and $\rsLan{P}(k')(p \coendtensor[u] \id) = p \coendtensor[u]k' =p' \coendtensor[u'] \bar{y}$. It implies that there exists $\beta : u \to u'$  in $\Sym \gpd1$ such that $\beta\cdot p = p'$ and $\bar{y} s(\beta) = k'$. Since $\id[b] \in \bigcup\kit2(b)^\orth$ and $\beta \alpha^{-1} \cdot p' =p'$, we have $\beta\alpha^{-1}  \in \bigcup \oc \kit1(u')^\orth$. Now, $f \bar{y}  s(\beta) = f \bar{y}  s(\alpha)$ implies that $\beta \alpha^{-1}$ is in $\Stab(X^\Sym(u'))$. Since $X \in \StPSh(\kitstr{\gpd1})$, we obtain that $\alpha = \beta$ as desired. Hence, $\eta_{u,b}$ is well-defined, it remains to show that it is injective and surjective.

	For injectivity, if there are $p$ and $p'$ in $P(b,u)$ such that $p \coendtensor[u] \id= p' \coendtensor[u] \id$ then there exists $\alpha : u \to u'$ such that $\alpha\cdot p =p'$ and $\alpha\circ \id =\id$ which implies that $p=p'$. For surjectivity, let $p \coendtensor[v] \bar{z} : b \to \rsLan{P}(s(u))$ be a generic map. Since the diagram below commutes, there exists a unique $k : s(u) \to s(v)$ in $\StPSh(\kitstr{\gpd1})$ such that $\bar{z} k = \id$ and $p \coendtensor[v] k \bar{z} = p \coendtensor[v] \id$.  
	\begin{center}
		\begin{tikzpicture}[scale=0.75]
			
			\node (A) at (0, 2) {$b$};
			\node (B) at (3,2) {$\rsLan{P}(s(v))$};
			\node (C) at (0,0) {$\rsLan{P}(s(u))$};
			\node (D) at (3,0) {$\rsLan{P}(s(u))$};
			
			\draw [->] (A) to node [above] {$(p \coendtensor[v] \id)$} (B);
			\draw [->] (A) to node [left] {$(p \coendtensor[v] \bar{z})$} (C);
			\draw [->] (B) to node [right] {$\rsLan{P}(\bar{z})$} (D);
			\draw [->] (C) to node [below] {$\rsLan{P}(\id)$} (D);
		\end{tikzpicture}
	\end{center}
	Since $p \coendtensor[v] \id$ is generic, we can apply the same reasoning and obtain that $\bar{z}$ has a left inverse as well and is therefore an isomorphism. By Lemma \ref{lem:IsomorphismSumRep}, there exists $
	\alpha : u \to u'$ in $\gpd1$ such that $\bar{z} = s(\alpha)$. Hence, $p \coendtensor[v] \bar{z} = \alpha^{-1} \cdot p \coendtensor[u] \id$ which implies that $\eta_{b,u}$ is surjective as desired.
\end{proof}

\begin{lem}\label{lem:monoSumRepresentables}
	For a Boolean kit $\kitstr{\gpd1}$ and presheaves $X$ and $Y$ in $\StPSh(\kitstr{\gpd1})$, if $Y$ is a coproduct of representables $\coprod_{i\in I}\yon (a_i)$ and there is a monomorphism $m : X \hookrightarrow Y$, then $X \cong \coprod_{j\in J}\yon (a_j)$ where $J \subseteq I$.
\end{lem}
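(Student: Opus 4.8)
The plan is to establish three facts in turn: that $X$ is \emph{free}, in the sense that all its stabilizers are trivial; that a free presheaf in $\StPSh(\kitstr{\gpd1})$ is necessarily a coproduct of representables; and that the monomorphism $m$ then identifies these representable summands with a sub-coproduct of $Y$. First I would note that $Y = \coprod_{i\in I}\yon(a_i)$ is free: in a groupoid every element of a representable $\yon(a_i)$ has trivial stabilizer (since $\gamma\icomp\alpha = \gamma$ forces $\alpha = \id$), and the stabilizer of an element of a coproduct coincides with its stabilizer in the relevant summand. Since $m : X \to Y$ is a morphism of presheaves, the inclusion $\Stab(x)\subseteq\Stab(m_a x)$ used in the proof of Lemma~\ref{lem:StPShconnectedLimits} gives, for every $a\in\gpd1$ and $x\in X(a)$, that $\Stab(x)\subseteq\Stab(m_a x) = \{\id[a]\}$. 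Hence $X$ is free as well.

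Next I would apply the representation theorem, Lemma~\ref{lem:StShRepresentation}, to write $X \cong \coprod_{o\in O}\extyon{b_o}{G_o}$ with each $G_o\in\kit1(b_o)$. A generator of $\extyon{b_o}{G_o}$ has stabilizer $G_o$, so freeness of $X$ forces $G_o = \{\id[b_o]\}$ and thus $\extyon{b_o}{G_o}\cong\yon(b_o)$. Therefore $X \cong \coprod_{o\in O}\yon(b_o)$ is already a coproduct of representables; what remains is to match these summands with those of $Y$.

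By the Yoneda lemma, the composite of the coprojection $\yon(b_o)\hookrightarrow X$ with $m$ is classified by an element of $Y(b_o) = \coprod_{i\in I}\gpd1(b_o,a_i)$, that is, a pair $(i(o),f_o)$ with $f_o : b_o \to a_{i(o)}$. As $\gpd1$ is a groupoid, $f_o$ is invertible, so this composite is $\yon(f_o)$ (an isomorphism) followed by the coprojection onto the summand $\yon(a_{i(o)})$; in particular $\yon(b_o)\cong\yon(a_{i(o)})$. It only remains to prove that $o\mapsto i(o)$ is injective, and this is the one place where the hypothesis that $m$ is monic is essential, and the step I expect to be the crux. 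Concretely, if $i(o)=i(o')=:i$ with $o\neq o'$, then $f_o^{-1}\in\yon(b_o)(a_i)$ and $f_{o'}^{-1}\in\yon(b_{o'})(a_i)$ are \emph{distinct} elements of $X(a_i)$ (they lie in different coproduct summands), yet $m_{a_i}$ sends both to $(i,\id[a_i])\in Y(a_i)$, contradicting the pointwise injectivity of $m$.

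It follows that $i(-)$ is injective, with image a subset $J\subseteq I$, and so $X \cong \coprod_{o\in O}\yon(a_{i(o)}) \cong \coprod_{j\in J}\yon(a_j)$, as required. The bulk of the argument is thus structural bookkeeping via the representation theorem and Yoneda; the only genuinely delicate point is extracting the injectivity of the indexing map from the monicity of $m$, which I would handle by the explicit element-level witness just described.
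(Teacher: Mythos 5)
Your proposal is correct and follows essentially the same route as the paper's proof: apply the representation theorem (Lemma~\ref{lem:StShRepresentation}), use the embedding into the free presheaf $Y$ to force all the groups $G_o$ to be trivial, and then deduce from monicity that the induced indexing map into $I$ is injective. The only difference is presentational: the paper reads off the indexing function from the hom-set formula $\PSh{\gpd1}(X,Y)\cong\coprod_{\sigma}\prod_{j}\gpd1(b_j,a_{\sigma(j)})$ and simply asserts that monicity makes $\sigma$ injective, whereas you supply the explicit element-level witness ($f_o^{-1}$ and $f_{o'}^{-1}$ both mapping to $(i,\id[a_i])$), which fills in a step the paper leaves implicit.
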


\begin{proof}
	By Lemma \ref{lem:StShRepresentation}, $X$ is isomorphic to $\coprod_{j\in J}\extyon{b_j}{G_j}$ where each $b_j \in \gpd1$ and $G_j \in \kit1(b_j)$. Since $m$ is monic, $X$ is a free action, i.e. for all $a \in \gpd1$, $x\in X(a)$ and $\alpha: a\to a$ such that $x \cdot \alpha = x$, we must have $\alpha = \id[a]$. Indeed, $  x \cdot\alpha = x$ implies $(m_a(x)) \cdot \alpha = m_a(x)$ which entails that $\alpha=\id[a]$ since $Y$ is a free action. Hence, all the groups $G_j$ are trivial and $X \cong \coprod_{j\in J}\yon (a_j)$. Now, since $\PSh{\gpd1}(X,Y)$ is isomorphic to $\coprod_{\sigma: J \to I} \prod_{j\in J} \gpd1(b_j, a_{\sigma(j)})$, $m: X\to Y$ corresponds to a pair $(\sigma, (\alpha_j)_{j})$ of a function $\sigma : I \to J$ and a family of isomorphisms $\alpha_j : b_j \to a_{\sigma(j)}$. Since $m$ is monic, $\sigma$ is injective and we obtain the desired result.
\end{proof}

\begin{lem}\label{lem:IsomorphismSumRep}
	For sequences $\seq{a_1, \dots, a_n}, \seq{b_1, \dots, b_n}$ in $\Sym\gpd1$, if there is an isomorphism $f:  \coprod_{1\leq i\leq n} \yon(a_i) \cong \coprod_{1\leq i\leq n}  \yon(b_j)$ in $\PSh{\gpd1}$, then there exists a morphism $\alpha \in \Sym \gpd1$ such that $f = s (\alpha)$.
\end{lem}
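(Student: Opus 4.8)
The plan is to exploit the fact that representable presheaves are \emph{connected} objects of $\PSh(\gpd1)$. By the Yoneda lemma, $\PSh(\gpd1)(\yon(a),-)$ preserves coproducts, since
\[
\PSh(\gpd1)\Big(\yon(a),\ \coprod_{k}P_k\Big)\cong \Big(\coprod_{k}P_k\Big)(a)\cong\coprod_{k}P_k(a)\cong\coprod_{k}\PSh(\gpd1)(\yon(a),P_k),
\]
so every morphism out of $\yon(a)$ into a coproduct factors through a \emph{unique} coprojection by a \emph{unique} map. Iterating over the finitely many summands $\yon(a_i)$, this gives the hom-set decomposition
\[
\PSh(\gpd1)\Big(\coprod_{i}\yon(a_i),\ \coprod_{j}\yon(b_j)\Big)\cong\coprod_{\sigma:\ints{n}\to\ints{n}}\ \prod_{i=1}^{n}\gpd1(a_i,b_{\sigma(i)}),
\]
exactly as in the proof of Lemma~\ref{lem:monoSumRepresentables}. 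Under this description the isomorphism $f$ corresponds to a pair $(\sigma,(\alpha_i)_i)$ of a function $\sigma:\ints{n}\to\ints{n}$ and morphisms $\alpha_i:a_i\to b_{\sigma(i)}$ in $\gpd1$, characterized by $f\circ\colimin_i=\colimin_{\sigma(i)}\circ\yon(\alpha_i)$.

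First I would record the same data for $f^{-1}$, writing $(\tau,(\beta_j)_j)$ with $\beta_j:b_j\to a_{\tau(j)}$ and $f^{-1}\circ\colimin_j=\colimin_{\tau(j)}\circ\yon(\beta_j)$. Next I would compute the composite: restricting $f^{-1}\circ f=\id$ along $\colimin_i$ gives
\[
\colimin_i=f^{-1}\circ f\circ\colimin_i=f^{-1}\circ\colimin_{\sigma(i)}\circ\yon(\alpha_i)=\colimin_{\tau(\sigma(i))}\circ\yon(\beta_{\sigma(i)}\circ\alpha_i).
\]
The left-hand side is a morphism $\yon(a_i)\to\coprod_i\yon(a_i)$ presented through summand $i$ by $\id[a_i]$; the right-hand side presents it through summand $\tau(\sigma(i))$. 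By the uniqueness clause of connectedness, $\tau(\sigma(i))=i$ and $\beta_{\sigma(i)}\circ\alpha_i=\id[a_i]$. Symmetrically, $f\circ f^{-1}=\id$ yields $\sigma(\tau(j))=j$. Hence $\sigma$ is a bijection of $\ints{n}$ with inverse $\tau$, so $\sigma\in\symgroup{n}$, and each $\alpha_i$ is invertible (automatic, since $\gpd1$ is a groupoid).

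It then follows that $\alpha:=(\sigma,(\alpha_i)_i)$ is a genuine morphism $\seq{a_1,\dots,a_n}\to\seq{b_1,\dots,b_n}$ in $\Sym\gpd1$, and by construction $s(\alpha)$ is the unique map agreeing with $\colimin_{\sigma(i)}\circ\yon(\alpha_i)$ on each coprojection $\colimin_i$, which is precisely $f$; thus $f=s(\alpha)$. I expect the only delicate point to be the bookkeeping in the composite computation above — namely confirming that the summand-decomposition of the hom-set is compatible with composition so that the index functions compose as $\tau\circ\sigma$ and $\sigma\circ\tau$ — while everything else is a direct consequence of the connectedness of representables together with the Yoneda lemma.
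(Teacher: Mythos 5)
Your proposal is correct and takes essentially the same route as the paper: the paper's proof is a one-line appeal to the hom-set decomposition $\PSh(\gpd1)\big(\coprod_i\yon(a_i),\coprod_j\yon(b_j)\big)\cong\coprod_{\sigma}\prod_i\gpd1(a_i,b_{\sigma(i)})$, and your argument simply unpacks why that decomposition holds (connectedness of representables) and why invertibility of $f$ forces $\sigma$ to be a permutation. The extra bookkeeping you flag (compatibility of the decomposition with composition) goes through exactly as you describe, so there is no gap.
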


\begin{proof}
	Immediate corollary of the isomorphism $\PSh{\gpd1}(\seq{a_i},\seq{b_j})\cong \coprod_{\sigma: J \to I} \prod_{j\in J} \gpd1(b_j, a_{\sigma(j)})$.
\end{proof}

\begin{lem}\label{lem:StableGenericFact}
	Let $\kitstr{\gpd1}$ and $\kitstr{\gpd2}$ be Boolean kits and $T : \StPSh(\kitstr{\gpd1}) \to \StPSh(\kitstr{\gpd2})$ be a stable functor. For every $X$ in $\StPSh(\kitstr{\gpd1})$ and $t: b \to T(X)$ in $\StPSh(\kitstr{\gpd2})$, there exists $u\in\Sym \gpd1$, $g : b \to T(s(u))$ generic and $\bar{x} :s(u) \to X$ such that $t = T(\bar{x}) g$.
\end{lem}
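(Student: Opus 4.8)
The plan is to reduce the statement to the generic factorization property that $T$ already enjoys as a local right adjoint, and then to show that the generic object produced can always be arranged to be a finite coproduct of representables $s(u)$. The key manoeuvre is to lift $t$ through an epimorphism onto $T(X)$ coming from a \emph{free} presheaf; it is this that forces the generic object to be free.

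First I would present $X$ as a quotient of a free presheaf. By the representation Lemma~\ref{lem:StShRepresentation}, $X \cong \coprod_{i\in I}\extyon{a_i}{G_i}$ with each $G_i \in \kit1(a_i)$, and since every $\extyon{a_i}{G_i}$ is by definition a quotient of $\yon(a_i)$, the coproduct of the quotient maps is an epimorphism $e : R \twoheadrightarrow X$ with $R = \coprod_{i\in I}\yon(a_i)$. As a coproduct of representables, $R$ has only trivial stabilizers and so lies in $\StPSh(\kitstr{\gpd1})$. Because $T$ is stable it preserves epimorphisms, so $T(e) : T(R)\twoheadrightarrow T(X)$ is epi; and since epimorphisms in $\StPSh(\kitstr{\gpd2})$ are pointwise surjections (Corollary~\ref{cor:QuantitativePresheavesCreation}), the representable $\yon b$ is projective and $t$ lifts to some $t' : \yon b \to T(R)$ with $T(e)\circ t' = t$.

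Next I would invoke the generic factorization property: as $T$ is a local right adjoint it admits generic factorizations, so $t' = T(f')\circ g$ for a generic $g : \yon b \to T(X_0)$ and some $f' : X_0 \to R$. I then claim $X_0 \cong s(u)$ for a suitable $u$. It is \emph{free}: for $a\in\gpd1$ and $x\in X_0(a)$, naturality of $f'$ gives $\Stab(x)\subseteq\Stab(f'_a\, x)=\{\id\}$ because $R$ is free (exactly the reasoning in the proof of Lemma~\ref{lem:monoSumRepresentables}), so all stabilizers of $X_0$ are trivial and $X_0\cong\coprod_{i}\yon(a_i)$. It is also \emph{finitely presentable}: writing $X_0$ as a filtered colimit of finitely presentable presheaves (Corollary~\ref{cor:RepresentationQuantitativePresheaves}) and using that $T$ is finitary while $\yon b$ is compact, the generic $g$ factors through a finite stage, and genericity then produces a section of the corresponding colimit leg, exhibiting $X_0$ as a retract of a finitely presentable object. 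A free, finitely presentable stabilized presheaf is a \emph{finite} coproduct of representables, whence $X_0\cong s(u)$ with $u=\seq{a_1,\dots,a_n}\in\Sym\gpd1$.

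Finally I would assemble the factorization: transporting $g$ along $X_0\cong s(u)$ yields a generic map $g : \yon b \to T(s(u))$, and setting $\bar x := e\circ f' : s(u)\to X$ gives $t = T(e)\circ t' = T(e\circ f')\circ g = T(\bar x)\circ g$, as required. I expect the main obstacle to be this freeness step: the insight is to lift $t$ through an epimorphism out of a \emph{free} presheaf $R$, since it is precisely the existence of the map $f' : X_0\to R$ into a free object that forces the generic object $X_0$ to be free, and hence a coproduct of representables. The finiteness is then the routine compactness argument sketched above.
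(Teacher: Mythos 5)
Your proof is correct, and it rearranges the paper's argument rather than reproducing it. The paper factors $t$ generically first, obtaining a generic $g : \yon b \to T(Y)$, and only then shows $Y \cong s(u)$: it writes $Y$ as a filtered colimit of finite sums $\coprod_{j}\extyon{a_j}{G_j}$, uses finitarity to push $g$ into a finite stage, uses epi-preservation together with the pointwise surjectivity of epis to lift through the quotient $\coprod_j\yon(a_j)\twoheadrightarrow\coprod_j\extyon{a_j}{G_j}$, and then applies genericity once to produce a split mono $Y \to \coprod_j\yon(a_j)$, concluding by Lemma~\ref{lem:monoSumRepresentables}. You instead do the epi-lifting upstream: you present $X$ itself as a quotient of a free presheaf $R$, lift $t$ through $T(R)\twoheadrightarrow T(X)$, and only then factor generically. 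This buys you freeness of the generic object essentially for free --- the mere existence of $f' : X_0 \to R$ forces all stabilizers of $X_0$ to be trivial, with no appeal to genericity --- but you then need a separate genericity argument (the section of a colimit leg at a finite stage) to get finiteness, which the paper extracts from the same single genericity square that also handles the group quotients. The two proofs use exactly the same toolkit (generic factorizations, finitarity plus pointwise filtered colimits, epi-preservation plus projectivity of representables, and Lemma~\ref{lem:monoSumRepresentables}); yours separates the ``free'' and ``finite'' halves of the conclusion more cleanly, at the cost of one extra lifting step. One small streamlining: since $X_0$ is already known to be free, you may take the finite stages $Y_i$ to be finite coproducts of representables, and then the split mono $k : X_0 \to Y_{i_0}$ gives $X_0 \cong s(u)$ directly by Lemma~\ref{lem:monoSumRepresentables}, so the detour through ``free and finitely presentable implies a finite coproduct of representables'' can be dropped.
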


\begin{proof}
	Since $T$ is stable, $t$ can be factored as $b \xlongto{g} T(Y)\xto{T(f)} T(X)$ where $g$ is generic. By Corollary \ref{cor:RepresentationQuantitativePresheaves}, $Y$ is isomorphic to $\colim_{i\in I} \coprod_{j\in J_i} \extyon {a_{ij}} {G_{ij}}$ where each $G_{ij}$ is a  finitely generated group in $\kit1(a_{ij})$ and $I$ is filtered. Since $T$ is finitary, we have 
	\[
	T(Y) \cong \colim_{i\in I} T(\coprod_{j\in J_i} \extyon {a_{ij}} {G_{ij}}).
	\]
	Since filtered colimits are computed pointwise in $\StPSh(\kitstr{\gpd2})$, it implies that $g$ can be factored as 
	\[
	b \xlongto{g'} T(\coprod_{j\in J} \extyon {a_{j}} {G_{j}}) \xto{T(\colimin)} \colim_{i\in I} T(\coprod_{j\in J_i} \extyon {a_{ij}} {G_{ij}}).
	\]
	
	For each group $G_j$, the projection morphism $q_j : \yon {a_{j}} \to \extyon {a_{j}} {G_{j}})$ is an epimorphism which implies that $q:=\coprod_{j\in J} q_j: \coprod_{j\in J} \yon {a_{j}} \to \coprod_{j\in J} \extyon {a_{j}} {G_{j}}$ is also an epimorphism. Since $T$ is epi-preserving, we can factor $g'$ as
	\[
	b \xlongto{g''}  T(\coprod_{j\in J} \yon {a_{j}}) \xlongto{q}  T(\coprod_{j\in J} \extyon {a_{j}} {G_{j}}).
	\]
	Since $g$ is generic, there exists a unique $h : Y \to \coprod_{j\in J} \yon {a_{j}}$ such that $\colimin q h=\id$ and $T(h)g=g''$. 
	
	\begin{center}
		\begin{tikzpicture}[scale=0.95]
			\node (A) at (0, 2.3) {$b$};
			\node (B) at (3,2.3) {$T(\coprod_{j\in J} \yon {a_{j}})$};
			\node (C) at (0,0) {$T(Y)$};
			\node (D) at (3,0) {$T(Y)$};
			
			\draw [->] (A) to node [above] {$g''$} (B);
			\draw [->] (A) to node [left] {$g$} (C);
			\draw [->] (B) to node [right] {$T(\colimin q)$} (D);
			\draw [->] (C) to node [below] {$T(\id)$} (D);
			\draw [->, dotted] (C) to node [fill=white] {$T(h)$} (B);
		\end{tikzpicture}
	\end{center}
	Since $h$ is split monic, by Lemma \ref{lem:monoSumRepresentables}, we have $Y \cong \coprod_{k\in K} \yon {a_{k}}$ where $K \subseteq J$.
\end{proof}

\begin{lem}\label{lem:StableCounitIso}
	For a stable functor $T:\StPSh(\kitstr{\gpd1})\to \StPSh(\kitstr{\gpd2})$, the transformation $\varepsilon : \rsLan{\trace{T}} \Rightarrow T$ whose components $\varepsilon_{X,b}$ are given by
	\[
	(t \coendtensor[u] \bar{x} : s(u) \to X ) \mapsto T(\bar{x} ) t : b \to T(X)
	\]
	is a natural isomorphism.
\end{lem}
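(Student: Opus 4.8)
The plan is to exhibit $\varepsilon$ as a well-defined natural transformation whose every component $\varepsilon_{X,b}$ is a bijection. By Yoneda, an element of $\rsLan{\trace(T)}(X)(b)$ is a class $t \coendtensor[u] \bar{x}$ in which $t : \yon b \to T(s(u))$ is generic (this being the data in $\trace(T)(b,u)$, since $s(u) \cong \coprod_i \yon a_i$) and $\bar{x} : s(u) \to X$, and $\varepsilon_{X,b}$ sends it to $T(\bar{x})\,t \in T(X)(b)$. Well-definedness comes directly from functoriality of $T$: if $t \coendtensor[u] \bar{x} = t' \coendtensor[v] \bar{y}$ then the pointwise description of the coend gives $\alpha : u \to v$ with $\bar{y}\comp s(\alpha) = \bar{x}$ and $T(s\alpha)\comp t = t'$, whence $T(\bar{y})\,t' = T(\bar{y})\,T(s\alpha)\,t = T(\bar{x})\,t$. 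Naturality in $X$ is equally immediate, since $\rsLan{\trace(T)}(f)$ sends $t \coendtensor[u] \bar{x}$ to $t \coendtensor[u] (f\bar{x})$ and $T(f\bar{x})\,t = T(f)\,T(\bar{x})\,t$.

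Surjectivity of $\varepsilon_{X,b}$ is exactly Lemma~\ref{lem:StableGenericFact}: any morphism $t : \yon b \to T(X)$ factors as $T(\bar{x})\,g$ with $g : \yon b \to T(s(u))$ generic and $\bar{x} : s(u) \to X$, so $t = \varepsilon_{X,b}(g \coendtensor[u] \bar{x})$. This reduces the whole statement to injectivity, which is where genericity does the real work.

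For injectivity, suppose $T(\bar{x})\,t = T(\bar{y})\,t'$ with $t : \yon b \to T(s(u))$ and $t' : \yon b \to T(s(v))$ both generic. Reading this equation as a commuting square and applying the universal property of the generic map $t$ (Definition~\ref{def:generic}) produces a unique $k : s(u) \to s(v)$ with $\bar{x} = \bar{y}\comp k$ and $T(k)\,t = t'$; symmetrically, genericity of $t'$ gives $k' : s(v) \to s(u)$ with $\bar{y} = \bar{x}\comp k'$ and $T(k')\,t' = t$. Now both $\id$ and $k'\comp k$ solve the genericity problem posed by the identity square $T(\bar{x})\,t = T(\bar{x})\,t$ (indeed $\bar{x}\comp k'\comp k = \bar{y}\comp k = \bar{x}$ and $T(k'k)\,t = T(k')\,t' = t$), so the uniqueness clause forces $k'\comp k = \id$, and dually $k\comp k' = \id$. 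Hence $k$ is an isomorphism of coproducts of representables, and by Lemma~\ref{lem:IsomorphismSumRep} $k = s(\alpha)$ for some $\alpha : u \to v$ in $\Sym\gpd1$; this $\alpha$ is precisely the witness of the coend relation $t \coendtensor[u] \bar{x} = t' \coendtensor[v] \bar{y}$.

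The main obstacle I anticipate is this injectivity step: one must invoke the genericity universal property in both directions and then a third time — on the identity square — to upgrade the comparison map $k$ from a mere split mono to an isomorphism, since only an isomorphism is recognizable as some $s(\alpha)$ via Lemma~\ref{lem:IsomorphismSumRep}. Everything else reduces either to one-line functoriality computations (well-definedness and naturality) or to a direct citation of Lemma~\ref{lem:StableGenericFact} (surjectivity).
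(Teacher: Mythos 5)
Your proposal is correct and follows essentially the same route as the paper's proof: well-definedness via the coend relation and functoriality of $T$, injectivity by applying genericity of $t$ and $t'$ in both directions and then the uniqueness clause to force the comparison maps to be mutually inverse (so that Lemma~\ref{lem:IsomorphismSumRep} applies), and surjectivity by direct appeal to Lemma~\ref{lem:StableGenericFact}. The only difference is cosmetic: you spell out the ``identity square'' trick that the paper compresses into ``using again the genericity of $t$ and $t'$''.
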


\begin{proof}
	We first show that the map $\varepsilon_{X,b}$ is well-defined. For elements $t \coendtensor[u] \bar{x}$ and $t' \coendtensor[u'] \bar{x}'$ in $\rsLan{\trace{T}}(X,b)$, if $t \coendtensor[u] \bar{x} = t' \coendtensor[u'] \bar{x}'$, then there exists a unique $\delta : u \to u'$ in $\Sym \gpd1$ such that $\delta \cdot t = T(s(\delta)) t =t'$ and $\bar{x}' (s(\delta)) = \bar{x}$. Hence, $T(\bar{x}) t = T(\bar{x}') T(s(\delta)) t= T(\bar{x}') t'$.
	
	For injectivity, assume that there exists $t \coendtensor[u] \bar{x}$ and $t' \coendtensor[u'] \bar{x}'$ in $\rsLan{\trace{T}}(X,b)$ such that $T(\bar{x}) t = T(\bar{x}') t'$. Since $t$ is generic, there exists a unique $f : s(u) \to s(u')$ such that $\bar{x}' f = \bar{x}$ and $T(f) t =t'$. Likewise, since $t'$ is generic, there exists a unique $h : s(u') \to s(u)$ such that $\bar{x} h = \bar{x}'$ and $T(h) t' =t$. Using again the genericity of $t$ and $t'$, we obtain that $f h =\id $ and $h f =\id$. Hence, by Lemma \ref{lem:IsomorphismSumRep}, there exists $\alpha: u \to u'$ in $\Sym \gpd1$ such that $f \cong s\alpha$ which implies that $t \coendtensor[u] \bar{x} = t \coendtensor[u] \bar{x}' (s\alpha) = \alpha\cdot t \coendtensor[u'] \bar{x}' = t' \coendtensor[u'] \bar{x}'$ as desired.
	
	For surjectivity, let $t : b \to T(X)$ be a morphism in $\StPSh(\kitstr{\gpd2})$. By Lemma \ref{lem:StableGenericFact}, there exists $u \in \Sym \gpd1$, $g : b \to T(s(u))$ generic and $\bar{x} : s(u) \to X$ such that $t = T(\bar{x} ) g$. Hence, $t \coendtensor[u] g \in \rsLan{\trace{T}}(X,b)$ is a preimage for $t$.
\end{proof}

\subsection{Cartesian natural transformations.}

Our purpose now is to extend this result to a bicategorical equivalence, which we achieve by investigating the correspondence at the level of natural transformations.  

The situation at this level has subtle ramifications already in the setting of Berry's domain theory: for cpos $A$ and $B$ the pointwise order on the set of stable functions does not lead to a cartesian closed category. Indeed, to obtain cartesian closure one is forced to restrict the objects to cpos with bounded meets, and the stable function space $A \Rightarrow B$ must be ordered according to
\[
f \sqsubseteq_{\mathrm{st}} g \iff \Forall{ x \leq_A y \in  A}\ g(x) \wedge f(y) = f(x).
\]   
By considering $x \leq_A x$ in $A$, one necessarily has that $f(x) \leq_B g(x)$, so this is a strengthening of the pointwise order. In fact the order $\sqsubseteq_{\mathrm{st}}$ is precisely what is needed to characterize inclusion at the level of traces, and it is via a categorification of this order that we will establish our correspondence result. 

The appropriate notion is that of \defn{cartesian} natural transformations, whose naturality squares are pullbacks. These generalize the order $\sqsubseteq_{\mathrm{st}}$ on stable functions (consider the naturality square for $x \leq_A  y$), and additionally preserve and reflect generic morphisms \cite{Weber2004generic}. Cartesian natural transformations are considered also in the settings of normal functors and polynomial functors \cite{GirardNormal, gambinokock2013}. With the interpretation of polynomial functors as arising from sets of operations with arities, a cartesian natural transformation corresponds to a mapping between operations that preserves arities.

We show in this section that our construction yields a biequivalence between the bicategory of stable species and the $2$-category $\eSt$ defined below:

\begin{defi}
	The $2$-category $\eSt$ has Boolean kits as objects, stable functors $\StPSh(\kitstr{\gpd1}) \to \StPSh(\kitstr{\gpd2})$ as morphisms $\kitstr{\gpd1} \to \kitstr{\gpd2}$, and cartesian natural transformations as 2-cells.
\end{defi}

Natural transformations between stable species in $\SEsp$ yield cartesian transformations between the corresponding stable functors.

\begin{prop}
	\label{prop:NatTransCartesian}
	Let $(\gpd1,\kit1), (\gpd2,\kit2)$ be Boolean kits and let 	
	$f : P \Rightarrow Q$ be a natural transformation between stable species 
	$P, Q: \Sym(\gpd1, \kit1) \profto (\gpd2, \kit2)$. 
	The natural transformation 
	$\lansyon f: \lansyon P\Rightarrow \lansyon Q: \StPSh(\gpd1, \kit1) \to
	\StPSh(\gpd2, \kit2)$, 
	canonically induced by left Kan extension, is cartesian.
\end{prop}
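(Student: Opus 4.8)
The plan is to verify directly that every naturality square of $\lansyon f$ is a pullback, since a natural transformation is cartesian exactly when its naturality squares are pullbacks. By Corollary~\ref{cor:QuantitativePresheavesCreation} the embedding $\StPSh(\gpd2,\kit2)\rightembedding\PSh(\gpd2)$ creates non-empty limits, so pullbacks in $\StPSh(\gpd2,\kit2)$ are computed pointwise in $\Set$. It therefore suffices to fix a morphism $g : X \to Y$ in $\StPSh(\gpd1,\kit1)$ and an object $b\in\gpd2$, and to show that the square of sets obtained by evaluating the naturality square at $b$ is a pullback in $\Set$. Using the coend description of $\lansyon P$ and $\lansyon Q$, elements are classes $p\coendtensor[u]\bar x$, the horizontal maps $(\lansyon f)$ act by $p\coendtensor[u]\bar x\mapsto f_{b,u}(p)\coendtensor[u]\bar x$, and the vertical maps $\lansyon P(g),\lansyon Q(g)$ act by post-composition $\bar x\mapsto g\bar x$. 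Concretely I must show that the comparison function from $\lansyon P(X)(b)$ to the set of compatible pairs $(s,t)$, with $s\in\lansyon P(Y)(b)$ and $t\in\lansyon Q(X)(b)$ satisfying $(\lansyon f)_Y(s)=\lansyon Q(g)(t)$, is a bijection.

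For surjectivity (existence), I take a compatible pair $s=p\coendtensor[v]\bar y$ and $t=q\coendtensor[u]\bar x'$, whose compatibility reads $f_{b,v}(p)\coendtensor[v]\bar y=q\coendtensor[u](g\bar x')$. Because $Q$ is a stable species and $Y\in\StPSh(\gpd1,\kit1)$, Lemma~\ref{lem:FreenessStable} supplies a unique $\alpha:v\to u$ in $\Sym\gpd1$ with $(g\bar x')\cdot\alpha=\bar y$ and $\alpha\cdot f_{b,v}(p)=q$. I then set $r:=(\alpha\cdot p)\coendtensor[u]\bar x'$ and check, using naturality of $f$ (to get $(\lansyon f)_X(r)=t$) and the defining coend relation together with $(g\bar x')\cdot\alpha=\bar y$ (to get $\lansyon P(g)(r)=s$), that $r$ maps to $(s,t)$.

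The main obstacle is injectivity, and this is where the Boolean structure becomes essential. Suppose $r_i=p_i\coendtensor[u_i]\bar x_i$ ($i=1,2$) both map to $(s,t)$. Equality of the $f$-images gives, via Lemma~\ref{lem:FreenessStable} for $Q$ and $X$, a unique $\gamma:u_1\to u_2$ with $\bar x_2\cdot\gamma=\bar x_1$ and $\gamma\cdot f_{b,u_1}(p_1)=f_{b,u_2}(p_2)$; equality of the $g$-images gives, via the same lemma for $P$ and $Y$, a unique $\delta:u_1\to u_2$ with $(g\bar x_2)\cdot\delta=g\bar x_1$ and $\delta\cdot p_1=p_2$. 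These a priori distinct witnesses must be reconciled, which I do by analysing the automorphism $\theta:=\gamma^{-1}\delta$ of $u_1$. On one hand $\theta$ stabilizes $g\bar x_1\in Y^\Sym(u_1)$, so by Lemma~\ref{lem:bangKitsPresheaves} (with $X=Y$) one gets $\theta\in\bigcup\oc\kit1(u_1)$. On the other hand, naturality of $f$ yields $\theta\cdot f_{b,u_1}(p_1)=f_{b,u_1}(p_1)$, so the stable-species property of $Q$ applied with $\beta=\id[b]\in\bigcup\kit2^\orth(b)$ gives $\theta\in\bigcup(\oc\kit1)^\orth(u_1)$.

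It remains to conclude $\theta=\id$. Since $\oc\kit1$ is a Boolean kit, Lemma~\ref{lem:kitDownClosed} shows $\cyclic\theta$ lies in \emph{both} $\oc\kit1(u_1)$ and $(\oc\kit1)^\orth(u_1)$; orthogonality of these subgroups then forces $\cyclic\theta=\{\id\}$, whence $\theta=\id$ and $\gamma=\delta$. Consequently $\bar x_2\cdot\gamma=\bar x_1$ and $\gamma\cdot p_1=\delta\cdot p_1=p_2$, so Lemma~\ref{lem:FreenessStable} yields $r_1=r_2$, establishing injectivity and hence that the square is a pullback. I expect the existence step to be routine coend bookkeeping; the crux is the cancellation $\gamma=\delta$, which is precisely where double orthogonality (Booleanness of $\oc\kit1$) is indispensable and which explains, as the paper emphasizes, why cartesianness succeeds here while it fails for plain polynomial functors.
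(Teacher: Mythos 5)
Your proposal is correct and follows essentially the same route as the paper's proof: reduce to a pointwise pullback of sets, construct the mediating element by coend bookkeeping from the witness $\alpha$ supplied by Lemma~\ref{lem:FreenessStable}, and prove uniqueness by showing that the discrepancy automorphism lies in both $\bigcup\oc\kit1(u)$ (because it stabilizes an element of $Y^\Sym(u)$ with $Y$ stabilized) and $\bigcup(\oc\kit1)^\orth(u)$ (via the stable-species property of $Q$ against $\id[b]\in\bigcup\kit2^\orth(b)$), hence is the identity. The only cosmetic difference is that you compare two arbitrary preimages ($\gamma$ versus $\delta$) where the paper compares an arbitrary preimage against the constructed one ($\alpha$ versus $\beta\gamma$), and you make explicit the down-closure/orthogonality step that the paper leaves implicit.
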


\begin{proof}
	Let $g : X \rightarrow Y$ in $\StPSh(\kitstr{\gpd1})$, we want to show that the square below is a pullback in $\Set$ for all $b \in \gpd2$:  
	\begin{center}
		\begin{tikzpicture}[scale=0.75]
			\node (A) at (0, 2.3) {$\rsLan{P}(X)(b)$};
			\node (B) at (3,2.3) {$\rsLan{Q}(X)(b)$};
			\node (C) at (0,0) {$\rsLan{P}(Y)(b)$};
			\node (D) at (3,0) {$\rsLan{Q}(Y)(b)$};
			
			\draw [->] (A) to node [above] {$\rsLan{f}_X$} (B);
			\draw [->] (A) to node [left] {$\rsLan{P}(g )$} (C);
			\draw [->] (B) to node [right] {$\rsLan{Q}(g )$} (D);
			\draw [->] (C) to node [below] {$\rsLan{f}_Y$} (D);
		\end{tikzpicture}
	\end{center}
	We show that for all $(p\coendtensor[u_1] \bar{y})\in \rsLan{P}(Y)(b)$ and 
	$(q\coendtensor[u_2] \bar{x})\in \rsLan{Q}(X)(b)$ such that 
	\[
	f_{u_1}(p)\coendtensor[u_1]\bar{x} = \rsLan{f}_Y(p\coendtensor[u_1]\bar{x}) = \rsLan{Q}(g )(q\coendtensor[u_2] \bar{x}) = q\coendtensor[u_2] g (\bar{x})
	\] 
	there exists a unique $t\in \rsLan{P}(X)(b)$ such that $\rsLan{P}(g )(t)=p\coendtensor[u_1] \bar{y}$ and  $\rsLan{f}_X(t)= q\coendtensor[u_2] \bar{x}$. The equality $f_{u_1}(p)\coendtensor[u_1] \bar{y} = q\coendtensor[u_2] g (\bar{x})$ implies that there exists $\alpha : u_1 \rightarrow u_2$ in $\Sym \gpd1$ such that $q =  \alpha \cdot f(p)$ and $g(\bar{x}) \cdot \alpha =  g(\bar{x})  \circ s(\alpha)= \bar{y}$. Define $t$ to be ${p\coendtensor[u_1](\bar{x} \cdot \alpha)}\in \rsLan{P}(X)(b)$, we then obtain that 
	\[
	\rsLan{P}(g )(t)
	= p \coendtensor[u_1] g (\bar{x} \cdot \alpha) 
	= p \coendtensor[u_1] g (\bar{x}) \cdot \alpha 
	= p \coendtensor[u_1] \bar{y}
	\]
	and
	\[
	\rsLan{f}_X(t)
	= f(p) \coendtensor[u_1] \bar{x} \cdot \alpha
	= \alpha\cdot f(p) \coendtensor[u_2] \bar{x}
	= q \coendtensor[u_2] \bar{x}.
	\]
	
	Assume now that there exists $v = p_0 \coendtensor[u_0] \bar{x_0} \in \rsLan{P}(X)(b)$ such that $\rsLan{P}(g )(v)=p\coendtensor[u_1] \bar{y}$ and $\rsLan{f}_X(v)= q\coendtensor[u_2] \bar{x}$. We then have that $p_0 \coendtensor[u_0] g (\bar{x_0}) = p \coendtensor[u_1] \bar{y}$ and  $f(p_0)\coendtensor[u_0] \bar{x_0}  = q \coendtensor[u_2] \bar{x}$.  Hence, there exists $\beta : u_0 \rightarrow u_1$ in $\Sym \gpd1$ such that 
	$q = f (p_0) \cdot \beta$ and $\bar{x} \cdot \beta = \bar{x_0}$, and there exists $\gamma: u_1 \rightarrow u_0$ in $\Sym \gpd1$ such that $p_0 = \gamma \cdot p$ and
	$g (\bar{x_0}) \cdot \gamma = \bar{y}$. Therefore,
	\[
	\beta \gamma \cdot f (p)
	= 
	\beta \cdot f (\gamma \cdot p)
	= 
	\beta \cdot f (p_0)
	= 
	q
	\]
	and 
	\[	 
	g (\bar{x}) \cdot \beta \gamma
	=
	g (\bar{x} \cdot \beta) \cdot \gamma
	=
	g (\bar{x_0}) \cdot \gamma
	=
	\bar{y}.
	\]
	Therefore, $\beta \gamma \alpha^{-1}$ is in $\Stab(Y^\Sym(u_1))$ which implies that $\beta \gamma \alpha^{-1}\in \bigcup \oc \kit1(u_1)$ since $Y\in\StPSh(\kitstr{\gpd1})$. We also have $\beta \gamma \alpha^{-1} \cdot q =q$. Since $Q$ is a stable species and $\id[b]$ is in $\bigcup \kit2^\orth(b)$, we obtain that $\beta \gamma \alpha^{-1}$ is in $\bigcup \oc \kit1^\orth(u_1)$ which implies that $\alpha = \beta \gamma$. Hence, we have
	\[
	t
	=
	p\coendtensor[u_1] (\bar{x} \cdot \beta \gamma)
	=
	p\coendtensor[u_1] (\bar{x_0}\cdot \gamma )
	=
	(\gamma\cdot p)\coendtensor[u_0] \bar{x_0}
	=
	p_0\coendtensor[u_0] \bar{x_0}
	=
	v
	\]
	as desired. 
\end{proof}

We therefore have that $\rsLan{(-)}$ defines a functor $\SEsp(\kitstr{\gpd1}, \kitstr{\gpd2}) \to \Stable(\kitstr{\gpd1}, \kitstr{\gpd2})$.

\begin{lem}\label{lem:TraceCartTrans}
	Let $(\gpd1,\kit1), (\gpd2,\kit2)$ be Boolean kits and $T, S :  \StPSh(\kitstr{\gpd1}) \to \StPSh(\kitstr{\gpd2})$ be stable functors. For a cartesian transformation $f : T \Rightarrow S$, the transformation $\trace(f) : \trace(T) \Rightarrow \trace(S):\oc(\gpd1, \kit1) \profto (\gpd2, \kit2)$ whose components $\trace(f)_{(b,u)} : \trace(T)(b,u) \to \trace(S)(b,u)$ are given by:
	\[
	(t : b \to T(s(u)) \mapsto (f_{s(u)}(t) : b \to S(s(u)))
	\]
	is well-defined and natural.
\end{lem}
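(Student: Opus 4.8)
The plan is to check the two assertions in turn: first that each component $\trace(f)_{(b,u)}$ actually takes values in $\trace(S)(b,u)$, and then that the resulting family is natural in $(b,u)\in\gpd2^\op\times\Sym\gpd1$. Since $\trace(T)(b,u)$ consists precisely of the \emph{generic} morphisms $t:\yon b\to T(s(u))$, the content of well-definedness is exactly that $f_{s(u)}\circ t$ is generic whenever $t$ is; this is the one place where the cartesian hypothesis on $f$ is needed. Naturality will then be purely formal, flowing from the naturality of $f$ itself.

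For well-definedness I would appeal to the fact recorded earlier in this section, due to Weber~\cite{Weber2004generic}, that cartesian natural transformations preserve generic morphisms. To argue it directly, I would unfold Definition~\ref{def:generic} for $f_{s(u)}\circ t$: a factorization problem is given by objects $a',a''$ of $\StPSh(\kitstr{\gpd1})$, maps $h:s(u)\to a''$ and $f':a'\to a''$, and $v:\yon b\to S(a')$ with $S(h)\circ f_{s(u)}\circ t=S(f')\circ v$. Naturality of $f$ rewrites the left-hand side as $f_{a''}\circ T(h)\circ t$, so $S(f')\circ v=f_{a''}\circ(T(h)\circ t)$. As $f$ is cartesian, its naturality square at $f'$ is a pullback, producing a unique $w:\yon b\to T(a')$ with $f_{a'}\circ w=v$ and $T(f')\circ w=T(h)\circ t$. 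The latter is a factorization problem for the generic $t$, yielding a unique $k:s(u)\to a'$ with $f'\circ k=h$ and $T(k)\circ t=w$; transporting along $f$ gives $S(k)\circ f_{s(u)}\circ t=f_{a'}\circ w=v$, so $k$ solves the original problem. For uniqueness, a competing $k'$ makes $T(k')\circ t$ agree with $w$ after applying $f_{a'}$ and $T(f')$, hence $T(k')\circ t=w$ by the joint monicity of the pullback cone, and then $k'=k$ by genericity of $t$. This is the main obstacle, being the sole step that exploits the pullback property of cartesian squares; the rest is bookkeeping.

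For naturality I would fix a morphism $(\beta,\alpha)$ of $\gpd2^\op\times\Sym\gpd1$, with $\beta:b'\to b$ in $\gpd2$ and $\alpha:u\to u'$ in $\Sym\gpd1$, and evaluate the two legs of the naturality square on a generic $t:\yon b\to T(s(u))$. Because the functorial action of the trace is by composition, both legs take the form $(-)\circ t\circ\yon\beta$, and the equality reduces to comparing the prefixes $S(s(\alpha))\circ f_{s(u)}$ and $f_{s(u')}\circ T(s(\alpha))$. These coincide by the naturality square of $f$ at the morphism $s(\alpha):s(u)\to s(u')$, so the square commutes and nothing further is needed.
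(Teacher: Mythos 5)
Your proposal is correct and follows the same route as the paper: the paper's proof simply observes that cartesian natural transformations preserve (and reflect) generic morphisms, citing Weber, and that naturality of $\trace(f)$ is immediate from naturality of $f$. You additionally unfold the Weber argument in full (using the pullback property of the naturality square at $f'$ to produce the mediating $w$, then genericity of $t$, then joint monicity of the pullback cone for uniqueness), which is a correct expansion of the step the paper takes as known.
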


\begin{proof}
	Since cartesian transformations preserve and reflect generic elements, $f_{s(u)} (t)$ is generic if $t$ is generic. Naturality of $\trace(f)$ follows immediately from the naturality of $f$.
\end{proof}

\begin{thm}
\label{thm:correspondenceStable}
There is a biequivalence $\Stable \simeq \SEsp$.
\end{thm}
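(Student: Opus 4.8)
The plan is to assemble a biequivalence $\Stable \simeq \SEsp$ from the hom-level equivalences already established, by packaging the trace and restricted-Kan-extension operations into pseudofunctors and checking they are mutually pseudo-inverse. First I would define the data of a pseudofunctor $\rsLan{(-)} : \SEsp \to \Stable$ acting as the identity on objects (Boolean kits), sending a stable species $P \in \SEsp(\kitstr{\gpd1}, \kitstr{\gpd2})$ to the stable functor $\rsLan{P} = (\Lan_s P)\incl[\kit1]$, and a natural transformation $f$ to the cartesian transformation $\rsLan{f}$. That this is well-defined at the level of $1$-cells is exactly Proposition~\ref{prop:rsLanStable}, and at the level of $2$-cells it is Proposition~\ref{prop:NatTransCartesian}; functoriality on each hom-category was noted immediately after that proposition. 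In the opposite direction, $\trace(-) : \Stable \to \SEsp$ is again the identity on objects, sends a stable functor $T$ to the stable species $\trace(T)$ (Proposition~\ref{prop:StableQProf}) and a cartesian transformation to $\trace(f)$ (Lemma~\ref{lem:TraceCartTrans}).

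The core of the argument is to exhibit the two composites as pseudonaturally equivalent to the respective identities. For one composite, Lemma~\ref{lem:StableUnitIso} provides, for each stable species $P$, a natural \emph{isomorphism} $\eta_P : P \Rightarrow \trace(\rsLan{P})$, so that $\trace(-) \circ \rsLan{(-)} \cong \mathrm{Id}_{\SEsp}$; for the other, Lemma~\ref{lem:StableCounitIso} provides, for each stable functor $T$, a natural isomorphism $\varepsilon_T : \rsLan{\trace(T)} \Rightarrow T$, giving $\rsLan{(-)} \circ \trace(-) \cong \mathrm{Id}_{\Stable}$. Since both $\eta$ and $\varepsilon$ are pointwise isomorphisms, to conclude a biequivalence it suffices to verify that $\rsLan{(-)}$ is a pseudofunctor (the corresponding statement for $\trace(-)$ then follows formally, or may be checked symmetrically) and that $\eta, \varepsilon$ are pseudonatural in the relevant variable. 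Concretely I would check that $\rsLan{(-)}$ preserves identities and composition up to coherent invertible $2$-cells: identity preservation uses $\rsLan{\id} \cong \id$ since left Kan extension along $s$ of the identity species is the identity analytic functor restricted to $\StPSh$, and composition preservation $\rsLan{Q \circ P} \cong \rsLan{Q} \circ \rsLan{P}$ follows from the fact that $\Lan_s$ sends species composition in $\SProf_\oc$ to functor composition of the induced analytic functors, an instance of the analogous fact for $\Esp$ and $\Cocont$ transported along the forgetful pseudofunctor.

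The main obstacle I expect is not any single isomorphism—those are already supplied by the lemmas—but the bookkeeping of \emph{pseudonaturality and coherence}: one must confirm that $\eta$ and $\varepsilon$ respect the compositor and unitor $2$-cells of the two pseudofunctors, i.e. that the families $\{\eta_P\}$ and $\{\varepsilon_T\}$ form modifications compatible with the pseudofunctorial structure, and that the hom-level equivalences are natural in both the source and target Boolean kits. I would handle this by invoking the standard criterion that a pseudofunctor inducing equivalences on all hom-categories and which is (bi)essentially surjective on objects is a biequivalence: here $\rsLan{(-)}$ is the identity on objects, hence trivially essentially surjective, and Lemmas~\ref{lem:StableUnitIso} and \ref{lem:StableCounitIso} together with Propositions~\ref{prop:NatTransCartesian} and the faithfulness arguments show each functor $\SEsp(\kitstr{\gpd1}, \kitstr{\gpd2}) \to \Stable(\kitstr{\gpd1}, \kitstr{\gpd2})$ is an equivalence of categories. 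Appealing to this criterion lets me reduce the entire statement to the already-proven local equivalences, so that the remaining verifications are the routine checks that $\rsLan{(-)}$ genuinely assembles into a pseudofunctor between the two bicategories.
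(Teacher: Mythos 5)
Your proposal follows essentially the same route as the paper: the paper's proof likewise assembles the biequivalence from the local adjoint equivalences given by Propositions~\ref{prop:rsLanStable}, \ref{prop:NatTransCartesian}, \ref{prop:StableQProf} and Lemma~\ref{lem:TraceCartTrans} (well-definedness of $\rsLan{(-)}$ and $\trace$ on $1$- and $2$-cells) together with Lemmas~\ref{lem:StableUnitIso} and \ref{lem:StableCounitIso} (the unit and counit isomorphisms), and concludes via the standard local-equivalence criterion. Your additional remarks on verifying pseudofunctoriality and coherence of $\rsLan{(-)}$ are sensible elaborations of checks the paper leaves implicit.
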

\begin{proof}
	We now have an adjoint equivalence $\SEsp(\kitstr{\gpd1}, \kitstr{\gpd2}) \simeq \Stable(\kitstr{\gpd1},\kitstr{\gpd2})$ for Boolean kits $\kitstr{\gpd1}$ and $\kitstr{\gpd2}$. Indeed, the functors $\rsLan{(-)}$ and $\trace$ are well-defined by Propositions \ref{prop:rsLanStable}, \ref{prop:NatTransCartesian}, \ref{prop:StableQProf} and Lemma \ref{lem:TraceCartTrans}. They form an equivalence by Lemmas \ref{lem:StableUnitIso} and \ref{lem:StableCounitIso} and since any equivalence induces an adjoint equivalence, we obtain the desired result.
\end{proof}
As a corollary, we obtain that $\Stable$ is cartesian closed (as a bicategory).

\section{The linear decomposition of stable functors}\label{sec:Linearity}
The linear decomposition of the function space 
\[
A \Rightarrow B  \ \  =\ \  \oc A \multimap B
\]
in Girard's original coherence space model \cite{GirardCoherence} is
what triggered the development of linear logic. With this insight, one can define
cartesian closed models starting from models of linear logic, as we
have done for stable species using stabilized profunctors.

Unlike stable species, the notion of stable functor does not
immediately lead to a linear decomposition as above, but our
correspondence theorem (Theorem~\ref{thm:correspondenceStable})
induces the following chain of equivalences:
\begin{equation}
\label{eq:decomposition}
\Stable(\kitstr{\gpd1}, \kitstr{\gpd2}) \simeq \SEsp(\kitstr{\gpd1}, \kitstr{\gpd2}) = \SProf(\oc \kitstr{\gpd1}, \kitstr{\gpd2}).
\end{equation}
Below we describe a purely extensional decomposition of the form
\[
\Stable(\kitstr{\gpd1}, \kitstr{\gpd2}) \simeq 
\Linear(\oc\kitstr{\gpd1}, \kitstr{\gpd2})
\]
for $\Linear$ a sub-2-category of $\Stable$ which we define. This decomposition is derived from $\eqref{eq:decomposition}$ through a biequivalence $\SProf \simeq \Linear$ analogous to the biequivalence between the bicategory $\Prof$ and the 2-category $\Cocont$ in Theorem \ref{thm:BiequivalenceProfCocont}. Concretely, this biequivalence is given in the direction $\Prof \to \Cocont$ by the mapping $P \mapsto P\lanyon$, and in the reverse direction by pre-composition with the Yoneda embedding: for any functor $\PSh(\gpd1) \to \PSh(\gpd2)$ the composite $\gpd1 \hookrightarrow \PSh(\gpd1) \to \PSh(\gpd2)$ determines a profunctor. 

That this is a biequivalence relies on two facts. First, because $\yon$ is an embedding, the 2-cell fitting in the Kan extension diagram  
\[
	\begin{tikzpicture}[scale=0.9]
	\node (A) at (0,1.25) {$\gpd1$};
	\node (B) at (3,1.25) {$\PSh(\gpd2)$};
	\node (C) at (1.5,0) {$\PSh(\gpd1)$};
	\node (D) at (1.5,0.75) {$\Downarrow$};
	
	\draw [->] (A) to node [above] {$P$} (B);
	\draw [<-, dotted] (B) to node [below right]  {$P\lanyon$} (C);
	\draw [right hook->] (A) to node [below left] {$\yon$} (C);
	\end{tikzpicture}
\]
is an isomorphism. Second, the categories $\PSh(\gpd1)$ and $\PSh(\gpd2)$ are locally small and cocomplete, so that $P\lanyon$ always has a right adjoint $Y \mapsto (a \mapsto \PSh(\gpd2)(P(a) , Y))$, and for functors $\PSh(\gpd1) \to \PSh(\gpd2)$ (with $\gpd1$ and $\gpd2$ small) to be cocontinuous and to have a right adjoint are equivalent properties. 

The first of these facts extends to $\SProf$ without difficulty: for kits $\kitstr{\gpd1}$ and $\kitstr{\gpd2}$ and a stabilized profunctor $P:\kitstr{\gpd1} \profto \kitstr{\gpd2}$, the restriction of $P\lanyon$ to stabilized presheaves
 coincides with the left Kan extension of $P$
along the restricted Yoneda embedding $\yy[\kit1] : \gpd1 \to \StPSh(\kitstr{\gpd1})$  
and indeed we have $P\lanyon \circ \yy[\kit1] \cong P$ (Lemma \ref{lem:RestrictedKanSProf}). 

On the other hand, categories of stabilized presheaves do not have all colimits, and the functors $P\lanyon : \StPSh(\kitstr{\gpd1}) \to \StPSh(\kitstr{\gpd2})$ are not left adjoints in general. They do however have a right adjoint on each slice, and this property suffices to characterize them among the stable functors.
\begin{defi}\label{def:linearfunctor}
For kits $\kitstr{\gpd1}$ and $\kitstr{\gpd2}$, a functor $L : \StPSh(\kitstr{\gpd1}) \to \StPSh(\kitstr{\gpd2})$ is called \defn{linear} if it is stable and a local left adjoint. 
\end{defi}
Local left adjoints do not preserve all colimits, but for a linear functor $L : \StPSh(\kitstr{\gpd1}) \to \StPSh(\kitstr{\gpd2})$ the following weaker property holds: for a diagram $D : J \to \StPSh(\kitstr{\gpd1})$, if $\colim_{j \in J} D_j$ exists in $\StPSh(\kitstr{\gpd1})$, and $\colim_{j \in J} F(D_j)$ exists in $\StPSh(\kitstr{\gpd2})$, then $F(\colim_{j \in J} D_j) = \colim_{j \in J} F(D_j)$. 
Therefore, in particular, $L$ preserves all sums, and for a group $G \in \kit1(a)$, $L(\extyon a G)$ is the colimit of the diagram \[
G \to \gpd1 \hookrightarrow \StPSh(\kitstr{\gpd1}) \xrightarrow{L} \StPSh(\kitstr{\gpd2}).\]

We call $\Linear$ the sub-2-category of $\Stable$ consisting of linear functors and 
cartesian natural transformations between them, and proceed to construct a biequivalence $\SProf \simeq \Linear$ in several steps. 

We first verify that for a stabilized profunctor $P:\kitstr{\gpd1} \profto \kitstr{\gpd2}$, the restriction of $P\lanyon\incl[\kit1]$ is isomorphic to the left Kan extension of $P$ along the restricted Yoneda embedding $\yy[\kit1] : \gpd1 \to \StPSh(\kitstr{\gpd1})$ by using the universal property of Kan extensions in terms of weighted colimits:
\begin{prop}\label{prop:LanWeightedColim}
	For functors $F: \cat{A} \to \cat{C}$ and $H : \cat{A} \to \cat{B}$, there is an isomorphism:
	\[
	\cat{C}\left((\Lan_H F)(b)), c\right) \cong \PSh{\cat{A}} \left(\cat{B}(H(-), b), \cat{C}(F(-), c)\right).
	\]
\end{prop}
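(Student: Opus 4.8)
The plan is to recognise the displayed bijection as the universal property of $(\Lan_H F)(b)$ viewed as a weighted colimit, and to deduce it through a short chain of natural isomorphisms. The starting point is the pointwise formula for the left Kan extension: when it exists pointwise, $(\Lan_H F)(b)$ is the colimit of $F$ weighted by $\cat{B}(H(-), b) : \cat{A}^\op \to \Set$, that is, the coend
\[
(\Lan_H F)(b) \;\cong\; \int^{a \in \cat{A}} \cat{B}(H a, b) \bullet F a ,
\]
where $S \bullet x$ denotes the copower, the $S$-indexed coproduct of copies of $x$. The integrand is functorial of type $\cat{A}^\op \times \cat{A} \to \cat{C}$, with $\cat{B}(Ha,b)$ contravariant and $Fa$ covariant in $a$, so the coend is well-formed.

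The heart of the argument is then formal. First I would apply $\cat{C}(-, c)$, which carries colimits in its first argument to limits and hence turns the coend into the corresponding end,
\[
\cat{C}\bigl((\Lan_H F)(b), c\bigr) \;\cong\; \int_{a \in \cat{A}} \cat{C}\bigl(\cat{B}(Ha,b) \bullet Fa,\; c\bigr) .
\]
Next, the copower--hom adjunction $\cat{C}(S \bullet x, y) \cong \Set(S, \cat{C}(x,y))$, applied under the end, rewrites this as
\[
\int_{a \in \cat{A}} \Set\bigl(\cat{B}(Ha,b),\; \cat{C}(Fa,c)\bigr) .
\]
Finally I would invoke the standard end description of natural transformations, $\PSh{\cat{A}}(P,Q) \cong \int_{a} \Set(Pa, Qa)$ for presheaves $P,Q : \cat{A}^\op \to \Set$, instantiated at $P = \cat{B}(H(-),b)$ and $Q = \cat{C}(F(-),c)$. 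Composing these isomorphisms, all of which are natural in $c$, yields the claim.

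The one delicate point is the very first step, since it presupposes that $\Lan_H F$ is a \emph{pointwise} left Kan extension, that is, that the weighted colimit above exists in $\cat{C}$. This matters here because in the intended applications $\cat{C}$ is a category of stabilized presheaves, which is not cocomplete, so one cannot appeal to a general existence theorem. The clean way to handle this is to read the proposition itself as the \emph{definition} of pointwise-ness: it asserts precisely that $(\Lan_H F)(b)$ represents the functor $c \mapsto \PSh{\cat{A}}(\cat{B}(H(-),b), \cat{C}(F(-),c))$, and the chain of isomorphisms above is exactly the verification that this representing object, when it exists, coincides with the weighted colimit. The remaining steps, namely continuity of the hom-functor, the copower adjunction, and the end formula for natural transformations, are purely formal and impose no further hypotheses on $\cat{C}$.
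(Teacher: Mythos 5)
Your argument is correct, and in fact the paper offers no proof of this proposition at all: it is stated as a standard fact and then used (in Lemma~\ref{lem:KanExtensions}) purely through the displayed bijection. Your chain of isomorphisms --- the coend/weighted-colimit formula for $(\Lan_H F)(b)$, continuity of $\cat{C}(-,c)$ turning the coend into an end, the copower--hom adjunction $\cat{C}(S\bullet x,y)\cong\Set(S,\cat{C}(x,y))$, and the end formula for natural transformations --- is the standard derivation and is the one any reader would supply. Your closing caveat is the genuinely valuable part: since the relevant $\cat{C}$ in this paper is a category of stabilized presheaves, which is not cocomplete, the first step cannot be justified by a general existence theorem, and the honest reading of the proposition is as the \emph{defining} universal property of $(\Lan_H F)(b)$ as the colimit of $F$ weighted by $\cat{B}(H(-),b)$ --- i.e., it asserts that this object, when it exists, represents $c\mapsto\PSh{\cat{A}}\bigl(\cat{B}(H(-),b),\cat{C}(F(-),c)\bigr)$. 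That is exactly how the paper deploys it, so your treatment is not only correct but arguably more careful than the source.
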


\begin{lem}\label{lem:KanExtensions}
	Let $I : \cat{C} \to \cat{D}$ and $J: \cat{B} \to \cat{E}$ be embeddings. For functors $F: \cat{A} \to \cat{C}$, $H : \cat{A} \to \cat{B}$ and $L: \cat{B} \to \cat{C}$, if $IL \cong (\Lan_{JH} (I F)) J$ then $L \cong \Lan_H F$.
	\begin{center}
		\begin{tikzpicture}[scale=1.1]
			\node (A) at (0,1) {$\cat{A}$};
			\node (C) at (2.5,1) {$\cat{C}$};
			\node (B) at (1.25,0) {$\cat{B}$};
			\node (D) at (5,1) {$\cat{D}$};
			\node (E) at (2.5,-1) {$\cat{E}$};
			
			\draw [->] (A) to node [above] {$F$} (C);
			\draw [->] (B) to node [below right]  {$L$} (C);
			\draw [->] (A) to node [below left] {$H$} (B);
			\draw [->] (B) to node [below left] {$J$} (E);
			\draw [->] (C) to node [above] {$I$} (D);
			\draw [->] (E) to node [below right] {$\Lan_{JH}(IF)$} (D);
		\end{tikzpicture}
	\end{center}
	
\end{lem}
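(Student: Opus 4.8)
The plan is to prove that $L$ satisfies, objectwise, the universal property of the pointwise left Kan extension recorded in Proposition~\ref{prop:LanWeightedColim}. Explicitly, I would establish a natural isomorphism
\[
\cat{C}(Lb, c) \;\cong\; \PSh{\cat{A}}\bigl(\cat{B}(H(-), b),\, \cat{C}(F(-), c)\bigr)
\]
in $b\in\cat{B}$ and $c\in\cat{C}$. By Proposition~\ref{prop:LanWeightedColim} this is exactly the defining property of $(\Lan_H F)(b)$, so exhibiting $Lb$ as the object representing the functor $c\mapsto\PSh{\cat{A}}(\cat{B}(H(-),b),\cat{C}(F(-),c))$ both shows that the pointwise left Kan extension exists and identifies it with $L$; naturality in $b$ and the Yoneda lemma then upgrade this to $L \cong \Lan_H F$.

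To produce this isomorphism I would start from the Kan extension $\Lan_{JH}(IF)$, which exists by hypothesis, and apply Proposition~\ref{prop:LanWeightedColim} to the functors $IF:\cat{A}\to\cat{D}$ and $JH:\cat{A}\to\cat{E}$, specialized at $e = Jb$ and $d = Ic$:
\[
\cat{D}\bigl((\Lan_{JH}(IF))(Jb),\, Ic\bigr)
\;\cong\;
\PSh{\cat{A}}\bigl(\cat{E}(JH(-), Jb),\, \cat{D}(IF(-), Ic)\bigr).
\]
Then I would rewrite every hom-set using the two hypotheses that $I$ and $J$ are embeddings (hence full and faithful) together with $IL \cong (\Lan_{JH}(IF))J$. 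On the left,
\[
\cat{D}\bigl((\Lan_{JH}(IF))(Jb),\, Ic\bigr)
\cong \cat{D}(ILb, Ic)
\cong \cat{C}(Lb, c),
\]
where the first isomorphism is the hypothesis and the second is full faithfulness of $I$. On the right, the fully faithful $J$ gives $\cat{E}(JH(-), Jb)\cong\cat{B}(H(-), b)$ while the fully faithful $I$ gives $\cat{D}(IF(-), Ic)\cong\cat{C}(F(-), c)$, so both the weight and the coefficient presheaf descend to $\cat{A}$-presheaves built from $\cat{B}$ and $\cat{C}$. Chaining these isomorphisms yields exactly the target isomorphism of the previous paragraph.

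The substantive point to check — and the only real obstacle, albeit a routine one — is \emph{naturality}. Each link in the chain (the hypothesis $IL\cong(\Lan_{JH}(IF))J$, the two fully-faithful isomorphisms, and the weighted-colimit formula of Proposition~\ref{prop:LanWeightedColim}) is natural in the appropriate variables, and the specializations $e=Jb$, $d=Ic$ are functorial in $b$ and $c$; I would therefore record that the composite isomorphism is natural in both $b$ and $c$ before invoking Yoneda. No colimits need to be computed explicitly and no hypotheses on $\cat{A},\cat{B},\cat{C},\cat{D},\cat{E}$ are required beyond those stated, so once naturality is in place the conclusion $L\cong\Lan_H F$ follows immediately.
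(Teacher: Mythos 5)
Your proof is correct and takes essentially the same route as the paper's: the identical chain of hom-set isomorphisms (full faithfulness of $I$, the hypothesis $IL\cong(\Lan_{JH}(IF))J$, Proposition~\ref{prop:LanWeightedColim} applied to $\Lan_{JH}(IF)$, then full faithfulness of $I$ and $J$ again), concluded by invoking Proposition~\ref{prop:LanWeightedColim} once more to identify $L$ with $\Lan_H F$. Your explicit remark that naturality in $b$ and $c$ must be tracked before applying Yoneda is a point the paper leaves implicit, but otherwise the arguments coincide.
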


\begin{proof}
	For all $b\in \cat{B}$ and $c \in \cat{C}$, we have: 
	\[\begin{aligned}
		\cat{C}\left(L(b), c\right) &\cong \cat{D}(IL(b), I(c))\\
		&\cong \cat{D}( \Lan_{JH} (I F)(J b), I(c))\\
		&\cong \PSh{\cat{A}} \left(\cat{E}(JH(-), J(b)), \cat{D}(IF(-), I(c))\right)\\
		&\cong\PSh{\cat{A}} \left(\cat{B}(H(-), b), \cat{C}(F(-), c)\right)
	\end{aligned}
	\]
	By Proposition \ref{prop:LanWeightedColim}, we obtain the desired result.
\end{proof}

\begin{lem}\label{lem:RestrictedKanSProf}
	Let $\kitstr{\gpd1} = (\gpd1,\kit1)$ and $\kitstr{\gpd2} = (\gpd2,\kit1)$ be Boolean kits. For a profunctor $P: \gpd1\profto\gpd2$, if the functor $P^{\#} \incl[\kit1] : \StPSh(\kitstr{\gpd1}) \to \PSh{\gpd2}$ factors through $\incl[\kit2]$ by a functor $L: \StPSh(\kitstr{\gpd1}) \to \StPSh(\kitstr{\gpd2})$, then there exists a functor $Q : \gpd1 \to \StPSh(\kitstr{\gpd2})$ such that $P$ factors through $\incl[\kit2]$ by $Q$ and $ L \cong \Lan_{\yy[\kit1]}Q$.
	
	\begin{center}
		\begin{tikzpicture}[scale=1]
			\node (A) at (0,1) {$\gpd1$};
			\node (D) at (5,1) {$\PSh{\gpd2}$};
			\node (E) at (2.5,-1) {$\PSh{\gpd1}$};
			\node at (6,0) {$=$};
			\draw [->] (A) to node [above] {$P$} (D);
			
			\draw [->] (A) to node [below left] {$\yon[\gpd1]$} (E);
			\draw [->] (E) to node [below right] {$P^\#$} (D);
			
			\begin{scope}[xshift=7cm]
				\node (A) at (0,1) {$\gpd1$};
				\node (C) at (2.5,1) {$\StPSh(\kitstr{\gpd2})$};
				\node (B) at (1.25,0) {$\StPSh(\kitstr{\gpd1})$};
				\node (D) at (5,1) {$\PSh{\gpd2}$};
				\node (E) at (2.5,-1) {$\PSh{\gpd1}$};
				
				\draw [->] (A) to node [above] {$Q$} (C);
				\draw [dotted, ->] (B) to node [below right]  {$L$} (C);
				\draw [->] (A) to node [below left] {$\yy[\kit1]$} (B);
				\draw [->] (B) to node [below left] {$\incl[\kit1]$} (E);
				\draw [->] (C) to node [above] {$\incl[\kit2]$} (D);
				\draw [->] (E) to node [below right] {$P^\#$} (D);
			\end{scope}
		\end{tikzpicture}
	\end{center}
	
\end{lem}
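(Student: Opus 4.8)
The plan is to derive this as an instance of the previously-established Lemma~\ref{lem:KanExtensions}, specialised to the two full-subcategory inclusions $\incl[\kit1]$ and $\incl[\kit2]$. Concretely, I would apply that lemma with $\cat{A} = \gpd1$, $\cat{B} = \StPSh(\kitstr{\gpd1})$, $\cat{C} = \StPSh(\kitstr{\gpd2})$, $\cat{D} = \PSh(\gpd2)$ and $\cat{E} = \PSh(\gpd1)$, taking the embeddings $I = \incl[\kit2]$ and $J = \incl[\kit1]$ together with $H = \yy[\kit1]$ and the given functor $L$. Both inclusions are fully faithful inclusions of full subcategories, so they qualify as embeddings, and the only work is to construct the functor $F = Q$ and to verify the single hypothesis $IL \cong (\Lan_{JH}(IF))J$ of that lemma.

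For the factorization I would set $Q := L \circ \yy[\kit1] : \gpd1 \to \StPSh(\kitstr{\gpd2})$. Using the identity $\incl[\kit1]\yy[\kit1] = \yon[\gpd1]$ and the hypothesis $\incl[\kit2] L = P^{\#}\incl[\kit1]$ (that $P^{\#}\incl[\kit1]$ factors through $\incl[\kit2]$ by $L$), one computes $\incl[\kit2] Q = \incl[\kit2] L\, \yy[\kit1] = P^{\#}\incl[\kit1]\,\yy[\kit1] = P^{\#}\yon[\gpd1] \cong P$, where the final isomorphism is the unit of $P^{\#} = \Lan_{\yon[\gpd1]} P$ evaluated at representables, invertible because $\yon[\gpd1]$ is fully faithful (Section~\ref{subsec:profunctors}, and recalled again in Section~\ref{sec:Linearity}). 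This exhibits $P$ as factoring through $\incl[\kit2]$ by $Q$, as demanded by the conclusion.

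Finally I would verify the hypothesis of Lemma~\ref{lem:KanExtensions} with $F = Q$. Here $JH = \incl[\kit1]\yy[\kit1] = \yon[\gpd1]$ and $IF = \incl[\kit2] Q \cong P$, so $\Lan_{JH}(IF) \cong \Lan_{\yon[\gpd1]} P = P^{\#}$ and therefore $(\Lan_{JH}(IF))J \cong P^{\#}\incl[\kit1]$, which is exactly $IL = \incl[\kit2] L$ by the factorization hypothesis. Lemma~\ref{lem:KanExtensions} then delivers $L \cong \Lan_{\yy[\kit1]} Q$, completing the argument. Beyond this bookkeeping there is no real obstacle: the only non-formal inputs are the identification $\Lan_{\yon[\gpd1]}(\incl[\kit2]Q) \cong P^{\#}$, recovering precisely the functor occurring in the hypothesis, and the well-definedness of $Q = L\yy[\kit1]$, which lands in $\StPSh(\kitstr{\gpd2})$ immediately since $L$ has that codomain. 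The substance of the statement has already been absorbed into Lemma~\ref{lem:KanExtensions}.
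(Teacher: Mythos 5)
Your proposal is correct and follows essentially the same route as the paper's own proof: both define $Q := L\,\yy[\kit1]$, use full faithfulness of $\yon[\gpd1]$ to get $P \cong \incl[\kit2] Q$, and then invoke Lemma~\ref{lem:KanExtensions} with exactly the instantiation you describe to conclude $L \cong \Lan_{\yy[\kit1]} Q$. Your write-up merely makes the verification of the hypothesis of Lemma~\ref{lem:KanExtensions} more explicit than the paper does.
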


\begin{proof}
	The restricted functor $L:\StPSh(\kitstr{\gpd1}) \to \StPSh(\kitstr{\gpd2}) $ such that $P^\# \incl[\kit1] = \incl[\kit2] L$ is obtained from Lemma \ref{lem:qprofunctor_characterisation}. Since $\yon[\gpd1]$ is fully faithful, we have $P \cong P^{\#} \yon[\gpd1] = P^{\#} \incl[\kit1] \yy[\kit1] = \incl[\kit2] L \yy[\kit1]$. Let $Q := L \yy[\kit1] : \gpd1 \to \StPSh(\kitstr{\gpd2})$, we obtain that $L \cong \Lan_{\yy[\kit1]} Q$ by Lemma \ref{lem:KanExtensions}.
\end{proof}

For a stabilized profunctor $P: \kitstr{\gpd1} \profto \kitstr{\gpd2}$ we denote by $\rLan{P} : \StPSh(\kitstr{\gpd1}) \to \StPSh(\kitstr{\gpd2})$ the functor $\Lan_{\ryon[{\kit1}]} P \cong P^\# \incl[\kit1]$. We proceed to show that $\rLan{(-)}$ induces a functor from $\SProf(\kitstr{\gpd1}, \kitstr{\gpd2})$ to $\Lin(\kitstr{\gpd1},\kitstr{\gpd2})$.

\begin{lem}\label{lem:FreenessLinear}
	For Boolean kits $\kitstr{\gpd1}, \kitstr{\gpd2}$, a presheaf $X\in\StPSh(\kitstr{\gpd1})$ and a stabilized profunctor $P: \kitstr{\gpd1} \profto \kitstr{\gpd2}$, if two elements $p \coendtensor[a] x$ and $ p' \coendtensor[a'] x'$ in $P^\# X$ are equal, there exists a unique $\alpha : a \to a'$ in $\gpd1$ such that $x' \cdot f = x$ and $f \cdot p = p'$. 
\end{lem}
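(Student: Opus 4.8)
The plan is to adapt the proof of Lemma~\ref{lem:FreenessStable}, of which this is the linear ($\Sym$-free) analogue; it is in fact simpler, as no exponential kit is involved. Fix the object $b\in\gpd2$ at which the two elements live, so that $p\in P(b,a)$, $x\in X(a)$, $p'\in P(b,a')$ and $x'\in X(a')$. For existence I would use the pointwise description $P^{\#}X(b)=\int^{a}P(b,a)\times X(a)$: because $\gpd1$ is a groupoid, the relation ``there is a single morphism $\alpha:a\to a'$ with $\alpha\cdot p=p'$ and $x'\cdot\alpha=x$'' is already reflexive, symmetric and transitive (using invertibility and composition of morphisms), and it contains the generating relation of the coend; hence it \emph{is} the coend identification, and the assumed equality produces such an $\alpha$ directly.

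For uniqueness, suppose $\alpha,\beta:a\to a'$ both satisfy $\alpha\cdot p=p'=\beta\cdot p$ and $x'\cdot\alpha=x=x'\cdot\beta$, and put $g:=\beta^{-1}\alpha\in\Endo(a)$. I would first read off two stabilizer equations. Applying $\beta^{-1}$ to $\alpha\cdot p=\beta\cdot p$ gives $g\cdot p=\beta^{-1}\cdot(\alpha\cdot p)=\beta^{-1}\cdot(\beta\cdot p)=p$, i.e.\ $g\cdot p\cdot\id[b]=p$. On the presheaf side, $x=x'\cdot\beta$ yields $x\cdot g=(x'\cdot\beta)\cdot(\beta^{-1}\alpha)=x'\cdot\alpha=x$, so $g\in\Stab(x)$.

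The conclusion then comes from playing the two sides against each other through orthogonality. Since $P$ is a stabilized profunctor and $g\cdot p\cdot\id[b]=p$, the second implication of Definition~\ref{def:sprofunctor} applies with $\id[b]\in\bigcup\kit2^\orth(b)$ --- valid because the Boolean kit $\kit2^\orth$ contains the trivial subgroup --- giving $g\in\bigcup\kit1^\orth(a)$, say $g\in K$ with $K\in\kit1^\orth(a)$. On the other hand, $X\in\StPSh(\kitstr{\gpd1})$ forces $\Stab(x)\in\kit1(a)$, and by definition of the orthogonal kit $K\orth\Stab(x)$, i.e.\ $K\cap\Stab(x)=\{\id\}$. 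As $g\in K\cap\Stab(x)$, we get $g=\id$ and hence $\alpha=\beta$.

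I expect the only real care needed to be bookkeeping: keeping the covariant action of $\gpd1$ on $P(b,-)$ separate from the contravariant (right) action on $X$, and fixing composition conventions so that $g=\beta^{-1}\alpha$ is genuinely an endomorphism of $a$ satisfying both equations above. The conceptual crux --- and the only place the hypotheses are genuinely used --- is the final collision $g\in\bigcup\kit1^\orth(a)$ versus $g\in\Stab(x)\in\kit1(a)$, resolved to $g=\id$ by orthogonality; this is exactly where the stabilized-profunctor condition and the Boolean-kit structure (via $\id[b]\in\bigcup\kit2^\orth(b)$) enter.
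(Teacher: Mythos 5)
Your proposal is correct and follows essentially the same route as the paper: the paper proves this lemma by reference to Lemma~\ref{lem:FreenessStable}, whose argument is exactly your uniqueness step (form $g=\beta^{-1}\alpha$, get $g\in\bigcup\kit1^\orth(a)$ from the stabilized-profunctor condition with $\id[b]\in\bigcup\kit2^\orth(b)$, get $g\in\Stab(x)\in\kit1(a)$ from $X$ being stabilized, and conclude $g=\id$ by orthogonality), with existence read off from the pointwise coend. Your extra observation that the generating relation of the coend is already an equivalence relation over a groupoid is a harmless refinement of the paper's terser "follows from the pointwise definition of the coend."
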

\begin{proof}
	Similar to Lemma \ref{lem:FreenessStable}.
\end{proof}

\begin{lem}\label{lem:QProfStable}
	For kits $\kitstr{\gpd1} = (\gpd1,\kit1)$ and $\kitstr{\gpd2} = (\gpd2,\kit2)$, if a profunctor $P:\gpd1\profto \gpd2$ is in $\SProf(\kitstr{\gpd1}, \kitstr{\gpd2})$, then $\rLan{P} :\StPSh(\kitstr{\gpd1})\to\StPSh(\kitstr{\gpd2})$ is stable.
\end{lem}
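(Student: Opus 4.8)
The plan is to verify the three defining properties of Definition~\ref{def:stablefunctor} in turn --- finitarity, preservation of epimorphisms, and the local right adjoint property --- by exploiting that $\rLan{P}$ is the restriction to stabilized presheaves of the cocontinuous functor $P^\#$. First I would record that $\rLan{P}$ is a well-defined functor $\StPSh(\kitstr{\gpd1})\to\StPSh(\kitstr{\gpd2})$: since $P$ is a stabilized profunctor, Lemma~\ref{lem:qprofunctor_characterisation} guarantees that $P^\#$ sends $\StPSh(\kitstr{\gpd1})$ into $\StPSh(\kitstr{\gpd2})$.

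For the first two properties the argument is short. By Theorem~\ref{thm:BiequivalenceProfCocont} the functor $P^\#:\PSh(\gpd1)\to\PSh(\gpd2)$ is cocontinuous; in particular it is finitary, and since it preserves pushouts it preserves epimorphisms (an epimorphism is detected by the coincidence of the two coprojections of its cokernel pair, and this coincidence is preserved by any colimit-preserving functor). By Corollary~\ref{cor:QuantitativePresheavesCreation} the inclusions $\incl[\kit1]$ and $\incl[\kit2]$ create filtered colimits and epimorphisms, so these are computed as in the ambient presheaf categories on both sides; finitarity and epi-preservation then transfer directly from $P^\#$ to its restriction $\rLan{P}$.

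The substance lies in the local right adjoint property, which I would establish via its equivalence with admitting generic factorizations. Using Proposition~\ref{prop:ColimGeneric} together with the representation of every stabilized presheaf as a colimit of representables (Lemma~\ref{lem:StShRepresentation}), it suffices to produce generic factorizations for morphisms $h:\yon b\to\rLan{P}(X)$ with $b\in\gpd2$ and $X\in\StPSh(\kitstr{\gpd1})$. By the coend formula $\rLan{P}(X)(b)=\int^{a}P(b,a)\times X(a)$, such an $h$ is an element $p\coendtensor[a]x$ with $p\in P(b,a)$ and $x\in X(a)$, the latter corresponding to a map $x:\yon a\to X$. Using $P^\#\yon a\cong P(-,a)$, I claim the factorization
\[
\yon b \xrightarrow{\,p\coendtensor[a]\id\,} \rLan{P}(\yon a) \xrightarrow{\,\rLan{P}(x)\,} \rLan{P}(X)
\]
is generic; the equality $h=\rLan{P}(x)\circ(p\coendtensor[a]\id)$ is immediate from the definitions.

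Verifying genericity of $p\coendtensor[a]\id$ is the crux, and it runs parallel to the argument for stable species in Proposition~\ref{prop:rsLanStable}. Existence of the mediating morphism in the genericity square follows from the pointwise description of the coend, while its \emph{uniqueness} is exactly the freeness statement of Lemma~\ref{lem:FreenessLinear}: any two decompositions $p\coendtensor[a]x=p'\coendtensor[a']x'$ of a single element of $\rLan{P}(X)(b)$ are related by a \emph{unique} $\alpha:a\to a'$ in $\gpd1$ with $x'\cdot\alpha=x$ and $\alpha\cdot p=p'$. That uniqueness is itself a consequence of $P$ being stabilized and of $X\in\StPSh(\kitstr{\gpd1})$, which together force the ambiguous endomorphism to lie in $\bigcup(\kit1)(a)\cap\bigcup(\kit1^\orth)(a)$ and hence to be the identity. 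The hard part is therefore precisely this uniqueness clause; once Lemma~\ref{lem:FreenessLinear} is invoked, the remaining diagram chases verifying the generic universal property are routine.
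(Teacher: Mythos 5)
Your proposal is correct and follows essentially the same route as the paper's proof: finitarity and epi-preservation are transferred from the cocontinuous $P^\#$ via Corollary~\ref{cor:QuantitativePresheavesCreation}, generic factorizations are reduced to representables by Proposition~\ref{prop:ColimGeneric}, and the genericity of $p\coendtensor[a]\id$ rests on the uniqueness clause of Lemma~\ref{lem:FreenessLinear}, exactly as in the paper. Your added remark on why cocontinuity implies epi-preservation (cokernel pairs) is a harmless elaboration of a step the paper takes for granted.
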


\begin{proof}
	Since $P^{\#} : \PSh{\gpd1} \to \PSh{\gpd2}$ is cocontinuous by Theorem \ref{thm:BiequivalenceProfCocont} and the embeddings $\incl[\kit1]: \StPSh(\kitstr{\gpd1})\hookrightarrow\PSh{\gpd1}$, $\incl[\kit2]: \StPSh(\kitstr{\gpd2})\hookrightarrow\PSh{\gpd2}$ create filtered colimits and epimorphisms by Corollary \ref{cor:QuantitativePresheavesCreation}, the functor $\rLan{P} \cong P^{\#}  \incl[\kit1]: \StPSh(\kitstr{\gpd1}) \to\StPSh(\kitstr{\gpd2})$ is finitary and epi-preserving.
	
	It remains to show that $\rLan{P}$ admits generic factorizations. By Proposition \ref{prop:ColimGeneric}, it suffices to show that $\rLan{P}$ admits generic factorizations relative to representables. Consider a morphism $\yon b \to \rLan{P}(X)$ in $\StPSh(\kitstr{\gpd2})$, it is of the form $p \coendtensor[a]x$ with $p \in P(b,a)$ and $x \in X(a)$ for some $a\in \gpd1$. We show that
	
	\begin{center}
		\begin{tikzpicture}
			\node (A) at (0, 0) {$ \yon b$};
			\node (B) at (2.5, 0) {$\rLan{P} (\yon a)$};
			\node (C) at (5, 0) {$\rLan{P}(X)$};
			
			\draw [->] (A) -- node [above] {$p \coendtensor[a] \id$} (B);
			\draw [->] (B) to node [above] {$\rLan{P}(x)$}  (C);
		\end{tikzpicture}
	\end{center}
	is a generic factorization for $p \coendtensor[a]x$. The equality $p \coendtensor[a]x = \rLan{P}(x) (p \coendtensor[a] \id)$ is immediate, it remains to prove that $p \coendtensor[a] \id$ is generic.
	
	Assume that there exist $Y,Z \in \StPSh(\kitstr{\gpd1})$ and morphisms $p' \coendtensor[a'] y' : \yon b \to \rLan{P}(Y)$ and $f: Y \to Z, z : \yon a \to Z$ such that the following diagram commutes in $\StPSh(\kitstr{\gpd2})$:

	\begin{center}
		\begin{tikzpicture}[scale=1]
			
			\node (A) at (0, 2) {$\yon b$};
			\node (B) at (3,2) {$\rLan{P} (Y)$};
			\node (C) at (0,0) {$\rLan{P} (\yon a)$};
			\node (D) at (3,0) {$\rLan{P} (Z)$};
			
			\draw [->] (A) to node [above] {$p' \coendtensor[a'] y'$} (B);
			\draw [->] (A) to node [left] {$p \coendtensor[a] \id$} (C);
			\draw [->] (B) to node [right] {$\rLan{P}(f)$} (D);
			\draw [->] (C) to node [below] {$\rLan{P}(z)$} (D);
			
		\end{tikzpicture}
	\end{center}
	
	The equality $\rLan{P}(f) (p' \coendtensor[a'] y') = p \coendtensor[a] z$ is equivalent to $p' \coendtensor[a'] f y' = p \coendtensor[a] z$ which implies by Lemma \ref{lem:FreenessLinear} that there exists a unique $\alpha : a \to a'$ in $\gpd1$ such that $(f y') \cdot \alpha = z$ and $\alpha \cdot p = p'$. We then have $\rLan{P}(y' \cdot \alpha) (p \coendtensor[a] \id) = p \coendtensor[a] y' \cdot \alpha = \alpha \cdot p \coendtensor[a']y' = p' \coendtensor[a'] y'$. It remains to show uniqueness, assume that there exists $y : \yon a \to Y$ such that $f y = z$ and $\rLan{P}(y)(p \coendtensor[a] \id) = p' \coendtensor[a'] y'$ i.e. $p \coendtensor[a] y = p' \coendtensor[a'] y'$. It implies that there exists $g : a \to a'$ such that $y' \cdot g = y$ and $g \cdot p = p'$. Hence, $(f y') \cdot g = f y = z$ which implies that $\alpha=g$ so that $y = y' \cdot \alpha$ as desired.
	
	\begin{center}
		\begin{tikzpicture}[scale=1]
			
			\node (A) at (0, 2) {$\yon b$};
			\node (B) at (3,2) {$\rLan{P} (Y)$};
			\node (C) at (0,0) {$\rLan{P} (\yon a)$};
			\node (D) at (3,0) {$\rLan{P} (Z)$};
			
			\draw [->] (A) to node [above] {$p' \coendtensor[a'] y'$} (B);
			\draw [->] (A) to node [left] {$p \coendtensor[a] \id$} (C);
			\draw [->] (B) to node [right] {$\rLan{P}(f)$} (D);
			\draw [->] (C) to node [below] {$\rLan{P}(z)$} (D);
			\draw [->, dotted] (C) to node [fill=white] {$\rLan{P}(y)$} (B);
		\end{tikzpicture}
	\end{center}
\end{proof}

\begin{lem}\label{lem:QProfLinear}
For kits $\kitstr{\gpd1}, \kitstr{\gpd2}$ and a stabilized profunctor $P:\kitstr{\gpd1} \profto \kitstr{\gpd2}$, the functor $\rLan{P} :\StPSh(\kitstr{\gpd1}) \to \StPSh(\kitstr{\gpd2})$ is linear.
\end{lem}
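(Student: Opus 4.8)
The plan is to invoke Definition~\ref{def:linearfunctor}, according to which a functor is linear precisely when it is both stable and a local left adjoint. Stability of $\rLan{P}$ has already been proved in Lemma~\ref{lem:QProfStable}, so the only thing left to establish is that $\rLan{P}$ is a local left adjoint in the sense of Definition~\ref{def:localAdjoint}: for every $X \in \StPSh(\kitstr{\gpd1})$ the induced slice functor $(\rLan{P})_X : \StPSh(\kitstr{\gpd1})/X \to \StPSh(\kitstr{\gpd2})/\rLan{P}(X)$ should admit a right adjoint.

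The main step I would carry out is to reduce these slices to slices of the ambient presheaf categories. For $X \in \StPSh(\kitstr{\gpd1})$, Lemma~\ref{lem:StPShconnectedLimits} guarantees that the domain of any morphism into $X$ again lies in $\StPSh(\kitstr{\gpd1})$; since $\StPSh(\kitstr{\gpd1})$ is full in $\PSh(\gpd1)$, this yields a canonical isomorphism of slices $\StPSh(\kitstr{\gpd1})/X \cong \PSh(\gpd1)/X$. Running the same argument over $\gpd2$, and using that $\rLan{P}(X) = P^{\#} X$, I would obtain $\StPSh(\kitstr{\gpd2})/\rLan{P}(X) \cong \PSh(\gpd2)/P^{\#} X$. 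Under these identifications the functor $(\rLan{P})_X$ becomes exactly the slice functor $(P^{\#})_X$ of the cocontinuous extension $P^{\#} : \PSh(\gpd1) \to \PSh(\gpd2)$.

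It would then suffice to recall that $(P^{\#})_X$ is a left adjoint. By Theorem~\ref{thm:BiequivalenceProfCocont} the functor $P^{\#}$ is cocontinuous, hence has a right adjoint $G$, given explicitly by $G(Y) = \big(a \mapsto \PSh(\gpd2)(P(a), Y)\big)$. A functor with a right adjoint whose domain has pullbacks is automatically a local left adjoint: a right adjoint to $(P^{\#})_X$ sends a morphism $d \to P^{\#} X$ to the projection $X \times_{G P^{\#} X} G(d) \to X$, where the pullback is taken along the unit $X \to G P^{\#} X$. Since $\PSh(\gpd1)$ has all pullbacks this object exists, so $(P^{\#})_X \cong (\rLan{P})_X$ has a right adjoint; as $X$ was arbitrary, $\rLan{P}$ is a local left adjoint and therefore linear.

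The part I expect to be the crux is the slice identification of the second paragraph: it depends essentially on the downward closure of stabilized presheaves under subobjects packaged in Lemma~\ref{lem:StPShconnectedLimits}. This closure is exactly what lets one sidestep the fact that $\StPSh(\kitstr{\gpd1})$ and $\StPSh(\kitstr{\gpd2})$ are not cocomplete, and transport the local adjoint property back to the honest adjunction $P^{\#} \dashv G$ on presheaf categories. The remaining ingredient is the standard fact that a left adjoint, over a base with pullbacks, restricts to a left adjoint on every slice.
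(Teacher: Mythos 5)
Your proposal is correct and follows essentially the same route as the paper: the right adjoint to the slice functor is in both cases the pullback of $G(Y)=\big(a\mapsto\PSh(\gpd2)(P(a),Y)\big)$ along the unit $X\to GP^{\#}X$, which is exactly the presheaf $R_X(Y)$ the paper constructs explicitly before checking it is stabilized. Your observation that Lemma~\ref{lem:StPShconnectedLimits} identifies $\StPSh(\kitstr{\gpd1})/X$ with $\PSh(\gpd1)/X$ is a clean way to package that final membership check, and the rest (the adjunction $P^{\#}\dashv G$ and the standard localization of a left adjoint over a base with pullbacks) matches the paper's use of the weighted-colimit formula for $\Lan_{\yy[\kit1]}P$.
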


\begin{proof}
	Since $\rLan{P}:\StPSh(\kitstr{\gpd1})\to\StPSh(\kitstr{\gpd2})$ is stable by Lemma \ref{lem:QProfStable}, we need to show that for all $X \in \StPSh(\kitstr{\gpd1})$, the induced functor $\StPSh(\kitstr{\gpd1})/_X \to  \StPSh(\kitstr{\gpd2})/_{\rLan{P}(X)}$ has a right adjoint $R_X$.
	For $f : X' \to X$ in $\StPSh(\kitstr{\gpd1})$ and $g : Y \to \rLan{P}(X)$ in $\StPSh(\kitstr{\gpd2})$, we require the following correspondence:
	\begin{center}
		\begin{tikzpicture}[scale=0.9]
			\node (A) at (0,1.5) {$\rLan{P}(X')$};
			\node (B) at (3,1.5) {$Y$};
			\node (C) at (1.5,0) {$\rLan{P}(X)$};

			\draw [->] (A) to node  {} (B);
			\draw [->] (B) to node [right]  {$g$} (C);
			\draw [->] (A) to node [left] {$\rLan{P}(f)$} (C);
			
			\node at (4.6,0.75) {$\Leftrightarrow$};
			
			\begin{scope}[xshift = 6cm]
				\node (A) at (0,1.5) {$X'$};
				\node (B) at (3,1.5) {$R_X(Y)$};
				\node (C) at (1.5,0) {$X$};
				
				\draw [->] (A) to node  {} (B);
				\draw [->] (B) to node [right]  {$R_X(g)$} (C);
				\draw [->] (A) to node [left] {$f$} (C);
			\end{scope}
		\end{tikzpicture}
	\end{center}
	Note that since $\rLan{P}=\Lan_{\yy[\kit1]}P$, the following sets are isomorphic:
	\[\begin{aligned}
		\StPSh(\kitstr{\gpd2})(\rLan{P}(X'), Y) & \cong \PSh{\gpd1} \left(\PSh{\gpd1}(\yy[\kit1](-), X'), \StPSh(\kitstr{\gpd2})(P(-), Y)\right)\\
		&\cong\PSh{\gpd1}\left( X', \StPSh(\kitstr{\gpd2})(P(-), Y)\right).
	\end{aligned}\]
	Hence, the left triangle commutes in $\StPSh(\kitstr{\gpd2})$ if and only if the following square commutes in $\PSh{\gpd1}$:
	\begin{center}
		\begin{tikzpicture}
			\node at (-1.2,0) {};
			\node (A) at (0, 1.5) {$X'$};
			\node (B) at (3, 1.5) {$\StPSh(\kitstr{\gpd2})(P(-), Y)$};
			\node (C) at (0, 0) {$X$};
			\node (D) at (3, 0) {$\StPSh(\kitstr{\gpd2})(P(-), \rLan{P}(X))$};
			
			\draw [->] (A) -- node  {} (B);
			\draw [->] (B) to node [right] {$\StPSh(\kitstr{\gpd2})(P(-), g)$} (D);
			\draw [->] (A) to node [left] {$f$} (C);
			\draw [->] (C) to node {} (D);
		\end{tikzpicture}
	\end{center}
	Let $R_X(Y)$ be given by the following pullback:
	
	\begin{center}
		\begin{tikzpicture}
			\node (A) at (0, 1.5) {$R_X(Y)$};
			\node (B) at (3, 1.5) {$\StPSh(\kitstr{\gpd2})(P(-), Y)$};
			\node (C) at (0, 0) {$X$};
			\node (D) at (3, 0) {$\StPSh(\kitstr{\gpd2})(P(-), P^{\#}(X))$};
			\node at (0.3, 1.2) {$\usebox\pullback$};
			\draw [->] (A) -- node  {} (B);
			\draw [->] (B) to node [right] {$\StPSh(\kitstr{\gpd2})(P(-), g)$} (D);
			\draw [->] (A) to node [left] {$R_X(g)$} (C);
			\draw [->] (C) to node {} (D);
		\end{tikzpicture}
	\end{center}
	Explicitly, for $a \in \gpd1$, $R_X(Y)(a)$ is given by:
	\[
	\{
	(x,h) \in X(a) \times \StPSh(\kitstr{\gpd2})(P( a), Y) 
	\mid
	\forall b\in \gpd2, \forall p \in P(b,a), p\coendtensor[a]x = g h p\}
	\]
	and $R_X(g)(x,h)=x$. We check that the presheaf $R_X(Y)$ is in $\StPSh(\kitstr{\gpd1})$.  For $(x,h)\in {R_X(Y)}(a)$ and $\alpha\in\Endo(a)$, since $(x,h)\cdot \alpha = (x\cdot \alpha,h \circ P( \alpha))$, if $(x,h)\cdot \alpha = (x,h)$ then $x\cdot \alpha=x$ and therefore that $\alpha\in\bigcup\kit1(a)$.
\end{proof}

At the $2$-cell level, we show that the $\rLan{(-)}$ operator induces cartesian transformations:

\begin{lem}\label{lem:NatTransCartesianLin}
For Boolean kits $\kitstr{\gpd1}, \kitstr{\gpd2}$ and a natural transformation $f$ in $\SProf(P,Q)$, the natural
transformation $\rLan{f}: \rLan{P} \Rightarrow \rLan{Q}:\StPSh(\kitstr{\gpd1})\to\StPSh(\kitstr{\gpd2})$ is cartesian.
\end{lem}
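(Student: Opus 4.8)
The plan is to mirror the proof of Proposition~\ref{prop:NatTransCartesian}, now carried out at the linear level, where the elements of $\rLan{P}(X)(b) = \int^{a\in\gpd1} P(b,a)\times X(a)$ have the simpler form $p\coendtensor[a] x$ with $a\in\gpd1$, $p\in P(b,a)$ and $x\in X(a)$. Fix a morphism $g : X\to Y$ in $\StPSh(\kitstr{\gpd1})$ and an object $b\in\gpd2$; I must show that the naturality square of $\rLan{f}$ at $g$, evaluated at $b$, is a pullback in $\Set$. Concretely, given $p\coendtensor[a_1] y\in\rLan{P}(Y)(b)$ and $q\coendtensor[a_2] x\in\rLan{Q}(X)(b)$ whose images in $\rLan{Q}(Y)(b)$ agree, that is $f(p)\coendtensor[a_1] y = q\coendtensor[a_2] g(x)$, I must produce a unique $t\in\rLan{P}(X)(b)$ with $\rLan{P}(g)(t) = p\coendtensor[a_1] y$ and $\rLan{f}_X(t) = q\coendtensor[a_2] x$. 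Note that both $f(p)\coendtensor[a_1] y$ and $q\coendtensor[a_2] g(x)$ live in $Q^\#Y(b)$, so the freeness machinery of Lemma~\ref{lem:FreenessLinear} applies to the profunctor $Q$ and the presheaf $Y\in\StPSh(\kitstr{\gpd1})$.

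For existence, I would apply Lemma~\ref{lem:FreenessLinear} to the equality $f(p)\coendtensor[a_1] y = q\coendtensor[a_2] g(x)$ to obtain a unique $\alpha : a_1\to a_2$ in $\gpd1$ with $q = \alpha\cdot f(p)$ and $g(x)\cdot\alpha = y$. Setting $t := p\coendtensor[a_1](x\cdot\alpha)$, the two required identities $\rLan{P}(g)(t) = p\coendtensor[a_1] g(x)\cdot\alpha = p\coendtensor[a_1] y$ and $\rLan{f}_X(t) = f(p)\coendtensor[a_1] x\cdot\alpha = \alpha\cdot f(p)\coendtensor[a_2] x = q\coendtensor[a_2] x$ follow from the naturality of $g$ and $f$ together with the coend relations, exactly as in Proposition~\ref{prop:NatTransCartesian}.

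For uniqueness, I would suppose $v = p_0\coendtensor[a_0] x_0$ is a second lift and again invoke Lemma~\ref{lem:FreenessLinear}, once for $P$ and $Y$ (from $\rLan{P}(g)(v) = p\coendtensor[a_1] y$) and once for $Q$ and $X$ (from $\rLan{f}_X(v) = q\coendtensor[a_2] x$), extracting morphisms $\gamma$ and $\beta$ in $\gpd1$ witnessing the corresponding coend identities. A short computation then produces an endomorphism $\theta\in\Endo(a_1)$, built from $\alpha,\beta,\gamma$, which simultaneously stabilizes an element of $Y(a_1)$ — hence lies in $\bigcup\kit1(a_1)$ since $Y\in\StPSh(\kitstr{\gpd1})$ — and fixes $q$. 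The crux is the same orthogonality argument as in the stable case: because $Q$ is a stabilized profunctor (Definition~\ref{def:sprofunctor}) and $\id[b]\in\bigcup\kit2^\orth(b)$, fixing $q$ forces $\theta\in\bigcup\kit1^\orth(a_1)$; intersecting the two membership conditions gives $\theta = \id[a_1]$, so that $\beta\gamma = \alpha$ and therefore $t = v$.

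The main obstacle is not conceptual but bookkeeping: the argument is structurally identical to Proposition~\ref{prop:NatTransCartesian}, with $\gpd1$ in place of $\Sym\gpd1$, the kit $\kit1$ in place of $\oc\kit1$, and Lemma~\ref{lem:FreenessLinear} in place of Lemma~\ref{lem:FreenessStable}. The one point that genuinely uses the linear rather than the stable setting is that all the generic data are now indexed by single objects $a\in\gpd1$ instead of sequences, so the stabilizer and orthogonality computations take place directly in $\Endo(a_1)$; I would check carefully the exact source and target of $\alpha,\beta,\gamma$ so that $\theta$ is correctly identified as an endomorphism of $a_1$, and confirm that the defining implication of a stabilized profunctor supplies precisely the orthogonality membership needed to close the final step.
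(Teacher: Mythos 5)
Your proposal is correct and follows essentially the same route as the paper's proof: the same existence witness $t = p\coendtensor[a_1](x\cdot\alpha)$, and the same uniqueness argument producing an endomorphism of $a_1$ that lies in $\bigcup\kit1(a_1)$ because it stabilizes an element of $Y$, and in $\bigcup\kit1^\orth(a_1)$ because it fixes $q$ and $\id[b]\in\bigcup\kit2^\orth(b)$, forcing it to be the identity. The paper's proof is exactly this bookkeeping, carried out with the morphisms $\alpha,\beta,\gamma$ as you describe.
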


\begin{proof}
	Let $h : X \rightarrow Y$ in $\StPSh(\kitstr{\gpd1})$, we want to show that the square below is a pullback in $\Set$ for all $b \in \gpd2$.  
	\begin{center}
		\begin{tikzpicture}[scale=0.75]
			\node (A) at (0, 2.3) {$\rLan{P}(X)(b)$};
			\node (B) at (3,2.3) {$\rLan{Q}(X)(b)$};
			\node (C) at (0,0) {$\rLan{P}(Y)(b)$};
			\node (D) at (3,0) {$\rLan{Q}(Y)(b)$};
			
			\draw [->] (A) to node [above] {$\rLan{f}_X$} (B);
			\draw [->] (A) to node [left] {$\rLan{P}(h )$} (C);
			\draw [->] (B) to node [right] {$\rLan{Q}(h )$} (D);
			\draw [->] (C) to node [below] {$\rLan{f}_Y$} (D);
		\end{tikzpicture}
	\end{center}
	We show that for all $(p\coendtensor[a_1] y)\in \rLan{P}(Y)(b)$ and 
	$(q\coendtensor[a_2] x)\in \rLan{Q}(X)(b)$ such that 
	\[
	f(p)\coendtensor[a_1]x = \rLan{f}_Y(p\coendtensor[a_1]x) = \rLan{Q}(h )(q\coendtensor[a_2] x) = q\coendtensor[a_2] h (x)
	\] 
	there exists a unique $u\in \rLan{P}(X)(b)$ such that $\rLan{P}(h )(u)=p\coendtensor[a_1] y$ and  $\rLan{f}_X(u)= q\coendtensor[a_2] x$. The equality $f(p)\coendtensor[a_1]y = q\coendtensor[a_2] h (x)$ implies that there exists $\alpha : a_1 \rightarrow a_2$ in $\gpd{A}$ such that $q =  \alpha \cdot f(p)$ and $h(x) \cdot \alpha= y$. Define $u$ to be ${p\coendtensor[a_1](x \cdot \alpha)}\in \rLan{P}(X)(b)$, we then obtain that 
	\[
	\rLan{P}(h )(u)
	= p \coendtensor[a_1] h (x \cdot \alpha) 
	= p \coendtensor[a_1] h (x) \cdot \alpha 
	= p \coendtensor[a_1] y
	\]
	and
	\[
	\rLan{f}_X(u)
	= f(p) \coendtensor[a_1] x \cdot \alpha
	= \alpha \cdot f(p) \coendtensor[a_2] x
	= q \coendtensor[a_2] x.
	\]
	
	Assume now that there exists $v = p_0 \coendtensor[a_0] x_0 \in \rLan{P}(X)(b)$ such that $\rLan{P}(h )(v)=p\coendtensor[a_1] y$ and  $\rLan{f}_X(v)= q\coendtensor[a_2] x$. We then have that $p_0 \coendtensor[a_0] h (x_0) = p \coendtensor[a_1] y$ and  $f(p_0)\coendtensor[a_0] x_0  = q \coendtensor[a_2] x$.  Hence, there exists $\beta : a_0 \rightarrow a_1$ in $\gpd{A}$ such that 
	$q = f (p_0) \cdot \beta$ and $x \cdot \beta = x_0$, and there exists $\gamma : a_1 \rightarrow a_0$ in $\gpd{A}$ such that $p_0 = \gamma \cdot p$ and
	$h (x_0) \cdot \gamma = y$. Therefore,
	\begin{center}
		$
		\beta \gamma \cdot f (p)
		= 
		\beta \cdot f (\gamma \cdot p)
		= 
		\beta \cdot f (p_0)
		= 
		q
		$
		\quad and \quad
		$
		h (x) \cdot \beta \gamma
		=
		h (x \cdot \beta) \cdot \gamma
		=
		h (x_0) \cdot \gamma
		=
		y.
		$
	\end{center}
	Therefore, $\beta \gamma \alpha^{-1}$ is in $\Stab(Y(a_1))$ which implies that $\beta \gamma \alpha^{-1} \in \bigcup \kit1(a_1)$. We also have $\beta \gamma \alpha^{-1} \cdot q =q$. Since $Q$ is a stabilized profunctor and $\id[b]$ is in $\bigcup \kit2^\perp(b)$, we obtain that $\beta \gamma \alpha^{-1}$ is in $\bigcup \kit1^\perp(a)$ which implies that $\alpha = \beta \gamma$. Hence, we have
	\[
	u
	=
	p\coendtensor[a_1] (x \cdot \beta \gamma)
	=
	p\coendtensor[a_1] (x_0\cdot \gamma )
	=
	(\gamma\cdot p)\coendtensor[a_0] x_0
	=
	p_0\coendtensor[a_0] x_0
	=
	v
	\]
	as desired. 
\end{proof}

For the reverse direction, we define a linear trace operator $\tracelin :\Lin(\kitstr{\gpd1}, \kitstr{\gpd2}) \to \SProf(\kitstr{\gpd1}, \kitstr{\gpd2})$ mapping $L \mapsto L \yy[\kit1]$ for a linear functor $L: \StPSh(\kitstr{\gpd1}) \to \StPSh(\kitstr{\gpd2})$. We proceed to show that $\tracelin$ is a well-defined functor.

\begin{lem}\label{lem:LinearGenericFact}
	Let $\kitstr{\gpd1} = (\gpd1,\kit1)$, $\kitstr{\gpd2} = (\gpd2,\kit2)$ be Boolean kits and $L : \StPSh(\kitstr{\gpd1}) \to \StPSh(\kitstr{\gpd2})$ a linear functor. For every $X$ in  $\StPSh(\kitstr{\gpd1})$ and $l: b \to L(X)$ in $\StPSh(\kitstr{\gpd2})$, there exists $a\in \gpd1$, $g : b \to L(\yy[\kit1] (a))$ generic and $x:\yy[\kit1] (a) \to X$ such that $ l = L(x) g$.
\end{lem}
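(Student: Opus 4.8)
The plan is to reduce to the already-established stable case and then exploit linearity to collapse the resulting sequence to a single object. First I would note that a linear functor is in particular stable, so Lemma~\ref{lem:StableGenericFact} applies to $L$ and yields a sequence $u = \seq{a_1, \dots, a_n} \in \Sym\gpd1$, a generic morphism $g : b \to L(s(u))$ and a map $\bar{x} : s(u) \to X$ with $l = L(\bar{x})\,g$, where $s(u) = \coprod_{i=1}^{n} \yon(a_i)$. The whole task is then to show that this forces $n = 1$, so that $s(u) \cong \yon(a_1) \cong \yy[\kit1](a_1)$ and $g$ is already the required generic morphism into a single representable.

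To see this I would use that a linear functor preserves sums: since coproducts exist in both categories of stabilized presheaves and are created by the embeddings into presheaves (Corollary~\ref{cor:QuantitativePresheavesCreation}), we have $L(s(u)) \cong \coprod_{i=1}^{n} L(\yon(a_i))$, with coproduct injections $L(\colimin_i)$ induced by the injections $\colimin_i : \yon(a_i) \to s(u)$. Because the domain $b$ is representable, it is connected, i.e.\ the functor $\PSh(\gpd2)(b,-)$ preserves coproducts; hence $g$ factors through a unique injection as $g = L(\colimin_{i_0})\,g_0$ for some $g_0 : b \to L(\yon(a_{i_0}))$.

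Next I would feed the commuting square $L(\id)\,g = L(\colimin_{i_0})\,g_0$ into the genericity of $g$ (Definition~\ref{def:generic}), taking $\id[s(u)]$ on one side and $\colimin_{i_0}$ on the other. This produces a morphism $k : s(u) \to \yon(a_{i_0})$ in $\StPSh(\kitstr{\gpd1})$ with $\colimin_{i_0}\,k = \id[s(u)]$, exhibiting the coproduct injection $\colimin_{i_0}$ as a split epimorphism. Since each representable $\yon(a_i)$ is nonempty (it contains $\id[a_i]$), no coproduct injection onto one summand can be split epi once there is another summand: evaluating $\colimin_{i_0}\,k = \id[s(u)]$ at an element lying in a summand $j \neq i_0$ forces that element to lie in the $i_0$-th summand, a contradiction. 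Therefore $n = 1$, and setting $a := a_1$ and $x := \bar{x}$ gives the desired generic factorization $l = L(x)\,g$ through $\yy[\kit1](a)$.

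I expect the third step to be the crux: one has to set up the genericity square so that the extracted section literally witnesses a coproduct injection as a split epimorphism, and then turn nonemptiness of representables (together with the agreement of coproducts in $\StPSh$ and $\PSh$) into the collapse $n = 1$. The remaining ingredients---Lemma~\ref{lem:StableGenericFact} and preservation of sums---are routine once linearity is invoked.
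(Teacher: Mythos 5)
Your argument is correct, and it is organized differently from the paper's proof. The paper does not route through Lemma~\ref{lem:StableGenericFact}: it takes a generic factorization of $l$ through an arbitrary $Y\in\StPSh(\kitstr{\gpd1})$, decomposes $Y$ as a sum of quotiented representables $\coprod_i\extyon{a_i}{G_i}$, and then uses linearity twice --- once to commute $L$ past the coproduct, and once to commute $L$ past the colimit of the $G$-action presenting $\extyon{a}{G}$, so that $g$ lifts all the way through $L(\yy[\kit1](a))$; genericity then yields a split mono $Y\to\yy[\kit1](a)$ and Lemma~\ref{lem:monoSumRepresentables} concludes. You instead delegate the elimination of the group quotients to the already-established stable case (whose proof uses epi-preservation for precisely that purpose), so that only the sum $\coprod_i\yon(a_i)$ remains to be collapsed; for that you need only preservation of sums, connectedness of representables, and the genericity square that turns the factorization through one summand into a section of a coproduct injection --- impossible when a second nonempty summand is present. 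The two splitting arguments are mirror images of one another (the paper extracts a split mono into a single representable, you extract a split epi out of one), and yours is the more modular route since it reuses the stable-case lemma as a black box and never touches the colimit-of-group-action part of linearity. Two small points to tie off: the case $n=0$ must also be excluded, which your second step already does implicitly, since a map out of the nonempty presheaf $\yon b$ cannot factor through an injection of an empty coproduct (and $L(\varnothing)\cong\varnothing$ by preservation of the empty sum); and one should note that $\yy[\kit1](a_1)$ is literally $\yon(a_1)$ regarded as an object of $\StPSh(\kitstr{\gpd1})$, so the codomain of $g$ is exactly as required by the statement.
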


\begin{proof}
	Since $L$ is linear, $l$ can be factored as $b \xrightarrow{g} L(Y)\xrightarrow{L(f)} L(X)$ where $g$ is generic. By Corollary \ref{cor:RepresentationQuantitativePresheaves}, $Y \cong \coprod_{i\in I} \extyon {a_{i}} {G_{i}}$ where each $G_{i}$ is a finitely generated group in $\kit1(a_{i})$. Since $L$ preserves coproducts, we obtain that $L(Y) \cong \coprod_{i\in I} L(\extyon {a_{i}} {G_{i}})$.
	
	Since colimits are computed pointwise in $\StPSh(\kitstr{\gpd2})$, it implies that there exists $a$ and $G$ such that $g$ can be factored as 
	\[
	b \xrightarrow{g'} L(\extyon {a} {G})\xrightarrow{L(\colimin)} \coprod_{i\in I} L(\extyon {a_{i}} {G_{i}}).
	\]
	Since $\extyon {a} {G}$ is the colimit of $(g : \yy[\kit1] (a) \to \yy[\kit1] (a))_{g \in G}$ in $\StPSh(\kitstr{\gpd1})$ and $L$ is linear, $L(\extyon {a} {G})$ is the colimit of $(L(g) : L(\yy[\kit1](a)) \to L(\yy[\kit1](a)))_{g \in G}$ in $\StPSh(\kitstr{\gpd2})$. Hence, there exists $g'' : b \to L( \yy[\kit1](a))$ such that $L(q) g'' = g'$.
	
	Since $g$ is generic, there exists $h : Y \to \yy[\kit1] (a)$ such that $\colimin \circ q \circ h =\id$ and $L(h)g =g''$ where $q : \yy[\kit1] (a) \to \extyon {a} {G}$ is the quotient map. 
	\begin{center}
		\begin{tikzpicture}[scale=0.75]
			\node (A) at (0, 2.3) {$b$};
			\node (B) at (3,2.3) {$L(\yy[\kit1] (a))$};
			\node (C) at (0,0) {$L(Y)$};
			\node (D) at (3,0) {$L(Y)$};
			
			\draw [->] (A) to node [above] {$g''$} (B);
			\draw [->] (A) to node [left] {$g$} (C);
			\draw [->] (B) to node [right] {$L(\colimin \circ q)$} (D);
			\draw [->] (C) to node [below] {$L(\id)$} (D);
			\draw [->, dotted] (C) to node [fill=white] {$L(h)$} (B);
		\end{tikzpicture}
	\end{center}
	Since $h$ is split monic, by Lemma \ref{lem:monoSumRepresentables}, $Y$ is isomorphic to $\yy[\kit1] (a)$.
\end{proof}

\begin{cor}\label{cor:LinearYonGeneric}
	For Boolean kits $\kitstr{\gpd1} = (\gpd1,\kit1)$ and $\kitstr{\gpd2} = (\gpd2,\kit2)$, if a functor $L : \StPSh(\kitstr{\gpd1}) \to \StPSh(\kitstr{\gpd2})$ is linear, then for every $l :b \to L(\yy[\kit1] (a))$ in $\StPSh(\kitstr{\gpd2})$, $l$ is generic.
\end{cor}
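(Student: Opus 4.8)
The plan is to deduce the statement directly from the generic factorization supplied by Lemma~\ref{lem:LinearGenericFact}. Applying that lemma to the linear functor $L$ with $X := \yy[\kit1](a)$ and the given morphism $l : b \to L(\yy[\kit1](a))$, I obtain an object $a' \in \gpd1$, a \emph{generic} morphism $g : b \to L(\yy[\kit1](a'))$, and a morphism $x : \yy[\kit1](a') \to \yy[\kit1](a)$ in $\StPSh(\kitstr{\gpd1})$ such that $l = L(x) \circ g$. It therefore suffices to show that $x$ is an isomorphism, since then $L(x)$ is an isomorphism and the conclusion will follow from the behaviour of generics under isomorphisms.

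The key observation is that $x$ is forced to be invertible precisely because both its source and target are representables. The restricted Yoneda embedding $\yy[\kit1] : \gpd1 \to \StPSh(\kitstr{\gpd1})$ is full and faithful, being a corestriction of the ordinary Yoneda embedding along the full inclusion $\StPSh(\kitstr{\gpd1}) \hookrightarrow \PSh(\gpd1)$; hence $x$ is the image under $\yy[\kit1]$ of a unique morphism $a' \to a$ in $\gpd1$. As $\gpd1$ is a groupoid, every such morphism is invertible, and so $x$ is an isomorphism in $\StPSh(\kitstr{\gpd1})$.

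To finish, I would invoke the fact — already used in the proof of Proposition~\ref{prop:StableQProf} — that generic morphisms are stable under postcomposition by isomorphisms. Since functors preserve isomorphisms, $L(x)$ is an isomorphism, and therefore $l = L(x) \circ g$ is generic, being the composite of the generic morphism $g$ with an isomorphism. I do not expect a genuine obstacle here: the only substantive content is the remark that a morphism between two representables $\yy[\kit1](a')$ and $\yy[\kit1](a)$ over a \emph{groupoid} is automatically invertible, which collapses the generic factorization of Lemma~\ref{lem:LinearGenericFact} down to a single isomorphism and thus shows $l$ itself is generic.
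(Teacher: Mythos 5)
Your proof is correct and follows essentially the same route as the paper's: apply Lemma~\ref{lem:LinearGenericFact} to $X = \yy[\kit1](a)$, observe that the resulting comparison map between representables is an isomorphism because $\yy[\kit1]$ is fully faithful and $\gpd1$ is a groupoid, and conclude by stability of generics under postcomposition with $L$ applied to an isomorphism. The paper's proof is just a more compressed version of the same argument.
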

\begin{proof}
	By Lemma \ref{lem:LinearGenericFact}, there exists $g : b \to L(\yy[\kit1](a_0))$ generic and $\alpha :\yy[\kit1](a_0)\to  \yy[\kit1](a)$ such that $l = L(\alpha) g$. Since $\alpha$ is an isomorphism, we obtain that $l$ is generic as well.
\end{proof}
\begin{lem}\label{lem:LinearQProf}
	For Boolean kits  $\kitstr{\gpd1} = (\gpd1,\kit1)$ and $\kitstr{\gpd2} = (\gpd2,\kit2)$, if a functor $L : \StPSh(\kitstr{\gpd1}) \to \StPSh(\kitstr{\gpd2})$ is linear then $\tracelin (L):\gpd1 \profto \gpd2$ given by $L \yy[\kit1]$ is a stabilized profunctor.
\end{lem}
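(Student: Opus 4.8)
The plan is to verify Definition~\ref{def:sprofunctor} directly for the profunctor $\tracelin(L) = L\,\yy[\kit1] : \gpd1 \profto \gpd2$. First I would unwind what its elements are: by the Yoneda lemma, $\tracelin(L)(b,a) = \bigl(L\,\yy[\kit1](a)\bigr)(b) \cong \StPSh(\kitstr{\gpd2})\bigl(\yon b,\, L\,\yy[\kit1](a)\bigr)$, using that Boolean kits contain the trivial subgroup so that $\yon b \in \StPSh(\kitstr{\gpd2})$. The decisive input of linearity is Corollary~\ref{cor:LinearYonGeneric}: \emph{every} such morphism $p : \yon b \to L\,\yy[\kit1](a)$ is generic. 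Under this identification the two profunctor actions become composition, namely $\alpha\act p \act\beta$ corresponds to $L(\yy[\kit1](\alpha))\comp p\comp\yon(\beta)$; so the hypothesis $\alpha\act p\act\beta = p$ reads $L(\yy[\kit1](\alpha))\comp p\comp\yon(\beta) = p$, and iterating the commuting actions gives $L(\yy[\kit1](\alpha^n))\comp p\comp\yon(\beta^n) = p$ for all $n$.

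For the first implication, suppose $\alpha\in\bigcup\kit1(a)$. Since $\kit1$ is Boolean, $\kit1(a)$ is closed under subgroups (Lemma~\ref{lem:kitDownClosed}), so $\cyclic\alpha\in\kit1(a)$ and hence $X := \extyon a {\cyclic\alpha} \in \StPSh(\kitstr{\gpd1})$, with quotient map $x : \yon a \to X$ satisfying $x\comp\yy[\kit1](\alpha) = x$. Postcomposing $p$ with $L(x)$ and using $p\comp\yon(\beta) = L(\yy[\kit1](\alpha^{-1}))\comp p$ gives $(L(x)\comp p)\comp\yon(\beta) = L(x)\comp p$, i.e.\ $\beta$ stabilizes the element $L(x)\comp p \in L(X)(b)$. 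As $L$ lands in $\StPSh(\kitstr{\gpd2})$ we have $L(X)\in\StPSh(\kitstr{\gpd2})$, so $\Stab(L(x)\comp p)\in\kit2(b)$ and therefore $\beta\in\bigcup\kit2(b)$. This half needs no genericity; it only uses that $L$ preserves membership in $\StPSh$, which is why it is simpler than the corresponding step of Proposition~\ref{prop:StableQProf}, where the $\oc$-kit forces a power-by-power argument.

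The second implication is the crux, and is where genericity is essential. Unwinding the definition of orthogonality together with down-closure, $\alpha\in\bigcup\kit1^\orth(a)$ is equivalent to requiring that for every $n$, if $\alpha^n\in\bigcup\kit1(a)$ then $\alpha^n = \id$. So I would fix $n$ with $\alpha^n\in\bigcup\kit1(a)$ and take the witness $X' := \extyon a {\cyclic{\alpha^n}}\in\StPSh(\kitstr{\gpd1})$ with $x' : \yon a \to X'$ satisfying $x'\comp\yy[\kit1](\alpha^n) = x'$. Exactly as in the first half, the $n$-th power identity shows $\beta^n$ stabilizes $L(x')\comp p\in L(X')(b)$, so $\beta^n\in\bigcup\kit2(b)$; since $\beta\in\bigcup\kit2^\orth(b)$, the same orthogonality characterization forces $\beta^n = \id$. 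Then $\yon(\beta^n) = \id$ and the power identity collapses to $L(\yy[\kit1](\alpha^n))\comp p = p$. Now I invoke genericity of $p$ (Definition~\ref{def:generic}): taking $x'$ as both legs of the trivially commuting square, both $k = \id$ and $k = \yy[\kit1](\alpha^n)$ are mediating morphisms (using $x'\comp\yy[\kit1](\alpha^n) = x'$ and $L(\yy[\kit1](\alpha^n))\comp p = p$), so uniqueness gives $\yy[\kit1](\alpha^n) = \id$, whence $\alpha^n = \id$ as $\yy[\kit1]$ is an embedding. The main obstacle is arranging this generic square correctly and recognizing that it is precisely linearity, through Corollary~\ref{cor:LinearYonGeneric}, that licenses treating every $p$ as generic; this mirrors the generic-uniqueness step in the proof of Proposition~\ref{prop:StableQProf}.
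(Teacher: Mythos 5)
Your proof is correct and follows essentially the same route as the paper's: the same witnesses $\extyon a {\langle\alpha\rangle}$ and $\extyon a {\langle\alpha^n\rangle}$, the same stabilizer argument for the first implication, and the same use of Corollary~\ref{cor:LinearYonGeneric} to extract $\alpha^n=\id$ from genericity in the second. Your explicit unpacking of the generic square (with $q'$ as both legs and $\id$ versus $\yy[\kit1](\alpha^n)$ as competing mediating morphisms) is exactly the step the paper's final diagram encodes.
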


\begin{proof}
	Let $l : b \to L(\yy[\kit1] (a))$ be in $\tracelin (L)(b,a)$ and $\alpha \in \gpd1(a,a)$, $\beta \in \gpd2(b,b)$ such that $\alpha \cdot l  \cdot \beta =l$, i.e. the following square commutes in $\StPSh(\kitstr{\gpd2})$:
	\begin{center}
		\begin{tikzpicture}[scale=0.75]
			\node (A) at (0, 2.3) {$b$};
			\node (B) at (3,2.3) {$L(\yy[\kit1] (a))$};
			\node (C) at (0,0) {$b$};
			\node (D) at (3,0) {$L(\yy[\kit1] (a))$};
			
			\draw [->] (A) to node [above] {$l$} (B);
			\draw [->] (C) to node [left] {$\beta$} (A);
			\draw [->] (B) to node [right] {$L(\yy[\kit1] \alpha)$} (D);
			\draw [->] (C) to node [below] {$l$} (D);
		\end{tikzpicture}
	\end{center}
	Assume that $\alpha \in \bigcup \kit1(a)$, then $\extyon a {\langle \alpha \rangle} \in \StPSh(\kitstr{\gpd1})$. Let $q :\yy[\kit1] (a) \to  \extyon a {\langle \alpha \rangle}$ be the quotient map, we then have that $q \circ \yy[\kit1] (\alpha) = q$ in $\StPSh(\kitstr{\gpd1})$ which implies that the following diagram commutes in $\StPSh(\kitstr{\gpd2})$:
	\begin{center}
		\begin{tikzpicture}[scale=0.8]
			\node (A) at (0, 2.2) {$b$};
			\node (B) at (3,2.2) {$L(\yy[\kit1] (a))$};
			\node (C) at (0,0) {$b$};
			\node (D) at (3,0) {$L(\yy[\kit1] (a))$};
			\node (E) at (7,1.1) {$L( \extyon a {\langle \alpha \rangle})$};
			
			\draw [->] (A) to node [above] {$l$} (B);
			\draw [->] (C) to node [left] {$\beta$} (A);
			\draw [->] (B) to node [left] {$L(\yy[\kit1] \alpha)$} (D);
			\draw [->] (C) to node [below] {$l$} (D);
			\draw [->] (B) to node [above] {$L(q)$} (E);
			\draw [->] (D) to node [below] {$L(q)$} (E);
		\end{tikzpicture}
	\end{center}
	Hence, $\beta \in \Stab( L(q) l)$ which implies that $\beta \in \bigcup \kit2(b)$.
	
	Assume now that $\beta \in \bigcup \kit2^\perp(b)$, we want to show that for all $n$, if $\alpha^n$ is in $\bigcup \kit1(a)$, then $\alpha^n = \id[a]$. If $\alpha^n \in \bigcup \kit1(a)$, then $\extyon a {\langle \alpha^n \rangle} \in \StPSh(\kitstr{\gpd1})$ and the following diagram commutes in $\StPSh(\kitstr{\gpd2})$:
	\begin{center}
		\begin{tikzpicture}[scale=0.8]
			\node (A) at (0, 2.2) {$b$};
			\node (B) at (3,2.2) {$L(\yy[\kit1] (a))$};
			\node (C) at (0,0) {$b$};
			\node (D) at (3,0) {$L(\yy[\kit1] (a))$};
			\node (E) at (7,1.1) {$L( \extyon a {\langle \alpha^n \rangle})$};
			
			\draw [->] (A) to node [above] {$l$} (B);
			\draw [->] (C) to node [left] {$\beta^n$} (A);
			\draw [->] (B) to node [left] {$L(\yy[\kit1]\alpha^n)$} (D);
			\draw [->] (C) to node [below] {$l$} (D);
			\draw [->] (B) to node [above] {$L(q')$} (E);
			\draw [->] (D) to node [below] {$L(q')$} (E);
		\end{tikzpicture}
	\end{center}
	where $q' :\yy[\kit1] (a) \to  \extyon a {\langle \alpha^n \rangle}$ is the quotient map in $\StPSh(\kitstr{\gpd1})$. Hence, $\beta^n \in \Stab(L(q')l)$ which implies that $\beta^n =\id[b]$ so that the following diagram commutes:
	\begin{center}
		\begin{tikzpicture}[scale =0.8]
			\node (A) at (0,0) {$L(\yy[\kit1] (a))$};
			\node (B) at (4,0) {$L( \extyon a {\langle \alpha^n \rangle})$};
			\node (C) at (0,2.5) {$b$};
			\node (D) at (4,2.5) {$L(\yy[\kit1] (a))$};
			
			\draw [->] (C) -- node [left] {$l$} (A);
			\draw [->] (A) -- node [below] {$L(q')$} (B);
			\draw [->] (C) -- node [above] {$l$} (D);
			\draw [->] (D) -- node [right] {$L(q')$} (B);
			\draw [->] (A) -- node [fill=white] {$L(\yy[\kit1]\alpha^n)$} (D);
		\end{tikzpicture}
	\end{center}
	Since $l$ is generic by Corollary \ref{cor:LinearYonGeneric}, we obtain that $\alpha^n = \id[a]$ as desired.
\end{proof}

Since $\SProf(\kitstr{\gpd1}, \kitstr{\gpd2})$ is a full subcategory of $\Prof(\gpd1, \gpd2)$, for a cartesian transformation $f : L \Rightarrow L'$ in $\Lin (\kitstr{\gpd1}, \kitstr{\gpd2})$, we automatically have $f  \yy[\kit1]$ in $\SProf(\kitstr{\gpd1}, \kitstr{\gpd2}) (L\yy[\kit1] , L', \yy[\kit1] )$.

\begin{lem}\label{lem:LinearUnitIso}
	For Boolean kits $\kitstr{\gpd1} = (\gpd1,\kit1)$, $\kitstr{\gpd2} = (\gpd2,\kit2)$ and a stabilized profunctor $P : \kitstr{\gpd1} \profto \kitstr{\gpd2}$, there is an isomorphism $\eta: P  \cong \tracelin{\rLan{P}}$.
\end{lem}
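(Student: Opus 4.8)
The plan is to unfold $\tracelin(\rLan P) = \rLan P \circ \yy[\kit1]$ and recognise the resulting composite through the factoring of the Yoneda embedding, so that the whole statement becomes the identity $\rLan P \circ \yy[\kit1] \cong P$ read in $\SProf(\kitstr{\gpd1}, \kitstr{\gpd2})$; this is exactly the special case $L = \rLan P$ of Lemma~\ref{lem:RestrictedKanSProf}, and I would either cite that lemma directly or spell the argument out as follows. First I would record the two defining facts I intend to use: that the restricted Yoneda embedding satisfies $\incl[\kit1] \circ \yy[\kit1] = \yon[\gpd1]$ by construction, and that $\rLan P$ was defined so that $\incl[\kit2] \circ \rLan P \cong P\lanyon \circ \incl[\kit1]$, where $P\lanyon = \Lan_{\yon[\gpd1]} P$ is the cocontinuous extension of $P$ from Lemma~\ref{lem:qprofunctor_characterisation}.

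From these, precomposing with $\yy[\kit1]$ gives
\[
\incl[\kit2] \circ \rLan P \circ \yy[\kit1] \;\cong\; P\lanyon \circ \incl[\kit1] \circ \yy[\kit1] \;=\; P\lanyon \circ \yon[\gpd1].
\]
The key input is then the density of the Yoneda embedding: since $\yon[\gpd1]$ is fully faithful, the unit of the left Kan extension $P\lanyon = \Lan_{\yon[\gpd1]} P$ is an isomorphism, so $P\lanyon \circ \yon[\gpd1] \cong P$. (This is precisely the first of the two facts highlighted just before Definition~\ref{def:linearfunctor}.) Hence $\incl[\kit2] \circ \rLan P \circ \yy[\kit1] \cong P$ as functors $\gpd1 \to \PSh(\gpd2)$. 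To descend this to $\StPSh(\kitstr{\gpd2})$, I would observe that the functor $\gpd1 \to \PSh(\gpd2)$ underlying the stabilized profunctor $P$ already lands in $\StPSh(\kitstr{\gpd2})$: for $p \in P(b,a)$ and $\beta \in \Stab(p)$ one has $\id[a] \act p \act \beta = p$, and since $\id[a] \in \bigcup \kit1(a)$ the stabilized condition forces $\beta \in \bigcup \kit2(b)$; as Boolean kits are closed under subgroups (Lemma~\ref{lem:kitDownClosed}) this gives $\Stab(p) \in \kit2(b)$. Writing $Q : \gpd1 \to \StPSh(\kitstr{\gpd2})$ for this corestriction, so that $P = \incl[\kit2] \circ Q$, the displayed isomorphism reads $\incl[\kit2] \circ \rLan P \circ \yy[\kit1] \cong \incl[\kit2] \circ Q$; since $\incl[\kit2]$ is a fully faithful inclusion it reflects such isomorphisms, yielding $\rLan P \circ \yy[\kit1] \cong Q = P$, which is the claimed $\eta$.

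I do not expect a serious obstacle: the essential mathematical content is the standard density/Yoneda Kan-extension isomorphism, which is already cited in the text, and everything else is bookkeeping. The only point requiring a little care is the passage between functors valued in $\PSh(\gpd2)$ and functors valued in $\StPSh(\kitstr{\gpd2})$ — namely verifying that $P$ corestricts to $\StPSh(\kitstr{\gpd2})$ and that the isomorphism of $\PSh$-valued functors therefore transports along the full and faithful $\incl[\kit2]$. Both are immediate given Lemma~\ref{lem:kitDownClosed} and the stabilized-profunctor condition, so the lemma is genuinely a direct consequence of Lemma~\ref{lem:RestrictedKanSProf}, with witnessing functor $Q = \rLan P \circ \yy[\kit1]$.
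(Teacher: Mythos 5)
Your proposal is correct and rests on the same key fact as the paper's (one-line) proof, namely that a left Kan extension along a fully faithful functor has invertible unit: the paper applies this directly to $\yy[\kit1]$, while you route through $\yon[\gpd1]$ and descend along $\incl[\kit2]$, which is exactly the content of Lemma~\ref{lem:RestrictedKanSProf} that you cite. The only tiny slip is the citation for $\Stab(p)\in\kit2(b)$: your argument shows $\Stab(p)\subseteq\bigcup\kit2(b)$, and passing from this to membership in $\kit2(b)$ uses the Boolean-kit characterisation of Lemma~\ref{lem:characterisationDoubleOrth}(1) rather than closure under subgroups alone.
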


\begin{proof}
	Since we are taking a left Kan extension along a fully faithful functor $\yy[\kit1]$, $P$ is isomorphic to $(\Lan_{\yy[\kit1]}P)\yy[\kit1] = \tracelin{\rLan{P}}$.
\end{proof}

\begin{lem}\label{lem:LinearCounitIso}
	For Boolean kits $\kitstr{\gpd1}$, $\kitstr{\gpd2}$ and a linear functor $L : \StPSh(\kitstr{\gpd1}) \to \StPSh(\kitstr{\gpd2})$, there is an isomorphism $\varepsilon : \rLan{\tracelin (L)} \cong L $.
\end{lem}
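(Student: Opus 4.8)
The plan is to mirror the proof of Lemma~\ref{lem:StableCounitIso}, exploiting the fact that in the linear setting the trace only records maps out of a single representable, all of which are generic. Since $\tracelin(L) = L\yy[\kit1]$, the coend formula gives
\[
\rLan{\tracelin(L)}(X)(b) = \int^{a\in\gpd1} L(\yy[\kit1](a))(b) \times X(a),
\]
whose elements $l \coendtensor[a] x$ consist of a morphism $l : b \to L(\yy[\kit1](a))$ together with $x : \yy[\kit1](a) \to X$. I would define the components
\[
\varepsilon_{X,b}\colon \rLan{\tracelin(L)}(X)(b) \to L(X)(b), \qquad (l \coendtensor[a] x) \mapsto L(x)\circ l,
\]
exactly as for the stable counit. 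Well-definedness with respect to the coend relation (Lemma~\ref{lem:FreenessLinear}) is immediate from functoriality of $L$, and naturality in $X$ and $b$ is routine.

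The crucial input is Corollary~\ref{cor:LinearYonGeneric}: since $L$ is linear, \emph{every} morphism $l : b \to L(\yy[\kit1](a))$ is generic. For injectivity, suppose $L(x)\circ l = L(x')\circ l'$ for elements $l \coendtensor[a] x$ and $l' \coendtensor[a'] x'$. Genericity of $l$ yields a unique $f : \yy[\kit1](a) \to \yy[\kit1](a')$ with $x' f = x$ and $L(f)l = l'$, and symmetrically genericity of $l'$ yields an $h$ in the reverse direction; a further application of genericity forces $f h = \id$ and $h f = \id$, so $f$ is an isomorphism of representables. Here the argument departs from the stable case: instead of Lemma~\ref{lem:IsomorphismSumRep}, I would use that $\yy[\kit1]$ is full and faithful (and that $\gpd1$ is a groupoid) to write $f = \yy[\kit1](\alpha)$ for some $\alpha : a \to a'$ in $\gpd1$; the coend relation of Lemma~\ref{lem:FreenessLinear} then gives $l \coendtensor[a] x = l' \coendtensor[a'] x'$. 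Surjectivity is precisely Lemma~\ref{lem:LinearGenericFact}: any $t : b \to L(X)$ factors as $L(x)\circ g$ with $g : b \to L(\yy[\kit1](a))$ generic and $x : \yy[\kit1](a) \to X$, so that $g \coendtensor[a] x$ is a preimage of $t$. A natural transformation with isomorphism components is a natural isomorphism, giving $\varepsilon : \rLan{\tracelin(L)} \cong L$.

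The one genuinely linear-specific ingredient is the universal genericity of maps out of representables (Corollary~\ref{cor:LinearYonGeneric}), which rests on $L$ being a local \emph{left} adjoint; this is what makes $\tracelin$ so much simpler than the stable trace $\trace$, since it need not isolate generic factorizations among all morphisms. The main obstacle is therefore not in this lemma itself --- which is a strict simplification of Lemma~\ref{lem:StableCounitIso}, working with single representables $\yy[\kit1](a)$ in place of sums $s(u)$ --- but lies upstream, in Lemma~\ref{lem:LinearGenericFact} and Corollary~\ref{cor:LinearYonGeneric}, where linearity is genuinely used.
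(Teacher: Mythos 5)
Your proposal is correct and follows essentially the same route as the paper: the same counit $\varepsilon_{X,b}(l \coendtensor[a] x) = L(x)\circ l$, surjectivity via Lemma~\ref{lem:LinearGenericFact}, and injectivity reduced to producing the unique mediating $\alpha : a \to a'$. The only difference is cosmetic: you spell out the injectivity step via Corollary~\ref{cor:LinearYonGeneric} and full faithfulness of $\yy[\kit1]$ (correctly replacing Lemma~\ref{lem:IsomorphismSumRep} from the stable case), whereas the paper leaves this genericity argument implicit behind the unique-$\alpha$ characterisation.
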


\begin{proof}
	Let $X$ be in $\StPSh(\kitstr{\gpd1})$ and $b \in \gpd2$. An element of $\rLan{\tracelin(L)}(X)(b)$ is of the form $l \coendtensor[a] x$ where $l : b \to L(\yy[\kit1] (a))$ and $x \in X(a)$, we define $\varepsilon_{X,b} : \rLan{\tracelin (L)}(X,b) \to L(X,b)$ by $l \coendtensor[a] x \mapsto L(x) l$. For $l \coendtensor[a] x$ and $l' \coendtensor[a'] x'$ in $\rLan{\tracelin (L)}(X)(b)$,  $l \coendtensor[a] x=l' \coendtensor[a'] x'$ is equivalent to the existence of a unique $\alpha : a \to a'$ such that $x' \cdot \alpha =x$ and $L(\alpha) l =l'$ which implies that $\varepsilon_{X,b}$ is well-defined and injective. Let $l : b \to L(X)$ be in $\StPSh(\kitstr{\gpd2})$, by Lemma \ref{lem:LinearGenericFact},  there exists $a\in \gpd1$, $g : b \to L(\yy[\kit1] (a))$ generic and $x:\yy[\kit1] (a) \to X$ such that $ l = L(x) g$. Now, $\varepsilon_{X,b} (g \coendtensor[a] x) = L(x)g$ which implies that $\varepsilon_{X,b}$ is surjective as desired. Naturality of $\varepsilon$ follows immediately from the functoriality of $L$.
\end{proof}
\begin{thm}
	For Boolean kits $\kitstr{\gpd1} = (\gpd1,\kit1)$ and $\kitstr{\gpd2} = (\gpd2,\kit2)$, there is an
	adjoint equivalence as follows
	\begin{center}
		\begin{tikzpicture}[line join=round,yscale=0.5]
			\node (A) at (0,0) {$\SProf(\kitstr{\gpd1},\kitstr{\gpd2})$} ;
			\node (B) at (4, 0) {$\Lin(\kitstr{\gpd1},\kitstr{\gpd2})$} ;
			
			\draw [->] (A) to [bend left =30]  node [above] {$\rLan{(-)}$} (B);
			\draw [->] (B) to [bend left =30]  node [below] 
			{$\tracelin$} (A);
			\node (C) at (2,0) {$\bot\simeq$};
		\end{tikzpicture}
	\end{center}
\end{thm}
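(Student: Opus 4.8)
The plan is to assemble this hom-level adjoint equivalence from the lemmas already established, following the same template as the biequivalence $\Stable \simeq \SEsp$ of Theorem~\ref{thm:correspondenceStable}, but now at the linear level. The two candidate functors are $\rLan{(-)} : \SProf(\kitstr{\gpd1},\kitstr{\gpd2}) \to \Lin(\kitstr{\gpd1},\kitstr{\gpd2})$ and $\tracelin : \Lin(\kitstr{\gpd1},\kitstr{\gpd2}) \to \SProf(\kitstr{\gpd1},\kitstr{\gpd2})$. First I would verify that both are well-defined functors between the hom-categories: that $\rLan{P}$ is linear for every stabilized profunctor $P$ (Lemma~\ref{lem:QProfLinear}), that $\rLan{(-)}$ sends a $2$-cell of $\SProf$ to a cartesian transformation (Lemma~\ref{lem:NatTransCartesianLin}), that $\tracelin(L) = L\yy[\kit1]$ is a stabilized profunctor for linear $L$ (Lemma~\ref{lem:LinearQProf}), and that $\tracelin$ acts on cartesian $2$-cells by whiskering with $\yy[\kit1]$, landing in $\SProf$ precisely because the latter is a full subcategory of $\Prof$.

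Next I would exhibit the unit and counit. The unit $\eta : \mathrm{Id} \cong \tracelin \circ \rLan{(-)}$ is the family of isomorphisms $P \cong \tracelin{\rLan{P}}$ from Lemma~\ref{lem:LinearUnitIso}, which holds because $\rLan{P} = \Lan_{\yy[\kit1]} P$ is a left Kan extension along the fully faithful $\yy[\kit1]$, so restricting it back along $\yy[\kit1]$ recovers $P$. The counit $\varepsilon : \rLan{(-)} \circ \tracelin \cong \mathrm{Id}$ is the family $\rLan{\tracelin(L)} \cong L$ of Lemma~\ref{lem:LinearCounitIso}, whose component at $X \in \StPSh(\kitstr{\gpd1})$ sends a representative $l \coendtensor[a] x$ to $L(x)\circ l$. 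Once these are checked to be natural in $P$ and $L$ respectively, the data $(\rLan{(-)}, \tracelin, \eta, \varepsilon)$ constitute an equivalence of categories, and since any equivalence can be promoted to an adjoint equivalence (by adjusting one of the structural isomorphisms so that the triangle identities hold), this yields the claimed adjoint equivalence.

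I expect the genuine mathematical content to lie not in this final assembly but in the lemmas it invokes, and in particular in the counit isomorphism. The delicate point is surjectivity of $\varepsilon_{X,b}$: given an arbitrary $l : b \to L(X)$ one must produce a generic factorization through some $L(\yy[\kit1] a)$, which is Lemma~\ref{lem:LinearGenericFact}. That lemma in turn depends on the representation of stabilized presheaves as filtered colimits of finite sums of quotients of representables (Corollary~\ref{cor:RepresentationQuantitativePresheaves}), on $L$ preserving the relevant coproducts and quotient colimits (linearity), and on Corollary~\ref{cor:LinearYonGeneric}, which upgrades \emph{every} morphism into $L(\yy[\kit1] a)$ to a generic one. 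The same genericity phenomenon drives Lemma~\ref{lem:LinearQProf}, where one translates the stabilizer condition defining stabilized profunctors into a diagram that genericity forces to collapse. Thus the main obstacle, already discharged in the preceding lemmas, is controlling how local left adjointness of $L$ interacts with genericity; the theorem itself then follows by collecting Lemmas~\ref{lem:QProfLinear}, \ref{lem:NatTransCartesianLin}, \ref{lem:LinearQProf}, \ref{lem:LinearUnitIso}, and~\ref{lem:LinearCounitIso}.
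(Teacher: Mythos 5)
Your proposal matches the paper's proof exactly: both assemble the adjoint equivalence by citing Lemmas~\ref{lem:QProfLinear}, \ref{lem:NatTransCartesianLin}, and~\ref{lem:LinearQProf} for well-definedness of the two functors, Lemmas~\ref{lem:LinearUnitIso} and~\ref{lem:LinearCounitIso} for the unit and counit isomorphisms, and the standard fact that any equivalence can be promoted to an adjoint equivalence. Your additional remarks correctly locate the real content in the supporting lemmas (notably the generic factorization of Lemma~\ref{lem:LinearGenericFact}), which is consistent with the paper's treatment.
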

\begin{proof}
	The functors $\rLan{(-)}$ and $\tracelin$ are well-defined by Lemmas \ref{lem:QProfLinear}, \ref{lem:NatTransCartesianLin} and \ref{lem:LinearQProf}. They form an equivalence by Lemmas \ref{lem:LinearUnitIso} and \ref{lem:LinearCounitIso} and since any equivalence induces an adjoint equivalence, we obtain the desired result.
\end{proof}

We complete the picture by observing that the  intensional/extensional biequivalences that we have shown at the linear and stable levels are compatible, with the overall situation summarized in the following diagram
	\begin{center}
	\begin{tikzpicture}[line join=round,yscale=0.5]
		\node (A) at (0,0) {$\SProf(\kitstr{\gpd1},\kitstr{\gpd2})$} ;
		\node (A1) at (0,-6) {$\SEsp(\kitstr{\gpd1},\kitstr{\gpd2})$} ;
		\node (B) at (4, 0) {$\Lin(\kitstr{\gpd1},\kitstr{\gpd2})$} ;
		\node (B1) at (4, -6) {$\Stable(\kitstr{\gpd1},\kitstr{\gpd2})$} ;
		
		\draw [->] (A) to [bend left =30]  node [above] {$\rLan{(-)}$} (B);
		\draw [->] (B) to [bend left =30]  node [below] 
		{$\tracelin$} (A);
		\node (C) at (2,0) {$\simeq$};
		\draw [right hook-latex] (B) to   node  {} (B1);
		\draw [->] (A) to node  [left] {$- \circ \der[\gpd1]$} (A1);
		\draw [->] (A1) to [bend left =30]  node [above] {$\widetilde{(-)}$} (B1);
		\draw [->] (B1) to [bend left =30]  node [below] 
		{$\trace$} (A1);
		\node (C) at (2,-6) {$\simeq$};
	\end{tikzpicture}
\end{center}
which is seen to commute up to natural isomorphism. 

It is compelling that standard computational intuitions for linearity, involving resource usage, remain accessible in this setting. One has for instance that for a linear functor $L : \StPSh(\kitstr{\gpd1}) \to \StPSh(\kitstr{\gpd2})$, the trace $\trace(L) \in \SEsp(\kitstr{\gpd1}, \kitstr{\gpd2})$ has components 
\[
\trace(L)(b, \seq{a_i}_{i \in [n]}) = 
\begin{cases}
\emptyset & \text{ if } n \neq 1 \\
L(\yon(a))(b) & \text{ if } \seq{a_i}_{i \in [n]} = \seq{a} 
\end{cases}
\]
i.e. the result of the computation is determined by inspecting the `argument' $\gpd1$ exactly once, in accordance with the original ideas of Girard \cite{GirardCoherence}.

\section{Conclusion}
\label{sec:conclusion}

We have introduced a new bicategorical model of classical linear logic
based on groupoids with kits and stabilized profunctors.  We have
also presented this model in extensional form, in terms of stable
functors between subcategories of presheaves. Restricted to discrete
groupoids (\emph{i.e.}~sets), stable functors coincide with finitary polynomial
functors $\Set^I \to \Set^J$ between categories of indexed sets,
called \emph{normal functors} in the influential work of
Girard~\cite{GirardNormal}. The construction of a cartesian closed
bicategory that embeds finitary polynomial functors is a key
contribution of this work.

Some aspects of our construction should be understood more
abstractly. The bicategory of stable species of structures refines the
bicategory of generalized species \cite{FioreCartesian2008} by means
of an orthogonality construction. Refinements of this kind are
well-studied in the $1$-categorical setting and provide a rich family
of models for linear logic \cite{GlueingHylandShalk}.  Our
construction provides a new example of orthogonality and
double-glueing in a two-dimensional setting. The full theory will be
developed using the present work as a guiding example.

The notion of kit also deserves further exploration. In particular,
various families of non-Boolean kits with additional closure
conditions provide models of intuitionistic linear logic. It seems
less clear how to understand these models in extensional form.

Finally we mention some related work.
Generalized analytic and polynomial functors are considered in the
work of Garner and Hirschowitz \cite{GarnerHirschowitz}, including a
method for controlling actions in terms of stabilizers. We will
investigate the connections to our work. In a separate line of work, Finster, Lucas, Mimram
and Seiller~\cite{FinsterPolynomials} present another groupoid model,
described in the language of homotopy type theory.  The relationship
with our model should be explored, also in connection to the work of
Kock et al.~\cite{PolynomialGpdKock,KockInfinityOperads}.





\bibliographystyle{alphaurl}
\bibliography{biblio}

\newcommand{\etalchar}[1]{$^{#1}$}
\begin{thebibliography}{FGHW18}

\bibitem[AAG05]{AbbottContainers}
Michael Abbott, Thorsten Altenkirch, and Neil Ghani.
\newblock Containers: Constructing strictly positive types.
\newblock {\em Theor. Comput. Sci.}, 342(1):3--27, 2005.
\newblock \href {https://doi.org/10.1016/j.tcs.2005.06.002}
  {\path{doi:10.1016/j.tcs.2005.06.002}}.

\bibitem[AGH{\etalchar{+}}15]{AltenkirchIndexed}
Thorsten Altenkirch, Neil Ghani, Peter~G. Hancock, Conor McBride, and Peter
  Morris.
\newblock Indexed containers.
\newblock {\em J. Funct. Program.}, 25, 2015.
\newblock \href {https://doi.org/10.1017/S095679681500009X}
  {\path{doi:10.1017/S095679681500009X}}.

\bibitem[ALS10]{DBLP:conf/cie/AltenkirchLS10}
Thorsten Altenkirch, Paul~Blain Levy, and Sam Staton.
\newblock Higher-order containers.
\newblock In Fernando Ferreira, Benedikt L{\"{o}}we, Elvira Mayordomo, and
  Lu{\'{\i}}s~Mendes Gomes, editors, {\em Programs, Proofs, Processes, 6th
  Conference on Computability in Europe, CiE 2010, Ponta Delgada, Azores,
  Portugal, June 30 - July 4, 2010. Proceedings}, volume 6158 of {\em Lecture
  Notes in Computer Science}, pages 11--20. Springer, 2010.
\newblock \href {https://doi.org/10.1007/978-3-642-13962-8\_2}
  {\path{doi:10.1007/978-3-642-13962-8\_2}}.

\bibitem[Bar06]{barr2006autonomous}
Michael Barr.
\newblock {\em *-Autonomous categories}, volume 752.
\newblock Springer, 2006.

\bibitem[B{\'e}n67]{BenabouBicat}
Jean B{\'e}nabou.
\newblock Introduction to bicategories.
\newblock In {\em Reports of the Midwest Category Seminar}, pages 1--77,
  Berlin, Heidelberg, 1967. Springer Berlin Heidelberg.

\bibitem[B{\'e}n00]{BenabouDist}
Jean B{\'e}nabou.
\newblock Distributors at work, 2000.
\newblock Lecture notes written by Thomas Streicher.
\newblock URL:
  \url{https://www2.mathematik.tu-darmstadt.de/~streicher/FIBR/DiWo.pdf}.

\bibitem[Ber78]{BerryStable}
G\'{e}rard Berry.
\newblock Stable models of typed lambda-calculi.
\newblock In {\em Proceedings of the Fifth Colloquium on Automata, Languages
  and Programming}, pages 72--89, Berlin, Heidelberg, 1978. Springer-Verlag.

\bibitem[BLL98]{SpeciesBLR}
F.~Bergeron, G.~Labelle, and P.~Leroux.
\newblock {\em Combinatorial species and tree-like structures}, volume~67 of
  {\em Encyclopedia of Mathematics and its Applications}.
\newblock Cambridge University Press, Cambridge, 1998.

\bibitem[CJ95]{carboni_johnstone_1995}
Aurelio Carboni and Peter Johnstone.
\newblock Connected limits, familial representability and artin glueing.
\newblock {\em Mathematical Structures in Computer Science}, 5(4):441--459,
  1995.
\newblock \href {https://doi.org/10.1017/S0960129500001183}
  {\path{doi:10.1017/S0960129500001183}}.

\bibitem[CS97]{CockettMix}
J~Robin~B Cockett and Robert~AG Seely.
\newblock Proof theory for full intuitionistic linear logic, bilinear logic,
  and mix categories.
\newblock {\em Theory and Applications of categories}, 3(5):85--131, 1997.

\bibitem[DE11]{DanosProbaCoherence}
Vincent Danos and Thomas Ehrhard.
\newblock Probabilistic coherence spaces as a model of higher-order
  probabilistic computation.
\newblock {\em Information and Computation}, 209(6):966--991, 2011.
\newblock URL:
  \url{http://www.sciencedirect.com/science/article/pii/S0890540111000411},
  \href {https://doi.org/https://doi.org/10.1016/j.ic.2011.02.001}
  {\path{doi:https://doi.org/10.1016/j.ic.2011.02.001}}.

\bibitem[Die77]{diers1977categories}
Yves Diers.
\newblock {\em Cat{\'e}gories localisables}.
\newblock PhD thesis, Paris 6 et Centre universitaire de Valenciennes et du
  Hainaut Cambr{\'e}sis, 1977.

\bibitem[Ehr05]{EhrhardFiniteness}
Thomas Ehrhard.
\newblock Finiteness spaces.
\newblock {\em Mathematical Structures in Computer Science}, 15(4):615--646,
  2005.
\newblock \href {https://doi.org/10.1017/S0960129504004645}
  {\path{doi:10.1017/S0960129504004645}}.

\bibitem[Ehr18]{EhrhardDiLL16}
Thomas Ehrhard.
\newblock An introduction to differential linear logic: Proof-nets, models and
  antiderivatives.
\newblock {\em Mathematical Structures in Computer Science}, 28:995--1060,
  2018.
\newblock \href {https://doi.org/10.1017/S0960129516000372}
  {\path{doi:10.1017/S0960129516000372}}.

\bibitem[ER03]{ehrhard2003differential}
Thomas Ehrhard and Laurent Regnier.
\newblock The differential lambda-calculus.
\newblock {\em Theoretical Computer Science}, 309(1-3):1--41, 2003.

\bibitem[FGHW08]{FioreCartesian2008}
Marcelo Fiore, Nicola Gambino, Martin Hyland, and Glynn Winskel.
\newblock The cartesian closed bicategory of generalised species of structures.
\newblock {\em J. Lond. Math. Soc. (2)}, 77(1):203--220, 2008.
\newblock URL: \url{http://dx.doi.org/10.1112/jlms/jdm096}, \href
  {https://doi.org/10.1112/jlms/jdm096} {\path{doi:10.1112/jlms/jdm096}}.

\bibitem[FGHW18]{RelativePseudomonads}
Marcelo Fiore, Nicola Gambino, Martin Hyland, and Glynn Winskel.
\newblock Relative pseudomonads, {Kleisli} bicategories, and substitution
  monoidal structures.
\newblock {\em Selecta Mathematica}, 24(3):2791--2830, 2018.
\newblock \href {https://doi.org/10.1007/s00029-017-0361-3}
  {\path{doi:10.1007/s00029-017-0361-3}}.

\bibitem[FGP22]{fiore2022combinatorial}
Marcelo Fiore, Zeinab Galal, and Hugo Paquet.
\newblock A combinatorial approach to higher-order structure for polynomial
  functors.
\newblock In {\em 7th International Conference on Formal Structures for
  Computation and Deduction (FSCD 2022)}. Schloss Dagstuhl-Leibniz-Zentrum
  f{\"u}r Informatik, 2022.

\bibitem[Fio05]{fiore2005mathematical}
Marcelo Fiore.
\newblock Mathematical models of computational and combinatorial structures.
\newblock In {\em International Conference on Foundations of Software Science
  and Computation Structures}, pages 25--46. Springer, 2005.

\bibitem[Fio07]{Fiore2007DifferentialSI}
Marcelo Fiore.
\newblock Differential structure in models of multiplicative biadditive
  intuitionistic linear logic.
\newblock In Simona~Ronchi Della~Rocca, editor, {\em Typed Lambda Calculi and
  Applications}, pages 163--177. Springer, 2007.

\bibitem[Fio14]{Fioreanalytic}
Marcelo Fiore.
\newblock Analytic functors between presheaf categories over groupoids.
\newblock {\em Theor. Comput. Sci.}, 546:120--131, 2014.
\newblock \href {https://doi.org/10.1016/j.tcs.2014.03.004}
  {\path{doi:10.1016/j.tcs.2014.03.004}}.

\bibitem[FMLS21]{FinsterPolynomials}
Eric Finster, Samuel Mimram, Maxime Lucas, and Thomas Seiller.
\newblock A cartesian bicategory of polynomial functors in homotopy type
  theory.
\newblock In Ana Sokolova, editor, {\em Proceedings 37th Conference on
  Mathematical Foundations of Programming Semantics, {MFPS} 2021, Hybrid:
  Salzburg, Austria and Online, 30th August - 2nd September, 2021}, volume 351
  of {\em {EPTCS}}, pages 67--83, 2021.
\newblock \href {https://doi.org/10.4204/EPTCS.351.5}
  {\path{doi:10.4204/EPTCS.351.5}}.

\bibitem[GH18]{GarnerHirschowitz}
Richard Garner and Tom Hirschowitz.
\newblock Shapely monads and analytic functors.
\newblock {\em J. Log. Comput.}, 28(1):33--83, 2018.
\newblock URL: \url{https://doi.org/10.1093/logcom/exx029}.

\bibitem[GHK21]{KockInfinityOperads}
David Gepner, Rune Haugseng, and Joachim Kock.
\newblock {$\infty$-Operads as analytic monads}.
\newblock {\em International Mathematics Research Notices}, 2021.
\newblock \href {https://doi.org/10.1093/imrn/rnaa332}
  {\path{doi:10.1093/imrn/rnaa332}}.

\bibitem[Gir86]{GirardCoherence}
Jean-Yves Girard.
\newblock The {System~$F$} of variable types, fifteen years later.
\newblock {\em Theoretical Computer Science}, 45:159--192, 1986.
\newblock URL:
  \url{http://www.sciencedirect.com/science/article/pii/0304397586900447},
  \href {https://doi.org/https://doi.org/10.1016/0304-3975(86)90044-7}
  {\path{doi:https://doi.org/10.1016/0304-3975(86)90044-7}}.

\bibitem[Gir87]{GIRARDLL}
Jean{-}Yves Girard.
\newblock Linear logic.
\newblock {\em Theor. Comput. Sci.}, 50:1--102, 1987.
\newblock \href {https://doi.org/10.1016/0304-3975(87)90045-4}
  {\path{doi:10.1016/0304-3975(87)90045-4}}.

\bibitem[Gir88]{GirardNormal}
Jean-Yves Girard.
\newblock Normal functors, power series and {$\lambda$}-calculus.
\newblock {\em Ann. Pure Appl. Logic}, 37(2):129--177, 1988.
\newblock \href {https://doi.org/10.1016/0168-0072(88)90025-5}
  {\path{doi:10.1016/0168-0072(88)90025-5}}.

\bibitem[GK13]{gambinokock2013}
Nicola Gambino and Joachim Kock.
\newblock Polynomial functors and polynomial monads.
\newblock {\em Mathematical Proceedings of the Cambridge Philosophical
  Society}, 154(1):153--192, 2013.
\newblock \href {https://doi.org/10.1017/S0305004112000394}
  {\path{doi:10.1017/S0305004112000394}}.

\bibitem[Has02]{HasegawaAnalytic}
Ryu Hasegawa.
\newblock Two applications of analytic functors.
\newblock {\em Theoretical Computer Science}, 272(1):113--175, 2002.
\newblock \href {https://doi.org/https://doi.org/10.1016/S0304-3975(00)00349-2}
  {\path{doi:https://doi.org/10.1016/S0304-3975(00)00349-2}}.

\bibitem[HS03]{GlueingHylandShalk}
Martin Hyland and Andrea Schalk.
\newblock Glueing and orthogonality for models of linear logic.
\newblock {\em Theoretical Computer Science}, 294(1):183--231, 2003.
\newblock \href {https://doi.org/https://doi.org/10.1016/S0304-3975(01)00241-9}
  {\path{doi:https://doi.org/10.1016/S0304-3975(01)00241-9}}.

\bibitem[Joy81]{JoyalSpecies}
Andr\'e Joyal.
\newblock Une th\'eorie combinatoire des s\'eries formelles.
\newblock {\em Adv. in Math.}, 42(1):1--82, 1981.
\newblock URL: \url{http://dx.doi.org/10.1016/0001-8708(81)90052-9}, \href
  {https://doi.org/10.1016/0001-8708(81)90052-9}
  {\path{doi:10.1016/0001-8708(81)90052-9}}.

\bibitem[Joy86]{JoyalAnalytic}
Andr{\'e} Joyal.
\newblock Foncteurs analytiques et esp{\`e}ces de structures.
\newblock In Gilbert Labelle and Pierre Leroux, editors, {\em Combinatoire
  {\'e}num{\'e}rative}, pages 126--159, Berlin, Heidelberg, 1986. Springer
  Berlin Heidelberg.

\bibitem[Ker18]{DBLP:conf/lics/Kerjean18}
Marie Kerjean.
\newblock A logical account for linear partial differential equations.
\newblock In Anuj Dawar and Erich Gr{\"{a}}del, editors, {\em Proceedings of
  the 33rd Annual {ACM/IEEE} Symposium on Logic in Computer Science, {LICS}
  2018, Oxford, UK, July 09-12, 2018}, pages 589--598. {ACM}, 2018.
\newblock \href {https://doi.org/10.1145/3209108.3209192}
  {\path{doi:10.1145/3209108.3209192}}.

\bibitem[KL19]{DBLP:conf/fossacs/KerjeanL19}
Marie Kerjean and Jean{-}Simon~Pacaud Lemay.
\newblock Higher-order distributions for differential linear logic.
\newblock In Mikolaj Bojanczyk and Alex Simpson, editors, {\em Foundations of
  Software Science and Computation Structures - 22nd International Conference,
  {FOSSACS} 2019, Held as Part of the European Joint Conferences on Theory and
  Practice of Software, {ETAPS} 2019, Prague, Czech Republic, April 6-11, 2019,
  Proceedings}, volume 11425 of {\em Lecture Notes in Computer Science}, pages
  330--347. Springer, 2019.
\newblock \href {https://doi.org/10.1007/978-3-030-17127-8\_19}
  {\path{doi:10.1007/978-3-030-17127-8\_19}}.

\bibitem[Koc12]{PolynomialGpdKock}
Joachim Kock.
\newblock Data types with symmetries and polynomial functors over groupoids.
\newblock {\em Electronic Notes in Theoretical Computer Science}, 286:351--365,
  2012.
\newblock Proceedings of the 28th Conference on the Mathematical Foundations of
  Programming Semantics (MFPS XXVIII).
\newblock \href {https://doi.org/https://doi.org/10.1016/j.entcs.2013.01.001}
  {\path{doi:https://doi.org/10.1016/j.entcs.2013.01.001}}.

\bibitem[Lam88]{LamarcheThesis}
Fran\c{c}ois Lamarche.
\newblock {\em Modelling polymorphism with categories}.
\newblock PhD thesis, McGill University, 1988.

\bibitem[Law73]{Lawvere}
F.~William Lawvere.
\newblock Metric spaces, generalized logic, and closed categories.
\newblock {\em Rendiconti del Seminario Matematico e Fisico di Milano},
  43:135--166, 1973.
\newblock Republished in: \emph{Reprints in Theory and Applications of
  Categories}, No.~1 (2002) pp 1--37.

\bibitem[Loa94]{LoaderTotality}
Ralph Loader.
\newblock Linear logic, totality and full completeness.
\newblock In {\em In Proceedings of LiCS `94}, pages 292--298. Press, 1994.

\bibitem[LV88]{LerouxViennot}
P.~Leroux and G.X. Viennot.
\newblock Combinatorial resolution of systems of differential equations.
  {IV}.~{S}eparation of variables.
\newblock {\em Discrete Mathematics}, 72(1-3):237--250, 1988.
\newblock \href {https://doi.org/10.1016/0012-365X(88)90213-0}
  {\path{doi:10.1016/0012-365X(88)90213-0}}.

\bibitem[Oli21]{OlimpieriIntersection}
Federico Olimpieri.
\newblock Intersection type distributors.
\newblock In {\em 2021 36th Annual ACM/IEEE Symposium on Logic in Computer
  Science (LICS)}, pages 1--15. IEEE, 2021.

\bibitem[Osm20]{Osmond2020diers}
Axel Osmond.
\newblock On {Diers} theory of {Spectrum~I} : Stable functors and right
  multi-adjoints, 2020.
\newblock \href {http://arxiv.org/abs/2012.00853} {\path{arXiv:2012.00853}}.

\bibitem[Sco76]{ScottDataTypes}
Dana Scott.
\newblock Data types as lattices.
\newblock {\em SIAM Journal on Computing}, 5:522--587, 09 1976.
\newblock \href {https://doi.org/10.1137/0205037} {\path{doi:10.1137/0205037}}.

\bibitem[Sta16]{StayCompactBicat}
Michael Stay.
\newblock Compact closed bicategories.
\newblock {\em Theory and Applications of Categories}, 31(26):755--798, 2016.

\bibitem[TAO17]{TsukadaRigid}
Takeshi Tsukada, Kazuyuki Asada, and C.-H.~Luke Ong.
\newblock Generalised species of rigid resource terms.
\newblock In {\em 2017 32nd Annual ACM/IEEE Symposium on Logic in Computer
  Science (LICS)}, pages 1--12. IEEE, 2017.

\bibitem[TAO18]{TsukadaWeightedSpecies2018}
Takeshi Tsukada, Kazuyuki Asada, and C.-H.~Luke Ong.
\newblock Species, profunctors and {Taylor} expansion weighted by smcc: A
  unified framework for modelling nondeterministic, probabilistic and quantum
  programs.
\newblock In {\em Proceedings of the 33rd Annual ACM/IEEE Symposium on Logic in
  Computer Science}, LICS '18, pages 889--898, New York, NY, USA, 2018. ACM.
\newblock URL: \url{http://doi.acm.org/10.1145/3209108.3209157}, \href
  {https://doi.org/10.1145/3209108.3209157}
  {\path{doi:10.1145/3209108.3209157}}.

\bibitem[Tay89]{TaylorGroupoidsLinearLogic}
Paul Taylor.
\newblock Quantitative domains, groupoids and linear logic.
\newblock In {\em Category Theory and Computer Science}, pages 155--181.
  Springer, 1989.

\bibitem[Tay90]{TaylorAlgebraicStableDomains}
Paul Taylor.
\newblock An algebraic approach to stable domains.
\newblock {\em Journal of Pure and Applied Algebra}, 64(2):171--203, 1990.
\newblock \href {https://doi.org/https://doi.org/10.1016/0022-4049(90)90156-C}
  {\path{doi:https://doi.org/10.1016/0022-4049(90)90156-C}}.

\bibitem[Tay98]{tracefactorization}
Paul Taylor.
\newblock The trace factorisation of stable functors.
\newblock 1998.
\newblock URL: \url{https://www.paultaylor.eu/stable/trafsf.pdf}.

\bibitem[Web04]{Weber2004generic}
Mark Weber.
\newblock Generic morphisms, parametric representations and weakly cartesian
  monads.
\newblock {\em Theory and Applications of Categories}, 13(14):191--234, 2004.

\bibitem[Web07]{Weber2007familial}
Mark Weber.
\newblock Familial 2-functors and parametric right adjoints.
\newblock {\em Theory Appl. Categ}, 18(22):665--732, 2007.

\bibitem[Yon60]{Yoneda}
Nobuo Yoneda.
\newblock On {Ext} and exact sequences.
\newblock {\em Journal of the Faculty of Science, Imperial University of
  Tokyo}, 8:507--576, 1960.

\end{thebibliography}

\end{document}